\newtheorem{thm}{Theorem}[section]
\newtheorem{question}[thm]{Question}
\newtheorem{defn}[thm]{Definition}
\newtheorem{rmk}[thm]{Remark}
\newtheorem{cor}[thm]{Corollary}
\newtheorem{examp}[thm]{Example}
\newtheorem{prop}[thm]{Proposition}
\newcommand{\mc}{\mathcal}
\newcommand{\ts}{\textsc}
\DeclareMathOperator{\Aut}{Aut}
\DeclareMathOperator{\Perm}{Perm}
\DeclareMathOperator{\Fact}{Fact}
\DeclareMathOperator{\Eq*}{Eq*}
\DeclareMathOperator{\Eqq}{Eq}
\DeclareMathOperator{\LL}{L}
\DeclareMathOperator{\U}{U}
\begin{document}

\title[Wigner's Theorem for an Infinite Set]{Wigner's Theorem for an Infinite Set}

\author[Harding]{John Harding$^1$}
\address{Department of Mathematical Sciences\\
New Mexico State University\\
Las Cruces, NM 88003, USA}

\email{$^1$jharding@nmsu.edu\phantom{,}}

\thanks{}

\keywords{}

\subjclass[2010]{}

\begin{abstract}
It is well known that the closed subspaces of a Hilbert space form an orthomodular lattice. Elements of this orthomodular lattice are the propositions of a quantum mechanical system represented by the Hilbert space, and by Gleason's theorem atoms of this orthomodular lattice correspond to pure states of the system. Wigner's theorem says that the automorphism group of this orthomodular lattice corresponds to the group of unitary and anti-unitary operators of the Hilbert space. This result is of basic importance in the use of group representations in quantum mechanics. 

The closed subspaces $A$ of a Hilbert space $\mc{H}$ correspond to direct product decompositions $\mc{H}\simeq A\times A^\perp$ of the Hilbert space, a result that lies at the heart of the superposition principle. In \cite{Harding 1} it was shown that the direct product decompositions of any set, group, vector space, and topological space form an orthomodular poset. This is the basis for a line of study in foundational quantum mechanics replacing Hilbert spaces with other types of structures. It is the purpose of this note to prove a version of Wigner's theorem: for an infinite set $X$, the automorphism group of the orthomodular poset $\Fact(X)$ of direct product decompositions of $X$ is isomorphic to the permutation group of $X$. 

The structure $\Fact(X)$ plays the role for direct product decompositions of a set that the lattice of equivalence relations plays for surjective images of a set. So determining its automorphism group is of interest independent of its application to quantum mechanics. Other properties of $\Fact(X)$ are determined in proving our version of Wigner's theorem, namely that $\Fact(X)$ is atomistic in a very strong way. 
\end{abstract}

\maketitle

\section{Introduction}

Standard treatments of quantum mechanics attach a separable complex Hilbert space $\mc{H}$ to a quantum mechanical system. The propositions of the system are given by closed subspaces of $\mc{H}$, and it is well known \cite{Mackey,Varadarajan} that the closed subspaces of $\mc{H}$ form an orthomodular lattice $\mc{C}(\mc{H})$. Using the spectral theorem and Gleason's theorem the observables and states of the system are completely described through this orthomodular lattice $\mc{C}(\mc{H})$. Wigner's theorem \cite{Ulhorn,Wigner} shows that automorphisms of the state space of the system, or equivalently automorphisms of the orthomodular lattice $\mc{C}(\mc{H})$ correspond to unitary and anti-unitary operators of $\mc{H}$. This result is of basic importance in motivating the use of group representations in quantum mechanics \cite{Stephanie,Varadarajan,Wigner}.

It has long been a question why a separable complex Hilbert space should be used to model a quantum system. In \cite{Harding 1} a different viewpoint was suggested where other types of structures can be used in place of a Hilbert space. To briefly explain, the closed subspaces $A$ of a Hilbert space $\mc{H}$ correspond to direct product decompositions $\mc{H}\simeq A\times A^\perp$. If $X$ is any set, group, vector space, topological space, and so forth, the collection $\Fact(X)$ of direct product decompositions of $X$ forms a type of structure known as an orthomodular poset \cite{Mackey,Ptak}. When applied to a Hilbert space $\mc{H}$, $\Fact(\mc{H})$ is the orthomodular lattice of closed subspaces $\mc{C}(\mc{H})$. In an ongoing series of papers \cite{Harding 1,Harding regular,Harding axioms,Harding 2,Harding5,Taewon,Taewon2} this path has been followed and rudimentary aspects of quantum mechanics constructed based on decompositions of structures other than Hilbert spaces. It is the hope that this path will provide explanation as to why Hilbert spaces are the structures necessary for quantum mechanics, or perhaps provide viable alternative structures.

The purpose of this paper is to prove the following analogue of Wigner's theorem. 
\vspace{2ex}

\noindent {\bf Main Theorem} {\em 
For an infinite set $X$, the group of automorphisms of the orthomodular poset $\Fact(X)$ is isomorphic to the permutation group of $X$. }
\vspace{2ex}

Apart from Wigner's theorem, there are several results directly, or indirectly related to the main theorem. There is the fundamental theorem of projective geometry that is usually provided as motivation for Wigner's theorem \cite{Varadarajan}. Roughly, this states that automorphisms of the lattice of subspaces of a finite dimensional vector space correspond to semilinear automorphisms of underlying vector space. A related direction involves a construction of an orthomodular poset $M^{(2)}$ from the complementary pairs of elements of any bounded modular lattice $M$. This construction arose in \cite{Harding 1,Mushtari}, and when applied to the modular lattice of subspaces of a vector space yields the orthomodular poset $\Fact(V)$ of direct product decompositions of $V$. In a series of papers \cite{Chevalier1,Chevalier2,Chevalier3,Ovchinnikov} the automorphisms of $M^{(2)}$ were described in terms of automorphisms of the lattice $M$, and in the finite dimensional vector space setting this can further be refined using the fundamental theorem of projective geometry. The result most closely related to the Main Theorem is that of \cite{Tim} where the same statement is established for a set $X$ with $27$ elements. This is a non-trivial result requiring a substantial amount of combinatorics. This result for 27-element sets will be used in a crucial way in establishing the proof of the main theorem. 

The structure $\Fact(X)$ itself is of independent interest since it plays the role for direct product decompositions of a set $X$ that the lattice $\Eqq(X)$ of equivalence relations of $X$ plays for surjections. Many properties of $\Fact(X)$ have previously been established \cite{Harding 2,Taewon}. Here, in addition to describing the automorphism group of $\Fact(X)$ we show that this orthomodular poset is atomistic in a very strong way. 

This paper is arranged in the following way. Section~2 contains background on $\Fact(X)$. Section~3 gives preliminary results about equivalence relations and decompositions. Section~4 is an overview of the proof of the main theorem --- that there is an isomorphism between the automorphism group of $\Fact(X)$ and the permutation group of $X$. The remainder of the paper provides the proofs of the results described in Section~4. Section~5 gives results about atoms and atomicity in $\Fact(X)$. Sections~6 and~7 provide needed background for Section~8 where it is shown that an automorphism $\alpha$ of $\Fact(X)$ induces an automorphism $\beta$ of the poset $\Eq*(X)$ of finite regular equivalence relations of $X$. Sections~9 and~10 use the automorphism $\beta$ to produce mappings of certain 4-element and 2-element subsets of $X$. These results are used in Section~11 to construct a permutation $\sigma$ of $X$ from which the automorphism $\alpha$ of $\Fact(X)$ is constructed. Section~12 has concluding remarks and open questions. 

\section{Preliminaries about $\Fact(X)$}

In this section we provide the basics of $\Fact(X)$. For further details, see \cite{Harding 1}. We begin by describing a familiar situation that is closely related. 

\begin{defn}
For a set $X$, a surjection with domain $X$ consists of a set $A$ and a surjective function $\varphi:X\to A$. 
\end{defn}

For any non-empty set $X$ there is a proper class of surjections with domain $X$ since there is a proper class of even 1-element sets $A$ that can serve as the image of a surjection. However, it is frequently the case that the specific elements of the image are not of interest, and it is only the behavior of the surjection $\varphi$ that is important. The familiar approach is to define surjections $\varphi:X\to A$ and $\psi:X\to B$ to be equivalent if there is a bijection $\alpha:B\to A$ with $\varphi=\alpha\circ\psi$. 

\vspace{2ex}

\begin{center}
\begin{tikzpicture}
\node (X) at (0,0) {$X$};
\node (A) at (2.9,1) {$B$};
\node (B) at (2.9,-1) {$A$};
\draw[->] (0.4,0.1) -- (2.5,0.8);
\draw[->] (0.4,-0.1) -- (2.5,-0.8);
\draw[->] (2.85,.6) -- (2.85,-.6);
\node at (3.2,0) {$\alpha$};
\node at (1.4,.85) {$\psi$};
\node at (1.4,-.75) {$\varphi$};
\end{tikzpicture}
\end{center}
\vspace{1ex}

Modulo the usual set-theoretic considerations, this gives an equivalence relation on the surjections. We let $[\varphi]$ be the equivalence class of the surjection $\varphi$. There remains a further aspect of surjections that is vital to their study, the matter of placing structure on the set of equivalence classes of surjections with domain $X$. For surjections $\varphi:X\to A$ and $\psi:X\to B$ define $[\varphi]\leq[\psi]$ if there is a surjection $\alpha:B\to A$ with $\varphi=\alpha\circ\psi$. It is well known that this gives a partial ordering on the set of equivalence class of surjections. In fact, more is true as is summarized in the following result. 

\begin{thm}
For a surjection $\varphi:X\to A$, its kernel $\ker\varphi =\{(x,y):\varphi (x) = \varphi (y)\}$ is the unique equivalence relation $\theta$ on $X$ whose canonical quotient map $\kappa_\theta:X\to X/\theta$ belongs to the equivalence class $[\varphi]$. This provides an isomorphism between the lattice $\Eqq(X)$ of equivalence relations on $X$ and the partially ordered set of equivalence classes of surjections with domain $X$. 
\end{thm}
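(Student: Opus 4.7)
The statement has two essentially separate parts: first, a bijection between equivalence classes $[\varphi]$ of surjections out of $X$ and equivalence relations on $X$, implemented by the kernel map; second, the fact that this bijection is an order-isomorphism between $\mathrm{Eq}(X)$ (with inclusion, or its reverse, depending on convention) and the poset of equivalence classes under the ordering introduced above. The plan is to handle these in turn, both via the standard universal property of quotients.

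For the bijection part, I would begin by observing that $\ker\varphi$ is obviously an equivalence relation on $X$. To show that $\kappa_{\ker\varphi}$ actually lies in $[\varphi]$, I would exhibit a bijection $\alpha\colon X/\ker\varphi \to A$ by the rule $\alpha([x])=\varphi(x)$: well-definedness and injectivity both follow from the definition of $\ker\varphi$, and surjectivity from the surjectivity of $\varphi$. This gives $\varphi=\alpha\circ\kappa_{\ker\varphi}$, so $[\varphi]=[\kappa_{\ker\varphi}]$. For uniqueness, suppose $\theta$ is any equivalence relation with $\kappa_\theta\in[\varphi]$, so there is a bijection $\beta$ with $\varphi=\beta\circ\kappa_\theta$. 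Because $\beta$ is a bijection, $\ker(\beta\circ\kappa_\theta)=\ker\kappa_\theta=\theta$, and this coincides with $\ker\varphi$. Thus $\theta$ is uniquely recovered from $[\varphi]$ as $\ker\varphi$, giving a bijection between equivalence classes and equivalence relations.

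For the order-isomorphism part, the main claim to check is that, for surjections $\varphi\colon X\to A$ and $\psi\colon X\to B$, one has $[\varphi]\leq[\psi]$ if and only if $\ker\psi\subseteq\ker\varphi$. If $\varphi=\alpha\circ\psi$ for some surjection $\alpha$, then $\psi(x)=\psi(y)$ immediately forces $\varphi(x)=\varphi(y)$, proving one direction. Conversely, if $\ker\psi\subseteq\ker\varphi$, define $\alpha\colon B\to A$ by $\alpha(\psi(x))=\varphi(x)$; surjectivity of $\psi$ makes this total, the kernel inclusion makes it well-defined, and surjectivity of $\varphi$ makes $\alpha$ a surjection. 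This equivalence, together with the bijection from the first part, promotes the kernel assignment to an order-isomorphism between the poset of equivalence classes and $\mathrm{Eq}(X)$ under reverse inclusion (equivalently, $\mathrm{Eq}(X)$ under its usual ordering is order-isomorphic to the poset of equivalence classes under the reverse of the $\leq$ above). Since $\mathrm{Eq}(X)$ is a complete lattice, the poset of equivalence classes inherits the full lattice structure.

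I do not anticipate a genuine obstacle, as the argument is essentially the first isomorphism theorem for sets; the only place requiring care is keeping the direction of the induced order straight and verifying well-definedness of the factor map $\alpha$ in both directions of Step 3. Once the bijection and the biconditional characterizing $[\varphi]\leq[\psi]$ via kernel containment are in place, the isomorphism of posets, and hence of lattices, follows immediately.
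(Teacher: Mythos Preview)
Your proof is correct and complete. Note, however, that the paper does not actually prove this theorem: it is stated as well-known background, serving to motivate the analogous construction for direct product decompositions, and is immediately followed by an example rather than a proof. So there is no proof in the paper to compare against.

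Your argument is the standard one via the universal property of quotients, and it is entirely sound. In fact, you are more careful than the paper on one point: you correctly observe that under the paper's definition of $[\varphi]\leq[\psi]$ (existence of a surjection $\alpha:B\to A$ with $\varphi=\alpha\circ\psi$), the correspondence $\theta\mapsto[\kappa_\theta]$ is order-\emph{reversing} with respect to inclusion on $\Eqq(X)$, since $\theta\subseteq\phi$ gives $[\kappa_\phi]\leq[\kappa_\theta]$. The paper simply calls this an ``isomorphism'' without flagging the reversal; your parenthetical remark about reverse inclusion handles this properly.
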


\begin{examp}{\em 
Consider a 3-element set $X=\{a,b,c\}$. There are five elements of $\Eqq(X)$, the smallest equivalence relation $\Delta$, the largest $\nabla$, and the three relations that each have one block of two elements, $\theta_1$, $\theta_2$, and $\theta_3$. The lattice $\Eqq(X)$ is shown below and it is isomorphic to the lattice of equivalence classes of surjections. 
\vspace{2ex}

\begin{center}
\begin{tikzpicture}
\draw[fill] (0,0) circle [radius=0.05]; 
\draw[fill] (-2,1) circle [radius=0.05];
\draw[fill] (0,1) circle [radius=0.05];
\draw[fill] (2,1) circle [radius=0.05];
\draw[fill] (0,2) circle [radius=0.05];
\draw[thin] (0,0) -- (-2,1) -- (0,2) -- (0,1) -- (0,0) -- (2,1) -- (0,2);
\node at (.4,-.2) {$\Delta$};
\node at (.4,2.2) {$\nabla$};
\node at (-2.4,1) {$\theta_1$};
\node at (.4,1) {$\theta_2$};
\node at (2.4,1) {$\theta_3$};
\end{tikzpicture}
\end{center}
\vspace{2ex}

\noindent Note that there three equivalence classes of surjections where the codomain $A$ has two elements, one where the surjection $\varphi$ maps $a$ and $b$ to the same place, one where $\varphi$ maps $a$ and $c$ to the same place, and one where $\varphi$ maps $b$ and $c$ to the same place. 
}
\end{examp}

We now come to the matter of primary interest, direct product decompositions of a set $X$. 

\begin{defn}
For a natural number $n$, an $n$-ary direct product decomposition of a set $X$ consists of sets $A_1,\ldots,A_n$ and a bijection $\varphi:X\to A_1\times \cdots\times A_n$. 
\end{defn}

When $n=2$ a direct product decomposition is called a binary decomposition, and when $n=3$ a ternary decomposition. For any given $n\geq 1$ there is a proper class of $n$-ary direct product decompositions of $X$. We will define an equivalence relation on these much as we did with surjections. To do so, we note that for maps $\alpha_i:A_i\to B_i$ for $i=1,\ldots,n$ there is an obvious map $\alpha_1\times\cdots\times\alpha_n$ from $A_1\times\cdots\times A_n$ to $B_1\times\cdots\times B_n$. 

\begin{defn}
For a given $n$, two $n$-ary direct product decompositions $\varphi:X\to A_1\times\cdots\times A_n$ and $\psi:X\to B_1\times\cdots\times B_n$ are equivalent if there are bijections $\alpha_i:A_i\to B_i$ for each $i\leq n$ making the following diagram commute. 
\end{defn}
\vspace{1ex}

\begin{center}
\begin{tikzpicture}
\node (X) at (0,0) {$X$};
\node (A) at (4,1) {$A_1\times\,\cdots\,\times A_n$};
\node (B) at (4,-1) {$B_1\times\,\cdots\,\times\, B_n$};
\draw[->] (0.4,0.1) -- (2.5,0.8);
\draw[->] (0.4,-0.1) -- (2.5,-0.8);
\draw[->] (2.85,.6) -- (2.85,-.6);
\draw[->] (5.05,.6) -- (5.05,-.6);
\node at (3.2,0) {$\alpha_1$};
\node at (5.45,0) {$\alpha_n$};
\node at (1.2,.75) {$\varphi$};
\node at (1.2,-.75) {$\psi$};
\end{tikzpicture}
\end{center}
\vspace{1ex}

Just as surjections can be conveniently treated through equivalence relations, so too can decompositions. The key notion is that of a factor pair. To define this, the composition of relations is given by $\theta\circ\phi = \{(x,y):\mbox{there is $y$ with $(x,y)\in\theta$ and $(y,z)\in\phi$}\}$, and we say that the relations $\theta$ and $\phi$ permute if $\theta\circ\phi=\phi\circ\theta$. 

\begin{defn} 
A factor pair of a set $X$ is an ordered pair $(\theta,\theta')$ of equivalence relations on $X$ where 
\vspace{1ex}
\begin{enumerate}
\item $\theta\cap\theta'=\Delta$
\item $\,\theta\circ\theta'=\nabla$
\end{enumerate}
\vspace{1ex}

\noindent Let $\Fact(X)$ be the set of all factor pairs on $X$. 
\end{defn}

For any factor pair $(\theta,\theta')$, it is known \cite{Burris,McKenzieMcNultyTaylor} that $\theta$ and $\theta'$ permute, and so $\theta'\circ\theta=\nabla$. For permuting equivalence relations, their join in the lattice of equivalence relations is given by their relational product, so factor pairs are complementary elements of the lattice of equivalence relations that also permute. The following results \cite{Burris,McKenzieMcNultyTaylor} is the primary reason for interest in factor pairs. 

\begin{prop}
An ordered pair $(\theta_1,\theta_2)$ of equivalence relations on a set $X$ is a factor pair if and only if the canonical map $\kappa:X\to X/\theta_1\times X/\theta_2$ is a bijection.
\end{prop}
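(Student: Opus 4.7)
The plan is to prove both directions by unpacking the definition of $\kappa$, namely $\kappa(x)=([x]_{\theta_1},[x]_{\theta_2})$, and matching injectivity with the meet condition $\theta_1\cap\theta_2=\Delta$ and surjectivity with the product condition $\theta_1\circ\theta_2=\nabla$.

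For the forward direction, assume $(\theta_1,\theta_2)$ is a factor pair. To see $\kappa$ is injective, suppose $\kappa(x)=\kappa(y)$; then $(x,y)\in\theta_1$ and $(x,y)\in\theta_2$, so $(x,y)\in\theta_1\cap\theta_2=\Delta$, forcing $x=y$. For surjectivity, take an arbitrary element $([a]_{\theta_1},[b]_{\theta_2})$ of the product. Since $(a,b)\in\nabla=\theta_1\circ\theta_2$, there exists $x\in X$ with $(a,x)\in\theta_1$ and $(x,b)\in\theta_2$, so $\kappa(x)=([a]_{\theta_1},[b]_{\theta_2})$.

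For the converse, assume $\kappa$ is a bijection. Injectivity of $\kappa$ immediately gives $\theta_1\cap\theta_2\subseteq\Delta$, and the reverse inclusion is trivial. For the product condition, given any $(a,b)\in X\times X$, surjectivity of $\kappa$ produces $x\in X$ with $\kappa(x)=([a]_{\theta_1},[b]_{\theta_2})$; then $(a,x)\in\theta_1$ and $(x,b)\in\theta_2$, so $(a,b)\in\theta_1\circ\theta_2$. Hence $\theta_1\circ\theta_2=\nabla$, and $(\theta_1,\theta_2)$ is a factor pair.

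There is no real obstacle here, since the result is essentially a direct translation between the two formulations; the only subtlety worth flagging is that the product condition $\theta_1\circ\theta_2=\nabla$ is used in the ordered form $(a,x)\in\theta_1$ then $(x,b)\in\theta_2$, which is exactly the order that matches the definition of $\kappa$. Once this bookkeeping is set up, each condition on $(\theta_1,\theta_2)$ corresponds in a one-line argument to injectivity or surjectivity of $\kappa$.
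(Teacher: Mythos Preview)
Your proof is correct and is the standard elementary argument. The paper does not actually give a proof of this proposition; it simply states the result and cites the universal algebra textbooks \cite{Burris,McKenzieMcNultyTaylor}, so there is no in-paper proof to compare against, but your verification is exactly what one finds in those sources.
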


The following result \cite{Harding 1} shows that factor pairs of $X$ and equivalence classes of binary direct product decompositions of $X$ are in bijective correspondence. 

\begin{prop}
For a binary direct product decomposition $\varphi:X\to A_1\times A_2$ there is a unique factor pair $(\theta_1,\theta_2)$ whose canonical bijection $\kappa:X\to X/\theta_1\times X/\theta_2$ belongs to $[\varphi]$. 
\end{prop}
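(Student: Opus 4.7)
The plan is a two-part proof: first construct a factor pair from $\varphi$, then verify uniqueness.

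For existence, given $\varphi:X\to A_1\times A_2$, let $\pi_i:A_1\times A_2\to A_i$ denote the $i$-th projection, and define $\theta_i$ to be the kernel of the surjection $\pi_i\circ\varphi$, that is, $\theta_i=\{(x,y):\pi_i\varphi(x)=\pi_i\varphi(y)\}$. Clearly each $\theta_i$ is an equivalence relation. To see $\theta_1\cap\theta_2=\Delta$, note that if $(x,y)\in\theta_1\cap\theta_2$ then $\varphi(x)$ and $\varphi(y)$ agree in both coordinates, so $\varphi(x)=\varphi(y)$ and hence $x=y$ since $\varphi$ is a bijection. To see $\theta_1\circ\theta_2=\nabla$, given $x,y\in X$, pick $z=\varphi^{-1}(\pi_1\varphi(x),\pi_2\varphi(y))$; then $(x,z)\in\theta_1$ and $(z,y)\in\theta_2$. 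Thus $(\theta_1,\theta_2)$ is a factor pair, and by the preceding proposition the canonical map $\kappa:X\to X/\theta_1\times X/\theta_2$ is a bijection.

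Next I would exhibit the equivalence $\kappa\sim\varphi$. Because $\theta_i$ is exactly the kernel of $\pi_i\circ\varphi$, this surjection factors uniquely through $X/\theta_i$ as $\pi_i\circ\varphi=\alpha_i\circ\kappa_{\theta_i}$ for some bijection $\alpha_i:X/\theta_i\to A_i$. Reading this coordinatewise shows $\varphi=(\alpha_1\times\alpha_2)\circ\kappa$, giving the equivalence of decompositions in the sense of the Definition above.

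For uniqueness, suppose $(\theta_1',\theta_2')$ is any factor pair whose canonical bijection $\kappa':X\to X/\theta_1'\times X/\theta_2'$ lies in $[\varphi]$, so there are bijections $\beta_i:X/\theta_i'\to A_i$ with $\varphi=(\beta_1\times\beta_2)\circ\kappa'$. Then for each $i$,
\[
(x,y)\in\theta_i' \iff \pi_i\kappa'(x)=\pi_i\kappa'(y) \iff \beta_i\pi_i\kappa'(x)=\beta_i\pi_i\kappa'(y) \iff \pi_i\varphi(x)=\pi_i\varphi(y) \iff (x,y)\in\theta_i,
\]
using injectivity of $\beta_i$ and the factorization of $\varphi$ through $\kappa'$. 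Hence $\theta_i'=\theta_i$.

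There is no serious obstacle here; the argument is essentially the universal-property calculation once one identifies the two coordinate kernels as the right candidates. The only point requiring care is unpacking the definition of equivalence of decompositions to confirm that the maps $\alpha_i$ induced by the first isomorphism theorem for sets are exactly what the diagram requires.
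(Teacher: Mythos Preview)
Your proof is correct and is precisely the standard argument: take the coordinate kernels, verify the factor-pair axioms directly, use the first isomorphism theorem to produce the equivalence, and recover uniqueness from the fact that equivalence of decompositions preserves coordinate kernels. The paper does not give a proof of this proposition, citing \cite{Harding 1} instead, but your argument is exactly what one would expect to find there.
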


A key aspect of matters remains, putting structure on the collection of binary direct product decompositions of $X$, or equivalently, putting structure on $\Fact(X)$. We first describe this structure in terms of $\Fact(X)$ since it is easier to state, if somewhat non-transparent. Then we describe the underlying meaning in terms of the natural setting of direct product decompositions. 

\begin{defn}
On the set $\Fact(X)$ of factor pairs of a set $X$ define constants $0,1$, a unary operation $\perp$, and a binary relation $\leq$ as follows. 
\vspace{1ex}
\begin{enumerate}
\item $0=(\nabla,\Delta)$
\item $1=(\Delta,\nabla)$
\item $(\theta,\theta')^\perp=(\theta',\theta)$
\item $(\theta,\theta')\leq(\phi,\phi')$ iff $\phi\subseteq\theta$, $\theta'\subseteq\phi'$, and $\phi$ and $\theta'$ permute
\end{enumerate}
\end{defn}

The definitions are given in a way that is easy to verify. When working with the relation $\leq$ the following \cite{Harding 1} is useful. 

\begin{prop}
For factor pairs $(\theta,\theta')$ and $(\phi,\phi')$ of $X$, these are equivalent.
\vspace{1ex}
\begin{enumerate}
\item $(\theta,\theta')\leq(\phi,\phi')$
\item $\kappa:X\,\to\, X/\theta\,\times\, X/(\theta\circ\phi')\,\times\, X/\phi'$ is a ternary direct product decomposition
\item $\{\Delta,\,\theta,\,\theta',\,\phi,\,\phi',\,\theta\cap\phi',\,\theta'\circ\phi,\,\nabla\}$ is a Boolean sublattice of $\Eqq(X)$ consisting of pairwise permuting equivalence relations. 
\end{enumerate}
\vspace{1ex}

\noindent In the third condition, it is not assumed that the elements are all distinct. 
\label{duck}
\end{prop}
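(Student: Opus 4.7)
The plan is to prove the cycle (3)$\Rightarrow$(1)$\Rightarrow$(3) and to connect (2) to the Boolean sublattice produced in (3). The implication (3)$\Rightarrow$(1) is immediate: inclusion in the sublattice is inherited from $\Eqq(X)$, yielding $\phi\subseteq\theta$ and $\theta'\subseteq\phi'$, while pairwise permutability within the sublattice gives in particular that $\phi$ and $\theta'$ permute.

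For (1)$\Rightarrow$(3), the core construction is to view $a_1=\phi$, $a_2=\theta\cap\phi'$, and $a_3=\theta'$ as the three atoms of the desired Boolean sublattice. I would verify three things. First, the pairwise intersections all equal $\Delta$: $a_1\cap a_3\subseteq\theta\cap\theta'=\Delta$ via $\phi\subseteq\theta$, $a_1\cap a_2\subseteq\phi\cap\phi'=\Delta$, and $a_2\cap a_3\subseteq\theta\cap\theta'=\Delta$ via $\theta'\subseteq\phi'$. Second, $(a_1,a_3)$ permutes by hypothesis, and I would establish the permutability of $a_2$ with each of $a_1$ and $a_3$ by direct element-chasing, threading steps in the three known permuting pairs $(\theta,\theta')$, $(\phi,\phi')$, $(\phi,\theta')$ together with the two inclusions. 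Third, the pairwise joins (which equal compositions, by permutability) come out to $a_1\vee a_2=\theta$, $a_2\vee a_3=\phi'$, and $a_1\vee a_3=\theta'\circ\phi$; the cleanest route is to pass to the coordinates $X\simeq X/\phi\times X/\phi'$ provided by the factor pair $(\phi,\phi')$ and observe that $\phi\subseteq\theta$ forces $\theta$ to be determined by the first coordinate, from which the nontrivial containment $\theta\subseteq\phi\circ(\theta\cap\phi')$ follows by constructing an intermediate element explicitly. The total join is plainly $\nabla$. The eight Boolean combinations of $\{a_1,a_2,a_3\}$ then match precisely the eight elements listed in (3), and pairwise permutation of all eight follows from that of the atoms.

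For (1)$\Leftrightarrow$(2), once (3) is available the map $\kappa$ is the standard bijection associated to the triple of pairwise permuting equivalence relations underlying the Boolean sublattice: injectivity follows from the pairwise meets of the atoms being $\Delta$, and surjectivity follows from two applications of the binary factor pair bijection of the preceding proposition applied to the nested decompositions implicit in the Boolean sublattice. Conversely, a ternary direct product decomposition yields binary factor pairs by grouping factors, from which the inclusions and permutability of (1) can be read off. The main obstacle throughout is the permutability of $a_2=\theta\cap\phi'$ with $a_1$ and with $a_3$; these pairs are not among the hypothesized permuting pairs, and the element-chase requires a careful interleaving of the three hypothesized permutations with the two inclusions. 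Once that step is handled, every remaining verification is routine.
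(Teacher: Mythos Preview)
The paper does not prove this proposition; it is quoted from \cite{Harding 1} as background (note the citation in the sentence immediately preceding the statement), so there is no in-paper argument to compare against.

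Your outline is a sound route to the result. Taking $a_1=\phi$, $a_2=\theta\cap\phi'$, $a_3=\theta'$ as the intended atoms and verifying their pairwise meets, permutability, and joins is the natural construction, and the coordinate argument via $X\simeq X/\phi\times X/\phi'$ is exactly the right device for the nontrivial inclusion $\theta\subseteq\phi\circ(\theta\cap\phi')$. One caution: your claim that (3)$\Rightarrow$(1) is ``immediate'' is too quick. Merely knowing that the listed relations form a Boolean sublattice of $\Eqq(X)$ does not by itself tell you that $\phi\subseteq\theta$; you must first argue that $\theta\cap\phi'$ is forced to be an atom (hence $\theta,\phi'$ are coatoms and $\theta',\phi$ are atoms), after which $\phi\leq\theta$ follows since $\phi$ and $\theta$ are not complementary. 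This is short but not automatic. Finally, be aware that in condition~(2) as printed the middle factor should be $X/(\theta'\circ\phi)$ rather than $X/(\theta\circ\phi')$ --- compare the diagram and the paragraph following the proposition --- so that the three kernels match the three coatoms of the Boolean sublattice you build in~(3).
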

\vspace{2ex}

\begin{center}
\begin{tikzpicture}
\draw[fill] (0,0) circle [radius=0.05]; 
\draw[fill] (-2,1) circle [radius=0.05];
\draw[fill] (0,1) circle [radius=0.05];
\draw[fill] (2,1) circle [radius=0.05];
\draw[fill] (-2,2) circle [radius=0.05];
\draw[fill] (0,2) circle [radius=0.05];
\draw[fill] (2,2) circle [radius=0.05];
\draw[fill] (0,3) circle [radius=0.05];
\draw[thin] (0,0) -- (-2,1) -- (0,2) -- (2,1) -- (0,2);
\draw[thin] (-2,1) -- (-2,2) -- (0,3) -- (2,2) -- (2,1) -- (2,2) -- (0,1) -- (-2,2);
\draw[thin] (0,2) -- (0,3);
\draw[thin] (0,1) -- (0,0) -- (2,1);
\node at (.4,-.2) {$\Delta$};
\node at (.4,3.2) {$\nabla$};
\node at (-2.4,2) {$\theta$};
\node at (-2.4,1) {$\phi$};
\node at (.7,.8) {$\theta\cap\phi'$};
\node at (.7,2.15) {$\phi\circ\theta'$};
\node at (2.4,1) {$\theta'$};
\node at (2.4,2) {$\phi'$};
\end{tikzpicture}
\end{center}
\vspace{2ex}

The nature of the operations on $\Fact(X)$ is best understood when they are translated to equivalence classes of binary direct product decompositions of $X$. If $\{*\}$ is a 1-element set, the constant $0$ is the equivalence class of the decomposition $\zeta:X\to \{*\}\times X$ and $1$ is the equivalence class of $\iota:X\to X\times\{*\}$. For any decomposition $\varphi:X\to A_1\times A_2$, there is a natural decomposition $\varphi':X\to A_2\times A_1$, and $\perp$ takes the equivalence class $[\varphi]$ to $[\varphi']$.

For the meaning of the relation $\leq$ let $\gamma: X\to A_1\times A_2\times A_3$ be a ternary decomposition. From this we can build two binary decompositions 

\[\gamma_{23}:X\to A_1\times (A_2\times A_3) \quad\mbox{ and }\quad \gamma_{12}:X\to (A_1\times A_2)\times A_3\]
\vspace{-1ex}

\noindent  Intuitively we think of the first of these as being smaller than the second because its first factor is smaller. This is made precise by defining $\leq$ to be the relation consisting of all instances of $[\gamma_{12}]\leq[\gamma_{23}]$. Thus one binary decomposition is $\leq$ related to another if they can be built in this way from a common ternary refinement. To relate this to the definition in terms of factor pairs, we refer to Proposition~\ref{duck} and let $\gamma$ be the canonical map $\gamma:X\to X/\theta\,\times\, X/(\phi\circ\theta')\,\times\, X/\phi'$. We will make use here only of the definitions in terms of factor pairs, and refer the reader to \cite{Harding 1,Harding regular,Harding axioms} for further discussion of $n$-ary decompositions.

\begin{defn}
An orthomodular poset $(P,\leq,\perp,0,1)$ is a poset $P$ with least element $0$, largest element $1$, and a unary operation $\perp$ that satisfies the following where $x\perp y$ means $x\leq y^\perp$
\vspace{1ex}
\begin{enumerate}
\item $x\leq y\Rightarrow y^\perp\leq x^\perp$
\item $x^{\perp\perp}=x$
\item $x,x^\perp$ have $0$ as their only lower bound and $1$ as their only upper bound
\item if $x \perp y$ then $x,y$ have a least upper bound written $x\oplus y$
\item if $x\perp y$ then $x\oplus(x\oplus y)^\perp = y^\perp$
\end{enumerate}
\vspace{1ex}

\noindent An orthomodular poset that is a lattice is an orthomodular lattice. 
\end{defn}

Now the foundational result of \cite{Harding 1}. 

\begin{thm}
For $X$ a non-empty set, $(\Fact(X),\leq,\perp,0,1)$ is an orthomodular poset. 
\end{thm}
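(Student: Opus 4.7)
The plan is to verify the five axioms of an orthomodular poset in turn, using Proposition~\ref{duck} to embed the relevant equivalence relations in eight-element Boolean sublattices of $\Eqq(X)$ of pairwise permuting equivalence relations. Once such a sublattice is in hand, essentially all remaining calculations reduce to Boolean-algebra manipulations, in which meet agrees with intersection in $\Eqq(X)$ and join with relational composition.

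First I would check that $\leq$ is a partial order and $\perp$ is an order-reversing involution. Reflexivity and antisymmetry follow immediately from the definitions, and $x^{\perp\perp}=x$, $x\le y\Rightarrow y^\perp\le x^\perp$, and $0\le x\le 1$ can be read off from the formulas for $\perp$, $0$, and $1$. Transitivity is the only subtle step: given $(\theta_1,\theta_1')\le(\theta_2,\theta_2')\le(\theta_3,\theta_3')$, the containments $\theta_3\subseteq\theta_1$ and $\theta_1'\subseteq\theta_3'$ are evident, but one must still show that $\theta_3$ and $\theta_1'$ permute. I would obtain this from condition (2) of Proposition~\ref{duck} applied to each inequality, combining the two resulting ternary decompositions of $X$ into a single four-factor decomposition of $X$ whose kernels of projections to the $2^4$ subproducts form a Boolean sublattice of pairwise permuting equivalence relations that contains both $\theta_3$ and $\theta_1'$. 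The complementation axiom is then immediate: if $(\phi,\phi')$ is a lower bound of $x=(\theta,\theta')$ and $x^\perp=(\theta',\theta)$, then $\phi\supseteq\theta\circ\theta'=\nabla$ and $\phi'\subseteq\theta\cap\theta'=\Delta$, so $(\phi,\phi')=0$; the dual argument handles the upper bound.

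The heart of the proof is axioms (4) and (5). Write $x=(\theta_1,\theta_1')$ and $y=(\theta_2,\theta_2')$ with $x\perp y$, that is, $x\le y^\perp=(\theta_2',\theta_2)$. Proposition~\ref{duck} supplies the eight-element Boolean sublattice
\[B=\{\Delta,\,\theta_1,\,\theta_1',\,\theta_2',\,\theta_2,\,\theta_1\cap\theta_2,\,\theta_1'\circ\theta_2',\,\nabla\}\]
of $\Eqq(X)$ of pairwise permuting equivalence relations, and I would declare $x\oplus y=(\theta_1\cap\theta_2,\,\theta_1'\circ\theta_2')$. Its two coordinates are complementary in $B$, hence intersect in $\Delta$ and compose to $\nabla$, so $x\oplus y\in\Fact(X)$, and the containments and permutabilities defining $x,y\le x\oplus y$ are visible directly in $B$. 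For the least upper bound property, any common upper bound $(\phi,\phi')$ of $x$ and $y$ satisfies $\phi\subseteq\theta_1\cap\theta_2$ on the nose, $\theta_1'\circ\theta_2'\subseteq\phi'\circ\phi'=\phi'$ by transitivity of $\phi'$, and the required permutation of $\phi$ with $\theta_1'\circ\theta_2'$ follows from the standard fact that if $\phi$ permutes with each of two permuting equivalence relations then it permutes with their composite. The orthomodular identity $x\oplus(x\oplus y)^\perp=y^\perp$ then unravels to
\[(\theta_1\cap(\theta_1'\circ\theta_2'),\,\theta_1'\circ(\theta_1\cap\theta_2))=(\theta_2',\theta_2),\]
which is a direct Boolean computation inside $B$.

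I expect the main obstacle to be the transitivity step, because it is the only place that requires building a Boolean sublattice of permuting equivalence relations from data not already packaged inside a single such sublattice by Proposition~\ref{duck}; the remaining axioms reduce to Boolean computations inside the eight-element sublattice associated to the orthogonality $x\perp y$.
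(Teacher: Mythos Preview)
The paper does not give a proof of this theorem; it simply cites it as ``the foundational result of \cite{Harding 1}'' and moves on. Your outline is essentially the argument carried out in that reference: reduce everything to Boolean computations inside the eight-element sublattice of pairwise permuting equivalence relations supplied by Proposition~\ref{duck}, handle transitivity by refining two overlapping ternary decompositions into a common quaternary one, and read off the orthogonal join and the orthomodular identity from the Boolean structure. Your identification of transitivity as the one delicate point is accurate, and your treatment of the least-upper-bound property---in particular the permutation of $\phi$ with $\theta_1'\circ\theta_2'$ via associativity of relational composition---is correct. One small thing worth making explicit when you write it up: in verifying $x\oplus(x\oplus y)^\perp=y^\perp$ you silently use that the components of $(x\oplus y)^\perp$ already lie in the Boolean sublattice $B$ associated to $x\perp y$, so the second orthogonal join can be computed inside the same $B$ rather than in a potentially different eight-element sublattice; this is true, but it is what makes the Boolean computation legitimate.
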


We note that this theorem has applicability to setting much broader than sets \cite{Harding 1, Harding 2, Harding5}, but our concern here is only with $\Fact(X)$ as applied to sets. A simple example is useful. 

\begin{examp}{\em 
Let $X=\{a,b,c,d\}$. Consider the equivalence relations $\theta_1,\theta_2,\theta_3$ that each have two blocks of two elements each. These are $\theta_1$ with blocks $\{a,b\}$ and $\{c,d\}$, $\theta_2$ with blocks $\{a,c\}$ and $\{b,d\}$, and $\theta_3$ with blocks $\{a,d\}$ and $\{b,c\}$. There are eight factor pairs of $X$. The trivial pairs $0=(\nabla,\Delta)$ and $1=(\Delta,\nabla)$, and the six factor pairs $F_{i,j}=(\theta_i,\theta_j)$ for $i\neq j$. A diagram of the orthomodular poset $\Fact(X)$ is shown below. 
\vspace{2ex}

\begin{center}
\begin{tikzpicture}
\draw[fill] (0,0) circle [radius=0.05]; 
\draw[fill] (-3.75,1.5) circle [radius=0.05];
\draw[fill] (-2.25,1.5) circle [radius=0.05];
\draw[fill] (-.75,1.5) circle [radius=0.05];
\draw[fill] (.75,1.5) circle [radius=0.05];
\draw[fill] (2.25,1.5) circle [radius=0.05];
\draw[fill] (3.75,1.5) circle [radius=0.05];
\draw[fill] (0,3) circle [radius=0.05];
\draw[thin] (0,0) -- (-3.75,1.5) -- (0,3) -- (-2.25,1.5) -- (0,0) -- (-.75,1.5) -- (0,3) -- (.75,1.5) -- (0,0) -- (2.25,1.5) -- (0,3) -- (3.75,1.5) -- (0,0);
\node at (.8,-.3) {$(\nabla,\Delta)$};
\node at (.8,3.3) {$(\Delta,\nabla)$};
\node at (-4.1,1.2) {$F_{1,2}$};
\node at (-2.4,1.2) {$F_{2,1}$};
\node at (-1.1,1.2) {$F_{1,3}$};
\node at (1.1,1.2) {$F_{3,1}$};
\node at (2.5,1.2) {$F_{2,3}$};
\node at (4.2,1.2) {$F_{3,2}$};
\end{tikzpicture}
\end{center}
\vspace{2ex}

\noindent There are also eight equivalence classes of binary direct product decompositions of $X$. One as the product of a 1-element set times a 4-element set, one as the product of a 4-element set times a 1-element set, and six as the product of two 2-element sets. The reason there are six is that the way in which the decompositions collapse elements with respect to the first or second projections will differ. This is analogous to there being three ways to obtain a 2-element surjective image of a 3-element set. 
}
\end{examp}

Many properties of $\Fact(X)$ for a set $X$ are discussed in \cite{Tim, Harding 1, Harding regular}. In particular, \cite{Tim} has much information about $\Fact(X)$ when $X$ is a finite set. If the cardinality $|X|=p$ is prime, then $\Fact(X)$ has exactly two elements since there are no non-trivial direct product decompositions. If $|X|=pq$ is the product of two primes, then $\Fact(X)$ looks much like the diagram above. It has a bottom, a top, and a single level of incomparable elements in the middle paired into complements. One can view this as a collection of 4-element Boolean algebras glued together at the top and bottom. When $|X|$ is the product of three primes, $\Fact(X)$ is a collection of 8-element Boolean algebras glued together, but perhaps not just at their top and bottom. This is similar to the lattice of subspaces of $\mathbb{R}^3$. The sizes of these structures are also given in \cite{Tim}, and they grow quickly. For example, when $|X|=27$, the orthomodular poset $\Fact(X)$ has 10,002,268,381,116,211,200,002 elements, but still has just four levels, i.e. is of height three.

\section{Preliminary results about equivalence relations and permutations}

In this section we collect several results about relationships between equivalence relations and factor pairs that will be used throughout the paper. We begin with the following basic definition that applies to equivalence relations on finite, and on infinite sets. 

\begin{defn}
An equivalence relation $\theta$ on a set $X$ is regular if each of its blocks has the same cardinality. 
\end{defn}

The importance of regular equivalence relations in our study is given by the following. Here, and in future, when we refer to the number of elements in a set, we mean the cardinal number, and allow the possibility that it is infinite. 

\pagebreak[3]

\begin{prop}
Let $(\theta,\theta')$ be a factor pair of $X$. Then 
\vspace{1ex}
\begin{enumerate}
\item both $\theta$ and $\theta'$ are regular equivalence relations
\item the number of blocks of $\theta$ is the common cardinality of the blocks of $\theta'$
\item the number of blocks of $\theta'$ is the common cardinality of the blocks of $\theta$
\end{enumerate}
\label{zoop}
\end{prop}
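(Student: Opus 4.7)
The plan is to derive everything from the proposition that characterizes factor pairs via bijectivity of the canonical map, namely that $(\theta,\theta')$ is a factor pair of $X$ iff $\kappa\colon X\to X/\theta\times X/\theta'$, defined by $\kappa(x)=([x]_\theta,[x]_{\theta'})$, is a bijection. Once we have this bijection, the cardinality claims follow by restricting $\kappa$ to a single block.

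First I fix a block $B=[x]_\theta$ of $\theta$. Restricting $\kappa$ to $B$ gives an injection into $\{[x]_\theta\}\times X/\theta'$ because $y\in B$ means $[y]_\theta=[x]_\theta$, so the image always has first coordinate $[x]_\theta$. Next I show this restriction is surjective onto $\{[x]_\theta\}\times X/\theta'$: given any $\theta'$-block $C\in X/\theta'$, bijectivity of $\kappa$ yields a unique $z\in X$ with $\kappa(z)=([x]_\theta,C)$; this $z$ satisfies $[z]_\theta=[x]_\theta$ and hence lies in $B$, while $[z]_{\theta'}=C$. Thus $\kappa$ restricts to a bijection $B\to\{[x]_\theta\}\times X/\theta'$, so
\[|B|=|X/\theta'|,\]
which is the number of blocks of $\theta'$. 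Since the right-hand side does not depend on the chosen block $B$, every block of $\theta$ has the same cardinality, so $\theta$ is regular. This simultaneously yields statement (3) and the half of (1) concerning $\theta$.

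For the remaining claims I repeat the argument with the roles of $\theta$ and $\theta'$ interchanged. Alternatively, I note that $(\theta',\theta)$ is also a factor pair (indeed $(\theta,\theta')^\perp=(\theta',\theta)$, and $\perp$ is a well-defined operation on $\Fact(X)$ by the orthomodularity setup), so the same argument applied to $(\theta',\theta)$ gives that every block of $\theta'$ has cardinality $|X/\theta|$, establishing statement (2) and regularity of $\theta'$.

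There is no real obstacle here; the main thing to be careful about is simply the verification that the restriction $\kappa|_B$ actually hits every element of $\{[x]_\theta\}\times X/\theta'$, which relies on \emph{bijectivity} (not merely injectivity) of the canonical map $\kappa$ — precisely the content that condition $\theta\circ\theta'=\nabla$ in the definition of a factor pair is designed to guarantee. The argument works uniformly for finite and infinite cardinalities since we are only ever comparing cardinalities of fibers of a bijection.
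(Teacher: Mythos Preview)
Your proof is correct and follows essentially the same approach as the paper: both use the bijectivity of the canonical map $X\to X/\theta\times X/\theta'$ to conclude that each block of $\theta$ has cardinality $|X/\theta'|$ (and symmetrically). You have simply spelled out in detail the restriction-to-a-block argument that the paper compresses into a single sentence.
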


\begin{proof}
Since the natural map $\varphi:X\to X/\theta\times X/\theta'$ is a bijection, each block of $\theta$ has cardinality $|X/\theta'|$, and this is equal to the number of blocks of $\theta'$, and each block of $\theta'$ has cardinality $|X/\theta|$, and this is equal to the number of blocks of $\theta$. 
\end{proof}

\begin{prop}
For an equivalence relation $\theta$ on a set $X$, the following are equivalent. 
\vspace{1ex}
\begin{enumerate}
\item $\theta$ is regular
\item there is an equivalence relation $\theta'$ with $(\theta,\theta')$ a factor pair of $X$
\end{enumerate}
\label{stinky}
\end{prop}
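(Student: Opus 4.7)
The direction (2)$\Rightarrow$(1) is already done: if $(\theta,\theta')$ is a factor pair, then Proposition~\ref{zoop} gives regularity of $\theta$ (its blocks all have cardinality $|X/\theta'|$). So the only content is (1)$\Rightarrow$(2): given a regular $\theta$, I must exhibit a partner $\theta'$.

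The plan is to build $\theta'$ by ``coordinatizing'' the blocks of $\theta$ with a common index set. Let $c$ be the common cardinality of the blocks of $\theta$, fix a set $C$ with $|C|=c$, and for each $\theta$-block $B$ choose a bijection $f_B\colon B\to C$ (using the axiom of choice). Define $\theta'$ on $X$ by
\[
x\mathrel{\theta'}y \quad\Longleftrightarrow\quad f_{[x]_\theta}(x)=f_{[y]_\theta}(y).
\]
This is evidently an equivalence relation --- its blocks are exactly the sets $\{x\in X: f_{[x]_\theta}(x)=k\}$ indexed by $k\in C$, each a transversal of the $\theta$-blocks.

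Next I would check the two factor-pair axioms. For $\theta\cap\theta'=\Delta$: if $x\mathrel{\theta}y$ then $x,y$ lie in a common block $B$, and $x\mathrel{\theta'}y$ then forces $f_B(x)=f_B(y)$, so $x=y$ since $f_B$ is a bijection. For $\theta\circ\theta'=\nabla$: given arbitrary $x,y\in X$ with respective blocks $B=[x]_\theta$, $B'=[y]_\theta$, set $z=f_B^{-1}(f_{B'}(y))$. Then $z\in B$ so $x\mathrel{\theta}z$, and $f_B(z)=f_{B'}(y)$ so $z\mathrel{\theta'}y$, giving $(x,y)\in\theta\circ\theta'$. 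Hence $(\theta,\theta')$ is a factor pair. (Equivalently, the canonical map $X\to X/\theta\times C$ sending $x\mapsto([x]_\theta,f_{[x]_\theta}(x))$ is a bijection, and Proposition~2.7 immediately yields the factor pair.)

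There is no real obstacle here; the only subtlety is set-theoretic, namely the simultaneous choice of one bijection $f_B$ for every block $B$, which requires the axiom of choice when $X/\theta$ is infinite. Everything else is a direct verification from the definitions.
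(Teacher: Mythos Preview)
Your proof is correct and is essentially the same as the paper's. The paper enumerates each block as $b_{i,j}$ with $j\in\lambda$ and defines $\theta'$ by equality of the second index, which is exactly your construction with $C=\lambda$ and $f_B(b_{i,j})=j$; the verifications of $\theta\cap\theta'=\Delta$ and $\theta\circ\theta'=\nabla$ proceed identically.
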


\begin{proof}
We show (1) $\Rightarrow$ (2), the other direction is given by the previous result. Let $E_i$ $(i\in \kappa)$~be an indexing of the blocks of $\theta$, and suppose that each block has $\lambda$ elements. For each $i\in \kappa$ choose an enumeration $b_{i,j}$ $(j\in\lambda)$ of the elements of $B_i$. Then $b_{i,j}$ where $i\in \kappa$ and $j\in\lambda$ enumerates the elements of $X$, and $b_{i,j}$ is $\theta$-related to $b_{p,q}$ iff $i=p$. Define $\theta'$ to be the relation where $b_{i,j}$ is $\theta'$-related to $b_{p,q}$ iff $j=q$. It is easily seen that $\theta'$ is an equivalence relation, that $\theta\cap\theta'=\Delta$, and for any $i,j,p,q$ we have $(b_{i,j},b_{i,q})\in \theta$ and $(b_{i,q},b_{j,q})\in\theta'$, so $\theta\circ\theta'=\nabla$. 
\end{proof} 

\begin{defn}
For a natural number $n$, an equivalence relation $\theta$ on $X$ is an $n$-relation if each equivalence class of $\theta$ has $n$ elements. 
\end{defn}

\begin{defn} 
An equivalence relation $\theta$ is finite regular if it is an $n$-relation for some $n$. Let $\Eq*(X)$ be the poset of finite regular equivalence relations on $X$, partially ordered by set inclusion. 
\label{fool}
\end{defn}

A permutation of a set $X$ is a bijection $\sigma:X\to X$. The collection $\Perm(X)$ of all permutations of $X$ is a group under function composition. This group is usually called the permutation group of $X$, or the symmetric group of $X$. 

\begin{defn}
For $\sigma$ a permutation of $X$ and $\theta$ an equivalence relation on $X$ define 
\[\sigma(\theta)=\{(\sigma(x),\sigma(y)):(x,y)\in\theta\}\]
\end{defn}

The following is easily verified. 

\begin{prop}
Let $\sigma$ be a permutation of $X$ and $\theta,\phi$ be equivalence relations on $X$. 
\vspace{1ex}
\begin{enumerate}
\item $\sigma(\theta)$ is an equivalence relation
\item $\theta$ is an $n$-relation iff $\sigma(\theta)$ is an $n$-relation
\item $\theta\subseteq\phi$ iff $\sigma(\theta)\subseteq\sigma(\phi)$
\item $\sigma(\theta\cap\phi) = \sigma(\theta)\cap\sigma(\phi)$
\item $\sigma(\theta\circ\phi) = \sigma(\theta)\circ\sigma(\phi)$
\end{enumerate}
\label{lucky}
\end{prop}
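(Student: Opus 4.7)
The plan is to observe that every clause of the proposition expresses some piece of elementary set-theory that is preserved under applying a bijection, so the entire proposition reduces to unpacking the definition of $\sigma(\theta)$ and exploiting the fact that $\sigma$ is both injective and surjective. A convenient reformulation I would write down at the outset is that $(u,v)\in\sigma(\theta)$ iff $(\sigma^{-1}(u),\sigma^{-1}(v))\in\theta$; this ``change of variables'' identity follows from injectivity of $\sigma$ and it underlies all five verifications.

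For (1) I would check reflexivity, symmetry, and transitivity of $\sigma(\theta)$ directly from the reformulation above, each being an immediate transcription of the corresponding property of $\theta$. For (2) I would note that $\sigma$ restricts to a bijection between each $\theta$-block $B$ and the $\sigma(\theta)$-block $\sigma(B)$, and that every $\sigma(\theta)$-block arises this way by surjectivity; hence blocks have matching cardinalities. For (3), one direction is immediate from the definition of $\sigma(\theta)$ as a setwise image, and the reverse direction follows by applying $\sigma^{-1}$ and using the easy direction for $\sigma^{-1}$. For (4), the inclusion $\sigma(\theta\cap\phi)\subseteq\sigma(\theta)\cap\sigma(\phi)$ is immediate for any map, and the reverse inclusion uses injectivity of $\sigma$: if $(\sigma(x),\sigma(y))$ lies in both $\sigma(\theta)$ and $\sigma(\phi)$, then one produces witnesses $(x_1,y_1)\in\theta$ and $(x_2,y_2)\in\phi$ with $\sigma(x_i)=\sigma(x)$, $\sigma(y_i)=\sigma(y)$, and injectivity forces $x_i=x$, $y_i=y$.

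The one step that requires a touch more care is (5), where I would chase the definition of relational composition. Given $(\sigma(x),\sigma(z))\in\sigma(\theta\circ\phi)$, there is $y$ with $(x,y)\in\theta$ and $(y,z)\in\phi$, so $\sigma(y)$ witnesses $(\sigma(x),\sigma(z))\in\sigma(\theta)\circ\sigma(\phi)$. Conversely, if $(\sigma(x),\sigma(z))\in\sigma(\theta)\circ\sigma(\phi)$ with intermediate point $w$, then surjectivity produces $y$ with $\sigma(y)=w$, and injectivity of $\sigma$ applied to the two witness pairs lets one lift these to $(x,y)\in\theta$ and $(y,z)\in\phi$, so $(\sigma(x),\sigma(z))\in\sigma(\theta\circ\phi)$. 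None of these steps is an obstacle; the only thing to watch is that the argument for (4) and the converse of (5) really do require injectivity, not merely the set-theoretic image definition, to avoid producing spurious preimages.
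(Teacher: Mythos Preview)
Your proof is correct and is exactly the routine verification the paper has in mind; the paper itself gives no proof, stating only that the proposition ``is easily verified.'' Your reformulation $(u,v)\in\sigma(\theta)$ iff $(\sigma^{-1}(u),\sigma^{-1}(v))\in\theta$ is the natural way to organize the check, and your care in noting where injectivity and surjectivity are actually used is more than the paper asks for.
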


We begin to relate permutations of $X$ to automorphisms of the posets $\Eqq(X)$ and $\Eq*(X)$, and to automorphisms of the orthomodular poset $\Fact(X)$. The easy direction is moving from permutations of $X$ to automorphisms of these more complicated structures. 

\begin{defn}
An automorphism of a poset $P$ is an order preserving bijection $\beta$ from $P$ to itself whose inverse is order preserving. Let $\Aut(P)$ be the group of automorphisms of $P$. 
\end{defn} 

\begin{prop} 
Let $X$ be an infinite set. Then there are one-one group homomorphisms $\Psi:\Perm(X)\to\Aut(\Eqq(X))$ and $\Psi^*:\Perm(X)\to\Aut(\Eq*(X))$ given by  
\vspace{1ex}
\begin{align*}
\Psi(\sigma)(\theta) \,=\, \sigma(\theta) \,\,&\mbox{ for all $\,\,\theta\in\Eqq(X)$}\\
\Psi^*(\sigma)(\theta)\, =\, \sigma(\theta) \,\,&\mbox{ for all $\,\,\theta\in\Eq*(X)$}
\end{align*}
\vspace{-2ex}
\label{sad}
\end{prop}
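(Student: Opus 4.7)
The plan is to verify the claims directly, letting Proposition~\ref{lucky} do nearly all of the routine work. Items~(1), (3), (4), and~(5) of that proposition say that for each $\sigma\in\Perm(X)$ the map $\theta\mapsto\sigma(\theta)$ sends equivalence relations to equivalence relations in an order-preserving manner, while the identity $(\sigma\tau)(\theta)=\sigma(\tau(\theta))$, immediate from the definition of $\sigma(\theta)$, shows that $\Psi(\sigma^{-1})$ is a two-sided inverse of $\Psi(\sigma)$ whose inverse is also order preserving. Hence $\Psi(\sigma)\in\Aut(\Eqq(X))$ and $\Psi$ is a group homomorphism. Item~(2) of Proposition~\ref{lucky} says $\sigma$ preserves the property of being an $n$-relation, so the same map restricts to $\Eq*(X)$, yielding $\Psi^*$ with the identical properties.

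The main obstacle is injectivity, and in particular injectivity of $\Psi^*$, since the witnessing equivalence relation is required to be finite regular. Suppose $\sigma\neq\mathrm{id}$ and fix $x\in X$ with $y:=\sigma(x)\neq x$. Using the hypothesis that $X$ is infinite, choose $z\in X\setminus\{x,y\}$ and then $w\in X\setminus\{x,y,z,\sigma(z)\}$; both choices are possible since the sets being avoided are finite. The remaining set $X\setminus\{x,y,z,w\}$ has cardinality $|X|$, and since $|X|$ is an infinite cardinal it admits a partition into $2$-element blocks. Let $\theta$ be the resulting $2$-relation, with distinguished blocks $\{x,z\}$ and $\{y,w\}$; then $\theta\in\Eq*(X)$. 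If $\sigma(\theta)=\theta$ held, then $\sigma$ would send the block $\{x,z\}$ of $\theta$ to a block of $\theta$; but $\sigma(\{x,z\})=\{y,\sigma(z)\}$ and the unique block of $\theta$ containing $y$ is $\{y,w\}$, so this would force $\sigma(z)=w$, contradicting the choice of $w$. Hence $\sigma(\theta)\neq\theta$, and since $\theta$ lies in both $\Eq*(X)$ and $\Eqq(X)$, this simultaneously settles injectivity of $\Psi^*$ and of $\Psi$.

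In summary, everything except injectivity is a direct application of Proposition~\ref{lucky}; the real content is constructing a single finite regular distinguishing equivalence relation for a nontrivial permutation, and infiniteness of $X$ is used in exactly one place there, namely to supply four distinct elements $x,y,z,w$ with $\sigma(z)\neq w$ and to pair up the remainder of $X$.
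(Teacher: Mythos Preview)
Your proof is correct and follows essentially the same approach as the paper: both arguments derive the homomorphism and automorphism properties directly from Proposition~\ref{lucky}, and both establish injectivity by exhibiting a finite regular relation not fixed by a nontrivial $\sigma$. The only difference is cosmetic---you build a $2$-relation with blocks $\{x,z\},\{y,w\}$ while the paper builds a $3$-relation with a block containing $x,y,\sigma(y)$ but not $\sigma(x)$; either choice works equally well.
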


\begin{proof}
Proposition~\ref{lucky} shows that $\Psi(\sigma)$ is an order preserving map from $\Eqq(X)$ to itself. The permutation $\sigma$ has an inverse $\sigma^{-1}$ and it is easily seen that $\Psi(\sigma^{-1})$ is an inverse of $\Psi(\sigma)$. Therefore $\Psi(\sigma)$ is an automorphism of $\Eqq(X)$. For permutations $\sigma$ and $\delta$ of $X$ we have 

\[\Psi(\sigma\circ\delta)(\theta)=\sigma(\delta(\theta))=(\Psi(\sigma)\circ\Psi(\delta))(\theta)\]
\vspace{-1ex}

\noindent  So $\Psi$ is a group homomorphism. An identical argument shows that $\Psi^*$ is a group homomorphism once it is noted that Proposition~\ref{lucky} shows that $\theta$ being a finite regular equivalence relation implies that $\sigma(\theta)$ is such also. 

We will show that $\Psi^*$ is one-one. Our argument will also apply to show that $\Psi$ is one-one. Suppose $\sigma$ is a permutation of $X$ that is not the identity. Then there is $x\in X$ with $\sigma(x)\neq x$. Since $X$ is infinite there is $y$ with $y,\sigma(y)\not\in\{x,\sigma(x)\}$. Here $y$ might or might not be equal to $\sigma(y)$, it does not matter. Then since $X$ is infinite, there is a 3-relation $\theta$ with one block of $\theta$ containing $x,y,\sigma(y)$ and not containing $\sigma(x)$. It cannot be the case that $\sigma(\theta)=\theta$ because $(x,y)\in\theta$ implies that $(\sigma(x),\sigma(y))\in\sigma(\theta)$, but by construction $(\sigma(x),\sigma(y))\not\in\theta$. So $\Psi^*(\sigma)$ is not the identity automorphism of $\Eq*(X)$. 
\end{proof}

\begin{rmk}{\em 
It is easily seen that if $|X| > 2$ the map $\Psi:\Perm(X)\to\Aut(\Eqq(X))$ is one-one, and it is a nice exercise to show that it is onto as well. When $|X|$ is prime or $|X|=4$ the map $\Psi^*:\Perm(X)\to\Aut(\Eq*(X))$ is not one-one. In all other cases, it is one-one as can be seen by modifying the proof given above. It remains open whether $\Psi^*$ is a group isomorphism when $X$ is infinite. However, a related problem sufficient for our purposes is answered positively in Section~9.}
\end{rmk}

\begin{defn}
An automorphism of an orthomoular poset $Q$ is an automorphism of the poset underlying $Q$ that satisfies $\alpha(x^\perp)=\alpha(x)^\perp$. Let $\Aut(Q)$ be the group of automorphisms of the orthomodular poset $Q$. 
\end{defn}

\begin{prop} 
Let $X$ be an infinite set. Then there is a one-one group homomorphism $\Gamma:\Perm(X)\to\Aut(\Fact(X))$  given by  
\[\Gamma(\sigma)(\theta,\theta') \,=\, (\sigma(\theta),\sigma(\theta'))\]  
\vspace{-2ex}
\label{dopey}
\end{prop}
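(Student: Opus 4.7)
The plan is to verify four things in succession: that $\Gamma(\sigma)$ lands in $\Fact(X)$, that it is an automorphism of the orthomodular poset structure, that $\Gamma$ respects composition, and finally that it is one-one.

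First I would show that $(\sigma(\theta),\sigma(\theta'))$ is again a factor pair whenever $(\theta,\theta')$ is. This is immediate from Proposition~\ref{lucky}: since $\sigma$ commutes with $\cap$ and with relational composition, and clearly fixes $\Delta$ and $\nabla$, the defining conditions $\theta\cap\theta'=\Delta$ and $\theta\circ\theta'=\nabla$ transfer. Next, to check that $\Gamma(\sigma)$ is an automorphism of $\Fact(X)$, I would verify preservation of $0, 1, \perp$, and $\leq$. The first three are direct from the definitions: $\sigma$ fixes $\Delta$ and $\nabla$, so $\Gamma(\sigma)(0)=0$ and $\Gamma(\sigma)(1)=1$, while $\perp$ simply swaps coordinates and so commutes with coordinate-wise application of $\sigma$. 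For $\leq$, the relation $(\theta,\theta')\leq(\phi,\phi')$ is defined by inclusions and the permutability of $\phi$ and $\theta'$; by Proposition~\ref{lucky} these three conditions transfer identically to the $\sigma$-images.

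The group homomorphism property is an immediate unpacking:
\[\Gamma(\sigma\circ\delta)(\theta,\theta')=((\sigma\circ\delta)(\theta),(\sigma\circ\delta)(\theta'))=\Gamma(\sigma)(\Gamma(\delta)(\theta,\theta')).\]
In particular $\Gamma(\sigma^{-1})$ is a two-sided inverse of $\Gamma(\sigma)$, which confirms that $\Gamma(\sigma)$ is a bijection on $\Fact(X)$ and so indeed an automorphism.

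For injectivity I would adapt the argument used for $\Psi^*$ in Proposition~\ref{sad}. Suppose $\sigma$ is not the identity, and pick $x\in X$ with $\sigma(x)\neq x$. Using that $X$ is infinite, choose $y\in X$ with $y,\sigma(y)\notin\{x,\sigma(x)\}$, and then (again using infiniteness) construct a $3$-relation $\theta$ on $X$ whose block containing $x$ also contains $y$ and $\sigma(y)$ but not $\sigma(x)$. Then $(x,y)\in\theta$ while $(\sigma(x),\sigma(y))\notin\theta$, so $\sigma(\theta)\neq\theta$. Because $\theta$ is a $3$-relation, it is regular, and Proposition~\ref{stinky} produces $\theta'$ with $(\theta,\theta')\in\Fact(X)$. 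Then $\Gamma(\sigma)(\theta,\theta')\neq(\theta,\theta')$, so $\Gamma(\sigma)$ is not the identity. The only step that really requires thought is this last one, since it is the only place one must exhibit a \emph{factor pair} (rather than just an equivalence relation) that is moved by $\sigma$; the route through Proposition~\ref{stinky} makes this step painless.
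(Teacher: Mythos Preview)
Your proof is correct and follows essentially the same approach as the paper's own proof. The only minor difference is that for injectivity the paper simply cites Proposition~\ref{sad} to obtain a regular relation $\theta$ with $\sigma(\theta)\neq\theta$, whereas you reproduce that argument inline before invoking Proposition~\ref{stinky}; the content is identical.
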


\begin{proof}
It follows from Proposition~\ref{lucky} that if $(\theta,\theta')$ is a factor pair, so is $(\sigma(\theta),\sigma(\theta'))$. It also follows from Proposition~\ref{lucky} and the definition of $\leq$ in $\Fact(X)$, that $\Gamma(\sigma)$ is order preserving. Note that $\Gamma(\sigma)$ preserves the orthocomplementation $\perp$ in $\Fact(X)$ since 

\[ \Gamma(\sigma)((\theta,\theta')^\perp)=\Gamma(\sigma)(\theta',\theta)=(\sigma(\theta'),\sigma(\theta))=(\Gamma(\sigma)(\theta,\theta'))^\perp\]
\vspace{-1ex}

\noindent  It is easily seen that $\Gamma(\sigma^{-1})$ is the inverse of $\Gamma(\sigma)$, so $\Gamma(\sigma)$ is an automorphism of $\Fact(X)$, and  clearly $\Gamma$ is a group homomorphism. To see that $\Gamma$ is one-one, if $\sigma$ is a permutation of $X$ that is not the identity, Proposition~\ref{sad} shows there is a regular relation $\theta$ with $\sigma(\theta)\neq\theta$. Proposition~\ref{stinky} gives $\theta'$ with $(\theta,\theta')$ a factor pair, and then $\Gamma(\sigma)(\theta,\theta')\neq(\theta,\theta')$, so $\Gamma(\sigma)$ is not the identity automorphism. 
\end{proof}

\section{Outline of the proof of the main theorem}

At this point we have enough of the basics to outline of the proof of the main theorem which we state in a more precise form below. Much of the remainder of the paper is devoted to establishing the steps in this outline. 

\begin{thm} 
For an infinite set $X$ there is a group isomorphism $\Gamma:\Perm(X)\to\Aut(\Fact(X))$ given by 
\[ \Gamma(\sigma)(\theta,\theta')=(\sigma(\theta),\sigma(\theta'))\]
\vspace{-2ex}
\label{main}
\end{thm}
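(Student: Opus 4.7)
By Proposition \ref{dopey} we already know that $\Gamma$ is an injective group homomorphism, so the entire content of Theorem \ref{main} lies in proving surjectivity. Accordingly, I fix an arbitrary $\alpha \in \Aut(\Fact(X))$ and aim to produce a permutation $\sigma \in \Perm(X)$ with $\Gamma(\sigma) = \alpha$. The verification will proceed in three stages: descend from $\Fact(X)$ to $\Eq*(X)$, convert the induced action on finite regular equivalence relations into a point-level map on $X$, and finally check that this point map reproduces $\alpha$.

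The first stage is to extract from $\alpha$ an automorphism $\beta \in \Aut(\Eq*(X))$ via the rule $\beta(\theta) = \phi$, where $\alpha(\theta, \theta') = (\phi, \phi')$ and $\theta'$ is any complement making $(\theta, \theta')$ a factor pair (such $\theta'$ exists by Proposition \ref{stinky}). The crucial point is showing that $\phi$ depends only on $\theta$ and not on the choice of $\theta'$. I would attempt this through an atomicity argument: the family of atoms of $\Fact(X)$ lying below $(\theta, \theta')$ should be determined canonically by $\theta$ alone, so $\alpha$ sends this family to a family whose canonical first coordinate is $\phi$, again depending only on $\theta$. The strong atomicity flagged in the abstract and developed in Section~5 is what would deliver this, and order-preservation of $\beta$ would follow from the same bookkeeping together with the behaviour of $\alpha$ on $\perp$.

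The second stage uses $\beta$ to build a permutation $\sigma$ of $X$. The blocks of a 2-relation partition $X$ into unordered pairs, and $\beta$ permutes 2-relations while respecting refinement by 4-relations and common refinements of families of 2-relations. From this one can recover bijections first on 4-element blocks and then on the 2-element blocks within them. Here the result of \cite{Tim} for 27-element sets becomes decisive: a suitable family of 27-element subsets $Y \subseteq X$ determines sub-orthomodular posets of $\Fact(X)$ on which $\beta$ restricts, and on each such $Y$ the 27-element theorem supplies a unique permutation of $Y$ realising the restriction. These local permutations are then glued into a single permutation $\sigma$ of $X$.

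The main obstacle I anticipate is the gluing step: coherence across overlapping 27-element subsets must be established and the local permutations must be shown to agree on common elements. This is precisely where the infiniteness of $X$ should be essential, giving enough room to embed any small finite subset of points inside many different suitable 27-element subsets and compare the local permutations they produce. Once $\sigma$ is in hand, verifying $\Gamma(\sigma) = \alpha$ amounts to checking agreement of $\alpha$ and $\Gamma(\sigma)$ on first coordinates of factor pairs (true by construction on finite regular relations, and extended to general factor pairs using strong atomicity to reduce to atoms supported on small finite blocks) and on second coordinates (obtained from first coordinates since both $\alpha$ and $\Gamma(\sigma)$ commute with $\perp$). This completes the proof that $\Gamma$ is a group isomorphism.
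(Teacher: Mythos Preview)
Your three-stage architecture (extract $\beta$ on $\Eq*(X)$ from $\alpha$, build $\sigma$ from $\beta$, verify $\Gamma(\sigma)=\alpha$) matches the paper exactly, and your final verification via $\perp$ and atomicity is essentially how the paper closes the argument. But two of your proposed mechanisms are misplaced in ways that would block the proof as written.

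First, the 27-element theorem enters in Stage~1, not Stage~2, and it is applied to \emph{quotients} $X/\gamma$ with $|X/\gamma|=27$, not to 27-element \emph{subsets} $Y\subseteq X$. By Theorem~\ref{interval} the interval $[0,(\gamma,\gamma')]$ of $\Fact(X)$ is isomorphic to $\Fact(X/\gamma)$, and it is on these quotient structures that the 27-element theorem runs. The paper uses it to show that if two 3-atoms share a first coordinate then so do their images under $\alpha$ (Proposition~\ref{nm}); this is the key step making $\beta$ well-defined. Your atomicity sketch for Stage~1 is too coarse: the set of atoms below $(\theta,\theta')$ is \emph{not} determined by $\theta$ alone (it visibly depends on $\theta'$), so something further is needed --- namely the linking of 3-atoms with common first coordinate through overlapping height-3 intervals (Section~7), combined with the 27-element result inside each such interval.

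Second, Stage~2 in the paper works purely combinatorially inside $\Eq*(X)$ and never invokes the 27-element theorem. From $\beta$ one builds, for each 2-relation $\pi$, bijections $\sigma^\pi$ on the 2-element and 4-element unions of blocks of $\pi$ by analyzing how $\beta$ acts on $(2,4)$-normal ideals; these are shown to agree across different $\pi$ via overlap-in-4-sets arguments (Section~10), and $\sigma$ is then extracted pointwise as an intersection. Your proposed gluing of local permutations on 27-element subsets would not get started, because a subset $Y\subseteq X$ does not carve out a sub-orthomodular poset of $\Fact(X)$ in any natural way --- only quotients do.
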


In Proposition~\ref{dopey} we have shown that $\Gamma$ is a one-one group homomorphism, so it only remains to show that $\Gamma$ is onto. This amounts to establishing the following. 

\begin{thm}
If $X$ is an infinite set and $\alpha$ is an automorphism of $\Fact(X)$, then there is a permutation $\sigma$ of $X$ with $\Gamma(\sigma)=\alpha$. 
\label{main'}
\end{thm}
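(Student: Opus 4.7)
Following the outline in Section~4, the plan is to lift $\alpha$ through progressively simpler combinatorial invariants of $X$ on which $\alpha$ must act canonically, ending at a bijection $\sigma$ of the underlying set. The first major task is to show that $\alpha$ induces an automorphism $\beta$ of the poset $\Eq*(X)$. The key observation is that a finite regular equivalence relation $\theta$ should be recoverable from the set of factor pairs with $\theta$ as first coordinate: I would try to define $\beta(\theta)$ as the common first coordinate of $\alpha(\theta,\theta')$ as $\theta'$ varies over all completions of $\theta$ to a factor pair. The crux is showing this is well-defined, that is, that $\alpha$ sends factor pairs sharing a first coordinate to factor pairs sharing a first coordinate. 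This is where the atomistic structure of $\Fact(X)$ from Section~5 has to be used: one characterizes ``same first coordinate'' via a purely orthomodular condition on joins of common atoms below pairs of elements, a condition preserved by $\alpha$.

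Once $\beta$ is in hand, the next step is to move from $\beta$ to a set-theoretic action on certain small subsets of $X$. Following the outline, I would first treat $2$-relations whose only non-trivial block is a single $4$-element set: these, together with their pairwise factor-pair relationships, are precisely the configuration illustrated in the $4$-element example after Theorem~\ref{main}, and $\beta$ must permute such relations, hence permute $4$-element subsets of $X$. Then I would refine further to $2$-relations with a single $2$-element non-trivial block to obtain an induced action on $2$-element subsets. The result of \cite{Tim} for a $27$-element set is invoked at this stage to rigidify the action: by embedding a given configuration of $4$- or $2$-element subsets into a $27$-element subset $Y \subseteq X$ and studying the restriction of $\beta$ to equivalence relations supported on $Y$, the finite Wigner-type theorem forces the local action to arise from an actual permutation of $Y$.

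The third task is to glue the local permutations into a global $\sigma \in \Perm(X)$. Since $X$ is infinite, any finite set of elements sits inside arbitrarily many $27$-element subsets, and the uniqueness coming from the finite Wigner theorem forces different local descriptions to agree on their overlap. With $\sigma$ constructed, I would finish by proving $\Gamma(\sigma)=\alpha$. By construction the two automorphisms agree on a family of $2$-relations supported on $2$-element subsets. Because any finite regular equivalence relation is a join (in $\Eq*(X)$) of such minimal $2$-relations, $\Gamma(\sigma)$ and $\alpha$ induce the same $\beta$ on $\Eq*(X)$. Then the well-definedness step used to build $\beta$ reverses: if two automorphisms of $\Fact(X)$ induce the same action on first coordinates of factor pairs, they must agree on the orthocomplements $(\theta,\theta')^\perp=(\theta',\theta)$ as well, hence on all of $\Fact(X)$.

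The step I expect to be the main obstacle is the very first one: extracting $\beta$ from $\alpha$. Atomicity in $\Fact(X)$ is delicate because factor pairs are not in bijection with equivalence relations (a given $\theta$ may admit many complements $\theta'$), and the orthomodular order $\leq$ depends on a permutability condition that has no purely lattice-theoretic shadow. Showing that ``sharing a first coordinate'' is an orthomodular invariant is the technical heart of the proof, and is why so much preparation on atoms (Section~5) and on decompositions and equivalence relations (Sections~6--7) is needed before Section~8 can even begin.
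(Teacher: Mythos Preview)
Your high-level architecture---pass from $\alpha$ to an automorphism $\beta$ of $\Eq*(X)$, then to a permutation $\sigma$, then verify $\Gamma(\sigma)=\alpha$---matches the paper exactly. But two of your key technical mechanisms are wrong, and without them the outline does not go through.

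First, your use of the 27-element theorem is misdirected. You propose to embed configurations into a 27-element \emph{subset} $Y\subseteq X$ and restrict $\beta$ to relations supported on $Y$. This cannot work: relations on a proper subset $Y$ are not elements of $\Eq*(X)$ (they are not even equivalence relations on $X$), so there is no restriction map available. The paper instead uses 27-element \emph{quotients}: by Theorem~\ref{interval}, an interval $[0,(\gamma,\gamma')]$ in $\Fact(X)$ is isomorphic to $\Fact(X/\gamma)$, and when $|X/\gamma|=27$ the 27-element theorem applies to the restriction of $\alpha$ to this interval. The paper then shows (Section~7) that any two 3-atoms with the same first coordinate can be \emph{linked} by a finite chain of 3-atoms, each consecutive pair lying in a common such interval. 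This linking argument is substantial and is what actually drives the ``first coordinate preservation'' result---your suggestion of a ``purely orthomodular condition on joins of common atoms'' is too optimistic; no such direct characterization is found.

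Second, your description of the passage from $\beta$ to $\sigma$ contains a type error. You speak of ``2-relations whose only non-trivial block is a single 4-element set,'' but a 2-relation has \emph{every} block of size 2, and $\Eq*(X)$ contains only regular relations. The paper does not touch relations with a single non-trivial block. Instead (Sections~9--11), for each 2-relation $\pi$ it builds a bijection $\sigma^\pi$ from blocks of $\pi$ to blocks of $\beta(\pi)$ using $(2,4)$-normal ideals in $\Eq*(X)$, proves that $\sigma^\pi$ and $\sigma^\lambda$ agree on any common block, and defines $\sigma(x)$ as the unique element of $\sigma^\pi(P)\cap\sigma^\lambda(Q)$ for suitable $P,Q$. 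The 27-element theorem plays no role in this second half; it is used only once, in the interval argument above.
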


We outline the proof of Theorem~\ref{main'}. We begin with two results established previously. The first was established in \cite{Taewon}. Before stating it, we recall the well-known result \cite{Kalmbach,Ptak} that an interval $[0,a]$ of an orthomodular poset $P$ naturally forms an orthomodular poset.  

\begin{thm}
If $(\theta,\theta')$ is a factor pair of a set $X$, then the interval $[(\nabla,\Delta),(\theta,\theta')]$ of $\Fact(X)$ is isomorphic as an orthomodular poset to $\Fact(X/\theta)$.
\label{interval}
\end{thm}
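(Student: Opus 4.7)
The plan is to use the factor pair $(\theta,\theta')$ to identify $X$ with $X/\theta\times X/\theta'$, and then show that factor pairs lying below $(\theta,\theta')$ in $\Fact(X)$ correspond bijectively to factor pairs on the first component $X/\theta$. The key algebraic tool is Proposition~\ref{duck}, which for $(\phi,\phi')\leq(\theta,\theta')$ produces the Boolean sublattice $\{\Delta,\theta,\phi,\theta',\phi',\phi\cap\theta',\theta\circ\phi',\nabla\}$ of $\Eqq(X)$ in which all pairs permute.

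For the forward map $\Phi$, let $(\phi,\phi')\in[(\nabla,\Delta),(\theta,\theta')]$. Both $\phi$ and $\theta\circ\phi'$ contain $\theta$, so they descend to equivalence relations $\bar\phi$ and $\overline{\theta\circ\phi'}$ on $X/\theta$; set $\Phi(\phi,\phi')=(\bar\phi,\overline{\theta\circ\phi'})$. Reading $\phi\cap(\theta\circ\phi')=\theta$ and $\phi\circ(\theta\circ\phi')=\nabla$ off the Boolean sublattice (where $\phi$ and $\theta\circ\phi'$ are complementary coatoms above the atom $\theta$) and quotienting by $\theta$ shows that $\Phi(\phi,\phi')$ is a factor pair of $X/\theta$. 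Note also that $\Phi$ sends the bottom $(\nabla,\Delta)$ and top $(\theta,\theta')$ of the interval to the bottom and top of $\Fact(X/\theta)$.

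For the inverse, given $(\bar\phi,\bar\psi)\in\Fact(X/\theta)$, I lift $\bar\phi,\bar\psi$ to equivalence relations $\phi,\psi$ on $X$ each containing $\theta$ and then set $\phi'=\psi\cap\theta'$. Working in the product identification $X\cong X/\theta\times X/\theta'$, where $\phi,\psi$ depend only on the first coordinate and $\theta'$ forces equality of the second, a direct verification shows $(\phi,\phi')$ is a factor pair lying in the interval and that $\Phi(\phi,\phi')=(\bar\phi,\bar\psi)$. Order preservation in both directions reduces to the fact that for equivalence relations containing $\theta$, containment on $X$ is equivalent to containment of the descents on $X/\theta$, together with the observation that permutation with $\phi$ is preserved under composition with $\theta$ (using $\theta\subseteq\phi$, one has $\phi\circ\theta=\phi$) and hence under the quotient.

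It remains to check that $\Phi$ preserves orthocomplementation, where the $\perp$ on the interval $[0,(\theta,\theta')]$ is the \emph{inherited} one: it sends $(\phi,\phi')$ to the unique element of the interval whose orthogonal join with $(\phi,\phi')$ equals $(\theta,\theta')$. My candidate is the third factor pair appearing in the Boolean sublattice, namely $(\theta\circ\phi',\phi\cap\theta')$, which lies in the interval and satisfies $(\phi,\phi')\oplus(\theta\circ\phi',\phi\cap\theta')=(\theta,\theta')$ once one verifies orthogonality in $\Fact(X)$ and that any upper bound $(\mu,\mu')\leq(\theta,\theta')$ for both must have $\mu\subseteq\phi\cap(\theta\circ\phi')=\theta$ and $\mu'\supseteq\phi'\vee(\phi\cap\theta')=\theta'$. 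Using $\theta\circ(\phi\cap\theta')=\phi$ from the Boolean sublattice then gives $\Phi(\theta\circ\phi',\phi\cap\theta')=(\overline{\theta\circ\phi'},\bar\phi)$, which matches $\Phi(\phi,\phi')^\perp=(\overline{\theta\circ\phi'},\bar\phi)$ in $\Fact(X/\theta)$. The main obstacle is keeping the two orthocomplementations straight --- the $\perp$ on the interval is \emph{not} the restriction of the global $\perp$ on $\Fact(X)$ --- but the Boolean sublattice from Proposition~\ref{duck} keeps the bookkeeping manageable.
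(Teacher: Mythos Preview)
Your approach is correct and coincides with the one the paper cites from \cite{Taewon}; indeed, the paper itself does not prove this theorem but later (in the proof of Proposition~\ref{nm}) recalls the explicit mutually inverse isomorphisms $\Sigma(\phi,\phi')=(\phi/\theta,(\phi'\circ\theta)/\theta)$ and $\Phi(\mu,\nu)=(\hat\mu,\hat\nu\cap\theta')$, which are exactly your forward map and inverse. Your use of the Boolean sublattice from Proposition~\ref{duck} to read off the identities $\phi\cap(\theta\circ\phi')=\theta$, $\phi\vee(\theta\circ\phi')=\nabla$, and $\theta\circ(\phi\cap\theta')=\phi$ is the right bookkeeping device, and your identification of the relative orthocomplement as $(\theta\circ\phi',\phi\cap\theta')$ is correct; the only point left slightly implicit is that showing $(\theta,\theta')$ is the \emph{least} upper bound of $(\phi,\phi')$ and $(\theta\circ\phi',\phi\cap\theta')$ also requires checking the permuting condition $\mu\circ\theta'=\theta'\circ\mu$ for any common upper bound $(\mu,\mu')$, but this follows since in an orthomodular poset orthogonal joins always exist once orthogonality is verified.
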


The second result was established in \cite{Tim}. It is much more difficult to establish than one may first think and involves a considerable amount of combinatorics. The reader may recall that it was mentioned in the introduction that for $|X|=27$, that $\Fact(X)$ has some $10^{22}$ elements, so a simple-minded approach to the computation of its automorphism group is not feasible. 

\begin{thm}
If $Y$ is a set with $27$ elements, then every automorphism $\alpha$ of the orthomodular poset $\Fact(Y)$ is given by $\Gamma(\sigma)$ for some permutation $\sigma$ of $Y$. 
\label{27}
\end{thm}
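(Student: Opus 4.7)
The plan is to exploit the rigid three-level structure of $\Fact(Y)$ when $|Y|=27=3^3$. The only non-trivial regular equivalence relations on $Y$ have blocks of size $3$ or $9$, so by Proposition~\ref{zoop} every non-trivial factor pair is either an \emph{atom} $(\theta,\theta')$ with $\theta$ a $9$-relation and $\theta'$ a $3$-relation, or a \emph{coatom} with the roles reversed. An automorphism $\alpha$ of $\Fact(Y)$ preserves rank and commutes with $\perp$ and $\oplus$, so it permutes atoms, permutes coatoms, sends complementary pairs to complementary pairs, and preserves the comparability between atoms and coatoms. By Theorem~\ref{interval}, the interval $[0,c]$ below any coatom $c=(\theta,\theta')$ is isomorphic as an orthomodular poset to $\Fact(Y/\theta)$ with $|Y/\theta|=9$.

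The strategy is: (i) identify triples $\{a_1,a_2,a_3\}$ of pairwise orthogonal atoms with $a_1\oplus a_2\oplus a_3=1$; by Proposition~\ref{duck} these correspond exactly to ternary direct product decompositions $Y\cong(Y/\theta_1)\times(Y/\theta_2)\times(Y/\theta_3)$, so each $y\in Y$ is uniquely determined by its three block memberships. (ii) Since $\alpha$ preserves $\oplus$ and $\perp$, the triple $\{\alpha(a_1),\alpha(a_2),\alpha(a_3)\}$ is again of this form, producing another ternary decomposition with $3$-relations $\theta_1^*,\theta_2^*,\theta_3^*$. (iii) Use the interval isomorphisms --- in particular $\alpha|_{[0,a_i\oplus a_j]}\colon\Fact(Y/\theta_{ij})\to\Fact(Y/\theta_{ij}^*)$, where $\theta_{ij}$ is the $3$-relation underlying the coatom $a_i\oplus a_j$ --- to extract bijections $Y/\theta_i\to Y/\theta_i^*$ on the three-element quotients. (iv) Assemble these into a candidate permutation $\sigma$ of $Y$ via the product decomposition, and verify $\Gamma(\sigma)=\alpha$ by checking agreement on sufficiently many atoms and coatoms.

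The main obstacle, and the source of the genuine combinatorial difficulty, is the coherence of this construction. Each $9$-element interval $\Fact(Y/\theta)$ has an automorphism group much larger than $\Perm(Y/\theta)\cong S_9$: indeed $\Fact(Z)$ for $|Z|=9$ is an MO-type orthomodular poset whose automorphism group permutes and flips its many complementary pairs, and has cardinality astronomically larger than $9!$. Consequently no single interval restriction of $\alpha$ by itself determines a permutation of the corresponding $9$-element quotient. The rigidity must instead come from the overlapping of intervals: a given atom $a$ belongs to many orthogonal triples summing to $1$, and for each such triple the interval data attached to $a$ must be consistent; similarly, a single $3$-relation $\theta$ underlies many different coatoms $(\theta,\theta')$, and one must show $\alpha$ acts coherently on the family of coatoms sharing a first coordinate so that a well-defined map on $3$-relations is induced. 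Proving that this global web of comparabilities in $\Fact(Y)$ is rigid enough to cut the automorphism group down to the comparatively small $S_{27}$ is where the substantial combinatorial argument of \cite{Tim} is required, since direct enumeration of the ${\sim}10^{22}$-element lattice is infeasible.
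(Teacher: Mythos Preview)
The paper does not prove this theorem at all: it is quoted as a result from \cite{Tim} and used as a black box in the proof of the main theorem. Your proposal is therefore not being compared against a proof in the paper but against a citation.

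As a proof, your proposal is explicitly incomplete. You correctly identify the structure of $\Fact(Y)$ for $|Y|=27$, you set up a reasonable strategy via orthogonal triples of atoms and interval restrictions, and you correctly diagnose the central difficulty: the automorphism group of a single height-two interval $\Fact(Z)$ with $|Z|=9$ is far larger than $S_9$, so no local information suffices and the rigidity must come from the global incidence of overlapping intervals. But you then say, in effect, that establishing this rigidity ``is where the substantial combinatorial argument of \cite{Tim} is required.'' That is not a proof; it is a restatement of the problem together with a citation to its solution. The coherence step---showing that the interval restrictions of $\alpha$ patch together to a well-defined permutation of $Y$---is the entire content of the theorem, and you have not supplied it.

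In short: your write-up matches the paper's treatment in that both defer to \cite{Tim}, but if the assignment was to produce an independent proof, the key step is missing.
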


The general strategy of the proof is similar to that of several theorems about Hilbert spaces, such as Gleason's theorem. We push the problem down to the bottom of the structure $\Fact(X)$, use the fact that an interval near the bottom of $\Fact(X)$ is isomorphic to $\Fact(Y)$ for some smaller set $Y$, have a result for $\Fact(Y)$ when $Y$ is a suitable small set, and then push this back to a result for all of $\Fact(X)$. In proofs with Hilbert spaces one usually pushes things down to an interval isomorphic to the subspaces of $\mathbb{R}^3$. Here we push things to an interval isomorphic to $\Fact(Y)$ for $Y$ a 27-element set. The first step is to understand the bottom of $\Fact(X)$. 

\begin{defn}
Let $P$ be a poset with least element $0$ and largest element $1$, and let $a,c\in P$. 
\vspace{1ex}
\begin{enumerate}
\item $a$ is an atom of $P$ if $0<a$ and there is no $x$ with $0<x<a$
\item $c$ is a coatom of $P$ if $c<1$ and there is no $y$ with $c<y<1$
\item $a$ is of finite height if there is a finite bound on the size of chains in $[0,a]$
\item $c$ is of cofinite height if there is a finite bound on the size of chains in $[c,1]$
\end{enumerate}
\end{defn}

Section~5 is devoted to proving the following two results. 

\begin{prop}
Let $(\theta,\theta')$ be a factor pair of an infinite set $X$. 
\vspace{1ex}
\begin{enumerate}
\item $(\theta,\theta')$ is an atom iff $\theta'$ is a $p$-relation for some prime $p$
\item $(\theta,\theta')$ is of finite height iff $\theta'$ is finite regular
\end{enumerate}
\label{lunk}
\end{prop}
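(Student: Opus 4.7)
The plan hinges entirely on Theorem~\ref{interval}, which gives an isomorphism of orthomodular posets $[0,(\theta,\theta')]\cong\Fact(X/\theta)$, together with Proposition~\ref{zoop}, which says that the common cardinality of the blocks of $\theta'$ equals $|X/\theta|$. Both claims become statements about $\Fact(Y)$ for $Y=X/\theta$, translated back via $\theta'$.

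For part (1), $(\theta,\theta')$ is an atom of $\Fact(X)$ if and only if $[0,(\theta,\theta')]=\{0,(\theta,\theta')\}$, if and only if $|\Fact(X/\theta)|=2$. I would then show separately that $|\Fact(Y)|=2$ iff $|Y|$ is prime: if $|Y|=p$ is prime, then any binary decomposition $Y\simeq A\times B$ forces $|A|=1$ or $|B|=1$, so the only factor pairs are $0$ and $1$ (which are distinct since $p\geq 2$); conversely, if $|Y|=ab$ with $a,b\geq 2$, then the decomposition $Y\simeq A\times B$ produces a nontrivial factor pair via Proposition~\ref{stinky}. Combining with Proposition~\ref{zoop}, $|X/\theta|=p$ is prime iff $\theta'$ has all blocks of size $p$, i.e.\ $\theta'$ is a $p$-relation.

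For part (2), $(\theta,\theta')$ has finite height iff chains in $[0,(\theta,\theta')]$ are of bounded length iff chains in $\Fact(X/\theta)$ are bounded. I would then show that $\Fact(Y)$ has bounded chain length iff $|Y|$ is finite. For $|Y|=n$ finite, any chain $0<(\phi_1,\phi_1')<\cdots<(\phi_k,\phi_k')$ has $|Y/\phi_1|<\cdots<|Y/\phi_k|$ with each $|Y/\phi_i|$ a divisor of $n$ by Proposition~\ref{zoop}, which bounds $k$ in terms of the number of prime factors of $n$ with multiplicity. Combined once more with Proposition~\ref{zoop}, $|X/\theta|$ finite is equivalent to $\theta'$ having blocks of a common finite size, i.e.\ $\theta'$ being finite regular.

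The main obstacle is the infinite direction of the second equivalence: for infinite $Y$, exhibiting arbitrarily long chains in $\Fact(Y)$. Since $|Y|$ is infinite we have $2\cdot|Y|=|Y|$, so we may choose a bijection $Y\simeq 2\times Y$; doing this repeatedly inside the second factor we obtain, for each $n$, a bijection $Y\simeq 2^{n}\times Y_n$ with $|Y_n|=|Y|$, and each such decomposition is a refinement of the preceding one. By Proposition~\ref{stinky} each refinement yields a strictly larger factor pair, producing a chain of length $n+1$ in $\Fact(Y)$, so chains are unbounded. The routine bookkeeping is to verify that the chosen bijections can be made compatible block-by-block so that successive factor pairs lie below one another in the sense of Proposition~\ref{duck}; this is straightforward using the freedom to pick the refining bijection separately on each block.
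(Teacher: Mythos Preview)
Your proof is correct and follows essentially the same route as the paper: reduce via Theorem~\ref{interval} to questions about $\Fact(X/\theta)$, then translate back through Proposition~\ref{zoop}. You are in fact more thorough than the paper, which simply asserts that $\Fact(X/\theta)$ has bounded chains iff $X/\theta$ is finite, whereas you sketch both directions explicitly.
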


A simple argument using the orthocomplementation $\perp$ provides corresponding results for coatoms and elements of cofinite height. If $(\theta,\theta')$ is an atom and $\theta'$ is $p$-regular, we call $(\theta,\theta')$ a $p$-atom. 

\begin{prop}
Let $(\theta,\theta')$ be a factor pair of $X$ with $X/\theta$ infinite and let $p$ be prime. 
\vspace{1ex}
\begin{enumerate}
\item $(\theta,\theta')$ is the join of the $p$-atoms beneath it
\item $\theta\,\,=\bigcap\{\phi\,:\mbox{there is a $p$-atom $(\phi,\phi')$ with }(\phi,\phi')\leq(\theta,\theta')\}$
\item $\theta'=\hspace{.3mm}\bigcup\{\phi':\mbox{there is a $p$-atom $(\phi,\phi')$ with }(\phi,\phi')\leq(\theta,\theta')\}$
\end{enumerate}
\label{b}
\end{prop}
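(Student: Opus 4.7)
The plan is to prove (2) and (3) by direct construction of suitable $p$-atoms, and then derive (1). By Proposition \ref{zoop}, each block of $\theta'$ has cardinality $|X/\theta|$, so the hypothesis ensures that $X/\theta$ is infinite and hence admits a decomposition $X/\theta \cong P \times A$ with $|P|=p$. Combining this with the canonical decomposition $X \cong X/\theta \times X/\theta'$ produces a ternary decomposition $X \cong P \times A \times (X/\theta')$. Re-bracketing as $X \cong P \times (A \times X/\theta')$ yields a factor pair $(\phi,\phi')$ in which $\phi$ identifies elements sharing a $P$-coordinate and $\phi'$ identifies elements sharing an $(A,X/\theta')$-coordinate; the blocks of $\phi'$ have $p$ elements, so $(\phi,\phi')$ is a $p$-atom by Proposition \ref{lunk}. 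The three coordinate equivalences of the ternary decomposition pairwise permute, and the $8$-element Boolean sublattice of $\Eqq(X)$ they generate under meet realises the hypothesis of Proposition \ref{duck}(3), giving $(\phi,\phi') \leq (\theta,\theta')$.

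For (3), given $(x,y) \in \theta'$ I seek such a $p$-atom with $(x,y) \in \phi'$. Since $x$ and $y$ already agree in their $X/\theta'$-coordinate, it suffices to choose the decomposition $X/\theta \cong P \times A$ so that $[x]_\theta$ and $[y]_\theta$ lie in a common $p$-element column (i.e.\ share an $A$-coordinate). This is automatic if $[x]_\theta = [y]_\theta$; otherwise one picks any partition of $X/\theta$ into $p$-element blocks with $[x]_\theta$ and $[y]_\theta$ in a common block, which is possible because $X/\theta$ is infinite and $p \geq 2$. For (2), given $(x,y) \notin \theta$ so that $[x]_\theta \neq [y]_\theta$, the same recipe forces $[x]_\theta$ and $[y]_\theta$ to occupy distinct $P$-coordinates inside their common column, whence $(x,y) \notin \phi$. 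The reverse inclusions $\theta \subseteq \bigcap \phi$ and $\bigcup \phi' \subseteq \theta'$ are immediate from the definition of $\leq$ in $\Fact(X)$, so equality holds in both (2) and (3).

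Finally, (1) follows from (2) and (3). Let $(\psi,\psi')$ be any upper bound in $\Fact(X)$ for the collection of $p$-atoms beneath $(\theta,\theta')$. For every such $p$-atom $(\phi,\phi')$ we have $\psi \subseteq \phi$, $\phi' \subseteq \psi'$, and $\psi,\phi'$ permute. Intersecting over $\phi$ and unioning over $\phi'$ and applying (2) and (3) yields $\psi \subseteq \theta$ and $\theta' \subseteq \psi'$. Relational composition distributes over union in each argument, so
\[
\psi \circ \theta' \;=\; \psi \circ \bigcup \phi' \;=\; \bigcup\,(\psi \circ \phi') \;=\; \bigcup\,(\phi' \circ \psi) \;=\; \theta' \circ \psi,
\]
and therefore $\psi$ and $\theta'$ permute. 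Hence $(\theta,\theta') \leq (\psi,\psi')$, confirming that $(\theta,\theta')$ is the least upper bound.

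The main technical obstacle throughout is the permutation condition needed to verify $(\phi,\phi') \leq (\theta,\theta')$: the set-theoretic inclusions are transparent, but permutation is not. The cleanest route is through Proposition \ref{duck}(3), identifying $\theta,\theta',\phi,\phi'$ with the four natural meets of coordinate equivalences coming from the ternary decomposition, whose pairwise permutation is automatic.
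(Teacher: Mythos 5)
Your proposal is correct and takes essentially the same route as the paper: both construct $p$-atoms below $(\theta,\theta')$ by grouping the blocks of $\theta$ into $p$-element families (the paper via explicit ordinal enumerations and Cantor-normal-form parity, you via an abstract decomposition $X/\theta\simeq P\times A$ and Proposition~\ref{duck}), both exploit the freedom in choosing that grouping to witness (2) and (3), and both then derive (1) from the union representation of $\theta'$. Your verification that $\psi$ and $\theta'$ permute, using distributivity of relational composition over arbitrary unions, is a slightly slicker packaging of the paper's element chase, but the substance is identical.
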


Section~6 establishes the following. 

\begin{prop}
Let $(\theta,\theta')\in\Fact(X)$ with $\alpha(\theta,\theta')=(\gamma,\gamma')$. Then 
\vspace{1ex}
\begin{enumerate}
\item $\theta\,$ is an $m$-relation iff $\gamma$ is an $m$-relation
\item $\theta'$ is an \hspace{.3mm}$n$-relation iff $\gamma'$ is an $n$-relation
\end{enumerate}
\label{e}
\end{prop}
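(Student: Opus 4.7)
\medskip

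\noindent \textbf{Proof Plan.} The plan is to prove (2), from which (1) follows by applying (2) to the orthocomplement $(\theta,\theta')^\perp=(\theta',\theta)$ and using the fact that $\alpha$ preserves $\perp$, so $\alpha(\theta',\theta)=(\gamma',\gamma)$. For (2), the first reduction uses Proposition~\ref{lunk}(2): $(\theta,\theta')$ has finite height in $\Fact(X)$ iff $\theta'$ is finite regular, and $\alpha$ as a poset-automorphism preserves height. Thus $\theta'$ is finite regular iff $\gamma'$ is, reducing matters to the case where $\theta'$ is $n$-regular and $\gamma'$ is $n'$-regular for finite $n,n'$, with the target being $n=n'$. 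By Theorem~\ref{interval}, $[0,(\theta,\theta')]\cong\Fact(n\text{-element set})$ and $[0,(\gamma,\gamma')]\cong\Fact(n'\text{-element set})$, and $\alpha$ restricts to an OMP-isomorphism between these two finite orthomodular posets.

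The delicate point is that raw OMP-isomorphism $\Fact(A)\cong\Fact(B)$ between finite $\Fact$'s does \emph{not} in general force $|A|=|B|$: for distinct primes $p$ and $q$, both $\Fact(p\text{-element set})$ and $\Fact(q\text{-element set})$ are the 2-element OMP. To force $n=n'$ I would combine several OMP-invariants of $\Fact(A)$: the height, which equals the number of prime factors of $|A|$ counted with multiplicity; the cardinality $|\Fact(A)|$, which for finite $A$ is a specific function of $|A|$; and the number of orbits of atoms under the action of $\Psi(\Perm(A))\subseteq\Aut(\Fact(A))$, which equals the number of distinct prime divisors of $|A|$ since $\Perm(A)$ acts transitively on $p$-regular factor pairs for each prime $p$. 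Together with Theorem~\ref{27} for the anchor case $|A|=27$ (where $\Aut(\Fact(A))=\Psi(\Perm(A))$), these invariants let one determine $|A|$ from the OMP-structure of $\Fact(A)$ in the needed cases.

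To push from the 27-case to arbitrary finite $n$ I would enlarge quotients: given $(\theta,\theta')$ with $\theta'$ being $n$-regular, use the infinitude of $X$ to construct $(\hat\theta,\hat\theta')\geq(\theta,\theta')$ with $\hat\theta'$ being $27n$-regular. Then $[0,(\hat\theta,\hat\theta')]\cong\Fact(27n\text{-element set})$ contains $[0,(\theta,\theta')]$ together with many sub-intervals isomorphic to $\Fact(27\text{-element set})$. Under $\alpha$ the enlarged interval is sent isomorphically to $[0,(\hat\gamma,\hat\gamma')]$ where $(\hat\gamma,\hat\gamma')=\alpha(\hat\theta,\hat\theta')$, and by the argument of the previous paragraph each 27-isomorphic sub-interval is sent to a principal ideal of the form $\Fact(27\text{-element set})$ inside the image. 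Theorem~\ref{27} then identifies each of these image sub-intervals with a 27-regular quotient of $X$ via $\hat\gamma$, and the way these 27-pieces fit together inside $[0,(\hat\gamma,\hat\gamma')]$ forces $|X/\hat\gamma|=27n$, hence $|X/\gamma|=n$, giving $n=n'$.

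The main obstacle lies in the combinatorial verification that atom-counts combined with orbit-count and height actually determine $|A|$ uniquely among finite cardinals, and in carrying out the coherence argument that glues the many 27-regular sub-intervals of $[0,(\hat\gamma,\hat\gamma')]$ into a globally $27n$-regular quotient of $X$. This gluing, which essentially lifts the local permutations provided by Theorem~\ref{27} to a coherent global permutation of the $27n$-element quotient, is the principal technical content of Section~6.
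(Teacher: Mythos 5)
Your opening reductions match the paper (prove (2), get (1) from $\alpha(\theta',\theta)=(\gamma',\gamma)$; note via Proposition~\ref{lunk} and Theorem~\ref{interval} that $[0,(\theta,\theta')]\cong\Fact(X/\theta)$ with $|X/\theta|=n$), but the middle of your plan has a genuine gap. The number of orbits of atoms under $\Psi(\Perm(A))$ is \emph{not} an invariant of the abstract orthomodular poset $\Fact(A)$: an OMP-isomorphism $\Fact(A)\to\Fact(B)$ conjugates the full group $\Aut(\Fact(A))$ to $\Aut(\Fact(B))$, but need not carry $\Psi(\Perm(A))$ to $\Psi(\Perm(B))$, so only orbits under the \emph{full} automorphism group transport. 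And those collapse exactly where you need them: for $|A|=pq$ with $p\neq q$ prime, $\Fact(A)$ is a horizontal sum whose middle antichain contains $(pq)!/(p!\,q!)$ $p$-atoms and the same number of $q$-atoms (Proposition~\ref{snakes}), and any $\perp$-pair-preserving permutation of the middle level is an automorphism, so there are exotic automorphisms interchanging $p$-atoms with $q$-atoms. With that invariant gone you are left needing injectivity of $n\mapsto(\mathrm{height},|\Fact(n)|)$ on finite cardinals, which you explicitly defer ("the main obstacle") --- but that \emph{is} the entire difficulty, so the proposal postpones rather than proves the proposition.

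The paper sidesteps the rigidity problem by never trying to recover $|A|$ from $\Fact(A)$ in isolation. It characterizes $p$-atoms \emph{relative to the ambient} $\Fact(X)$: a $2$-atom is an atom admitting an orthogonal atom whose join bounds an interval of exactly $8$ elements (Corollary~\ref{size} forces $p=q=2$ there), and for odd $p$, a $p$-atom is a non-$2$-atom admitting an orthogonal $2$-atom with join-interval of size $(2p)!/p!+2$; the only number theory needed is the elementary implication $(2p)!/p!=(2q)!/q!\Rightarrow p=q$, settled by termwise comparison. This gives Proposition~\ref{crud} ($\alpha$ preserves $p$-atoms), after which $(\theta,\theta')$ is written as an orthogonal join of $p_i$-atoms along the prime factorization $n=p_1\cdots p_k$ and repeated use of Proposition~\ref{adz} yields $|X/\gamma|=n$. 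Note that Theorem~\ref{27} plays no role in this argument --- it enters only in Section~8 for preservation of first coordinates --- and your third paragraph's gluing of $27$-regular sub-intervals is in any case circular: to identify the image sub-intervals of $[0,(\hat\gamma,\hat\gamma'\,)]$ as arising from $27$-element quotients, you would already need the very size-preservation statement being proved.
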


Section~7 provides background that is used in Section~8 where we begin the process of taking an automorphism of $\Fact(X)$ and building from it a permutation of $X$. Section~8 does half of the work, taking an automorphism of $\Fact(X)$ and building from it an automorphism of $\Eq*(X)$. Here, and in the following, we assume that $X$ is an infinite set and that $\alpha$ is an automorphism of $\Fact(X)$. Section~8 establishes the following two results. 

\begin{prop}
Let $(\theta,\theta')$, $(\theta,\theta'')\in \Fact(X)$ with $X/\theta$ infinite. Then $\alpha(\theta,\theta')$ and $\alpha(\theta,\theta'')$ have the same first component. 
\label{f}
\end{prop}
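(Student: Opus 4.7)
The plan is to characterize each of $\gamma_1 := \mathrm{fst}(\alpha(\theta,\theta'))$ and $\gamma_2 := \mathrm{fst}(\alpha(\theta,\theta''))$ as an intersection via Proposition~\ref{b} and then show the two intersections coincide. First I would verify that Proposition~\ref{b} applies to the images. Propositions~\ref{lunk} and~\ref{zoop} together show that $X/\theta$ infinite is equivalent to $(\theta,\theta')$ not being of finite height, and since $\alpha$ is an order automorphism it preserves height, so $X/\gamma_1$ and $X/\gamma_2$ are also infinite. Fix a prime $p$. Proposition~\ref{e} gives that $\alpha$ restricts to a bijection between the $p$-atoms below $(\theta,\theta')$ and the $p$-atoms below $(\gamma_1,\gamma_1')$, and similarly for $(\theta,\theta'')$ and $(\gamma_2,\gamma_2')$. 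Applying Proposition~\ref{b} to $(\gamma_i,\gamma_i')$ and rewriting the indexing along $\alpha$ yields
\[
\gamma_i \;=\; \bigcap\bigl\{\mathrm{fst}(\alpha(\eta)) : \eta \text{ a $p$-atom with } \eta\le(\theta,\theta_i)\bigr\},
\]
where $\theta_1=\theta'$, $\theta_2=\theta''$.

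Second, I would show that the set of first components of $p$-atoms beneath a factor pair $(\theta,\tau)$ depends only on $\theta$: I claim it equals $S(\theta):=\{\phi:\theta\subseteq\phi,\ \phi \text{ has $p$ blocks of equal size}\}$. The inclusion $\subseteq$ is immediate from the definition of $p$-atom and of $\le$ in $\Fact(X)$. For $\supseteq$, given $\phi\in S(\theta)$, Theorem~\ref{interval} identifies $[0,(\theta,\tau)]$ with $\Fact(X/\theta)$; the image $\bar\phi$ of $\phi$ is a regular equivalence relation on $X/\theta$ with $p$ blocks, and Proposition~\ref{stinky} supplies a $\bar\psi$ making $(\bar\phi,\bar\psi)$ a factor pair of $X/\theta$, hence a $p$-atom by Proposition~\ref{zoop}. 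Pulling back through the interval isomorphism produces a $p$-atom $(\phi,\psi)\le(\theta,\tau)$. Consequently the $p$-atoms beneath $(\theta,\theta')$ and those beneath $(\theta,\theta'')$ have the same common collection $S(\theta)$ of first components.

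The principal obstacle is the final step: for a fixed $\phi\in S(\theta)$ there are typically many $p$-atoms $(\phi,\psi)\le(\theta,\theta')$ and many $(\phi,\psi')\le(\theta,\theta'')$, and a priori the first components $\mathrm{fst}(\alpha(\phi,\psi))$ and $\mathrm{fst}(\alpha(\phi,\psi'))$ might differ. To conclude that both sets of images contribute the same equivalence relation to their respective intersections, the plan is to invoke Theorem~\ref{27} with $p=3$ by constructing a bridging factor pair $(\mu,\mu^*)$ with $|X/\mu|=27$ such that both $(\phi,\psi)\le(\mu,\mu^*)$ and $(\phi,\psi')\le(\mu,\mu^*)$. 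Such a bridge should exist because $X/\theta$ is infinite: one refines $\phi$ (with its $3$ blocks) into $27$ blocks to obtain $\mu$, and uses Proposition~\ref{stinky} in $\Fact(X/\mu)$ to obtain a $\mu^*$ containing a 27-regular coarsening of $\psi\vee\psi'$. Restricted to $[0,(\mu,\mu^*)]\cong\Fact(Y)$ with $|Y|=27$, and using Proposition~\ref{e} to see that $\alpha(\mu,\mu^*)$ has first component with 27 blocks, $\alpha$ induces an isomorphism $\Fact(Y)\to\Fact(Y')$ between two copies of $\Fact$ of a 27-element set; Theorem~\ref{27} then realizes this as coming from a bijection $Y\to Y'$, which visibly preserves first components of atoms with a common first component. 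This gives $\mathrm{fst}(\alpha(\phi,\psi))=\mathrm{fst}(\alpha(\phi,\psi'))$, and combining this with the intersection formulas of the first paragraph yields $\gamma_1=\gamma_2$. The delicate point—and what I expect to be the main technical difficulty—is the existence and properties of the bridging pair $(\mu,\mu^*)$, together with the careful bookkeeping needed to transport the 27-element conclusion back to $\Fact(X)$.
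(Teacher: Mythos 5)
Your overall skeleton is the paper's: write the first coordinates of the images as intersections over atoms via Proposition~\ref{b}, reduce to showing that two $3$-atoms $(\phi,\psi)$, $(\phi,\psi')$ with a common first component have images under $\alpha$ with a common first component, and settle that via Theorem~\ref{27} inside height-three intervals. But the step you yourself flag as the main difficulty contains a genuine gap: a \emph{single} bridging pair $(\mu,\mu^*)$ with $|X/\mu|=27$ lying above both $(\phi,\psi)$ and $(\phi,\psi')$ need not exist. Indeed, $(\phi,\psi)\leq(\mu,\mu^*)$ and $(\phi,\psi')\leq(\mu,\mu^*)$ force $\psi\cup\psi'\subseteq\mu^*$, so every block of the join $\psi\vee\psi'$ must sit inside a $27$-element block of $\mu^*$. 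Yet if the blocks of $\phi$ are $T,M,B$ with the blocks of $\psi$ enumerated as $\{t_i,m_i,b_i\}$ $(i\in\mathbb{Z})$, taking $\psi'$ to have blocks $\{t_i,m_{i+1},b_{i+2}\}$ chains everything into a single infinite block of $\psi\vee\psi'$; and even when all join-blocks are finite they can all have size $6$ (a $2$-cycle pattern), and no union of $6$-element blocks has exactly $27$ elements. So no $27$-regular coarsening of $\psi\vee\psi'$ exists in general, and your final step collapses. This is precisely why the paper defines $\sim_3$ (your one-step bridge) and passes to its \emph{transitive closure} $\approx_3$, and why all of Section~7 — special $27$-enumerations, $J$-twists, involutions with large fixed-point sets, and the Baer--Schreier--Ulam theorem to generate $\Perm(B)$ — is needed to prove Theorem~\ref{david}: two $3$-atoms with the same first component are linked by a finite \emph{chain} of overlapping height-three intervals. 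Proposition~\ref{nm} then transports ``same first coordinate'' along each link, after first normalizing the interval to itself by composing with a $\Gamma(\sigma)$ from Proposition~\ref{move} so that Theorem~\ref{27} (which concerns automorphisms of $\Fact(Y)$, not isomorphisms $\Fact(Y)\to\Fact(Y')$) actually applies.

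Two smaller points. Your characterization $S(\theta)=\{\phi:\theta\subseteq\phi,\ \phi\mbox{ has $p$ equal-size blocks}\}$ is not quite right: when blocks are infinite, a regular $\phi\supseteq\theta$ may have blocks containing different numbers of $\theta$-blocks, so $\phi/\theta$ need not be regular and $\phi$ need not be the first component of any atom below $(\theta,\tau)$. The correct invariant is that $\phi/\theta$ is regular with $p$ blocks in $X/\theta$, which still depends only on $\theta$, so this part of your plan is repairable; note that the paper proves the corresponding fact (its claim (A) in the proof of Proposition~\ref{larry}) quite differently, by producing via Proposition~\ref{move} a permutation $\sigma$ with $\Gamma(\sigma)(\theta,\theta')=(\theta,\theta'')$ and $(x,\sigma(x))\in\theta$, and checking $\sigma(\phi)=\phi$ whenever $\theta\subseteq\phi$. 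These are cosmetic; the missing chain-of-bridges argument is the essential gap, and it is where the real work of the paper lies.
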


A finite regular equivalence relation $\theta$ has infinitely many blocks, so $X/\theta$ is infinite. So this result, together with Proposition~\ref{e} combine to give the half-way result that was the aim. 

\begin{thm}
There is an automorphism $\beta$ of $\Eq*(X)$ with the following properties.
\vspace{1ex}
\begin{enumerate}
\item $\theta$ is an $n$-relation iff $\beta(\theta)$ is an $n$-relation
\item if $(\theta,\theta')\in\Fact(X)$ with $\theta\in\Eq*(X)$, then $\beta(\theta)$ is the first component of $\alpha(\theta,\theta')$
\end{enumerate}
\label{g}
\end{thm}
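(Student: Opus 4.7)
My plan is to take property~(2) as the \emph{definition} of $\beta$ and then verify all the asserted structural properties. Given $\theta \in \Eq*(X)$, Proposition~\ref{stinky} produces at least one $\theta'$ with $(\theta,\theta') \in \Fact(X)$. Since $\theta$ is an $n$-relation on the infinite set $X$ there are infinitely many blocks, so $X/\theta$ is infinite, and Proposition~\ref{f} applied to $\alpha$ says that the first component of $\alpha(\theta,\theta')$ does not depend on the choice of companion $\theta'$. I set $\beta(\theta)$ to be this common first component. Property~(2) then holds by construction, and Proposition~\ref{e}(1) says $\beta(\theta)$ is an $n$-relation whenever $\theta$ is, which simultaneously establishes property~(1) and shows $\beta$ is a well-defined map $\Eq*(X) \to \Eq*(X)$.

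For order-preservation, suppose $\theta_1 \subseteq \theta_2$ with $\theta_1$ an $n$-relation and $\theta_2$ an $nk$-relation. Each $\theta_2$-block is a disjoint union of $k$ many $\theta_1$-blocks, so I can coordinatize $X$ as $X = \{b_{j,s,t} : j \in J,\, s \in \{1,\ldots,k\},\, t \in \{1,\ldots,n\}\}$, with $j$ indexing the $\theta_2$-blocks, $s$ indexing the $\theta_1$-subblocks inside a fixed $\theta_2$-block, and $t$ indexing elements inside a $\theta_1$-block. Define
\[\theta_1' = \{(b_{j,s,t},b_{j',s',t'}) : t = t'\}, \qquad \theta_2' = \{(b_{j,s,t},b_{j',s',t'}) : s = s' \text{ and } t = t'\}.\]
A routine verification from the definitions shows that $(\theta_1,\theta_1')$ and $(\theta_2,\theta_2')$ are factor pairs, that $\theta_2' \subseteq \theta_1'$, and that $\theta_1$ and $\theta_2'$ permute (their product in either order is $\nabla$), so $(\theta_2,\theta_2') \leq (\theta_1,\theta_1')$ in $\Fact(X)$. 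Since $\alpha$ is order-preserving, $\alpha(\theta_2,\theta_2') \leq \alpha(\theta_1,\theta_1')$, and unwinding the definition of $\leq$ in $\Fact(X)$ the first component of $\alpha(\theta_1,\theta_1')$ is contained in that of $\alpha(\theta_2,\theta_2')$. With these specific choices of companion these first components are $\beta(\theta_1)$ and $\beta(\theta_2)$ respectively, so $\beta(\theta_1) \subseteq \beta(\theta_2)$.

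To finish, I run the same construction with the automorphism $\alpha^{-1}$ in place of $\alpha$, producing an order-preserving map $\beta^* : \Eq*(X) \to \Eq*(X)$ that satisfies the analog of property~(2). For any $\theta \in \Eq*(X)$, pick $\theta'$ with $(\theta,\theta') \in \Fact(X)$ and write $\alpha(\theta,\theta') = (\beta(\theta),\eta)$; then $\alpha^{-1}(\beta(\theta),\eta) = (\theta,\theta')$, and since $X/\beta(\theta)$ is infinite (as $\beta(\theta)\in\Eq*(X)$), Proposition~\ref{f} applied to $\alpha^{-1}$ gives $\beta^*(\beta(\theta)) = \theta$. The reverse composition $\beta(\beta^*(\sigma)) = \sigma$ is symmetric, so $\beta$ is a bijection with order-preserving inverse $\beta^*$, hence an automorphism of the poset $\Eq*(X)$. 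I expect the main obstacle to be the coordinatization step: verifying that the $\theta_i'$ are genuine factor-pair complements (intersection $\Delta$, composition $\nabla$) and that the resulting pairs satisfy the exact containment and permutability conditions demanded by the definition of $\leq$ in $\Fact(X)$. Once this is in hand, everything else is bookkeeping with Propositions~\ref{stinky}, \ref{e}, and~\ref{f}.
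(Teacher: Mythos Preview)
Your proposal is correct and follows essentially the same route as the paper's proof: define $\beta$ via property~(2), use Proposition~\ref{f} for well-definedness and Proposition~\ref{e} for size preservation, coordinatize $X$ by a triple index to build compatible complements $\theta_1',\theta_2'$ with $(\theta_2,\theta_2')\leq(\theta_1,\theta_1')$, and run the same argument with $\alpha^{-1}$ to obtain the inverse. One small slip: your parenthetical justification ``their product in either order is $\nabla$'' is false---in your coordinates $\theta_1\circ\theta_2'=\theta_2'\circ\theta_1$ is the relation ``same $s$-coordinate,'' not $\nabla$---but since the two compositions agree, permutability still holds and the argument goes through unchanged.
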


Next comes the second half of the process, taking an automorphism $\beta$ of $\Eq*(X)$ with the above properties, and producing from it a permutation $\sigma$ of $X$. This is the content of Sections~9 through~11 where we prove the following. 

\begin{thm}
Suppose $\beta$ is an automorphism of $\Eq*(X)$ so that for every relation $\theta$ we have that $\theta$ is an $n$-relation iff $\beta(\theta)$ is an $n$-relation. Then there is a permutation $\sigma$ of $X$ so that for every $2$-relation $\pi$ we have $\beta(\pi)=\sigma(\pi)$. 
\label{todo}
\end{thm}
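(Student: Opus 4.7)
The plan is to construct $\sigma$ by first using $\beta$ to define induced bijections on 4-element and 2-element subsets of $X$, and then extracting the point permutation from these. Sections~9, 10, 11 carry this out in three stages.

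Stage 1 (Section~9): For each 4-element subset $S\subseteq X$, fix any 4-relation $\theta\in\Eq*(X)$ having $S$ as a block; such $\theta$ exists in abundance since $X$ is infinite. The image $\beta(\theta)$ is a 4-relation by the $n$-preservation hypothesis, and the goal is to identify a canonical 4-block $\hat\beta_4(S)$ of $\beta(\theta)$ corresponding to $S$, independent of the choice of $\theta$. The natural way to see the independence is to characterize ``$\theta_1$ and $\theta_2$ share the 4-block $S$'' purely in terms of the $\leq$-relation of $\Eq*(X)$: for example, via the existence of a common 2-relation $\pi\leq\theta_1,\theta_2$ realizing a specified split of $S$ into two 2-blocks, or via chains of 4-relations obtained by successive ``elementary swaps'' of 4-blocks outside $S$. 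Because $\beta$ is a poset automorphism preserving $n$-relations, any such lattice-theoretic characterization transports to the images, and pinning down the shared 4-block across all $\theta$ containing $S$ determines $\hat\beta_4(S)$ unambiguously.

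Stage 2 (Section~10): Define $\hat\beta_2(\{a,b\})$ for each 2-element subset $\{a,b\}\subseteq X$ using 2-relations that have $\{a,b\}$ as a block. The map $\hat\beta_4$ serves as the anchor for well-definedness: given a 2-relation $\pi$ with $\{a,b\}$ a block, pick another block $\{c,d\}$ of $\pi$ and extend $\pi$ to a 4-relation $\theta$ whose 4-block is $S=\{a,b,c,d\}$. Since $\pi\leq\theta$, we have $\beta(\pi)\leq\beta(\theta)$, so the 4-block $\hat\beta_4(S)$ of $\beta(\theta)$ splits into exactly two 2-blocks of $\beta(\pi)$; one of these is the candidate for $\hat\beta_2(\{a,b\})$ and the other for $\hat\beta_2(\{c,d\})$. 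Varying $\{c,d\}$ and using uniqueness of $\hat\beta_4$ forces these candidates to agree, giving a well-defined $\hat\beta_2$ on all 2-element subsets.

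Stage 3 (Section~11): Build $\sigma:X\to X$ from $\hat\beta_2$. A key observation is that two distinct 2-element subsets $\{a,b\}$ and $\{a,c\}$ cannot both occur as blocks of a common 2-relation (they are not disjoint). Because this property (``being both blocks of a common 2-relation in $\Eq*(X)$'') is lattice-theoretic and preserved by $\beta$, the images $\hat\beta_2(\{a,b\})$ and $\hat\beta_2(\{a,c\})$ must also share a point. Define $\sigma(a)$ to be that common point. Routine checks using a third 2-element subset $\{a,d\}$ show $\sigma(a)$ does not depend on the choice of $b,c$, and that $\sigma$ is a bijection of $X$. By construction $\sigma$ sends $\{a,b\}$ to $\hat\beta_2(\{a,b\})$ on every pair, and since a 2-relation is determined by its set of 2-blocks, $\sigma(\pi)=\beta(\pi)$ for every 2-relation $\pi$, as required.

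The main obstacle is Stage~1: establishing that $\hat\beta_4$ is well-defined. Two 4-relations sharing a single 4-block $S$ can differ arbitrarily on $X\setminus S$, and the naive tools to isolate $S$ (such as intersecting $\theta_1\cap\theta_2$, or adjoining a singleton-block equivalence relation fixing $S$) leave $\Eq*(X)$ because they destroy regularity. Overcoming this requires a carefully chosen lattice-theoretic fingerprint of the shared block --- most naturally produced via 2-relations refining $\theta_1$ and $\theta_2$ on $S$ combined with interpolating chains of 4-relations that agree with both on $S$ but bridge their differences outside $S$ --- together with an invariance argument verifying that all such fingerprints single out the same 4-block of $\beta(\theta)$. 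Once $\hat\beta_4$ is pinned down, Stages~2 and~3 are comparatively mechanical.
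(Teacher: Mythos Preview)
Your three-stage architecture mirrors the paper's, and Stage~3 is essentially what Section~11 does. The problem is Stage~1, and you have correctly identified it as the obstacle --- but your proposed workaround does not close the gap.

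The difficulty is this: given a single 4-relation $\theta$ with $S$ a block, $\beta(\theta)$ is another 4-relation with infinitely many blocks, and nothing in the poset structure singles out one of them as ``the image of $S$.'' Your suggestion to characterize ``$\theta_1$ and $\theta_2$ share the block $S$'' order-theoretically would at best tell you that $\beta(\theta_1)$ and $\beta(\theta_2)$ share \emph{some} block; it does not tell you which one, nor that this shared block is constant as the pair $(\theta_1,\theta_2)$ varies over all 4-relations containing $S$. The interpolating-chain idea has the same defect: a chain of 4-relations each sharing a block with the next transports to a chain with the same property, but there is no mechanism forcing the shared block to be the same one throughout. So $\hat\beta_4$ is not actually defined.

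The paper's way around this is to anchor on 2-relations rather than 4-relations. For a fixed 2-relation $\pi$, a 4-set $A\in\mc{P}(4,\pi)$ is identified not as a block of some 4-relation but as the \emph{locus of disagreement} between $\pi$ and another 2-relation: $\pi$ and $\theta$ agree except on $A$ iff the $(2,4)$-normal ideal $\LL_2\U_4(\pi,\theta)$ has exactly three elements (Proposition~\ref{iti}). This is a genuine order-theoretic invariant, and applying $\beta$ immediately gives a well-defined $\sigma^\pi_4(A)\in\mc{P}(4,\beta(\pi))$. The price is that $\sigma^\pi_4$ now depends on $\pi$, and removing that dependence is the real work: Section~10 develops an ``overlap in 4-sets'' calculus (Propositions~\ref{hhh}--\ref{ll}) to show $\sigma^\pi(P)=\sigma^\lambda(P)$ whenever $P$ is a block of both $\pi$ and $\lambda$. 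Your Stage~2 assumes this compatibility comes for free once $\hat\beta_4$ exists globally, but in the paper's route it is the longest part of the argument.
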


We now show how these results to be proved in later sections prove Theorem~\ref{main'}, hence also Theorem~\ref{main}. Here $\alpha$ is an automorphism of $\Fact(X)$, $\beta$ is the automorphism of $\Eq*(X)$ for this $\alpha$ given by Theorem~\ref{g}, and $\sigma$ is the permutation of $X$ for this $\beta$ given by Theorem~\ref{todo}. 

\begin{prop}
\label{j}
If $(\rho,\rho')$ is finite height in $\Fact(X)$, the first coordinate of $\alpha(\rho,\rho')$ is $\sigma(\rho)$. 
\end{prop}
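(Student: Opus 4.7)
The plan is to reduce to the behaviour of $\alpha$ on $2$-atoms, where Theorems~\ref{g} and~\ref{todo} already supply the answer, and then to reassemble $\alpha(\rho,\rho')$ using Proposition~\ref{b} applied to the orthocomplement $(\rho',\rho)$. Since $(\rho,\rho')$ is of finite height, Proposition~\ref{lunk}(2) gives an $n\in\mathbb{N}$ with $\rho'$ an $n$-relation, and Proposition~\ref{zoop} then gives $|X/\rho'|=|X|$, which is infinite, while $|X/\rho|=n$ is finite. Write $\alpha(\rho,\rho')=(\gamma,\gamma')$. Preservation of $\perp$ by $\alpha$ yields $\alpha(\rho',\rho)=(\gamma',\gamma)$, and Proposition~\ref{e} guarantees that $\gamma'$ is also an $n$-relation, so $X/\gamma'$ is infinite as well. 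The goal is thus to show $\gamma=\sigma(\rho)$.

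I first compute the second coordinate of $\alpha$ on an arbitrary $2$-atom $(\phi,\phi')$. Writing $\alpha(\phi,\phi')=(a,b)$, preservation of $\perp$ gives $\alpha(\phi',\phi)=(b,a)$. Since $\phi'$ is a $2$-relation it lies in $\Eq*(X)$, so Theorem~\ref{g}(2) identifies the first coordinate of $\alpha(\phi',\phi)$ as $\beta(\phi')$; hence $b=\beta(\phi')$. Applying Theorem~\ref{todo} to the $2$-relation $\phi'$ gives $\beta(\phi')=\sigma(\phi')$. Consequently the second coordinate of $\alpha(\phi,\phi')$ is $\sigma(\phi')$ for every $2$-atom $(\phi,\phi')$.

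Next I invoke Proposition~\ref{b}(3) twice with $p=2$. Applied to $(\rho',\rho)$, which is legal since $X/\rho'$ is infinite, it gives $\rho=\bigcup\{\phi':(\phi,\phi')\text{ a $2$-atom with }(\phi,\phi')\leq(\rho',\rho)\}$; applied to $(\gamma',\gamma)$ it gives $\gamma=\bigcup\{\psi':(\psi,\psi')\text{ a $2$-atom with }(\psi,\psi')\leq(\gamma',\gamma)\}$. Since $\alpha$ is an automorphism of $\Fact(X)$ that preserves the atom property and, by Proposition~\ref{e}, preserves the property of having a $2$-relation as second coordinate, it restricts to a bijection between the $2$-atoms below $(\rho',\rho)$ and those below $(\gamma',\gamma)$. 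Substituting via the formula of the previous paragraph and using that the permutation $\sigma$ commutes with set-theoretic unions of subsets of $X\times X$, one obtains $\gamma=\bigcup\sigma(\phi')=\sigma(\bigcup\phi')=\sigma(\rho)$, as required. The main obstacle here is that $X/\rho$ may well be finite, so neither Proposition~\ref{f} nor Proposition~\ref{b} can be applied directly to $(\rho,\rho')$; passing to the orthocomplement $(\rho',\rho)$ circumvents this, after which the $2$-atom decomposition reassembles everything via $\sigma$.
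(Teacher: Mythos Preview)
Your proof is correct and follows essentially the same approach as the paper: pass to the orthocomplement $(\rho',\rho)$ so that Proposition~\ref{b}(3) applies, express both $\rho$ and the first coordinate of $\alpha(\rho,\rho')$ as unions of second coordinates of $2$-atoms, use Theorems~\ref{g} and~\ref{todo} together with orthocomplementation to identify those second coordinates as $\sigma(\phi')$, and commute $\sigma$ with unions. The paper's argument differs only in variable naming conventions.
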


\begin{proof}
Suppose $(\rho,\rho')$ is of finite height and that $\alpha(\rho,\rho')=(\lambda,\lambda')$. Since $\alpha$ is compatible with orthocomplementation, $\alpha(\rho',\rho)=(\lambda',\lambda)$. Since $(\rho,\rho')$ is of finite height, Proposition~\ref{lunk} shows that $\rho'$ is finite resgular, hence by Proposition~\ref{e} so is $\lambda'$. Then $X/\rho'$ and $X/\lambda'$ are infinite. So we may apply Proposition~\ref{b} part (3) to $(\rho',\rho)$ and to $(\lambda',\lambda)$ to obtain 
\begin{align*}
\rho &= \bigcup\,\{\theta:\mbox{there is a 2-atom }(\theta',\theta)\leq(\rho',\rho)\}\\
\lambda&=\bigcup\,\{\gamma:\mbox{there is a 2-atom }(\gamma',\gamma)\leq(\lambda',\lambda)\}
\end{align*}
\vspace{-1ex}

By Proposition~\ref{e} the image and inverse image under $\alpha$ of a 2-atom is a 2-atom. So $(\gamma',\gamma)$ is a 2-atom with $(\gamma',\gamma)\leq(\lambda',\lambda)$ iff $(\gamma',\gamma)=\alpha(\theta',\theta)$ for some 2-atom $(\theta',\theta)$ with $(\theta',\theta)\leq(\rho',\rho)$. So 
\[\lambda=\bigcup\,\{\gamma:(\gamma',\gamma)=\alpha(\theta',\theta)\mbox{ for some 2-atom }(\theta',\theta)\leq(\rho',\rho)\}\]
\vspace{-0ex}

\noindent If $(\theta',\theta)$ is a 2-atom, then $\theta$ is a 2-relation. So by Theorem~\ref{g} the first coordinate of $\alpha(\theta,\theta')$ is $\beta(\theta)$, and using orthocomplementation the second coordinate of $\alpha(\theta',\theta)$ is $\beta(\theta)$. Thus 
\[\lambda=\bigcup\{\beta(\theta):\mbox{there is a 2-atom }(\theta',\theta)\leq(\rho',\rho)\}\]
\vspace{-2ex}

\noindent Theorem~\ref{todo} says that for a 2-relation $\theta$, that $\beta(\theta)=\sigma(\theta)$. It is easily seen that the operation of taking the image of a relation under $\sigma$ commutes with arbitrary unions. It then follows that $\lambda=\sigma(\bigcup\{\theta:\mbox{there is a 2-atom }(\theta',\theta)\leq(\rho',\rho)\})$. Therefore $\lambda=\sigma(\rho)$ as required. 
\end{proof}

\begin{prop}
\label{k}
For each $(\theta,\theta')\in\Fact(X)$, the first coordinate of $\alpha(\theta,\theta')$ is $\sigma(\theta)$. 
\end{prop}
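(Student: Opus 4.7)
The plan is to reduce the general case to the finite–height case already handled by Proposition~\ref{j}, using the atom–decomposition result Proposition~\ref{b} together with the fact that automorphisms of $\Fact(X)$ preserve $p$-atoms.

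First I would split on whether $X/\theta$ is finite. If $X/\theta$ is finite, then Proposition~\ref{zoop} tells us that $\theta'$ has finite blocks, hence is finite regular, and so by Proposition~\ref{lunk} the factor pair $(\theta,\theta')$ has finite height; Proposition~\ref{j} then gives the result immediately. So the real work is the case when $X/\theta$ is infinite.

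In that case, write $\alpha(\theta,\theta')=(\lambda,\lambda')$. Because $\alpha$ is an order isomorphism, $(\theta,\theta')$ has finite height iff $(\lambda,\lambda')$ does, and by Proposition~\ref{lunk} (applied to the orthocomplements) this is equivalent to $X/\theta$ finite, respectively $X/\lambda$ finite. Hence $X/\lambda$ is also infinite, and we may apply Proposition~\ref{b}(2) to \emph{both} $(\theta,\theta')$ and $(\lambda,\lambda')$, taking $p=2$:
\begin{align*}
\theta &= \bigcap\{\phi : (\phi,\phi') \text{ a 2-atom with } (\phi,\phi')\leq(\theta,\theta')\},\\
\lambda &= \bigcap\{\gamma : (\gamma,\gamma') \text{ a 2-atom with } (\gamma,\gamma')\leq(\lambda,\lambda')\}.
\end{align*}
By Proposition~\ref{e} the automorphism $\alpha$ restricts to a bijection between 2-atoms, and being an order automorphism it sends 2-atoms below $(\theta,\theta')$ bijectively to 2-atoms below $(\lambda,\lambda')$. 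So every $(\gamma,\gamma')$ appearing in the second intersection equals $\alpha(\phi,\phi')$ for a unique 2-atom $(\phi,\phi')\leq(\theta,\theta')$. Since any 2-atom is of finite height, Proposition~\ref{j} applies and gives $\gamma=\sigma(\phi)$ for each such pair.

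Combining these observations,
\[ \lambda \;=\; \bigcap\{\sigma(\phi) : (\phi,\phi') \text{ a 2-atom with } (\phi,\phi')\leq(\theta,\theta')\}. \]
The last step is to note that taking the image of a relation under a permutation commutes with arbitrary set-theoretic intersections, so the right side equals $\sigma\bigl(\bigcap\{\phi:(\phi,\phi')\leq(\theta,\theta')\text{ a 2-atom}\}\bigr)=\sigma(\theta)$. Thus $\lambda=\sigma(\theta)$, as required. The only potentially delicate point is verifying that the hypothesis $X/\theta$ infinite of Proposition~\ref{b} transfers to $(\lambda,\lambda')$, and this is handled cleanly by the height-preservation argument above; the remainder of the proof is a direct assembly of Propositions~\ref{e}, \ref{b}, and~\ref{j}.
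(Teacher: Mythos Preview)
Your proof is correct and follows essentially the same approach as the paper: splitting on finite height (equivalently, on whether $X/\theta$ is finite), handling the finite case by Proposition~\ref{j}, and in the infinite case applying Proposition~\ref{b}(2) to both $(\theta,\theta')$ and its image, using the $\alpha$-bijection between 2-atoms below each, invoking Proposition~\ref{j} on those 2-atoms, and finishing by commuting $\sigma$ through the intersection. The only cosmetic difference is that the paper phrases the case split in terms of finite height directly, while you phrase it via $|X/\theta|$; these are equivalent by Propositions~\ref{lunk} and~\ref{zoop}.
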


\begin{proof}
If $(\theta,\theta')$ is of finite height, then the result is given by Proposition~\ref{j}. So assume that $(\theta,\theta')$ is not of finite height and let $\alpha(\theta,\theta')=(\lambda,\lambda')$. Since $\alpha$ is an automorphism, it follows that $(\lambda,\lambda')$ is not of finite height. Propositions~\ref{lunk} and~\ref{zoop} imply that $X/\theta$ and $X/\lambda$ are infinite. So we may apply Proposition~\ref{b} part (2) to $(\theta,\theta')$ and $(\lambda,\lambda')$ to obtain 
\begin{align*}
\theta&=\bigcap\,\{\phi:\mbox{there is a 2-atom }(\phi,\phi')\leq(\theta,\theta')\}\\
\lambda&=\bigcap\,\{\gamma:\mbox{there is a 2-atom }(\gamma,\gamma')\leq(\lambda,\lambda')\}
\end{align*}
\vspace{-1ex}

Using the same reasoning as in Proposition~\ref{j}, $(\gamma,\gamma')$ is a 2-atom with $(\gamma,\gamma')\leq(\lambda,\lambda')$ iff $(\gamma,\gamma')=\alpha(\phi,\phi')$ for some 2-atom $(\phi,\phi')$ with $(\phi,\phi')\leq(\theta,\theta')$. So 
\[\lambda=\bigcap\,\{\gamma:(\gamma,\gamma')=\alpha(\phi,\phi')\mbox{ for some 2-atom }(\phi,\phi')\leq(\theta,\theta')\}\]
\vspace{-1ex}

If $(\phi,\phi')$ is a 2-atom, then Proposition~\ref{j} gives that the first coordinate of $\alpha(\phi,\phi')$ is $\sigma(\phi)$. It is easily seen that the operation of taking the image of a relation under $\sigma$ commutes with taking arbitrary intersections. It then follows that $\lambda=\sigma(\bigcap\{\phi:\mbox{there is a 2-atom }(\phi,\phi')\leq(\theta,\theta')\})$. Therefore $\lambda=\sigma(\theta)$ as required. 
\end{proof}

\begin{thm}
\label{m}
For each $(\theta,\theta')\in\Fact(X)$ we have $\alpha(\theta,\theta')=(\sigma(\theta),\sigma(\theta'))$. 
\end{thm}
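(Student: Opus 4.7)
The plan is to deduce Theorem~\ref{m} directly from Proposition~\ref{k} together with the fact that an automorphism of an orthomodular poset commutes with the orthocomplementation $\perp$. Proposition~\ref{k} already identifies the first coordinate of $\alpha(\theta,\theta')$ as $\sigma(\theta)$, so the only remaining task is to identify the second coordinate as $\sigma(\theta')$, and this should fall out of applying Proposition~\ref{k} to the orthocomplement $(\theta',\theta)$.

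More concretely, first I would write $\alpha(\theta,\theta')=(\lambda,\lambda')$ and apply Proposition~\ref{k} to $(\theta,\theta')$ to conclude $\lambda=\sigma(\theta)$. Next I would use the fact that $\alpha$ is an automorphism of the orthomodular poset $\Fact(X)$, so $\alpha(x^\perp)=\alpha(x)^\perp$. Since the orthocomplementation in $\Fact(X)$ is given by $(\theta,\theta')^\perp=(\theta',\theta)$, this yields
\[\alpha(\theta',\theta)\;=\;\alpha((\theta,\theta')^\perp)\;=\;\alpha(\theta,\theta')^\perp\;=\;(\lambda,\lambda')^\perp\;=\;(\lambda',\lambda).\]
Now I would apply Proposition~\ref{k} again, but this time to the factor pair $(\theta',\theta)$: it says that the first coordinate of $\alpha(\theta',\theta)$ is $\sigma(\theta')$. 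Comparing with the display above, this forces $\lambda'=\sigma(\theta')$, so $\alpha(\theta,\theta')=(\sigma(\theta),\sigma(\theta'))$ as required.

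There is no real obstacle here: all the heavy lifting has already been done in Propositions~\ref{j} and~\ref{k}, where $\sigma$ was shown to compute the first coordinate of $\alpha$ on every factor pair by pushing down to $2$-atoms and then either taking unions (for finite height) or intersections (for the general case). The present theorem is essentially a bookkeeping step exploiting the symmetry between the two coordinates of a factor pair via the orthocomplementation. Once this theorem is established, Theorem~\ref{main'} follows immediately because the formula $\alpha(\theta,\theta')=(\sigma(\theta),\sigma(\theta'))$ is exactly the definition of $\Gamma(\sigma)(\theta,\theta')$ from Proposition~\ref{dopey}; hence $\alpha=\Gamma(\sigma)$ and $\Gamma$ is surjective, which combined with Proposition~\ref{dopey} gives the main theorem.
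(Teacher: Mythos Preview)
Your proposal is correct and follows essentially the same argument as the paper: write $\alpha(\theta,\theta')=(\lambda,\lambda')$, use compatibility with $\perp$ to get $\alpha(\theta',\theta)=(\lambda',\lambda)$, and apply Proposition~\ref{k} to each to identify $\lambda=\sigma(\theta)$ and $\lambda'=\sigma(\theta')$. Your version spells out the orthocomplementation step in slightly more detail, but the proof is the same.
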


\begin{proof}
Suppose $\alpha(\theta,\theta')=(\lambda,\lambda')$. Since $\alpha$ is compatible with orthocomplementation, $\alpha(\theta',\theta)=(\lambda',\lambda)$. 
Proposition~\ref{k} shows that the first coordinate of $\alpha(\theta,\theta')$ is $\sigma(\theta)$, and that the first coordinate of $\alpha(\theta',\theta)$ is $\sigma(\theta')$. So $\lambda=\sigma(\theta)$ and $\lambda'=\sigma(\theta')$. 
\end{proof}

\section{Atoms in $\Fact(X)$}

In this section we prove Propositions~\ref{lunk} and~\ref{b}, which are restated as Propositions~\ref{lunk'} and~\ref{b'} below. 

\begin{prop}
Let $(\theta,\theta')$ be a factor pair of an infinite set $X$. 
\vspace{1ex}
\begin{enumerate}
\item $(\theta,\theta')$ is an atom iff $\theta'$ is a $p$-relation for some prime $p$
\item $(\theta,\theta')$ is of finite height iff $\theta'$ is finite regular
\end{enumerate}
\label{lunk'}
\end{prop}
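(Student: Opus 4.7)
The plan is to apply Theorem~\ref{interval} to reduce both claims to cardinality statements about $Y = X/\theta$. By Theorem~\ref{interval}, the interval $[0,(\theta,\theta')]$ in $\Fact(X)$ is isomorphic as an orthomodular poset to $\Fact(X/\theta)$, with $(\theta,\theta')$ corresponding to the top. Hence $(\theta,\theta')$ is an atom of $\Fact(X)$ iff $\Fact(X/\theta)$ has exactly two elements, and $(\theta,\theta')$ is of finite height iff $\Fact(X/\theta)$ is of finite height. Proposition~\ref{zoop} identifies $|X/\theta|$ with the common block size of $\theta'$, so cardinality conditions on $X/\theta$ translate directly into conditions on $\theta'$.

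For part (1), I would establish the auxiliary statement: for any non-empty set $Y$, $\Fact(Y)$ has exactly two elements iff $|Y|$ is prime. If $|Y| = p$ is prime then every binary decomposition $Y \simeq A_1 \times A_2$ has a singleton factor, so $\Fact(Y) = \{0,1\}$. Conversely, if $|Y| = mn$ with $m, n \geq 2$, which is always possible when $|Y|$ is composite or infinite (take $m = 2$ in the infinite case), then $Y$ carries a regular equivalence relation with $n$ blocks of size $m$, and Proposition~\ref{stinky} extends it to a factor pair lying strictly between $0$ and $1$.

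For part (2), I would prove that $\Fact(Y)$ is of finite height iff $Y$ is finite. If $|Y|$ is finite, then by induction on $|Y|$ using the interval theorem to pass to a smaller quotient, the length of any chain in $\Fact(Y)$ is bounded by the number of prime factors of $|Y|$ counted with multiplicity. If $Y$ is infinite, the cardinal identity $|Y| = |Y| \cdot 2^{k-1}$ yields, for every $k$, a $k$-ary decomposition $Y \simeq A_1 \times \cdots \times A_k$ in which every $A_i$ is non-trivial; grouping the first $j$ factors together for $j = 1, \ldots, k-1$ produces $k-1$ binary decompositions which, together with $0$ and $1$, form a chain of $k+1$ distinct elements in $\Fact(Y)$, so the height is infinite.

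The main obstacle is the chain construction in the infinite case of part (2). Although the multiary-decomposition picture is intuitively transparent, verifying that the grouped decompositions give strictly increasing factor pairs requires writing down the corresponding equivalence relations on $Y \times 2 \times \cdots \times 2$ as ``collapse the last $k-j$ coordinates'' and checking the permutability and Boolean-sublattice conditions of Proposition~\ref{duck}. Once the relations are laid out this is a direct computation, and I expect no further subtleties.
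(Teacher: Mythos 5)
Your proposal is correct and takes essentially the same route as the paper: reduce via Theorem~\ref{interval} to $\Fact(X/\theta)$, observe that $(\theta,\theta')$ is an atom iff $\Fact(X/\theta)$ has exactly two elements (iff $|X/\theta|$ is prime) and is of finite height iff $\Fact(X/\theta)$ has bounded chains (iff $X/\theta$ is finite), then translate back through Proposition~\ref{zoop}. The only difference is one of detail: you supply explicit witnesses (the regular relation via Proposition~\ref{stinky}, and the chains obtained by grouping a $k$-ary decomposition) for the auxiliary facts the paper states with ``it is not difficult to show,'' and these verifications go through as you indicate.
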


\begin{proof}
By Theorem~\ref{interval} the interval $[(\nabla,\Delta),(\theta,\theta')]$ is isomorphic to $\Fact(X/\theta)$. Having $(\theta,\theta')$ be an atom is therefore equivalent to having $\Fact(X/\theta)$ have exactly two elements, which in turn is equivalent to having $X/\theta$ be directly indecomposible. This amounts to $X/\theta$ being finite with a prime number of elements $p$, and therefore to $\theta$ having $p$ equivalence classes. By Proposition~\ref{zoop}, this is equivalent to $\theta'$ being a $p$-relation. This establishes (1). For (2), it is not difficult to show that there is a bound on the lengths of chains in $\Fact(X/\theta)$ iff $X/\theta$ is finite. But by Proposition~\ref{zoop} this is equivalent to $\theta'$ being finite regular. 
\end{proof}

\begin{prop}
Let $(\theta,\theta')$ be a factor pair of $X$ with $X/\theta$ infinite and let $p$ be prime. 
\vspace{1ex}
\begin{enumerate}
\item $(\theta,\theta')$ is the join of the $p$-atoms beneath it
\item $\theta\,\,=\bigcap\{\phi\,:\mbox{there is a $p$-atom $(\phi,\phi')$ with }(\phi,\phi')\leq(\theta,\theta')\}$
\item $\theta'=\hspace{.3mm}\bigcup\{\phi':\mbox{there is a $p$-atom $(\phi,\phi')$ with }(\phi,\phi')\leq(\theta,\theta')\}$
\end{enumerate}
\label{b'}
\end{prop}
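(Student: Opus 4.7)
The plan is to establish (2) and (3) by direct construction using the canonical bijection $X\simeq X/\theta\times X/\theta'$ arising from the given factor pair, and then derive (1) from (2) and (3) using the fact that relational composition distributes over union. Under this bijection $\theta$ corresponds to $\Delta\times\nabla$ and $\theta'$ to $\nabla\times\Delta$, where $\Delta,\nabla$ refer to the appropriate factors.

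For (3), the inclusion $\bigcup\phi'\subseteq\theta'$ is immediate from $(\phi,\phi')\leq(\theta,\theta')$, so suppose $(a,b)\in\theta'$; then $a$ and $b$ share a second coordinate. Since $X/\theta$ is infinite we may pick a $p$-relation $\rho$ on $X/\theta$ one of whose blocks contains both first coordinates. By Proposition~\ref{stinky}, $\rho$ extends to a factor pair $(\tau,\rho)$ of $X/\theta$. Transport the relations $\tau\times\nabla$ and $\rho\times\Delta$ back to $X$ through the bijection to obtain equivalences $\phi,\phi'$. A routine componentwise computation shows $\phi\cap\phi'=\Delta$ and $\phi\circ\phi'=\nabla$, so $(\phi,\phi')$ is a factor pair; moreover $\theta\subseteq\phi$, $\phi'\subseteq\theta'$, and $\theta\circ\phi'=\rho\times\nabla=\phi'\circ\theta$, so $(\phi,\phi')\leq(\theta,\theta')$. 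Every block of $\phi'$ has $p$ elements, so Proposition~\ref{lunk'}(1) makes $(\phi,\phi')$ a $p$-atom, and $(a,b)\in\phi'$ by construction.

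Part (2) is the parallel construction on the first factor. Given $(a,b)\notin\theta$, partition the infinite set $X/\theta$ into $p$ blocks each of cardinality $|X/\theta|$, placing the two first coordinates in different blocks. The resulting regular equivalence $\tau$ extends by Proposition~\ref{stinky} to a factor pair $(\tau,\rho)$ of $X/\theta$, and $\rho$ is a $p$-relation by Proposition~\ref{zoop}. Pulling $\tau\times\nabla$ and $\rho\times\Delta$ back to $X$ as in (3) produces a $p$-atom $(\phi,\phi')\leq(\theta,\theta')$ with $(a,b)\notin\phi$.

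Finally, (1) is deduced from (2) and (3). The pair $(\theta,\theta')$ is an upper bound of the $p$-atoms beneath it, so it suffices to show every upper bound $(\psi,\psi')$ satisfies $(\theta,\theta')\leq(\psi,\psi')$. For each $p$-atom $(\phi,\phi')\leq(\theta,\theta')$ the relation $(\phi,\phi')\leq(\psi,\psi')$ gives $\psi\subseteq\phi$, $\phi'\subseteq\psi'$, and $\psi\circ\phi'=\phi'\circ\psi$. Intersecting and unioning across all such $p$-atoms, (2) and (3) yield $\psi\subseteq\theta$ and $\theta'\subseteq\psi'$, and since relational composition distributes over arbitrary unions on each side,
\[ \psi\circ\theta' \;=\; \psi\circ\bigcup_{(\phi,\phi')}\phi' \;=\; \bigcup_{(\phi,\phi')}(\psi\circ\phi') \;=\; \bigcup_{(\phi,\phi')}(\phi'\circ\psi) \;=\; \theta'\circ\psi, \]
so $\psi$ and $\theta'$ permute, giving $(\theta,\theta')\leq(\psi,\psi')$. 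The main obstacle is verifying that the pairs built in (2) and (3) really satisfy the factor-pair axioms and lie below $(\theta,\theta')$; this is what makes the product-of-relations calculus on $X/\theta\times X/\theta'$ essential, since each verification collapses to a simple identity among $\Delta$, $\nabla$, $\tau$, and $\rho$ on the two factors.
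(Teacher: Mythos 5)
Your proof is correct, and it reaches the same destination by a recognizably streamlined version of the paper's route. Both arguments ultimately live on the identification $X\simeq X/\theta\times X/\theta'$ and manufacture $p$-atoms by re-partitioning the first factor while leaving the second alone; the difference is in how those atoms are produced. The paper builds $\phi,\phi'$ explicitly via a double enumeration of the blocks of $\theta$ and $\theta'$, defines the mod-$p$ classes using finite parts of Cantor normal forms of ordinals, and then relocates a given pair $(u,v)$ by re-enumerating the blocks; you instead choose the relevant partition of $X/\theta$ directly around the given pair (a $p$-relation whose block contains both first coordinates for (3), a $p$-block regular partition separating them for (2)), complete it to a factor pair by citing Proposition~\ref{stinky}, count with Proposition~\ref{zoop}, and transport back along the canonical bijection. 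This trades the ordinal bookkeeping and the re-enumeration trick for the routine product-relation identities you mention, and it makes the argument more modular, since Proposition~\ref{stinky} already encapsulates the enumeration work the paper redoes by hand. For (1), your computation $\psi\circ\theta'=\bigcup(\psi\circ\phi')=\bigcup(\phi'\circ\psi)=\theta'\circ\psi$ is precisely the algebraic form of the paper's elementwise chase: given $x\,\psi\,y\,\theta'\,z$, the paper uses (3) to replace $(y,z)$ by a pair of some $\phi'$ and then invokes that $\psi$ permutes with $\phi'$. Two points that are implicit in your write-up deserve a sentence: the set of $p$-atoms below $(\theta,\theta')$ is nonempty (your construction in (3) shows this, using only that $X/\theta$ is infinite), and Proposition~\ref{stinky} delivers the $p$-relation in the first slot of a factor pair, so you should remark that factor pairs are closed under swapping components in order to obtain $(\tau,\rho)$ with $\rho$ second.
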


\begin{proof} 
To simplify notation and allow illustrative diagrams, we assume $p=2$. The modifications for other primes are discussed after the result is proved for $p=2$. 
Enumerate the equivalence classes of $\theta'$ by $E_i$ $(i\in I)$ for some cardinal $I$, and enumerate the equivalence classes of $\theta$ by $F_j$ $(j\in J)$ for some cardinal $J$. Then each element $x$ of $X$ determines an ordered pair $(i,j)$ where $x\in E_i$ and $x\in F_j$. So for elements $x,y\in X$ we have 
\begin{align*}
x\,\theta'\,y\,\, & \mbox{ iff  $x,y$ have the same $i$ component}\\
x\,\theta\,y\,\,\, & \mbox{ iff  $x,y$ have the same $j$ component}
\end{align*}
\vspace{-2ex}

Since $\theta\cap\theta'=\Delta$, different elements of $X$ give different ordered pairs. For each ordered pair $(i,j)$, take some element $y\in E_i$ and some element $z\in F_j$. Since $\theta'\circ\theta=\nabla$, there is $x$ with $y\,\theta'\, x\,\theta z$. Then $x\in E_i$ and $x\in F_j$, so the ordered pair for $x$ is $(i,j)$. Thus, there is a bijective correspondence between the elements of $X$ and the ordered pairs $(i,j)$ where $i\in I$ and $j\in J$. So we assume the elements of $X$ are indexed as $x_{i,j}$ where $i\in I$ and $j\in J$. 
\vspace{4ex}

\begin{center}
\begin{tikzpicture}
\foreach \x in {0,1,2,3,4,10} {  \draw[ultra thin] (\x,-.9) -- (\x,6);
   \foreach \y in {0,1,2,3,4,5,6} {  \draw[ultra thin] (0,\y) -- (10,\y); } }
\foreach \x in {0,1,2,3,4,10} { \foreach \y in {5,3,1} {\draw[rounded corners=7pt] (\x - .25,\y - .25) rectangle ++(0.5,1.5); } }
\foreach \x in {0,1,2,3,4,10} { \foreach \y in {0,2,4,6} { \draw[fill] (\x, \y) circle [radius=0.1]; } }
\foreach \x in {0,1,2,3,4,10} { \foreach \y in {1,3,5} { \draw[fill=white] (\x, \y) circle [radius=0.1]; } }
\foreach \x in {0,1,2,3,4} { \node at (\x + .05,-1.35) {$E_{\x}$}; }
\foreach \y in {0,1,2,3,4,5,6} { \node at (11,6-\y) {$F_{\y}$}; }
\end{tikzpicture}
\end{center}
\vspace{4ex}

In the figure we have drawn the elements $x_{i,j}$ with the top row being $x_{0,0}, x_{1,0}, \ldots$. It depicts a final entry in this row, as there would be if $I$ is finite, but this need not be the case and is not of importance in the argument that follows. The second row from the top has its elements $x_{0,1},x_{1,1},\ldots$, and so forth. Of importance is that $J$ is infinite, and so there are infinitely many rows in this picture. The horizontal lines in this picture indicate $\theta$-classes, and the vertical lines indicate $\theta'$-classes. Also indicated are two more equivalence relations $\phi$ and $\phi'$. The equivalence relation $\phi$ has two equivalence classes, one consisting of those elements represented by filled in circles, and the other those elements represented by open circles. The equivalence relation $\phi'$ has infinitely many equivalence classes, each with exactly two elements. These are represented by the ovals. 

We describe $\phi$ and $\phi'$ more precisely. Each ordinal $\alpha$ has a Cantor normal form. In this, it has a finite portion that is an ordinary natural number. We call this the finite part of $\alpha$. For example, the ordinal $\omega^3 + 5\omega^2 + 4\omega + 11$ has finite part 11. We can then define whether an ordinal is congruent to $0$ or $1$ modulo 2 in an obvious way. For the elements $x_{i,j}$, we have that $i$ ranges over all ordinals less than $I$ and $j$ over all ordinals less than $J$. Then we define $\phi$ so that $x_{i,j}$ is $\phi$ related to $x_{p,q}$ if $j$ is congruent to $q$ modulo 2. Roughly, the even rows belong to one class and the odd rows to the other. Each ordinal $\alpha$ has its finite part equal to either $2k$ for some natural number $k$, or to $2k+1$ for some natural number $k$. The equivalence classes of $\phi'$ are defined to be all sets $\{x_{i,j},x_{i,j+1}\}$ where $j$ is an ordinal less than $J$ whose finite part equals $2k$ for some natural number $k$, and $j+1$ is ordinal addition. Since $J$ is infinite, $j<J$ implies that $j+1<J$, so this is well defined. 

It follows from this description that $\phi\cap\phi'=\Delta$. Also, $\phi\circ\phi'=\nabla$. To see this, let $x,y$ be elements of $X$. We show $(x,y)\in\phi\circ\phi'$. If they belong to the same $\phi$-class, then this is obvious. Otherwise, we can move from $x$ along a $\phi'$ oval to an element in the same class as $\phi$ class as $y$. It is also clear that $\theta\subseteq\phi$ since the $\phi$-classes are unions of the rows that are $\theta$-classes. Also $\phi'\subseteq\theta'$ since the $\phi'$ classes are 2-element sets that belong to the columns that are the $\theta'$-classes. To show that $(\phi,\phi')\leq(\theta,\theta')$, it remains to show that $\theta$ permutes with $\phi'$. Suppose $x\,\theta\, y\,\phi' z$. Then $x$ and $y$ are in the same row, and $z$ is either in the same row as $x$, or the one above or below it, depending on whether $x$'s row is a $2k$ row or a $2k+1$ row. Then it is easy to see that there is $w$ with $x\,\phi'\,w\,\theta\,z$. So $\theta\circ\phi'=\phi'\circ\theta$. 

Since $\phi$ has two equivalence classes, the factor pair $(\phi,\phi')$ is a 2-atom, and we have shown that $(\phi,\phi')\leq(\theta,\theta')$. Consider 

\[B = \{(\phi,\phi'):(\phi,\phi') \mbox{ is a 2-atom with }(\phi,\phi')\leq(\theta,\theta')\}\]
\vspace{-1ex}

\noindent We show that (2) $\theta=\bigcap\{\phi:(\phi,\phi')\in B\}$ and (3) $\theta'=\bigcup\{\phi':(\phi,\phi')\in B\}$. The conditions for $(\phi,\phi')$ to be in $B$ require that $\theta\subseteq\phi$, $\phi'\subseteq\theta'$, and $\theta\circ\phi'=\phi'\circ\theta$. From this, it is clear that in (2) the containment  $\subseteq$ holds, and in (3) the containment $\supseteq$ holds. The key to obtaining the other containments is noting that altering the enumerations of the equivalence classes of $\theta$ and $\theta'$ will produce other pairs $(\phi,\phi')$ belonging to $B$. 

Given any two distinct elements $u,v$ of $X$ that are $\theta'$ related, $u,v$ are in the same $\theta'$ class $E$, and so are in different $\theta$ classes $F,F'$. When we enumerated the $\theta'$ classes, we could have enumerated them so that $E=E_0$, and when we enumerated the $\theta$ classes, we could have enumerated them so that $F=F_0$ and $F'=F_1$. This would have had the elements $u,v$ indexed as $u=x_{0,0}$ and $v=x_{0,1}$. With this alternate enumeration, we would have obtained a factor pair $(\phi,\phi')\in B$ with $(u,v)\in\phi'$. Since $(u,v)$ was an arbitrary ordered pair in $\theta'$, it follows that $\theta'=\bigcup\{\phi':(\phi,\phi')\in B\}$, establishing (3). 

To establish (2) suppose that $(u,v)\not\in\theta$. Then $u$ is in one $\theta$ class $F$ and $v$ is in another $F'$. When we enumerate the $\theta$ classes, we can do so with $F=F_0$ and $F'=F_1$. Take an arbitrary enumeration of the $\theta'$ classes. Then with these enumerations, $u=x_{i,0}$ and $v=x_{p,1}$ for some $i,p$. We then have another $(\phi,\phi')\in B$ and this time $(u,v)\not\in\phi$ since their second components are not congruent modulo 2. So $(u,v)\not\in\bigcap\{\phi:(\phi,\phi')\in B\}$, and it follows that (2) holds. 

Surely $(\theta,\theta')$ is an upper bound of $B$ in $\Fact(X)$. Suppose that $(\gamma,\gamma')$ is another upper bound. Then for each $(\phi,\phi')\in B$ we have $\gamma\subseteq\phi$, $\phi'\subseteq\gamma'$, and $\gamma\circ\phi'=\phi'\circ\gamma$. It follows that $\gamma\subseteq \bigcap\{\phi:(\phi,\phi')\in B\}$ and that $\bigcup\{\phi':(\phi,\phi')\in B\}\subseteq\gamma'$. Thus, by (2) and (3) we have $\gamma\subseteq\theta$ and $\theta'\subseteq\gamma'$. To show that $(\theta,\theta')\leq(\gamma,\gamma')$ it remains only to show that $\gamma\circ\theta'=\theta'\circ\gamma$. Suppose $x\,\gamma\,y\,\theta'\,z$. By (3) there is some $(\phi,\phi')\in B$ with $(y,z)\in\phi'$. But $(\phi,\phi')\leq(\gamma,\gamma')$ implies that $\gamma$ and $\phi'$ permute. Then since $x\,\gamma\,y\,\phi'\,z$, there is some $w$ with $x\,\phi'\,w\,\gamma\,z$, hence $x\,\theta'\,w\,\gamma\,z$. So $\gamma\circ\theta'\subseteq\theta'\circ\gamma$. A similar argument shows the other containment. So $(\theta,\theta')\leq(\gamma,\gamma')$. Thus $(\theta,\theta')$ is the least upper bound of $B$. This establishes the proof when $p=2$. 

To modify the proof to accommodate an arbitrary prime $p$, make the $\phi$ equivalence classes modulo p rather than modulo 2, and make the $\phi'$ classes consist of sets $\{x_{i,j},x_{i,j+1},\ldots,x_{i,j+p-1}\}$ where the finite part of $j$ is equal to $pk$ for some integer $k$. 
\end{proof}

\begin{rmk}{\em 
Proposition~\ref{b'} need not hold without the assumption that $X/\theta$ is finite. Indeed, suppose that $|X/\theta|=n$. By Theorem~\ref{interval}, the interval $[(\nabla,\Delta),(\theta,\theta')]$ is isomorphic to $\Fact(X/\theta)$. If $p$ does not divide $n$, then $\Fact(X/\theta)$ will have no $p$-relations at all. 
}
\end{rmk}

We recall that a poset with a least element $0$ is atomic if every non-zero element has an atom beneath it, and the poset is atomistic if each element is the least upper bound of the atoms that lie beneath it. Proposition~\ref{b'} has the following consequence that is of independent interest. 

\begin{cor}
For $X$ an infinite set, $\Fact(X)$ is atomistic. 
\end{cor}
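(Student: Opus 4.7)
The plan is to split into two cases depending on whether $X/\theta$ is infinite or finite, for any $(\theta,\theta') \in \Fact(X)$. If $X/\theta$ is infinite, Proposition~\ref{b'}(1) applied with $p=2$ directly gives $(\theta,\theta')$ as the join of the $2$-atoms beneath it, and since $p$-atoms are atoms this case is immediate.

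If $X/\theta$ is finite, I would invoke Theorem~\ref{interval} to identify the interval $[0,(\theta,\theta')]$ with $\Fact(Y)$ where $Y = X/\theta$ is finite. Since the OMP isomorphism carries atoms to atoms and preserves any existing joins (and joins in the interval agree with joins in $\Fact(X)$ of the same elements), the task reduces to showing that the top element of $\Fact(Y)$ is the join of the atoms of $\Fact(Y)$ for every finite set $Y$. If $|Y|=1$ then $\Fact(Y)$ has no atoms and $\top = 0 = \bigvee \emptyset$. Otherwise, fix any prime $p$ dividing $|Y|$ and let $(\gamma,\gamma')$ be an arbitrary upper bound of the atoms of $\Fact(Y)$; the goal is to force $(\gamma,\gamma') = (\Delta,\nabla) = 1$.

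For each pair of distinct points $y,y' \in Y$, I would construct two $p$-atoms. First, partition $Y$ into $p$ equal blocks of size $|Y|/p$ with $y$ and $y'$ in different blocks. The resulting equivalence relation $\phi$ is regular, has a factor-pair partner $\phi'$ by Proposition~\ref{stinky}, and $(\phi,\phi')$ is a $p$-atom by Proposition~\ref{lunk'}(1). From $(\gamma,\gamma') \geq (\phi,\phi')$ the definition of $\leq$ gives $\gamma \subseteq \phi$, so $(y,y') \notin \gamma$. Second, partition $Y$ into $|Y|/p$ blocks of size $p$ with $y$ and $y'$ in the same block. The resulting $p$-relation $\phi'$ has a partner $\phi$ via Proposition~\ref{stinky}, and $(\phi,\phi')$ is again a $p$-atom; then $\phi' \subseteq \gamma'$ yields $(y,y') \in \gamma'$. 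Since $y \neq y'$ was arbitrary, $\gamma = \Delta$ and $\gamma' = \nabla$, so $(\gamma,\gamma') = 1$ as required.

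The main obstacle is the finite subcase, since Proposition~\ref{b'} is explicitly inapplicable when $X/\theta$ is finite (as the remark following it observes, $\Fact(X/\theta)$ need not contain $p$-atoms for every prime $p$). The point of restricting to a prime $p$ dividing $|Y|$ is precisely that it guarantees the existence of enough $p$-atoms to simultaneously separate any two points (pinning down the first coordinate of $(\gamma,\gamma')$ as $\Delta$) and to unite any two points into a common small block (pinning down the second coordinate as $\nabla$).
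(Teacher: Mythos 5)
Your case split is the right instinct, and your Case~1 is exactly the paper's argument; you are also correct that Proposition~\ref{b'} is silently inapplicable to $(\theta,\theta')$ when $X/\theta$ is finite, so the finite case does need its own treatment. The genuine gap is the parenthetical claim that ``joins in the interval agree with joins in $\Fact(X)$ of the same elements.'' Atomisticity requires $(\theta,\theta')$ to be the least upper bound of the atoms beneath it \emph{in all of} $\Fact(X)$, and the isomorphism of Theorem~\ref{interval} only controls elements lying inside $[(\nabla,\Delta),(\theta,\theta')]$. An upper bound $(\lambda,\lambda')$ of those atoms need not lie in this interval --- it need not be comparable to $(\theta,\theta')$ at all --- so your computation inside $\Fact(Y)$, which correctly shows that the top of $\Fact(Y)$ is the only upper bound \emph{within $\Fact(Y)$} of its atoms, says nothing about such $(\lambda,\lambda')$. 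In a general poset the top of an interval can be the join of a subset relative to the interval without being its global join, and this is precisely why the paper's proof of Proposition~\ref{b'} does not stop after establishing its items (2) and (3), but spends its final paragraph on an arbitrary upper bound $(\gamma,\gamma')$ in $\Fact(X)$.

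The omission is compounded by the definition of $\leq$ in $\Fact(X)$: to conclude $(\theta,\theta')\leq(\lambda,\lambda')$ you need three things, namely $\lambda\subseteq\theta$, $\theta'\subseteq\lambda'$, \emph{and} that $\lambda$ and $\theta'$ permute, and permutability does not follow from the two inclusions. The good news is that your separating and uniting families of $p$-atoms (for $p$ dividing $n=|X/\theta|$) are exactly what is needed for a repair: transported through the explicit isomorphisms $\Sigma$, $\Phi$ recalled in the proof of Proposition~\ref{nm}, they yield $\theta=\bigcap\{\phi:(\phi,\phi')\mbox{ a $p$-atom }\leq(\theta,\theta')\}$ and $\theta'=\bigcup\{\phi':(\phi,\phi')\mbox{ a $p$-atom }\leq(\theta,\theta')\}$ in $\Fact(X)$ itself. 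From these, any upper bound $(\lambda,\lambda')$ satisfies the two inclusions, and the permutability of $\lambda$ with $\theta'$ follows exactly as in the last paragraph of the proof of Proposition~\ref{b'}: given $x\,\lambda\,y\,\theta'\,z$, pick a $p$-atom with $(y,z)\in\phi'$; since $(\phi,\phi')\leq(\lambda,\lambda')$, the relations $\lambda$ and $\phi'$ permute, producing $w$ with $x\,\phi'\,w\,\lambda\,z$, hence $x\,\theta'\,w\,\lambda\,z$. With that final step added (and noting the harmless degenerate case $|Y|=1$, where $(\theta,\theta')=0$ is the empty join), your finite case closes; as written, it stops one step short of where the actual difficulty lies.
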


\section{Preservation of $n$-relations}

In this section we prove Proposition~\ref{e}, that if $\theta$ is an $n$-relation and $\alpha(\theta,\theta')=(\gamma,\gamma')$, then $\gamma$ is an $n$-relation, and conversely. This is restated below as Proposition~\ref{e'}. We begin with the following result from \cite[Cor.~2.3]{Tim}, restated slightly to match our terminology.

\begin{prop}
Let $X$ be a finite set with $mn$ elements. Then the number of factor pairs $(\theta,\theta')$ where $\theta$ is an $n$-relation is 
\[\frac{(mn)!}{m!n!}\]
\vspace{0ex}
\label{snakes}
\end{prop}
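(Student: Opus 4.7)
The plan is to count the factor pairs indirectly via the bijective correspondence between factor pairs and equivalence classes of binary direct product decompositions given in Proposition~2.9, and then reduce the count to a free group action on the set of bijections $X \to \{1,\ldots,m\}\times\{1,\ldots,n\}$.

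First I would observe that if $(\theta,\theta')$ is a factor pair with $\theta$ an $n$-relation on a set $X$ of size $mn$, then $\theta$ has $m$ blocks of size $n$, so $|X/\theta|=m$, and Proposition~\ref{zoop} forces $\theta'$ to be an $m$-relation with $|X/\theta'|=n$. The canonical bijection $\kappa:X\to X/\theta\times X/\theta'$ is then a binary direct product decomposition with codomain sizes $m$ and $n$. Proposition~2.9 therefore gives a bijection between the set of factor pairs with $\theta$ an $n$-relation and the set of equivalence classes of binary decompositions $\varphi:X\to A_1\times A_2$ with $|A_1|=m$ and $|A_2|=n$.

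Next I would fix standard codomains $A_1=\{1,\ldots,m\}$ and $A_2=\{1,\ldots,n\}$. Every equivalence class has a representative with these fixed codomains (compose with any pair of bijections from the original $A_1,A_2$ to the standard ones). The set $B$ of bijections $\varphi:X\to A_1\times A_2$ has cardinality $(mn)!$, and the group $G=\Perm(A_1)\times\Perm(A_2)$ of order $m!\,n!$ acts on $B$ by $(\alpha_1,\alpha_2)\cdot\varphi=(\alpha_1\times\alpha_2)\circ\varphi$. By the Definition of equivalence of decompositions (specialized to equal codomains), two bijections in $B$ lie in the same equivalence class iff they lie in the same $G$-orbit.

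Finally I would verify that the action is free: if $(\alpha_1\times\alpha_2)\circ\varphi=\varphi$ for some $\varphi\in B$, then since $\pi_i\circ\varphi$ is surjective onto $A_i$, $\alpha_i$ must fix every element of $A_i$, so $(\alpha_1,\alpha_2)$ is the identity. A free action of a finite group on a finite set has orbit count equal to $|B|/|G|=(mn)!/(m!\,n!)$, which is exactly the desired count. There is no real obstacle here beyond the bookkeeping; the only point needing a line of care is the identification of $G$-orbits with decomposition equivalence classes, which follows immediately once one unwinds the definitions with the codomains held fixed.
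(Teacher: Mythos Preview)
Your argument is correct. The counting via the free action of $\Perm(\{1,\ldots,m\})\times\Perm(\{1,\ldots,n\})$ on the set of bijections $X\to\{1,\ldots,m\}\times\{1,\ldots,n\}$ is clean, and the identification of orbits with equivalence classes of decompositions (hence with factor pairs via Proposition~2.9) is exactly right. The freeness check is the only point requiring a sentence of justification, and you handled it.

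For comparison: the paper does not prove this statement at all. It is quoted from \cite[Cor.~2.3]{Tim} and used as input. So your write-up actually supplies a self-contained proof where the paper relies on an external reference. There is nothing to reconcile between approaches; yours simply fills in what the paper imports.
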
 

\begin{cor}
If $X$ is a finite set with $|X|=pq$ for primes $p$ and $q$, then 

\[|\Fact(X)\,| =  \begin{cases}  
\,2\dfrac{(pq)!}{p!q!}+2 &\mbox{if } p\neq q\\[1.5em]
\,\,\dfrac{(p^2)!}{(p!)^2}+2 &\mbox{if } p=q
\end{cases}\]
\vspace{1ex}
\label{size}
\end{cor}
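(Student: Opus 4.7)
The plan is to partition $\Fact(X)$ according to the block size of the first coordinate $\theta$, and then apply Proposition~\ref{snakes} to count each piece. By Proposition~\ref{zoop}, every factor pair consists of regular equivalence relations, and since $|X| = pq$ is finite the common block size of $\theta$ must divide $pq$. Hence $\theta$ is either a $1$-relation, a $p$-relation, a $q$-relation, or a $pq$-relation.

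The $1$-relation case forces $\theta = \Delta$ and then $\theta' = \nabla$ is uniquely determined, giving exactly one factor pair $(\Delta,\nabla)$. Symmetrically, the $pq$-relation case forces $\theta = \nabla$ and $\theta' = \Delta$, giving the single factor pair $(\nabla,\Delta)$. So these two extreme cases contribute the constant $+2$ in both formulas.

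For the middle cases I will invoke Proposition~\ref{snakes}. When $p \neq q$, the number of factor pairs with $\theta$ a $p$-relation is obtained by taking $mn = pq$ with $n = p$ and $m = q$, yielding $\tfrac{(pq)!}{q!\,p!}$, and the number with $\theta$ a $q$-relation is obtained by taking $n = q$ and $m = p$, yielding $\tfrac{(pq)!}{p!\,q!}$. These two counts are equal, and they enumerate disjoint subsets of $\Fact(X)$, so together with the two trivial pairs the total is $2\,\tfrac{(pq)!}{p!\,q!} + 2$. When $p = q$ the $p$-relation and $q$-relation cases coincide and are counted only once, via $m = n = p$, giving $\tfrac{(p^2)!}{(p!)^2}$, so the total is $\tfrac{(p^2)!}{(p!)^2} + 2$.

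The only subtlety to be careful about is ensuring that the classification of $\theta$ by block size really partitions $\Fact(X)$ without double counting, and that no other block sizes are possible. Both facts are immediate from Proposition~\ref{zoop} combined with the elementary observation that a block-size must divide $|X|$ and the only divisors of $pq$ are $1, p, q, pq$. There is no real obstacle here; the corollary is essentially an arithmetic bookkeeping consequence of Proposition~\ref{snakes}.
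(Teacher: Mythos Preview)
Your proof is correct and follows essentially the same approach as the paper: enumerate the two trivial factor pairs, then count the nontrivial ones via Proposition~\ref{snakes} according to whether $\theta$ is a $p$-relation or a $q$-relation, distinguishing the cases $p\neq q$ and $p=q$. Your write-up is slightly more explicit than the paper's in justifying (via Proposition~\ref{zoop} and divisibility) why these are the only possible block sizes, but the argument is the same.
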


\begin{proof}
In this situation the elements of $\Fact(X)$ are the two bounds, the elements $(\theta,\theta')$ where $\theta$ is a $p$-relation and $\theta'$ is a $q$-relation, and the elements $(\theta,\theta')$ where $\theta$ is a $q$-relation and $\theta'$ is a $p$-relation. The result follows from Proposition~\ref{snakes}.
\end{proof}

\begin{prop}
Let $X$ be an arbitrary set, and let $(\gamma,\gamma')\in\Fact(X)$. Then for any $m,n$ natural numbers, these are equivalent 
\vspace{1ex}
\begin{enumerate}
\item $|X/\gamma\,|=mn$ 
\item there are $(\theta,\theta')\oplus(\phi,\phi')=(\gamma,\gamma')$ with $|X/\theta\,|=m$ and $|X/\phi\,|=n$
\end{enumerate}
\label{adz}
\end{prop}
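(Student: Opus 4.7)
The plan is to reduce the claim to a transparent statement about binary decompositions of $X/\gamma$ via the interval isomorphism in Theorem~\ref{interval}. That result supplies an OMP isomorphism $\Phi\colon[(\nabla,\Delta),(\gamma,\gamma')]\to\Fact(X/\gamma)$. Under $\Phi$, the relation $(\theta,\theta')\oplus(\phi,\phi')=(\gamma,\gamma')$ in the interval becomes an equation $(\bar\theta,\bar\theta')\oplus(\bar\phi,\bar\phi')=(\Delta,\nabla)$ at the top of $\Fact(X/\gamma)$, which forces $(\bar\phi,\bar\phi')=(\bar\theta',\bar\theta)$. So once I verify that $\Phi$ tracks the cardinality of the first quotient correctly, both directions of the equivalence fall out of the fact that the cardinality of any set that admits a factor pair equals the product of the cardinalities of the two quotients.

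The key lemma I would establish first is the size-tracking claim: if $(\theta,\theta')\leq(\gamma,\gamma')$ in $\Fact(X)$ and $(\bar\theta,\bar\theta')=\Phi(\theta,\theta')$, then $|(X/\gamma)/\bar\theta|=|X/\theta|$. This reads directly off of Proposition~\ref{duck}. From $(\theta,\theta')\leq(\gamma,\gamma')$ I get a ternary decomposition $X\simeq X/\theta\times X/(\theta\circ\gamma')\times X/\gamma'$; comparing with the bijection $X\simeq X/\gamma\times X/\gamma'$ given by the factor pair $(\gamma,\gamma')$ yields $X/\gamma\simeq X/\theta\times X/(\theta\circ\gamma')$. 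This exhibits a factor pair of $X/\gamma$ whose first quotient has cardinality $|X/\theta|$, and inspection of the construction of $\Phi$ in Theorem~\ref{interval} shows this is precisely $(\bar\theta,\bar\theta')$.

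With this lemma in hand, for $(2)\Rightarrow(1)$ I note that $(\bar\phi,\bar\phi')=(\bar\theta',\bar\theta)$ gives $|(X/\gamma)/\bar\theta'|=|(X/\gamma)/\bar\phi|=|X/\phi|=n$, while the lemma gives $|(X/\gamma)/\bar\theta|=|X/\theta|=m$; since $(\bar\theta,\bar\theta')$ is a factor pair of $X/\gamma$, it follows that $|X/\gamma|=mn$. For $(1)\Rightarrow(2)$, having $|X/\gamma|=mn$ lets me fix any bijection $X/\gamma\to M\times N$ with $|M|=m$ and $|N|=n$; the associated factor pair of $X/\gamma$ has first-quotient size $m$ and second-quotient size $n$. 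Lifting it back through $\Phi^{-1}$ produces $(\theta,\theta')\leq(\gamma,\gamma')$ with $|X/\theta|=m$. Taking $(\phi,\phi')$ to be its orthocomplement in the interval gives $(\theta,\theta')\oplus(\phi,\phi')=(\gamma,\gamma')$ by construction, and because $\Phi$ preserves $\perp$ the image of $(\phi,\phi')$ is $(\bar\theta',\bar\theta)$, so the lemma applied to $(\phi,\phi')$ yields $|X/\phi|=n$.

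The one step that requires genuine care is the size-tracking claim, since it is the only place the proof is not pure OMP bookkeeping; after that, the argument is a short dictionary translation between the interval $[(\nabla,\Delta),(\gamma,\gamma')]$ and $\Fact(X/\gamma)$.
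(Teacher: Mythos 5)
Your proof is correct, but it takes a genuinely different route from the paper's. The paper works entirely inside $\Fact(X)$ and never invokes Theorem~\ref{interval}: for (1)$\Rightarrow$(2) it picks a bijection $g:X/\gamma\to A\times B$ with $|A|=m$, $|B|=n$, composes with the canonical bijection $X\to X/\gamma\times X/\gamma'$ to get $h:X\to A\times B\times X/\gamma'$, sets $\theta=\ker(\pi_A\circ h)$ and $\phi=\ker(\pi_B\circ h)$, and reads $(\theta,\theta')\oplus(\phi,\phi')=(\gamma,\gamma')$ off the resulting Boolean sublattice of pairwise permuting relations; for (2)$\Rightarrow$(1) the same sublattice gives the canonical bijection $X/\gamma\simeq X/\theta\times X/\phi$, whence $|X/\gamma|=mn$. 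You instead transport the whole problem through the interval isomorphism $\Phi:[(\nabla,\Delta),(\gamma,\gamma')]\to\Fact(X/\gamma)$ and let orthomodularity force $(\bar\phi,\bar\phi')=(\bar\theta,\bar\theta')^\perp$ at the top of $\Fact(X/\gamma)$, which is exactly the right abstract reason the two quotient sizes multiply. There is no circularity, since Theorem~\ref{interval} is imported from \cite{Taewon}; your version buys conceptual clarity and generalizes immediately to $k$-fold orthogonal joins (the form actually used in Proposition~\ref{e'}), at the cost of heavier machinery, while the paper's argument is elementary and self-contained, needing only Proposition~\ref{duck}. The routine facts you use silently --- that orthogonality and $\oplus$ computed in the interval agree with those in $\Fact(X)$ for elements below $(\gamma,\gamma')$, and that the relative orthocomplement exists --- are standard for intervals of orthomodular posets, which the paper recalls before Theorem~\ref{interval}.

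Two points of care in your size-tracking lemma, which you rightly identify as the only substantive step. First, ``comparing'' the bijections $X\simeq X/\theta\times X/(\theta'\circ\gamma)\times X/\gamma'$ and $X\simeq X/\gamma\times X/\gamma'$ must not be read as cancellation of cardinalities, which fails for infinite cardinals; the comparison has to be canonical. It is: in the Boolean sublattice of Proposition~\ref{duck}(3) one has $\gamma=\theta\cap(\theta'\circ\gamma)$, so the kernel of the projection of the ternary decomposition onto its first two factors is exactly $\gamma$, giving a natural bijection $X/\gamma\simeq X/\theta\times X/(\theta'\circ\gamma)$. Even more directly, the explicit formula for the interval isomorphism recorded in Section~8, $\Sigma(\theta,\theta')=(\theta/\gamma,(\theta'\circ\gamma)/\gamma)$, makes the lemma immediate, since $\gamma\subseteq\theta$ gives a canonical bijection $(X/\gamma)/(\theta/\gamma)\simeq X/\theta$; so your appeal to ``inspection of the construction of $\Phi$'' is legitimately fillable and this is a presentational caveat, not a gap. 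Second, a small inherited slip: you quote the middle factor as $X/(\theta\circ\gamma')$ from the printed statement of Proposition~\ref{duck}(2), but since $\theta'\subseteq\gamma'$ one has $\theta\circ\gamma'\supseteq\theta\circ\theta'=\nabla$, so that quotient is a singleton; the intended middle relation is $\theta'\circ\gamma$ (equivalently $\gamma\circ\theta'$), consistent with Proposition~\ref{duck}(3) and its diagram.
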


\begin{proof}
``$\Rightarrow$'' Let $f:X\to X/\gamma\times X/\gamma'$ be the natural bijection. Let $A$ be an $m$-element set and let $B$ be an $n$-element set. Since $X/\gamma$ has $mn$ elements, there is bijection $g:X/\gamma\to A\times B$. From this we produce a bijection $h:X\to A\times B\times X/\gamma'$. Let $\pi_A$ and $\pi_B$ be the projections from $A\times B\times X/\gamma'$ to $A$ and $B$, then set $\theta=\ker(\pi_A\circ h)$ and $\phi=\ker(\pi_B\circ h)$. As described in \cite{Harding 1} and discussed in Section~2, the following is a Boolean sublattice of $\Eqq(X)$ consisting of pairwise permuting equivalence relations. Then $(\theta,\theta')\oplus(\phi,\phi')=(\gamma,\gamma')$ with $|X/\theta\,|=m$ and $|X/\phi\,|=n$. 

\begin{center}
\begin{tikzpicture}
\draw[fill] (0,0) circle [radius=0.05]; 
\draw[fill] (-2,1) circle [radius=0.05];
\draw[fill] (0,1) circle [radius=0.05];
\draw[fill] (2,1) circle [radius=0.05];
\draw[fill] (-2,2) circle [radius=0.05];
\draw[fill] (0,2) circle [radius=0.05];
\draw[fill] (2,2) circle [radius=0.05];
\draw[fill] (0,3) circle [radius=0.05];
\draw[thin] (0,0) -- (-2,1) -- (0,2) -- (2,1) -- (0,2);
\draw[thin] (-2,1) -- (-2,2) -- (0,3) -- (2,2) -- (2,1) -- (2,2) -- (0,1) -- (-2,2);
\draw[thin] (0,2) -- (0,3);
\draw[thin] (0,1) -- (0,0) -- (2,1);
\node at (.4,-.2) {$\Delta$};
\node at (.4,3.2) {$\nabla$};
\node at (-2.4,2) {$\theta$};
\node at (-2.4,1) {$\gamma$};
\node at (.5,.8) {$\phi'$};
\node at (.5,2.15) {$\phi$};
\node at (2.4,1) {$\theta'$};
\node at (2.4,2) {$\gamma'$};
\end{tikzpicture}
\end{center}

``$\Leftarrow$'' The assumption of (2) provides that the diagram above is a Boolean sublattice of $\Eqq(X)$ consisting of pairwise permuting equivalence relations. Then $X/\gamma$ is naturally isomorphic to $X/\theta\times X/\phi$, and it follows that $|X/\gamma\,|=mn$. 
\end{proof}

We now begin to specialize these results to our current setting. Here, and in the remainder of this section, $X$ will be an infinite set and $\alpha$ a given automorphism of $\Fact(X)$. 

\begin{prop}
For a prime $p$, $(\theta,\theta')$ is a $p$-atom of $\Fact(X)$ iff $\alpha(\theta,\theta')$ is a $p$-atom. 
\label{crud}
\end{prop}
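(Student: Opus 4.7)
The plan is to reduce to an interval in $\Fact(X)$ with a finite underlying quotient whose size rigidly encodes $p$, and then to extract $p$ by counting. Since $\alpha$ is an automorphism of the orthomodular poset $\Fact(X)$, it sends atoms to atoms; by Proposition~\ref{lunk'} the image $\alpha(\theta,\theta')=(\lambda,\lambda')$ of the $p$-atom $(\theta,\theta')$ is a $q$-atom for some prime $q$, and the content of the proposition is that $q=p$. The reverse implication then follows by running the same argument for $\alpha^{-1}$.

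First I would produce a second $p$-atom perpendicular to $(\theta,\theta')$. Because $\theta'$ is a $p$-relation on the infinite set $X$, the quotient $X/\theta'$ is infinite, so Proposition~\ref{b'} applied to the orthocomplement $(\theta,\theta')^\perp=(\theta',\theta)$ yields a $p$-atom $(\phi,\phi')\leq(\theta',\theta)$, which is automatically perpendicular to $(\theta,\theta')$ and necessarily distinct from it (no atom is self-orthogonal when $X$ is infinite). Setting $(\gamma,\gamma'):=(\theta,\theta')\oplus(\phi,\phi')$, Proposition~\ref{adz} gives $|X/\gamma|=p^2$, and Theorem~\ref{interval} identifies the interval $[(\nabla,\Delta),(\gamma,\gamma')]$, as an orthomodular poset, with $\Fact(Y)$ for any $p^2$-element set $Y$.

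Applying $\alpha$ and writing $\alpha(\phi,\phi')$ as an $r$-atom for some prime $r$, a second use of Proposition~\ref{adz} gives that the first component of $\alpha(\gamma,\gamma')$ has quotient of size $qr$. The restriction of $\alpha$ is an orthomodular isomorphism $[(\nabla,\Delta),(\gamma,\gamma')]\to[(\nabla,\Delta),\alpha(\gamma,\gamma')]$, so by Theorem~\ref{interval} we obtain $\Fact(Y)\cong\Fact(Y')$ as orthomodular posets, with $|Y|=p^2$ and $|Y'|=qr$. Matching cardinalities through Corollary~\ref{size} yields the numerical identity
\[\frac{(p^2)!}{(p!)^2} \;=\; \frac{(q^2)!}{(q!)^2}\ \text{if}\ q=r,\qquad \frac{(p^2)!}{(p!)^2} \;=\; 2\,\frac{(qr)!}{q!r!}\ \text{if}\ q\neq r.\]

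The case $q=r$ is immediate because $n\mapsto\frac{(n^2)!}{(n!)^2}$ is strictly increasing in positive integers, forcing $p=q$. The case $q\neq r$ is the main obstacle, and I would dispatch it by a short case analysis. When $p=2$, the cheap bound $\frac{4!}{(2!)^2}=6<120\leq 2\,\frac{(qr)!}{q!r!}$ rules it out immediately. When $p\geq 3$, Legendre's formula gives $v_p\!\left(\tfrac{(p^2)!}{(p!)^2}\right)=p-1$, and a subcase analysis on whether $p\in\{q,r\}$ or $p\notin\{q,r\}$ (using the bound $qr\leq(p-1)(p-2)<p(p-1)$ to force $v_p$ of the right side strictly below $p-1$ whenever $p>\max(q,r)$, together with direct valuation or size estimates in the remaining subcases) shows the right-hand side cannot have $v_p$ equal to $p-1$. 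Once $q=r=p$ is established, the proposition follows, and reversing the roles of $\alpha$ and $\alpha^{-1}$ gives the equivalence.
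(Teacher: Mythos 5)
Your reduction---join the given atom with an orthogonal atom, identify the interval below the join with $\Fact$ of a finite set via Theorem~\ref{interval}, and compare cardinalities through Corollary~\ref{size}---is exactly the engine of the paper's proof, and your construction of the orthogonal $p$-atom via Proposition~\ref{b'}, the identity $|X/\gamma|=p^2$ from Proposition~\ref{adz}, and the disposal of the cases $q=r$ and $p=2$ are all correct. The gap is the remaining case $q\neq r$ with $p\notin\{q,r\}$ and $\max(q,r)>p$. There your bound $qr\leq(p-1)(p-2)$ is unavailable, and the promised ``direct valuation'' does not go through: the $p$-adic valuation of the right-hand side of $\frac{(p^2)!}{(p!)^2}=2\,\frac{(qr)!}{q!\,r!}$ can equal $p-1$. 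Concretely, for $p=5$, $q=2$, $r=13$ the right side is $26!/13!$, and $v_5\left(25!/(5!)^2\right)=6-2=4$ while $v_5\left(26!/13!\right)=v_5(26!)-v_5(13!)=6-2=4$; even the auxiliary valuation at $r$ matches, since $v_{13}$ of both sides is $1$. (The equation does fail at this triple, but only $v_2$ or $v_3$ detects it: $v_2$ gives $16$ versus $13$.) Size estimates cannot rescue the subcase uniformly either: with $q=2$ the right side crosses the left as $r$ grows---for $p=5$ it is smaller at $(q,r)=(2,13)$ and larger at $(2,17)$---so no monotone comparison holds throughout the window. You are left with a genuine two-unknown-prime Diophantine problem, and nothing in the sketch resolves it.

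This is precisely the trap the paper's two-stage argument is built to avoid. It first proves invariance of $2$-atoms alone, characterizing them by the existence of an orthogonal atom whose join bounds an interval with exactly $8$ elements; the equation $|\Fact(Y)|=8$ with $|Y|$ a product of two primes forces $|Y|=4$, so no hard arithmetic arises. Then, for $p\neq 2$, it pairs the unknown atom not with a second copy of itself but with a $2$-atom, whose image is already known to be a $2$-atom; the interval count $\frac{(2p)!}{p!}+2$ then involves a single unknown prime on each side, and $\frac{(2p)!}{p!}=\frac{(2q)!}{q!}$ yields $p=q$ by an elementary term-by-term comparison. Note that your own $p=2$ computation already establishes the first stage inside your framework, since $6=2\,\frac{(qr)!}{q!\,r!}$ is impossible and $6=\frac{(q^2)!}{(q!)^2}$ forces $q=2$. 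So the minimal repair is to replace your second orthogonal $p$-atom by an orthogonal $2$-atom when $p\neq 2$, at which point your argument becomes the paper's.
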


\begin{proof}
We first prove this for $p=2$. To do so, we show that $(\theta,\theta')$ is a 2-atom iff there is an atom $(\phi,\phi)$ that is orthogonal to it with the interval $[(\nabla,\Delta),(\theta,\theta')\oplus(\phi,\phi')]$ having exactly 8 elements. Since this is a property that is preserved by any automorphism, and its inverse, the result for $p=2$ will follow. 

$p=2$ ``$\Rightarrow$'' Suppose that $(\theta,\theta')$ is a 2-atom. Then $X/\theta'$ is infinite. So there is a 2-element set $A$ and an infinite set $B$ with a bijection $f:X/\theta'\to A\times B$. Let $\phi=\ker(\pi_A\circ f)$ and $\gamma'=\ker(\pi_B\circ f)$. Then the natural map $X\to X/\theta\times X/\phi\times X/\gamma'$ is a bijection. So there again is a Boolean sublattice of $\Eqq(X)$ that consists of pairwise permuting relations, contains $\theta,\phi$ and $\gamma'$, and is labelled exactly as in the diagram in Proposition~\ref{adz}. Setting $\phi'=\theta\cap\gamma'$ gives a 2-atom $(\phi,\phi')$ that is orthogonal to $(\theta,\theta')$ with $(\theta,\theta')\oplus(\phi,\phi')=(\gamma,\gamma')$. Since $X/\theta$ and $X/\theta'$ both have 2 elements, Proposition~\ref{adz} shows that $X/\gamma$ has 4 elements. So by Corollary~\ref{size}, $\Fact(X/\gamma)$ has 8 elements. Theorem~\ref{interval} shows that $[(\nabla,\Delta),(\gamma,\gamma')]$ is isomorphic to $\Fact(X/\gamma)$, and therefore has 8 elements. 

$p=2$ ``$\Leftarrow$'' Suppose that $(\phi,\phi')$ is an atom orthogonal to $(\theta,\theta')$ with $(\theta,\theta')\oplus(\phi,\phi')=(\gamma,\gamma')$ and that the interval $[(\nabla,\Delta),(\gamma,\gamma')]$ has 8 elements. Since $(\theta,\theta')$ and $(\phi,\phi')$ are atoms, $X/\theta$ has $p$ elements and $X/\phi$ has $q$ elements for some primes $p$ and $q$. By Proposition~\ref{adz}, $X/\gamma$ has $pq$ elements. By Theorem~\ref{interval}, the interval $[(\nabla,\Delta),(\gamma,\gamma')]$ is isomorphic to $\Fact(X/\gamma)$, so $\Fact(X/\gamma)$ has 8 elements. The only way to get the expression in Corollary~\ref{size} to equal 8 is to have $p=q=2$. Thus $(\theta,\theta')$ is a 2-atom. 

Having established the result for $p=2$ we extend it to arbitrary primes $p\neq 2$. We will show that $(\theta,\theta')$ is a $p$-atom iff it is not a 2-atom and there is a 2-atom $(\phi,\phi')$ that is orthogonal to $(\theta,\theta')$ so that
\[[(\nabla,\Delta),(\theta,\theta')\oplus(\phi,\phi')]\,\,\mbox{ has }\,\,\dfrac{(2p)!}{p!}+2\,\mbox{ elements }\]
\vspace{-1ex}

\noindent Since the property of being a 2-atom is preserved by automorphisms and their inverses, the above property is preserved by automorphisms and their inverses. It then follows that $(\theta,\theta')$ is a $p$-atom iff $\alpha(\theta,\theta')$ is a $p$-atom. 

$p\neq 2$ ``$\Rightarrow$'' Suppose that $(\theta,\theta')$ is a $p$-atom. Then $X/\theta'$ is infinite, and proceeding as in the forward direction of the $p=2$ case we can find a 2-atom $(\phi,\phi')$ that is orthogonal to $(\theta,\theta')$. Setting $(\theta,\theta')\oplus(\phi,\phi')=(\gamma,\gamma')$, Proposition~\ref{adz} gives that $X/\gamma$ has $2p$ elements. Theorem~\ref{interval} gives that $[(\nabla,\Delta),(\gamma,\gamma')]$ is isomorphic to $\Fact(X/\gamma)$, and therefore by Corollary~\ref{size} has the required number of elements. 

$p\neq 2$ ``$\Leftarrow$'' Suppose that $(\phi,\phi')$ is a 2-atom orthogonal to $(\theta,\theta')$ with $(\theta,\theta')\oplus(\phi,\phi')=(\gamma,\gamma')$ and that 
\[[(\nabla,\Delta),(\gamma,\gamma')]\,\,\mbox{ has }\,\,\dfrac{(2p)!}{p!}+2\,\mbox{ elements }\]
\vspace{1ex}

\noindent Since $(\theta,\theta')$ is an atom that it not a 2-atom, $X/\theta$ has $q$ elements for some prime $q\neq 2$. Then by Proposition~\ref{adz}, $X/\gamma$ has $2q$ elements. So by Corollary~\ref{size}

\[ \Fact(X/\gamma)\,\mbox{ has }\,\dfrac{(2q)!}{q!}+2\,\mbox{ elements}\]
 \vspace{1ex}
 
 \noindent By Theorem~\ref{interval}, $[(\nabla,\Delta),(\gamma,\gamma')]$ is isomorphic to $\Fact(X/\gamma)$, so 

\[\dfrac{(2p)!}{p!}+2=\dfrac{(2q)!}{q!}+2\]
\vspace{-.5ex}

We claim that this implies that $p=q$, concluding the proof. Suppose not, and assume without loss of generality that $p>q$, and let $r$ be such that $p=q+r$. Then an obvious simplification of the above equation yields  
\[ \dfrac{(2p)!}{(2q)!} = \dfrac{p!}{q!}\]
\vspace{-2ex}

\noindent This then yields 
\[(2p)(2p-1)\cdots (2p-r+1)(2p-r)\cdots  (2p-2r+1) = p(p-1)\cdots (p-r+1)\]
\vspace{-2ex}

\noindent The left side of this equation has $2r$ terms. The first $r$ terms on the left are given by $2p-i$ for $i=0,\ldots,r-1$, and the $r$ terms on the right are $p-i$ for $i=0,\ldots,r-1$. Each of the first $r$ terms on the left side is greater than its partner on the right. This is impossible, so we must in fact have $p=q$. 
\end{proof}

We come now to the result that is the purpose of this section. 

\begin{prop}
Let $(\theta,\theta')\in\Fact(X)$ with $\alpha(\theta,\theta')=(\gamma,\gamma')$. Then 
\vspace{1ex}
\begin{enumerate}
\item $\theta\,$ is an $m$-relation iff $\gamma$ is an $m$-relation
\item $\theta'$ is an \hspace{.3mm}$n$-relation iff $\gamma'$ is an $n$-relation
\end{enumerate}
\label{e'}
\end{prop}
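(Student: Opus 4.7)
My plan is to reduce Proposition~\ref{e'} to the following numerical statement: for every positive integer $n$ and every factor pair $(\theta,\theta')$ with $\alpha(\theta,\theta')=(\gamma,\gamma')$, the equality $|X/\theta|=n$ holds if and only if $|X/\gamma|=n$. Given this, part~(1) follows immediately from Proposition~\ref{zoop}, since ``$\theta$ is an $m$-relation'' is the same as $|X/\theta'|=m$, and applying the numerical statement to the orthogonal factor pair $(\theta',\theta)$, whose image under $\alpha$ is $(\gamma',\gamma)$ because $\alpha$ preserves $\perp$, converts this to $|X/\gamma'|=m$, i.e., $\gamma$ is an $m$-relation. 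Part~(2) follows by the symmetric argument applied to $(\theta,\theta')$ itself.

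I will prove the numerical statement by strong induction on $n$. The base $n=1$ is immediate: $|X/\theta|=1$ iff $(\theta,\theta')=(\nabla,\Delta)=0$, and automorphisms preserve $0$. For $n=p$ prime I chain equivalences: $|X/\theta|=p$ iff (by Proposition~\ref{zoop}) $\theta'$ is a $p$-relation iff (by Proposition~\ref{lunk'}(1)) $(\theta,\theta')$ is a $p$-atom iff (by Proposition~\ref{crud}) $(\gamma,\gamma')$ is a $p$-atom iff $|X/\gamma|=p$. The prime case is where all the conceptual work happens, since it is where Proposition~\ref{crud} enters.

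For the inductive step I factor $n=mk$ with $1<m,k<n$. By Proposition~\ref{adz}, the condition $|X/\theta|=mk$ is equivalent to the existence of orthogonal factor pairs $(\eta_1,\eta_1')$ and $(\eta_2,\eta_2')$ with $(\eta_1,\eta_1')\oplus(\eta_2,\eta_2')=(\theta,\theta')$, $|X/\eta_1|=m$, and $|X/\eta_2|=k$. Because $\alpha$ is an orthomodular poset automorphism it preserves orthogonality and the partial operation $\oplus$ (both being defined from $\leq$ and $\perp$), so applying $\alpha$ produces orthogonal factor pairs summing to $(\gamma,\gamma')$. By the inductive hypothesis their first coordinates have quotient sizes $m$ and $k$, and a second appeal to Proposition~\ref{adz} yields $|X/\gamma|=mk$. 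The reverse implication follows by running the same argument with $\alpha^{-1}$.

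The main obstacle has already been overcome: it is Proposition~\ref{crud}, which transports $p$-atoms across $\alpha$ and rests on the counting identity of Corollary~\ref{size} together with Theorem~\ref{interval}. Beyond the prime base case, the remainder of the argument is essentially bookkeeping: Proposition~\ref{adz} decomposes the ``size $n$'' property multiplicatively along an orthogonal splitting of $(\theta,\theta')$, and the induction handles the composite case. The only routine observation needed is that orthomodular poset automorphisms automatically preserve $\oplus$, which is immediate from their preservation of $\leq$ and $\perp$.
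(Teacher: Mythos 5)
Your proposal is correct, and it rests on the same two pillars as the paper's proof --- Proposition~\ref{crud} to transport $p$-atoms across $\alpha$, and Proposition~\ref{adz} to make quotient size multiplicative along orthogonal joins --- but it organizes them differently. The paper proves part~(2) in one shot: it takes the full prime factorization $n=p_1\cdots p_k$, explicitly builds a bijection $f:X\to A_1\times\cdots\times A_k\times X/\theta'$, and invokes results from \cite{Harding regular} (Prop.~2.4.3 for the Boolean sublattice of pairwise permuting relations, Cor.~2.3.5 for the identity $\bigoplus_{i=1}^k(\phi_i,\phi_i')=(\theta,\theta')$) to decompose $(\theta,\theta')$ into $k$ pairwise orthogonal prime atoms all at once, then transports the whole family via $\alpha$ and reassembles with repeated applications of Proposition~\ref{adz}. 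You instead run a strong induction on $n$ with binary splittings $n=mk$, using only the existence direction of Proposition~\ref{adz} to produce the splitting and its other direction to reassemble; the prime case is handled by the chain $\Proj$-free chain through Propositions~\ref{zoop}, \ref{lunk'} and \ref{crud}, and the base case $n=1$ by preservation of the least element. What your route buys is self-containment: it needs no appeal to the $k$-fold decomposition machinery of \cite{Harding regular}, since the binary case of Proposition~\ref{adz} already packages the needed $\oplus$ identity, and your observation that poset automorphisms preserving $\perp$ automatically preserve $\oplus$ (a least upper bound) is the only glue required. What the paper's route buys is brevity --- no induction, and the hypothesis ``$\theta'$ is an $n$-relation'' is used directly rather than being converted to the quotient-size statement via Proposition~\ref{zoop} --- at the cost of the external citations. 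Your handling of the converse direction is also sound: since the inductive hypothesis is an ``iff,'' pushing a decomposition of $(\gamma,\gamma')$ back through $\alpha^{-1}$ works without separately re-proving the statement for $\alpha^{-1}$.
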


\pagebreak[3]

\begin{proof}
We prove the second statement. Then the first follows since $\alpha(\theta',\theta)=(\gamma',\gamma)$. So suppose that $\theta'$ is an $n$-relation and that $n=p_1\cdots p_k$ is the prime factorization of $n$ where repetitions of the same prime $p$ are allowed in this list. The natural map $X\to X/\theta\times X/\theta'$ is a bijection since $(\theta,\theta')$ is a factor pair, Also, since $\theta'$ is an $n$-relation, $|X/\theta\,|=n$. So there are sets $A_i$ for $i=1,\ldots,k$ with $|A_i|=p_i$ and a bijection 
\[f:X\to A_1\times\cdots\times A_k\times X/\theta'\]

Let $\phi_i=\ker(\pi_i\circ f)$ for $i=1,\ldots,k$ where $\pi_i$ is the projection from this product onto $A_i$. By \cite[Prop.~2.4.3]{Harding regular} the relations $\phi_1,\ldots,\phi_k,\theta'$ are the coatoms of a Boolean sublattice of $\Eqq(X)$ consisting of pairwise permuting relations. Let $\phi_i'$ be the complement of $\phi_i$ in this Boolean lattice for each $i=1,\ldots,k$. Then $(\phi_i,\phi_i')$ is a factor pair with $|X/\phi_i\,|=p_i$ for each $i=1,\ldots,k$., and these atoms are pairwise orthogonal. These atoms have an orthogonal join $\bigoplus_{i=1}^k (\phi_i,\phi_i')$ and by \cite[Cor.~2.3.5]{Harding regular} this is equal to $(\theta,\theta')$. 

Since $\alpha$ is an automorphism, $\alpha(\theta,\theta')=(\gamma,\gamma')$ is the orthogonal join $\bigoplus_{i=1}^k\alpha(\phi_i,\phi_i')$ and by Proposition~\ref{crud} $\alpha(\phi_i,\phi_i')$ is a $p_i$-atom for each $i=1,\ldots,k$. Then with repeated application of Proposition~\ref{adz} we have that $|X/\gamma\,|=p_1\cdots p_k$. Hence $|X/\gamma\,|=n$, so $\gamma'$ is an $n$-relation. This provides one direction of item~(2). The other direction is obtained by applying what we have proved with the automorphism $\alpha^{-1}$.
\end{proof}


\section{Linking $p$-atoms}

In this section we provide a technical result used in the following section where we show that if $X/\theta$ is infinite, then the images $\alpha(\theta,\theta')$ and $\alpha(\theta,\theta'')$ of two factor pairs with the same first coordinate have the same first coordinate. The proof in the following section will reduce this to the corresponding result for two 3-atoms with the same first coordinate. We will need to know that any two such 3-atoms can be linked by ones that are in some sense close together. That is the content of this section. The result is easier to illustrate and understand for 2-atoms, and we do this first. Then we use this to prove the case of interest, that of 3-atoms. 

\begin{defn}
For a prime $p$, define a relation $\sim_p$ on the set of $p$-atoms of $\Fact(X)$ by setting $(\theta,\theta')\sim_p(\theta,\theta'')$ if they have the same first coordinate and there is a factor pair $(\gamma,\gamma')$ with
\vspace{1ex}
\begin{enumerate}
\item  $|X/\gamma|=p^3$
\item $(\theta,\theta')$ and $(\theta,\theta'')$ belong to $[(\nabla,\Delta),(\gamma,\gamma')]$
\end{enumerate}
\vspace{1ex}
Let $\approx_p$ be the transitive closure of $\sim_p$. 
\end{defn}

The purpose of this section is to prove that two 3-atoms are $\approx_3$-related if and only if they have the same first coordinate. We first show the corresponding result for 2-atoms. 

\begin{rmk}{\em We will need a special setup to precisely talk about several constructions we make, and face a challenge with notation. We will consider maps $\xi:\{A_i,\ldots,H_i:i\in I\}\to X$, and need to label diagrams indicating elements of $X$. Having the notation $\xi(A_i)$ occur many times on these diagrams makes them difficult to read. We use the convention of letting $$\xi(A_i)=a_i,\,\ldots,\,\xi(H_i)=h_i$$ \vspace{-1ex}

\noindent If another map $\xi':\{A_s,\ldots,H_s:s\in S\}\to X$ is required, we use $\xi'(A_s)=a_s',\ldots,\xi'(H_s)=h_s'$. }
\end{rmk}

\begin{defn}
A special 8-enumeration of $X$ consists of a set $I$ and a bijection 
\vspace{-1ex}

\[\xi:\{A_i,\ldots,H_i:i\in I\}\to X\]
\vspace{-2ex}

\noindent  We say $\xi$ is compatible with the factor pair $(\theta,\theta')$ being under $(\gamma,\gamma')$ if \vspace{-1ex}
\vspace{-1ex}

\begin{align*}
\theta\,\mbox{ has blocks }\, & \{a_i, c_i, e_i, g_i:i\in I\} \mbox{ and }\{ b_i, d_i, f_i, h_i:i\in I\}\\
\theta'\mbox{ has blocks }\,&\{ a_i, b_i\}, \{ c_i, d_i\}, \{ e_i, f_i\}, \mbox{ and }\{ g_i, h_i\}\,\mbox{ for each }i\in I\\
\gamma\,\mbox{ has blocks }\,&\{ a_i:i\in I\},\ldots,\{ h_i:i\in I\}\\
\gamma'\,\mbox{ has blocks }&\{ a_i,\ldots, h_i\}\,\mbox{ for each }\,i\in I
\end{align*}
\vspace{-2ex}

\noindent The situation is illustrated in Figure~\ref{2-atom setup}.
\label{special 8}
\end{defn}
\vspace{-2ex}

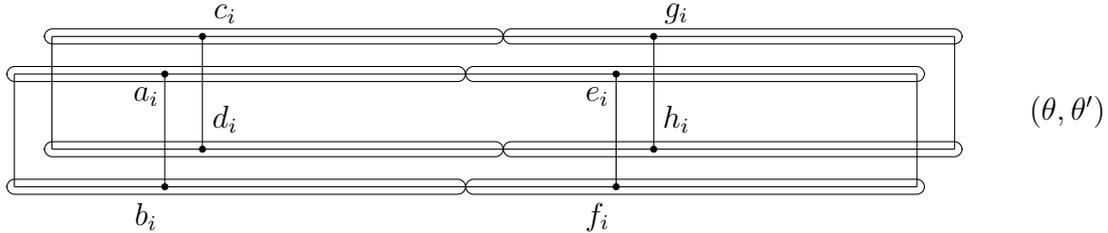
\begin{figure}[h]
\begin{center}
\begin{tikzpicture}
\foreach \x in {0,2,8,12} {  \draw (\x,0) -- (\x,1.5); \draw (\x+0.5,0.5) -- (\x+0.5,2); }
\foreach \x in {2,8} { \draw[fill] (\x, 0) circle [radius=0.04]; \draw[fill] (\x+0.5,0.5) circle [radius=0.04]; \draw[fill] (\x,1.5) circle [radius=0.04]; \draw[fill] (\x+0.5,2) circle [radius=0.04];}
\draw (0,0) -- (12,0); \draw (0,1.5) -- (12,1.5); \draw (0.5,0.5) -- (12.5,0.5); \draw (0.5,2) -- (12.5,2);
\draw[rounded corners=3pt] (-.1,-.1) rectangle ++(6.1,0.2); \draw[rounded corners=3pt] (6.0,-.1) rectangle ++(6.1,0.2);
\draw[rounded corners=3pt] (-.1,1.4) rectangle ++(6.1,0.2); \draw[rounded corners=3pt] (6.0,1.4) rectangle ++(6.1,0.2);
\draw[rounded corners=3pt] (.4,.4) rectangle ++(6.1,0.2); \draw[rounded corners=3pt] (6.5,.4) rectangle ++(6.1,0.2);
\draw[rounded corners=3pt] (.4,1.9) rectangle ++(6.1,0.2); \draw[rounded corners=3pt] (6.5,1.9) rectangle ++(6.1,0.2);
\node at (1.75,1.2) {$a_i$}; \node at (1.75,-.4) {$b_i$}; \node at (2.8,2.3) {$c_i$}; \node at (2.8,.9) {$d_i$};
\node at (7.75,1.2) {$e_i$}; \node at (7.75,-.4) {$f_i$}; \node at (8.8,2.3) {$g_i$}; \node at (8.8,.9) {$h_i$};
\node at (14,1) {$(\theta,\theta')$};
\end{tikzpicture}
\end{center}
\caption{A special 8-enumeration $\xi$ compatible with $(\theta,\theta')$ being under $(\gamma,\gamma')$}
\label{2-atom setup}
\end{figure}

In Figure~\ref{2-atom setup} the elements $a_i,\ldots,h_i$ $(i\in I)$ of $X$ are  placed on a cube with infinitely many points along its lengthwise edges, and two points on each of its vertical and depthwise edges. The $\theta$ classes are the bottom and top and top of the cube, and the $\theta'$ classes are vertical lines connecting the bottom to the top. The blocks of $\gamma$ are the 8 ovals, and the infinitely many blocks of $\gamma'$ have one element from each of these ovals, they are 8-element cubes. 

\begin{prop}
Let $(\theta,\theta')$ and $(\gamma,\gamma')$ be factor pairs. Then there is a special 8-relation $\xi$ that is compatible with $(\theta,\theta')$ being under $(\gamma,\gamma')$ iff the following conditions hold. 
\vspace{1ex}
\begin{enumerate}
\item $\theta$ has two blocks and $\theta'$ is a 2-relation
\item $\gamma$ has eight blocks and $\gamma'$ is an 8-relation
\item $(\theta,\theta')\leq(\gamma,\gamma')$
\end{enumerate}
\label{skunk}
\end{prop}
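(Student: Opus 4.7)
The forward direction is a direct check. Given a special 8-enumeration $\xi$ compatible with $(\theta,\theta')$ being under $(\gamma,\gamma')$, one reads off from the prescribed block descriptions that $\theta$ has two blocks and $\theta'$-blocks have two elements, that $\gamma$ has eight blocks and $\gamma'$-blocks have eight elements, and that $\gamma\subseteq\theta$ and $\theta'\subseteq\gamma'$. Permutability of $\gamma$ and $\theta'$ then follows directly from the explicit block structure: if $x\,\gamma\,y\,\theta'\,z$, then $x,y$ share a letter and $y,z$ share an index, so choosing $w$ with $x$'s letter and $z$'s index yields $x\,\theta'\,w\,\gamma\,z$. This delivers $(\theta,\theta')\leq(\gamma,\gamma')$.

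For the reverse direction, set $\mu=\theta\cap\gamma'$ and $\eta=\gamma\circ\theta'$. The plan is to invoke Proposition~\ref{duck}, which provides that $\{\Delta,\theta,\theta',\gamma,\gamma',\mu,\eta,\nabla\}$ is an 8-element Boolean sublattice of $\Eqq(X)$ consisting of pairwise permuting relations, and that the canonical map $\kappa:X\to X/\theta\times X/\eta\times X/\gamma'$ is a bijection. First I would compute $|X/\eta|=4$: since $\gamma$ is the meet of $\theta$ and $\eta$ in the Boolean sublattice, the induced map $X/\gamma\to X/\theta\times X/\eta$ is a bijection, so $|X/\theta|\cdot|X/\eta|=|X/\gamma|=8$, forcing $|X/\eta|=4$. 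The same Boolean sublattice yields the identity $\theta'=\eta\cap\gamma'$, which will identify $\theta'$ with $\ker\pi_{23}$ in the ternary decomposition.

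Letting $I$ index $X/\gamma'$ and fixing bijections $X/\theta\cong\{0,1\}$ and $X/\eta\cong\{0,1,2,3\}$, I would transport $\kappa$ to a bijection $X\cong\{0,1\}\times\{0,1,2,3\}\times I$, and then define $\xi(L_i)=\kappa^{-1}(a_L,b_L,i)$, where the letters are encoded as
\[A\mapsto(0,0),\ B\mapsto(1,0),\ C\mapsto(0,1),\ D\mapsto(1,1),\ E\mapsto(0,2),\ F\mapsto(1,2),\ G\mapsto(0,3),\ H\mapsto(1,3).\]
Compatibility then reduces to matching the four relations with kernels of projections of the ternary decomposition: $\gamma=\ker\pi_{12}$ yields $\gamma$-blocks indexed by letter, $\gamma'=\ker\pi_3$ yields $\gamma'$-blocks indexed by $i\in I$, $\theta=\ker\pi_1$ yields the two $\theta$-blocks $\{a_i,c_i,e_i,g_i:i\in I\}$ and $\{b_i,d_i,f_i,h_i:i\in I\}$, and $\theta'=\ker\pi_{23}$ yields the $\theta'$-blocks $\{a_i,b_i\},\{c_i,d_i\},\{e_i,f_i\},\{g_i,h_i\}$ for each $i\in I$.

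The main delicate point is the size calculation $|X/\eta|=4$ together with the identification of $\theta'$ as $\eta\cap\gamma'=\ker\pi_{23}$; once the Boolean cube from Proposition~\ref{duck} is in place and its three coatoms are matched with the three projections of the ternary decomposition, verification of the four block descriptions is mechanical.
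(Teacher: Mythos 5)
Your proposal is correct and takes essentially the same route as the paper: the paper's forward direction is the same mechanical block check, and its reverse direction likewise extracts from Proposition~\ref{duck} a ternary decomposition $f:X\to X_1\times X_2\times X_3$ with $\theta=\ker(\pi_1\circ f)$, $\theta'=\ker(\pi_{23}\circ f)$, $\gamma=\ker(\pi_{12}\circ f)$, $\gamma'=\ker(\pi_3\circ f)$, computes $|X_2|=4$ from $X/\gamma\simeq X_1\times X_2$ (your Boolean-cube derivation of $|X/\eta|=4$ and $\theta'=\eta\cap\gamma'$ just makes this bookkeeping explicit), and defines $\xi$ by the same letter-to-coordinate table. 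One trivial slip to fix: in your forward permutability check the witness should carry $z$'s letter and $x$'s index (the element with $x$'s letter and $z$'s index is just $y$ itself, which is not $\theta'$-related to $x$), and with that correction the verification goes through exactly as intended.
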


\begin{proof}
``$\Rightarrow$'' If $\xi$ is compatible with $(\theta,\theta')$ being under $(\gamma,\gamma')$, then the conditions describing the blocks of $\theta,\theta',\gamma,\gamma'$ imply that $\theta$ has two blocks, the blocks of $\theta'$ each have two elements, $\gamma$ has eight blocks, and the blocks of $\gamma'$ each have eight elements. It remains to show the third condition, that $(\theta,\theta')\leq(\gamma,\gamma')$. Note that $\gamma\subseteq\theta$ since the $\gamma$ blocks are the ovals that are contained in bottom and top which are the blocks of $\theta$. Also $\theta'\subseteq\gamma'$ since the blocks of $\theta'$ are the vertical lines that are part of the 8-element cubes that are the $\gamma'$ blocks. That $\gamma\circ\theta'=\theta'\circ\gamma$ follows since each composite gives an equivalence relation with blocks that are two ovals one above the other. 

``$\Leftarrow$'' As described in Proposition~\ref{duck}, having $(\theta,\theta')\leq(\gamma,\gamma')$ implies that there is a ternary direct product decomposition $f:X\to X_1\times X_2\times X_3$ where 

\[ \theta=\ker(\pi_1\circ f),\, \theta'=\ker(\pi_{2,3}\circ f),\, \gamma=\ker(\pi_{1,2}\circ f)\,\mbox{ and }\gamma'=\ker(\pi_3\circ f)\]
\vspace{-2ex}

\noindent Here $\pi_i:X_1\times X_2\times X_3\to X_i$, and $\pi_{i,j}:X_1\times X_2\times X_3\to X_i\times X_j$ are the natural projections. Since $X_1$ is isomorphic to $X/\theta$ it has two elements. Since $X/\gamma$ has eight elements and is isomorphic to $X_1\times X_2$ then the set $X_2$ has four elements. Therefore $X_3$ is infinite and with the same cardinality as $X$. Let $X_1=\{1,2\}$, $X_2=\{1,2,3,4\}$ and $X_3=I$ for some infinite cardinal $I$. Define a bijection $g:\{A_i,\ldots,H_i:i\in I\}\to X_1\times X_2\times X_3$ as follows.

\[ \begin{matrix}
y& & A_i & B_i & C_i & D_i & E_i & F_i & G_i & H_i \\[2ex]
g(y) & & (1,1,i) & (2,1,i) & (1,2,i) & (2,2,i) & (1,3,i) & (2,3,i) & (1,4,i) & (2,4,i) 
\end{matrix} \]
\vspace{1ex}

Let $\xi:\{A_i,\ldots,H_i:i\in I\}\to X$ be $f^{-1}\circ g$. To show that $\xi$ is compatible with $(\theta,\theta')$ and $(\gamma,\gamma')$ we must show that the blocks of these relations are as described in Definition~\ref{special 8}. We do this for $\theta$, the others are similar. Note that 

\[\{g(a_i),g(c_i),g(e_i),g(h_i):i\in I\} = \{(1,1,i),(1,2,i),(1,3,i),(1,4,i):i\in I\}\]
\vspace{-1ex}

\noindent This is the set of all elements of $X_1,\times X_2\times X_3$ whose first coordinate is 1. Since $\theta =\ker(\pi_1\circ f)$, the image of this set under $f^{-1}$ is one of the blocks of $\theta$. Thus $\{a_i,c_i,e_i,g_i:i\in I\}$ is one block of $\theta$, and therefore $\{b_i,d_i,f_i,h_i:i\in I\}$ is the other. 
\end{proof}

There are many special 8-enumerations that are compatible with $(\theta,\theta')$ being under $(\gamma,\gamma')$, and we will use this to advantage in establishing properties of $\approx_2$. First we make use of a special 8-enumeration to describe some other 2-atoms $(\theta,\theta'')$ that have the same first component as $(\theta,\theta')$ and also lie beneath $(\gamma,\gamma')$.  These will satisfy $(\theta,\theta')\sim_2(\theta,\theta'')$. We do not need to find all such $(\theta,\theta'')$, just enough for use in our later proofs. 

\begin{prop}
Let $\xi$ be a special 8-enumeration that is compatible with $(\theta,\theta^1)$ being below $(\gamma,\gamma')$. Three other 2-relations $\theta^s$ for $s=2,3,4$ are described below using the enumeration $\xi$. For each $s,t=1,\ldots,4$ we have $(\theta,\theta^s)\sim_2(\theta,\theta^t)$. 
\label{bat}
\end{prop}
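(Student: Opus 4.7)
The plan is to exploit the cube structure shown in Figure~\ref{2-atom setup}, where each $\gamma'$-block $\{a_i,b_i,\ldots,h_i\}$ is a copy of $\{0,1\}^3$ and $\theta$ cuts each cube into a ``top'' face $\{a_i,c_i,e_i,g_i\}$ and a ``bottom'' face $\{b_i,d_i,f_i,h_i\}$. The given $\theta^1$ matches the two faces by flipping only the top/bottom coordinate; the three other ``translation'' matchings are obtained by additionally flipping one or both of the remaining cube coordinates. Explicitly, I would set
\begin{align*}
\theta^2 &:\; \{a_i,d_i\},\; \{b_i,c_i\},\; \{e_i,h_i\},\; \{f_i,g_i\} \quad (i\in I),\\
\theta^3 &:\; \{a_i,f_i\},\; \{b_i,e_i\},\; \{c_i,h_i\},\; \{d_i,g_i\} \quad (i\in I),\\
\theta^4 &:\; \{a_i,h_i\},\; \{b_i,g_i\},\; \{c_i,f_i\},\; \{d_i,e_i\} \quad (i\in I).
\end{align*}

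The first step is to check that each $(\theta,\theta^s)$ is a factor pair. Since each $\theta^s$-block contains exactly one element from each $\theta$-block, $\theta\cap\theta^s=\Delta$. For any $x,y\in X$, the unique element $z$ sitting in both the $\theta^s$-block of $y$ and the $\theta$-block of $x$ gives $(x,y)\in\theta\circ\theta^s$, so $\theta\circ\theta^s=\nabla$. Because $\theta^s$ is a $2$-relation, Proposition~\ref{lunk'} then guarantees that $(\theta,\theta^s)$ is a $2$-atom.

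The second step is to verify $(\theta,\theta^s)\leq(\gamma,\gamma')$. The containments $\gamma\subseteq\theta$ and $\theta^s\subseteq\gamma'$ are immediate from the block descriptions. For the permutation condition, given $x\,\gamma\,y\,\theta^s\,z$ with $y,z$ in the $\gamma'$-block indexed by some $j$ and $x$ in the $\gamma'$-block indexed by some $i$, let $z'$ be the element indexed by $i$ whose letter label matches that of $z$. Because each $\theta^s$-pairing depends only on letter labels and is independent of the index, $x\,\theta^s\,z'$, and clearly $z'\,\gamma\,z$, giving $(x,z)\in\theta^s\circ\gamma$. The reverse inclusion is symmetric.

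Finally, since the special $8$-enumeration setup forces $|X/\gamma|=8=2^3$ and all four $2$-atoms $(\theta,\theta^s)$ lie in $[(\nabla,\Delta),(\gamma,\gamma')]$, the factor pair $(\gamma,\gamma')$ itself witnesses $(\theta,\theta^s)\sim_2(\theta,\theta^t)$ for every pair $s,t\in\{1,2,3,4\}$. The main obstacle is the permutation condition in the second step; it succeeds precisely because the four chosen matchings are uniform in the index $i$, so that $\gamma$-moves (which change $i$ while preserving letter) commute with $\theta^s$-moves (which change letter in a way that depends only on letter).
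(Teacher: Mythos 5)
Your proof is correct and proceeds exactly as the paper's does: check that each block of $\theta^s$ meets the top and the bottom block of $\theta$ in one element each (so $(\theta,\theta^s)$ is a factor pair and a 2-atom), verify $(\theta,\theta^s)\leq(\gamma,\gamma')$ via $\gamma\subseteq\theta$, $\theta^s\subseteq\gamma'$, and the commutation $\gamma\circ\theta^s=\theta^s\circ\gamma$ (the paper notes, as you do, that both composites have blocks that are unions of two $\gamma$-ovals), and then invoke the definition of $\sim_2$ with $|X/\gamma|=8=2^3$. The one deviation is that your explicit $\theta^3$ and $\theta^4$ (flipping the width coordinate, respectively all three cube coordinates) are not the relations in the paper's diagrams, which instead cross only within the $\{a_i,b_i,c_i,d_i\}$-half or only within the $\{e_i,f_i,g_i,h_i\}$-half of each cube; this is immaterial, since your argument uses only that each $\theta^s$ is an index-uniform pairing of top letters with bottom letters inside each $\gamma'$-cube, a property the paper's relations also have, so your verification applies to them verbatim.
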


\begin{center}
\begin{tikzpicture}
\foreach \x in {0,2,8,12} {  \draw (\x,0) -- (\x+.5,2); \draw (\x+0.5,0.5) -- (\x,1.5); }
\foreach \x in {2,8} { \draw[fill] (\x, 0) circle [radius=0.04]; \draw[fill] (\x+0.5,0.5) circle [radius=0.04]; \draw[fill] (\x,1.5) circle [radius=0.04]; \draw[fill] (\x+0.5,2) circle [radius=0.04];}
\draw (0,0) -- (12,0); \draw (0,1.5) -- (12,1.5); \draw (0.5,0.5) -- (12.5,0.5); \draw (0.5,2) -- (12.5,2);
\draw[rounded corners=3pt] (-.1,-.1) rectangle ++(6.1,0.2); \draw[rounded corners=3pt] (6.0,-.1) rectangle ++(6.1,0.2);
\draw[rounded corners=3pt] (-.1,1.4) rectangle ++(6.1,0.2); \draw[rounded corners=3pt] (6.0,1.4) rectangle ++(6.1,0.2);
\draw[rounded corners=3pt] (.4,.4) rectangle ++(6.1,0.2); \draw[rounded corners=3pt] (6.5,.4) rectangle ++(6.1,0.2);
\draw[rounded corners=3pt] (.4,1.9) rectangle ++(6.1,0.2); \draw[rounded corners=3pt] (6.5,1.9) rectangle ++(6.1,0.2);
\node at (1.75,1.2) {$a_i$}; \node at (1.75,-.4) {$b_i$}; \node at (2.8,2.3) {$c_i$}; \node at (2.8,.9) {$d_i$};
\node at (7.75,1.2) {$e_i$}; \node at (7.75,-.4) {$f_i$}; \node at (8.8,2.3) {$g_i$}; \node at (8.8,.9) {$h_i$};
\node at (14,1) {$(\theta,\theta^2)$};
\end{tikzpicture}
\end{center}

\begin{center}
\begin{tikzpicture}
\foreach \x in {0,2} {  \draw (\x,0) -- (\x,1.5); \draw (\x+0.5,0.5) -- (\x+0.5,2); }
\foreach \x in {8,12} {  \draw (\x,0) -- (\x+.5,2); \draw (\x+0.5,0.5) -- (\x,1.5); }
\foreach \x in {2,8} { \draw[fill] (\x, 0) circle [radius=0.04]; \draw[fill] (\x+0.5,0.5) circle [radius=0.04]; \draw[fill] (\x,1.5) circle [radius=0.04]; \draw[fill] (\x+0.5,2) circle [radius=0.04];}
\draw (0,0) -- (12,0); \draw (0,1.5) -- (12,1.5); \draw (0.5,0.5) -- (12.5,0.5); \draw (0.5,2) -- (12.5,2);
\draw[rounded corners=3pt] (-.1,-.1) rectangle ++(6.1,0.2); \draw[rounded corners=3pt] (6.0,-.1) rectangle ++(6.1,0.2);
\draw[rounded corners=3pt] (-.1,1.4) rectangle ++(6.1,0.2); \draw[rounded corners=3pt] (6.0,1.4) rectangle ++(6.1,0.2);
\draw[rounded corners=3pt] (.4,.4) rectangle ++(6.1,0.2); \draw[rounded corners=3pt] (6.5,.4) rectangle ++(6.1,0.2);
\draw[rounded corners=3pt] (.4,1.9) rectangle ++(6.1,0.2); \draw[rounded corners=3pt] (6.5,1.9) rectangle ++(6.1,0.2);
\node at (1.75,1.2) {$a_i$}; \node at (1.75,-.4) {$b_i$}; \node at (2.8,2.3) {$c_i$}; \node at (2.8,.9) {$d_i$};
\node at (7.75,1.2) {$e_i$}; \node at (7.75,-.4) {$f_i$}; \node at (8.8,2.3) {$g_i$}; \node at (8.8,.9) {$h_i$};
\node at (14,1) {$(\theta,\theta^3)$};
\end{tikzpicture}
\end{center}

\begin{center}
\begin{tikzpicture}
\foreach \x in {8,12} {  \draw (\x,0) -- (\x,1.5); \draw (\x+0.5,0.5) -- (\x+0.5,2); }
\foreach \x in {0,2} {  \draw (\x,0) -- (\x+.5,2); \draw (\x+0.5,0.5) -- (\x,1.5); }
\foreach \x in {2,8} { \draw[fill] (\x, 0) circle [radius=0.04]; \draw[fill] (\x+0.5,0.5) circle [radius=0.04]; \draw[fill] (\x,1.5) circle [radius=0.04]; \draw[fill] (\x+0.5,2) circle [radius=0.04];}
\draw (0,0) -- (12,0); \draw (0,1.5) -- (12,1.5); \draw (0.5,0.5) -- (12.5,0.5); \draw (0.5,2) -- (12.5,2);
\draw[rounded corners=3pt] (-.1,-.1) rectangle ++(6.1,0.2); \draw[rounded corners=3pt] (6.0,-.1) rectangle ++(6.1,0.2);
\draw[rounded corners=3pt] (-.1,1.4) rectangle ++(6.1,0.2); \draw[rounded corners=3pt] (6.0,1.4) rectangle ++(6.1,0.2);
\draw[rounded corners=3pt] (.4,.4) rectangle ++(6.1,0.2); \draw[rounded corners=3pt] (6.5,.4) rectangle ++(6.1,0.2);
\draw[rounded corners=3pt] (.4,1.9) rectangle ++(6.1,0.2); \draw[rounded corners=3pt] (6.5,1.9) rectangle ++(6.1,0.2);
\node at (1.75,1.2) {$a_i$}; \node at (1.75,-.4) {$b_i$}; \node at (2.8,2.3) {$c_i$}; \node at (2.8,.9) {$d_i$};
\node at (7.75,1.2) {$e_i$}; \node at (7.75,-.4) {$f_i$}; \node at (8.8,2.3) {$g_i$}; \node at (8.8,.9) {$h_i$};
\node at (14,1) {$(\theta,\theta^4)$};
\end{tikzpicture}
\end{center}

\begin{proof}
Each equivalence class of each $\theta^s$ contains exactly one element of the bottom and one element of the top. So each $(\theta,\theta^s)$ is a factor pair that is a 2-atom. That $\gamma\subseteq\theta$ is already given. That $\theta^s\subseteq\gamma'$ follows since each block of $\theta^s$ is contained in one 8-element cube $\{a_i,\ldots,h_i\}$ that is a block of $\gamma'$. To see that $\gamma\circ\theta^s=\theta^s\circ\gamma$ we note that both are equivalence relations whose blocks are the union of two of the blocks of $\gamma$. Thus each $(\theta,\theta^s)\leq(\gamma,\gamma')$, and by the definition of $\sim_2$ we have $(\theta,\theta^s)\sim_2(\theta,\theta^t)$ for each $s,t=1,\ldots,4$.
\end{proof}

We now begin to use this result to investigate $\approx_2$.  

\begin{defn}
Let $\xi$ be a special 8-enumeration compatible with $(\theta,\theta')$ being below $(\gamma,\gamma')$ and let $I=J\cup K$ be a partition of $I$ with $|K|=|I|$. Define the $J$-twist of $\theta'$ with respect to $\xi$ to be the relation $\theta'_{J, \xi}$ whose blocks are as follows:
\begin{align*}
\{ a_k, b_k\}, \{ c_k, d_k\} & \mbox{ where }k\in K\\
\{ a_j, d_j\}, \{ c_j, b_j\}\, & \mbox{ where }j\in J\\
\{ e_i, f_i\},\, \{ g_i, h_i\}\, & \mbox{ where }i\in I
\end{align*}

\noindent Thus the $J$-twist agrees with $\theta'$ except the blocks $\{ a_j, b_j\},\{ c_j, d_j\}$ $(j\in J)$ of $\theta'$ are replaced with the blocks $\{ a_j, d_j\},\{ c_j, b_j\}$ $(j\in J)$ of its $J$-twist. 
\label{nail}
\end{defn}

We give an illustration of the $J$-twist of $\theta'$ with respect to $\xi$ below. It is more clear for our purposes to do so with a different style of picture, one that eliminates the role of $(\gamma,\gamma')$. The first diagram below shows $(\theta,\theta')$. We note that $\theta$ has two equivalence classes that are labelled $T$ and $B$ for top and bottom. The relation $\theta'$ has its equivalence classes described by Definition~\ref{special 8}. The pair $(\theta,\theta''_{J, \xi})$ is also seen to be a factor pair and it is shown in the second diagram. 
\vspace{1ex}

\begin{center}
\begin{tikzpicture}
\draw (0,0) -- (12,0); \draw (0,1.5) -- (12,1.5);
\foreach \x in {0,1,2,3,4,5,6,7,8,9,10,11,12} {  \draw (\x,0) -- (\x,1.5); }
\foreach \x in {0,1,2,3,4,5,6,7,8,9,10,11,12} { \draw[fill] (\x, 0) circle [radius=0.04]; \draw[fill] (\x ,1.5) circle [radius=0.04]; }
\node at (14,.75) {$(\theta,\theta')$};
\node at (-.7,1.5) {$T$}; \node at (-.7,0) {$B$};
\node at (2,1.9) {$a_j$}; \node at (2,-.5) {$b_j$}; \node at (3,1.9) {$c_j$}; \node at (3,-.5) {$d_j$};
\node at (7,1.9) {$e_j$}; \node at (7,-.5) {$f_j$}; \node at (8,1.9) {$g_j$}; \node at (8,-.5) {$h_j$};
\node at (9,1.9) {$a_k$}; \node at (9,-.5) {$b_k$}; \node at (10,1.9) {$c_k$}; \node at (10,-.5) {$d_k$};
\node at (11,1.9) {$e_k$}; \node at (11,-.5) {$f_k$}; \node at (12,1.9) {$g_k$}; \node at (12,-.5) {$h_k$};
\end{tikzpicture}
\end{center}

\begin{center}
\begin{tikzpicture}
\draw (0,0) -- (12,0); \draw (0,1.5) -- (12,1.5);
\foreach \x in {6,7,8,9,10,11,12} {  \draw (\x,0) -- (\x,1.5); }
\foreach \x in {0,1,2} {  \draw (2*\x,0) -- (2*\x+1,1.5); \draw (2*\x,1.5) -- (2*\x+1,0); }
\foreach \x in {0,1,2,3,4,5,6,7,8,9,10,11,12} { \draw[fill] (\x, 0) circle [radius=0.04]; \draw[fill] (\x ,1.5) circle [radius=0.04]; }
\node at (14,.75) {$(\theta,\theta'_{J, \xi})$};
\node at (-.7,1.5) {$T$}; \node at (-.7,0) {$B$};
\node at (2,1.9) {$a_j$}; \node at (2,-.5) {$b_j$}; \node at (3,1.9) {$c_j$}; \node at (3,-.5) {$d_j$};
\node at (7,1.9) {$e_j$}; \node at (7,-.5) {$f_j$}; \node at (8,1.9) {$g_j$}; \node at (8,-.5) {$h_j$};
\node at (9,1.9) {$a_k$}; \node at (9,-.5) {$b_k$}; \node at (10,1.9) {$c_k$}; \node at (10,-.5) {$d_k$};
\node at (11,1.9) {$e_k$}; \node at (11,-.5) {$f_k$}; \node at (12,1.9) {$g_k$}; \node at (12,-.5) {$h_k$};
\draw [decorate,decoration={brace,mirror,amplitude=10pt},xshift=0pt,yshift=-5pt](0,-.5) -- (5,-.5)node [black,midway,yshift=-19pt] {\footnotesize $J$};
\end{tikzpicture}
\end{center}

\begin{prop}
Let $\xi$ be a special 8-enumeration that is compatible with $(\theta,\theta')$ being below $(\gamma,\gamma')$ and let $I=J\cup K$ be a partition of $I$ with $|K|=|I|$. Then for the $J$-twist $\theta'_{J, \xi}$ we have 

\[(\theta,\theta')\approx_2(\theta,\theta'_{J, \xi})\]
\label{twist}
\end{prop}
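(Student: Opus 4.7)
The plan is to apply Proposition~\ref{bat} twice: first with the given $\xi$ to perform the ``full twist'' that flips every column at the left half of the cube, and then with a second special $8$-enumeration $\xi'$ whose left half singles out only the columns indexed by $K$, so that a second $\theta^4$-twist untwists exactly those columns while leaving the $J$-twist intact.

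For the first application, observe that $\theta^4$ built from $\xi$ has blocks $\{a_i,d_i\}, \{b_i,c_i\}, \{e_i,f_i\}, \{g_i,h_i\}$ for each $i\in I$, which is precisely $\theta'_{I,\xi}$ by Definition~\ref{nail}; thus Proposition~\ref{bat} immediately gives $(\theta,\theta')\sim_2(\theta,\theta'_{I,\xi})$. To build $\xi'\colon\{A'_k,\ldots,H'_k:k\in K\}\to X$, I define its left half by $\xi'(A'_k)=a_k$, $\xi'(B'_k)=d_k$, $\xi'(C'_k)=c_k$, $\xi'(D'_k)=b_k$ for $k\in K$, so that the pairs $\{\xi'(A'_k),\xi'(B'_k)\}=\{a_k,d_k\}$ and $\{\xi'(C'_k),\xi'(D'_k)\}=\{c_k,b_k\}$ match the $\theta'_{I,\xi}$-pairs on the $K$-columns and respect the Top/Bottom split of $\theta$. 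The remaining elements of $X$ are $\{a_j,b_j,c_j,d_j:j\in J\}\cup\{e_i,f_i,g_i,h_i:i\in I\}$, already partitioned by $\theta'_{I,\xi}$ into $2|J|+2|I|$ Top-Bottom pairs; since $|I|$ is infinite and $|K|=|I|$, this cardinal equals $2|K|$, so I can fix a bijection between $K$ and ordered couples of these pairs and define the right half of $\xi'$ accordingly, assigning Top elements to $E'_k,G'_k$ and Bottom to $F'_k,H'_k$. The resulting $\xi'$ meets all the conditions of Definition~\ref{special 8}, and by Proposition~\ref{skunk} the induced $(\gamma',\gamma'')$ satisfies $|X/\gamma'|=8$ and $(\theta,\theta'_{I,\xi})\leq(\gamma',\gamma'')$.

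Applying Proposition~\ref{bat} to $\xi'$ produces $(\theta,\theta'_{I,\xi})\sim_2(\theta,\theta^{4,\xi'})$, where $\theta^{4,\xi'}$ has left-half pairs $\{\xi'(A'_k),\xi'(D'_k)\}=\{a_k,b_k\}$ and $\{\xi'(B'_k),\xi'(C'_k)\}=\{c_k,d_k\}$ for $k\in K$ while keeping the right-half pairs unchanged. Translating back through $\xi'$, the full list of blocks of $\theta^{4,\xi'}$ is $\{a_k,b_k\},\{c_k,d_k\}$ for $k\in K$ together with $\{a_j,d_j\},\{c_j,b_j\}$ for $j\in J$ and $\{e_i,f_i\},\{g_i,h_i\}$ for $i\in I$, which by Definition~\ref{nail} is precisely $\theta'_{J,\xi}$. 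Transitivity of $\approx_2$ then yields $(\theta,\theta')\approx_2(\theta,\theta'_{J,\xi})$. The main obstacle is the setup of $\xi'$: its bijective construction relies on the infinite cardinal identity $|J|+|I|=|I|$, and verifying compatibility reduces to routine checks against Definitions~\ref{special 8} and~\ref{nail}. The conceptual heart of the argument is that the $\theta^4$-twist is self-inverse, so re-applying it inside a re-indexed enumeration whose left half labels exactly the $K$-columns reverts those columns while leaving the $J$-twist, now encoded inside the right half of $\xi'$, untouched.
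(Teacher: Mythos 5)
Your argument is correct and follows essentially the same route as the paper: both pass through the fully twisted relation via one application of Proposition~\ref{bat} under $(\gamma,\gamma')$, then construct a $K$-indexed special 8-enumeration $\xi'$ (with a new $(\gamma_1,\gamma_1')$) that relocates the $J$-columns and the $e,f,g,h$ columns into the right half so that a second application of Proposition~\ref{bat} flips only the $K$-columns, giving exactly $\theta'_{J,\xi}$ and hence $(\theta,\theta')\approx_2(\theta,\theta'_{J,\xi})$ by transitivity. The only divergence is bookkeeping: you set $b_k'=d_k$, $d_k'=b_k$ so that the intermediate relation is the one compatible with $\xi'$ and the target appears as its left-half twist, whereas the paper keeps the original labels on the $K$-columns and views the intermediate relation as the twisted pattern relative to $\xi'$ --- a mirror image of the same step.
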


\begin{proof}
We begin by modifying our picture of $(\theta,\theta')$ and $(\gamma,\gamma')$ of Figure~\ref{2-atom setup} by showing the elements whose indices lie in $J$ at the left of the ovals representing the $\gamma$ classes, and with the elements whose indices lie in $K$ at the right of these ovals. 
\vspace{2ex}

\begin{center}
\begin{tikzpicture}
\foreach \x in {0,1,4,7,10,12} {  \draw (\x,0) -- (\x,1.5); \draw (\x+0.5,0.5) -- (\x+0.5,2); }
\foreach \x in {1,4,7,10} { \draw[fill] (\x, 0) circle [radius=0.04]; \draw[fill] (\x+0.5,0.5) circle [radius=0.04]; \draw[fill] (\x,1.5) circle [radius=0.04]; \draw[fill] (\x+0.5,2) circle [radius=0.04];}
\draw (0,0) -- (12,0); \draw (0,1.5) -- (12,1.5); \draw (0.5,0.5) -- (12.5,0.5); \draw (0.5,2) -- (12.5,2);
\draw[rounded corners=3pt] (-.1,-.1) rectangle ++(6.1,0.2); \draw[rounded corners=3pt] (6.0,-.1) rectangle ++(6.1,0.2);
\draw[rounded corners=3pt] (-.1,1.4) rectangle ++(6.1,0.2); \draw[rounded corners=3pt] (6.0,1.4) rectangle ++(6.1,0.2);
\draw[rounded corners=3pt] (.4,.4) rectangle ++(6.1,0.2); \draw[rounded corners=3pt] (6.5,.4) rectangle ++(6.1,0.2);
\draw[rounded corners=3pt] (.4,1.9) rectangle ++(6.1,0.2); \draw[rounded corners=3pt] (6.5,1.9) rectangle ++(6.1,0.2);
\node at (0.75,1.2) {$a_j$}; \node at (0.75,-.4) {$b_j$}; \node at (1.8,2.3) {$c_j$}; \node at (1.8,.9) {$d_j$};
\node at (6.75,1.2) {$e_j$}; \node at (6.75,-.4) {$f_j$}; \node at (7.8,2.3) {$g_j$}; \node at (7.8,.9) {$h_j$};
\node at (3.75,1.2) {$a_k$}; \node at (3.75,-.4) {$b_k$}; \node at (4.8,2.3) {$c_k$}; \node at (4.8,.9) {$d_k$};
\node at (9.75,1.2) {$e_k$}; \node at (9.75,-.4) {$f_k$}; \node at (10.8,2.3) {$g_k$}; \node at (10.8,.9) {$h_k$};
\node at (14,1) {$(\theta,\theta')$};
\draw [decorate,decoration={brace,mirror,amplitude=10pt},xshift=0pt,yshift=0pt](0,-.5) -- (2.5,-.5)node [black,midway,yshift=-19pt] {\footnotesize $J$};
\draw [decorate,decoration={brace,mirror,amplitude=10pt},xshift=0pt,yshift=0pt](2.7,-.5) -- (5.9,-.5)node [black,midway,yshift=-19pt] {\footnotesize $K$};
\draw [decorate,decoration={brace,mirror,amplitude=10pt},xshift=0pt,yshift=0pt](6.1,-.5) -- (8.5,-.5)node [black,midway,yshift=-19pt] {\footnotesize $J$};
\draw [decorate,decoration={brace,mirror,amplitude=10pt},xshift=0pt,yshift=0pt](8.7,-.5) -- (12,-.5)node [black,midway,yshift=-19pt] {\footnotesize $K$};
\end{tikzpicture}
\end{center}
 
Apply Proposition~\ref{bat} to produce the 2-atom shown below that also lies beneath $(\gamma,\gamma')$ and satisfies $(\theta,\theta')\sim_2(\theta,\theta^2)$. 

\begin{center}
\begin{tikzpicture}
\foreach \x in {7,10,12} {  \draw (\x,0) -- (\x,1.5); \draw (\x+0.5,0.5) -- (\x+0.5,2); }
\foreach \x in {0,1,4} {  \draw (\x,0) -- (\x+.5,2); \draw (\x+0.5,0.5) -- (\x,1.5); }
\foreach \x in {1,4,7,10} { \draw[fill] (\x, 0) circle [radius=0.04]; \draw[fill] (\x+0.5,0.5) circle [radius=0.04]; \draw[fill] (\x,1.5) circle [radius=0.04]; \draw[fill] (\x+0.5,2) circle [radius=0.04];}
\draw (0,0) -- (12,0); \draw (0,1.5) -- (12,1.5); \draw (0.5,0.5) -- (12.5,0.5); \draw (0.5,2) -- (12.5,2);
\draw[rounded corners=3pt] (-.1,-.1) rectangle ++(6.1,0.2); \draw[rounded corners=3pt] (6.0,-.1) rectangle ++(6.1,0.2);
\draw[rounded corners=3pt] (-.1,1.4) rectangle ++(6.1,0.2); \draw[rounded corners=3pt] (6.0,1.4) rectangle ++(6.1,0.2);
\draw[rounded corners=3pt] (.4,.4) rectangle ++(6.1,0.2); \draw[rounded corners=3pt] (6.5,.4) rectangle ++(6.1,0.2);
\draw[rounded corners=3pt] (.4,1.9) rectangle ++(6.1,0.2); \draw[rounded corners=3pt] (6.5,1.9) rectangle ++(6.1,0.2);
\node at (0.75,1.2) {$a_j$}; \node at (0.75,-.4) {$b_j$}; \node at (1.8,2.3) {$c_j$}; \node at (1.8,.9) {$d_j$};
\node at (6.75,1.2) {$e_j$}; \node at (6.75,-.4) {$f_j$}; \node at (7.8,2.3) {$g_j$}; \node at (7.8,.9) {$h_j$};
\node at (3.75,1.2) {$a_k$}; \node at (3.75,-.4) {$b_k$}; \node at (4.8,2.3) {$c_k$}; \node at (4.8,.9) {$d_k$};
\node at (9.75,1.2) {$e_k$}; \node at (9.75,-.4) {$f_k$}; \node at (10.8,2.3) {$g_k$}; \node at (10.8,.9) {$h_k$};
\node at (14,1) {$(\theta,\theta^2)$};
\draw [decorate,decoration={brace,mirror,amplitude=10pt},xshift=0pt,yshift=0pt](0,-.5) -- (2.5,-.5)node [black,midway,yshift=-19pt] {\footnotesize $J$};
\draw [decorate,decoration={brace,mirror,amplitude=10pt},xshift=0pt,yshift=0pt](2.7,-.5) -- (5.9,-.5)node [black,midway,yshift=-19pt] {\footnotesize $K$};
\draw [decorate,decoration={brace,mirror,amplitude=10pt},xshift=0pt,yshift=0pt](6.1,-.5) -- (8.5,-.5)node [black,midway,yshift=-19pt] {\footnotesize $J$};
\draw [decorate,decoration={brace,mirror,amplitude=10pt},xshift=0pt,yshift=0pt](8.7,-.5) -- (12,-.5)node [black,midway,yshift=-19pt] {\footnotesize $K$};
\end{tikzpicture}
\end{center}

Now we keep this same $(\theta,\theta^2)$, but choose a new $(\gamma_1,\gamma_1')$. In effect, we take the elements $\{a_j,b_j,c_j,d_j:j\in J\}$  from the leftmost ends of the 4 ovals at left, and move them to the leftmost ends of the 4 ovals at right. To draw this while retaining all the orientations of the lines indicating $\theta^2$ on the ovals at right we must interchange the positions of the $b_j$ and $d_j$. 

\begin{center}
\begin{tikzpicture}
\foreach \x in {7,10,12} {  \draw (\x,0) -- (\x,1.5); \draw (\x+0.5,0.5) -- (\x+0.5,2); }
\foreach \x in {0,3} {  \draw (\x,0) -- (\x+.5,2); \draw (\x+0.5,0.5) -- (\x,1.5); }
\foreach \x in {3,7,10} { \draw[fill] (\x, 0) circle [radius=0.04]; \draw[fill] (\x+0.5,0.5) circle [radius=0.04]; \draw[fill] (\x,1.5) circle [radius=0.04]; \draw[fill] (\x+0.5,2) circle [radius=0.04];}
\draw (0,0) -- (12,0); \draw (0,1.5) -- (12,1.5); \draw (0.5,0.5) -- (12.5,0.5); \draw (0.5,2) -- (12.5,2);
\draw[rounded corners=3pt] (-.1,-.1) rectangle ++(6.1,0.2); \draw[rounded corners=3pt] (6.0,-.1) rectangle ++(6.1,0.2);
\draw[rounded corners=3pt] (-.1,1.4) rectangle ++(6.1,0.2); \draw[rounded corners=3pt] (6.0,1.4) rectangle ++(6.1,0.2);
\draw[rounded corners=3pt] (.4,.4) rectangle ++(6.1,0.2); \draw[rounded corners=3pt] (6.5,.4) rectangle ++(6.1,0.2);
\draw[rounded corners=3pt] (.4,1.9) rectangle ++(6.1,0.2); \draw[rounded corners=3pt] (6.5,1.9) rectangle ++(6.1,0.2);

\node at (6.75,1.2) {$a_j$}; \node at (6.75,-.4) {$d_j$}; \node at (7.8,2.3) {$c_j$}; \node at (7.8,.9) {$b_j$};
\node at (2.75,1.2) {$a_k$}; \node at (2.75,-.4) {$b_k$}; \node at (3.8,2.3) {$c_k$}; \node at (3.8,.9) {$d_k$};
\node at (9.75,1.2) {$e_i$}; \node at (9.75,-.4) {$f_i$}; \node at (10.8,2.3) {$g_i$}; \node at (10.8,.9) {$h_i$};
\node at (14,1) {$(\theta,\theta^2)$};
\draw [decorate,decoration={brace,mirror,amplitude=10pt},xshift=0pt,yshift=0pt](0,-.5) -- (5.9,-.5)node [black,midway,yshift=-19pt] {\footnotesize $K$};
\draw [decorate,decoration={brace,mirror,amplitude=10pt},xshift=0pt,yshift=0pt](6.1,-.5) -- (8.5,-.5)node [black,midway,yshift=-19pt] {\footnotesize $J$};
\draw [decorate,decoration={brace,mirror,amplitude=10pt},xshift=0pt,yshift=0pt](8.7,-.5) -- (12,-.5)node [black,midway,yshift=-19pt] {\footnotesize $I$};
\end{tikzpicture}
\end{center}

To state this another way, let $J'=\{j':j\in J\}$ be a set in bijective correspondence with $J$ and disjoint from $I$. Since $K$ has the same cardinality as $I$ there is a bijection $\varphi:K\to J'\cup I$. We now define a new special 8-enumeration $\xi':\{A_k,\ldots,H_k:k\in K\}\to X$. Here, we recall our convention that $\xi(A_i)=a_i,\ldots,\xi(H_i)=h_i$ and $\xi'(A_k)=a_k',\ldots,\xi'(H_k)=h_k'$. We define $\xi'$ by setting $a_k'=a_k$, $b_k'=b_k$, $c_k'=c_k$ and $d_k'=d_k$ for all $k\in K$. In effect, this defines the four ovals at the left of the diagram above for $(\theta,\theta^2)$. We then set 

\[ e_k'=\begin{cases}a_j&\mbox{if }\varphi(k)=j'\\ e_i&\mbox{if }\varphi(k)=i\end{cases}\quad\quad\quad
   f_k'=\begin{cases}d_j&\mbox{if }\varphi(k)=j'\\ f_i&\mbox{if }\varphi(k)=i\end{cases} \]
   \vspace{-1ex}
  
\[ g_k'=\begin{cases}c_j&\mbox{if }\varphi(k)=j'\\ g_i&\mbox{if }\varphi(k)=i\end{cases}\quad\quad\quad
   h_k'=\begin{cases}b_j&\mbox{if }\varphi(k)=j'\\ h_i&\mbox{if }\varphi(k)=i\end{cases} \]
   \vspace{1ex}

\noindent These fill out the ovals on the right side of the diagram above for $(\theta,\theta^2)$. We note that $\xi'$ is not compatible with $(\theta,\theta^2)$ being below $(\gamma,\gamma')$ since the blocks of $\gamma$ and $\gamma'$ are not what Definition~\ref{special 8} prescribes for them using $\xi'$. Define a new pair $(\gamma_1,\gamma_1')$ as indicated below. One can then check that $\xi'$ is compatible with $(\theta,\theta^2)$ being below $(\gamma_1,\gamma_1')$. 
\vspace{-2ex}

\begin{align*}
\gamma_1&\mbox{ has blocks } \{a'_k:k\in K\}, \ldots, \{h_k':k\in K\}\\
\gamma_1' &\mbox{ has blocks } \{a_k',\ldots ,h_k'\} \mbox{ for } k\in K
\end{align*}
\vspace{-2ex}

Now apply Proposition~\ref{bat} to $(\theta,\theta^2)$ and $(\gamma_1,\gamma_1')$ using the enumeration $\xi'$ to produce $(\theta,\theta^3)$ shown below with $(\theta,\theta^2)\sim_2(\theta,\theta^3)$.

\begin{center}
\begin{tikzpicture}
\foreach \x in {0,3,7,10,12} {  \draw (\x,0) -- (\x,1.5); \draw (\x+0.5,0.5) -- (\x+0.5,2); }
\foreach \x in {3,7,10} { \draw[fill] (\x, 0) circle [radius=0.04]; \draw[fill] (\x+0.5,0.5) circle [radius=0.04]; \draw[fill] (\x,1.5) circle [radius=0.04]; \draw[fill] (\x+0.5,2) circle [radius=0.04];}
\draw (0,0) -- (12,0); \draw (0,1.5) -- (12,1.5); \draw (0.5,0.5) -- (12.5,0.5); \draw (0.5,2) -- (12.5,2);
\draw[rounded corners=3pt] (-.1,-.1) rectangle ++(6.1,0.2); \draw[rounded corners=3pt] (6.0,-.1) rectangle ++(6.1,0.2);
\draw[rounded corners=3pt] (-.1,1.4) rectangle ++(6.1,0.2); \draw[rounded corners=3pt] (6.0,1.4) rectangle ++(6.1,0.2);
\draw[rounded corners=3pt] (.4,.4) rectangle ++(6.1,0.2); \draw[rounded corners=3pt] (6.5,.4) rectangle ++(6.1,0.2);
\draw[rounded corners=3pt] (.4,1.9) rectangle ++(6.1,0.2); \draw[rounded corners=3pt] (6.5,1.9) rectangle ++(6.1,0.2);
\node at (6.75,1.2) {$a_j$}; \node at (6.75,-.4) {$d_j$}; \node at (7.8,2.3) {$c_j$}; \node at (7.8,.9) {$b_j$};
\node at (2.75,1.2) {$a_k$}; \node at (2.75,-.4) {$b_k$}; \node at (3.8,2.3) {$c_k$}; \node at (3.8,.9) {$d_k$};
\node at (9.75,1.2) {$e_i$}; \node at (9.75,-.4) {$f_i$}; \node at (10.8,2.3) {$g_i$}; \node at (10.8,.9) {$h_i$};
\node at (14,1) {$(\theta,\theta^3)$};
\draw [decorate,decoration={brace,mirror,amplitude=10pt},xshift=0pt,yshift=0pt](0,-.5) -- (5.9,-.5)node [black,midway,yshift=-19pt] {\footnotesize $K$};
\draw [decorate,decoration={brace,mirror,amplitude=10pt},xshift=0pt,yshift=0pt](6.1,-.5) -- (8.5,-.5)node [black,midway,yshift=-19pt] {\footnotesize $J$};
\draw [decorate,decoration={brace,mirror,amplitude=10pt},xshift=0pt,yshift=0pt](8.7,-.5) -- (12,-.5)node [black,midway,yshift=-19pt] {\footnotesize $I$};
\end{tikzpicture}
\end{center}
\vspace{1ex}

Comparing with the diagram after Definition~\ref{nail}, or directly with Definition~\ref{nail}, one sees that $\theta^3$ is exactly the $J$-twist $\theta'_{J, \xi}$ of $\theta'$ with resect to $\xi$. Since $$(\theta,\theta')\sim_2(\theta,\theta^2)\sim_2(\theta,\theta^3)=(\theta,\theta'_{J, \xi})$$\vspace{-2ex}

\noindent then $(\theta,\theta')\approx_2(\theta,\theta'_{J, \xi})$.
\end{proof}

\begin{rmk}{\em 
In this proof the switch from one enumeration $\xi$ to another $\xi'$ allows us to move from two 2-atoms $(\theta,\theta')$ and $(\theta,\theta^2)$ in the interval $[(\nabla,\Delta),(\gamma,\gamma')]$ where $|X/\gamma\,|=8$, to two 2-atoms $(\theta,\theta^2)$ and $(\theta,\theta^3)$ in a new interval $[(\nabla,\Delta),(\gamma_1,\gamma_1')]$ where $|X/\gamma_1'\,|=8$. In effect, we traverse from one 2-atom to some other via overlapping intervals of height 3. }
\end{rmk}

Proposition~\ref{twist} will be our key tool, but we must view it from a different perspective. Consider a regular equivalence relation $\theta$ with two blocks that we call $T$ and $B$. The relations $\theta'$ with $(\theta,\theta')$ a factor pair are exactly those 2-relations where each block of $\theta'$ has one element of $T$ and one element of $B$. In effect, these $\theta'$ are the various ways to pair elements of $T$ and $B$.

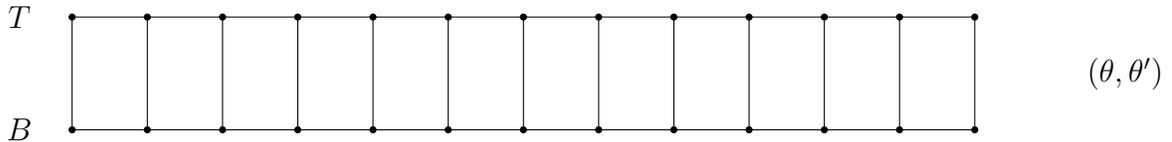
\begin{figure}[H]
\begin{center}
\begin{tikzpicture}
\draw (0,0) -- (12,0); \draw (0,1.5) -- (12,1.5);
\foreach \x in {0,1,2,3,4,5,6,7,8,9,10,11,12} {  \draw (\x,0) -- (\x,1.5); }
\foreach \x in {0,1,2,3,4,5,6,7,8,9,10,11,12} { \draw[fill] (\x, 0) circle [radius=0.04]; \draw[fill] (\x ,1.5) circle [radius=0.04]; }
\node at (14,.75) {$(\theta,\theta')$};
\node at (-.7,1.5) {$T$}; \node at (-.7,0) {$B$};
\end{tikzpicture}
\end{center}
\caption{A 2-atom}
\label{simple setup}
\end{figure}

\begin{defn}
Let $(\theta,\theta')$ be a 2-atom where the blocks of $\theta$ are $T$ and $B$. Then for each permutation $\sigma$ of $B$ define the 2-relation $\sigma(\theta')$ as follows. 
\[\sigma(\theta') \,\,\mbox{ has blocks $\,\{x,\sigma(y)\}\,$ where $\{x,y\}$ is a block of $\theta'$ and $y\in B$} \]
\end{defn}

The following is easily seen. 

\begin{prop}
Let $(\theta,\theta')$ be a 2-atom where the blocks of $\theta$ are $T$ and $B$. Then there is a bijective correspondence between $\{\phi:(\theta,\phi)\mbox{ is a 2-atom}\}$ and $\{\sigma(\theta'):\sigma\in\Perm(B)\}$.
\label{all}
\end{prop}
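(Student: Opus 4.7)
The plan is to produce an explicit bijection: send $\sigma\in\Perm(B)$ to $\sigma(\theta')$. First I would verify this map lands in the correct set, i.e., that $(\theta,\sigma(\theta'))$ really is a 2-atom. The key preliminary observation is that for any 2-atom $(\theta,\phi)$, the condition $\theta\cap\phi=\Delta$ combined with $\phi$ being a 2-relation (Proposition~\ref{lunk'}) forces every block of $\phi$ to consist of exactly one element of $T$ and one of $B$. Applied to $\theta'$ itself, this means the definition of $\sigma(\theta')$ makes sense and its proposed blocks $\{x,\sigma(y)\}$ (with $\{x,y\}$ a block of $\theta'$, $x\in T$, $y\in B$) again partition $X$ into pairs each straddling $T$ and $B$, because $\sigma$ permutes $B$ bijectively. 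Thus $\sigma(\theta')$ is a 2-relation, $\theta\cap\sigma(\theta')=\Delta$, and $\theta\circ\sigma(\theta')=\nabla$ (any two points can be connected by a single $\theta$-step followed by a single $\sigma(\theta')$-step). Proposition~\ref{lunk'} then certifies $(\theta,\sigma(\theta'))$ as a 2-atom.

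Injectivity of $\sigma\mapsto\sigma(\theta')$ is immediate: if $\sigma(\theta')=\tau(\theta')$, then for each block $\{x,y\}$ of $\theta'$ with $y\in B$ the pair $\{x,\sigma(y)\}$ equals $\{x,\tau(y)\}$, and since $\sigma(y),\tau(y)\in B$ while $x\in T$ this forces $\sigma(y)=\tau(y)$; as $\theta'$ pairs every element of $B$ with some element of $T$, the equality $\sigma=\tau$ follows.

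For surjectivity, take an arbitrary 2-atom $(\theta,\phi)$. By the preliminary observation, the blocks of $\phi$ and of $\theta'$ each encode a bijection $T\to B$; call them $\pi_\phi$ and $\pi$ respectively. Set $\sigma=\pi_\phi\circ\pi^{-1}\in\Perm(B)$. Then for each $y\in B$ the block of $\theta'$ containing $y$ is $\{\pi^{-1}(y),y\}$, so the corresponding block of $\sigma(\theta')$ is $\{\pi^{-1}(y),\sigma(y)\}=\{\pi^{-1}(y),\pi_\phi(\pi^{-1}(y))\}$, which is precisely the block of $\phi$ containing $\pi^{-1}(y)$. Hence $\sigma(\theta')=\phi$.

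The proof is essentially bookkeeping; the only content-bearing step is the initial remark that the blocks of any 2-relation $\phi$ with $\theta\cap\phi=\Delta$ must transversally pair $T$ with $B$, and once this is in hand everything reduces to manipulating bijections $T\to B$ through the reference bijection supplied by $\theta'$.
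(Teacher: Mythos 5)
Your proof is correct and follows exactly the route the paper intends: the paper states this result without proof (``easily seen''), with the preceding discussion already noting that the relations $\phi$ forming a 2-atom with $\theta$ are precisely the transversal pairings of $T$ with $B$, which is your key preliminary observation. Your verification that $\sigma\mapsto\sigma(\theta')$ is well defined, injective, and surjective onto these pairings is the bookkeeping the paper omits, and it is carried out correctly.
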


Our main task for 2-atoms is to show that if $(\theta,\theta')$ and $(\theta,\theta'')$ are 2-atoms with the same first coordinate, then $(\theta,\theta')\approx_2(\theta,\theta'')$. The key step is provided in the following. Here we recall that an involution on a set $B$ is a permutation $\delta$ of $B$ with $\delta^2=id$. So an involution is a permutation of $B$ that interchanges certain pairs of elements. 

\begin{prop}
Let $(\theta,\theta')$ be a 2-atom where the blocks of $\theta$ are $T$ and $B$. If $\delta$ is an involution of $B$ whose set of fixed points has the same cardinality as $B$, then $(\theta,\theta')\approx_2(\theta,\delta(\theta'))$. 
\label{big 2}
\end{prop}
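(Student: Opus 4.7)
The plan is to realize $\delta$ via a single $J$-twist of $\theta'$ with respect to a carefully chosen special 8-enumeration, and then invoke Proposition~\ref{twist}. The key observation driving the approach is that, read on the $B$-side, a $J$-twist applies precisely the involution swapping $\xi(B_j)\leftrightarrow\xi(D_j)$ for each $j\in J$ and fixing every other element of $B$. So it will suffice to build a special 8-enumeration whose $\{\xi(B_j),\xi(D_j)\}_{j\in J}$ pairs are exactly the nontrivial orbits of $\delta$.

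Before building $\xi$, I will set up cardinalities. Since $X$ is infinite and $\theta$ is a 2-relation with blocks $T$ and $B$, both $T$ and $B$ have cardinality $|X|$; in particular $B$ is infinite, and by hypothesis $|F|=|B|$ where $F$ is the fixed-point set of $\delta$. Each $\theta'$-block has the form $\{t,b\}$ with $t\in T$ and $b\in B$, so the $\theta'$-blocks are naturally indexed by $B$. Let $P$ be the set of 2-orbits of $\delta$ and index it by $J$ (if $\delta=\mathrm{id}$ the conclusion is immediate, so assume $P\ne\emptyset$). Choose $K$ disjoint from $J$ with $|K|=|B|$, and set $I=J\cup K$. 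Because $B$ is infinite and $|J|\le|B|$, this gives $|I|=|K|$, which is the cardinality hypothesis needed in Proposition~\ref{twist}.

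Next I will build the bijection $\xi:\{A_i,\ldots,H_i:i\in I\}\to X$ slot-by-slot. For each $j\in J$ corresponding to an orbit $\{b,b'\}\in P$, I set $\xi(B_j)=b$, $\xi(D_j)=b'$, and let $\xi(A_j),\xi(C_j)$ be the unique $T$-partners of $b,b'$ under $\theta'$. The remaining $\theta'$-blocks are exactly those with $B$-component in $F$, of which there are $|F|=|B|$. I will distribute these among the $2|J|$ remaining $J$-slots (the pairs $\{\xi(E_j),\xi(F_j)\}$ and $\{\xi(G_j),\xi(H_j)\}$ for $j\in J$) and the $4|K|$ $K$-slots (grouped into quadruples for each $k\in K$). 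The cardinality equation $2|J|+4|K|=|F|$ follows from infinite cardinal arithmetic because $|K|=|B|$ dominates. Once $\xi$ is constructed, I define $\gamma,\gamma'$ from $\xi$ as in Definition~\ref{special 8}; compatibility, the inclusions $\gamma\subseteq\theta$ and $\theta'\subseteq\gamma'$, and the permutation condition $\gamma\circ\theta'=\theta'\circ\gamma$ all follow directly from the prescribed block structure.

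Finally, by design the $J$-twist $\theta'_{J,\xi}$ replaces, for each $j\in J$, the blocks $\{\xi(A_j),\xi(B_j)\},\{\xi(C_j),\xi(D_j)\}$ by $\{\xi(A_j),\xi(D_j)\},\{\xi(C_j),\xi(B_j)\}$ --- exactly the action of $\delta$ on the $B$-component of these $\theta'$-blocks --- while every other $\theta'$-block has $B$-component in $F$ and is thus unchanged by both the twist and by $\delta$. Therefore $\theta'_{J,\xi}=\delta(\theta')$, and Proposition~\ref{twist} yields $(\theta,\theta')\approx_2(\theta,\delta(\theta'))$. The main obstacle will be the bookkeeping in constructing $\xi$: I must simultaneously make the $\delta$-orbits into the $\{\xi(B_j),\xi(D_j)\}$ pairs and leave enough fixed-point $\theta'$-blocks to fill the remaining slots while maintaining $|K|=|I|$. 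This is precisely what the hypothesis $|F|=|B|$ purchases; without it, the tail index set $K$ could fail to match $I$ in cardinality, so Proposition~\ref{twist} would not apply.
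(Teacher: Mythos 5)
Your proposal is correct and follows essentially the same route as the paper: build a special 8-enumeration whose $\{\xi(B_j),\xi(D_j)\}$ pairs are exactly the 2-element orbits of $\delta$, absorb the fixed-point $\theta'$-blocks into the remaining $J$-slots and the $K$-slots (which is where the hypothesis $|F|=|B|$ is spent, exactly as you note), observe that the $J$-twist $\theta'_{J,\xi}$ is $\delta(\theta')$, and apply Proposition~\ref{twist}. The cardinality bookkeeping $2|J|+4|K|=|F|$ and $|K|=|I|$ is handled in your argument just as in the paper's partition of the fixed-point set into a piece of size $2|J|$ and a piece of size $|B|$.
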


\begin{proof}

Enumerate the 2-element orbits of $\delta$ with a set $J$, then label the elements of the $j^{th}$ orbit $b_j$ and $d_j$. So the 2-element orbits of $\delta$ are $\{b_j,d_j\}$ $(j\in J)$. For each $j\in J$, let $a_j$ and $c_j$ be the unique elements of $T$ with $\{a_j,b_j\}$ and $\{c_j,d_j\}$ blocks of $\theta'$. 

The elements of $B$ that do not belong to 2-element orbits are the fixed points of $\delta$ and the set of fixed points has the same cardinality as $B$. Partition the set of fixed points into two pieces, one of cardinality twice that of $J$ (we allow the possibility that $J$ is finite), and the other with the same cardinality as $B$. 

Put the elements of this first set in pairs and enumerate them as $f_j,h_j$ $(j\in J)$. Then for each $j\in J$ let $e_j$ and $g_j$ be the unique elements of $T$ with $\{e_j,f_j\}$ and $\{g_j,h_j\}$ blocks of $\theta'$. This second set of fixed points has the same infinite cardinality as $B$. Group this set into 4-tuples and enumerate them over a set $K$ as $b_k,d_k,f_k,h_k$ $(k\in K)$. For each $k\in K$ let $a_k,c_k,e_k,g_k$ be the unique elements of $T$ with $\{a_k,b_k\}$, $\{c_k,d_k\}$, $\{e_k,f_k\}$ and $\{g_k,h_k\}$ blocks of $\theta'$. 

Diagrams of $(\theta,\theta')$ and $(\theta,\delta(\theta'))$ are shown below. The diagram of $(\theta,\delta(\theta'))$ is obtained by interchanging the elements $b_j,d_j$ of the 2-element orbits of $\delta$. 

\begin{center}
\begin{tikzpicture}
\draw (0,0) -- (12,0); \draw (0,1.5) -- (12,1.5);
\foreach \x in {0,1,2,3,4,5,6,7,8,9,10,11,12} {  \draw (\x,0) -- (\x,1.5); }
\foreach \x in {0,1,2,3,4,5,6,7,8,9,10,11,12} { \draw[fill] (\x, 0) circle [radius=0.04]; \draw[fill] (\x ,1.5) circle [radius=0.04]; }
\node at (14,.75) {$(\theta,\theta')$};
\node at (-.7,1.5) {$T$}; \node at (-.7,0) {$B$};
\node at (2,1.9) {$a_j$}; \node at (2,-.5) {$b_j$}; \node at (3,1.9) {$c_j$}; \node at (3,-.5) {$d_j$};
\node at (7,1.9) {$e_j$}; \node at (7,-.5) {$f_j$}; \node at (8,1.9) {$g_j$}; \node at (8,-.5) {$h_j$};
\node at (9,1.9) {$a_k$}; \node at (9,-.5) {$b_k$}; \node at (10,1.9) {$c_k$}; \node at (10,-.5) {$d_k$};
\node at (11,1.9) {$e_k$}; \node at (11,-.5) {$f_k$}; \node at (12,1.9) {$g_k$}; \node at (12,-.5) {$h_k$};
\end{tikzpicture}
\end{center}

\begin{center}
\begin{tikzpicture}
\draw (0,0) -- (12,0); \draw (0,1.5) -- (12,1.5);
\foreach \x in {6,7,8,9,10,11,12} {  \draw (\x,0) -- (\x,1.5); }
\foreach \x in {0,1,2} {  \draw (2*\x,0) -- (2*\x+1,1.5); \draw (2*\x,1.5) -- (2*\x+1,0); }
\foreach \x in {0,1,2,3,4,5,6,7,8,9,10,11,12} { \draw[fill] (\x, 0) circle [radius=0.04]; \draw[fill] (\x ,1.5) circle [radius=0.04]; }
\node at (14,.75) {$(\theta,\delta(\theta'))$};
\node at (-.7,1.5) {$T$}; \node at (-.7,0) {$B$};
\node at (2,1.9) {$a_j$}; \node at (2,-.5) {$b_j$}; \node at (3,1.9) {$c_j$}; \node at (3,-.5) {$d_j$};
\node at (7,1.9) {$e_j$}; \node at (7,-.5) {$f_j$}; \node at (8,1.9) {$g_j$}; \node at (8,-.5) {$h_j$};
\node at (9,1.9) {$a_k$}; \node at (9,-.5) {$b_k$}; \node at (10,1.9) {$c_k$}; \node at (10,-.5) {$d_k$};
\node at (11,1.9) {$e_k$}; \node at (11,-.5) {$f_k$}; \node at (12,1.9) {$g_k$}; \node at (12,-.5) {$h_k$};
\draw [decorate,decoration={brace,mirror,amplitude=10pt},xshift=0pt,yshift=-5pt](0,-.5) -- (5,-.5)node [black,midway,yshift=-19pt] {\footnotesize $J$};
\end{tikzpicture}
\end{center}

Let $I=J\cup K$ and define $\xi:\{A_i,\ldots,H_i\}\to X$ by setting $\xi(A_i)=a_i,\ldots,\xi(H_i)=h_i$. Then define $(\gamma,\gamma')$ as follows 

\vspace{-2ex}

\begin{align*}
\gamma\,&\mbox{ has blocks } \{a_i:i\in I\}, \ldots, \{h_i:i\in I\}\\
\gamma' &\mbox{ has blocks } \{a_i,\ldots,h_i\} \mbox{ for } i\in I
\end{align*}
\vspace{-2ex}

\noindent Then it is easily seen that $\xi$ is a special 8-enumeration that is compatible with $(\theta,\theta')$ being below $(\gamma,\gamma')$. The situation is shown below. 

\begin{center}
\begin{tikzpicture}
\foreach \x in {0,2,8,12} {  \draw (\x,0) -- (\x,1.5); \draw (\x+0.5,0.5) -- (\x+0.5,2); }
\foreach \x in {2,8} { \draw[fill] (\x, 0) circle [radius=0.04]; \draw[fill] (\x+0.5,0.5) circle [radius=0.04]; \draw[fill] (\x,1.5) circle [radius=0.04]; \draw[fill] (\x+0.5,2) circle [radius=0.04];}
\draw (0,0) -- (12,0); \draw (0,1.5) -- (12,1.5); \draw (0.5,0.5) -- (12.5,0.5); \draw (0.5,2) -- (12.5,2);
\draw[rounded corners=3pt] (-.1,-.1) rectangle ++(6.1,0.2); \draw[rounded corners=3pt] (6.0,-.1) rectangle ++(6.1,0.2);
\draw[rounded corners=3pt] (-.1,1.4) rectangle ++(6.1,0.2); \draw[rounded corners=3pt] (6.0,1.4) rectangle ++(6.1,0.2);
\draw[rounded corners=3pt] (.4,.4) rectangle ++(6.1,0.2); \draw[rounded corners=3pt] (6.5,.4) rectangle ++(6.1,0.2);
\draw[rounded corners=3pt] (.4,1.9) rectangle ++(6.1,0.2); \draw[rounded corners=3pt] (6.5,1.9) rectangle ++(6.1,0.2);
\node at (1.75,1.2) {$a_i$}; \node at (1.75,-.4) {$b_i$}; \node at (2.8,2.3) {$c_i$}; \node at (2.8,.9) {$d_i$};
\node at (7.75,1.2) {$e_i$}; \node at (7.75,-.4) {$f_i$}; \node at (8.8,2.3) {$g_i$}; \node at (8.8,.9) {$h_i$};
\node at (14,1) {$(\theta,\theta')$};
\end{tikzpicture}
\end{center}

The partition $I=J\cup K$ has $|K|=|I|$. Using Definition~\ref{nail}, form the $J$-twist $\theta'_{J, \xi}$ of $\theta'$ with respect to $\xi$. From this definition, $\theta'_{J, \xi} = \sigma(\theta')$. Proposition~\ref{twist} gives $(\theta,\theta')\approx_2(\theta,\delta(\theta'))$. 
\end{proof}

Having established our result for certain special involutions having large sets of fixed points, we provide a simple observation that allows us to expand the utility of the result. 

\begin{prop}
Let $B$ be an infinite set. Then any involution $\tau$ of $B$ can be expressed

\[ \tau = \delta_2\circ\delta_1 \]
\vspace{-1ex}

\noindent  where $\delta_1$ and $\delta_2$ are involutions of $B$ where the cardinality of the set of fixed points of each is equal to the cardinality of $B$. 
\label{involutions}
\end{prop}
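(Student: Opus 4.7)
The plan is to decompose $\tau$ according to its orbit structure and split the 2-element orbits between $\delta_1$ and $\delta_2$. Let $F\subseteq B$ denote the set of fixed points of $\tau$, and index the 2-element orbits of $\tau$ by a set $I$, labeling the elements of the $i$-th orbit as $x_i,y_i$ so that $\tau(x_i)=y_i$ and $\tau(y_i)=x_i$. Then $|B|=|F|+2|I|$ by cardinal arithmetic, and since $B$ is infinite at least one of $|F|$ or $|I|$ must equal $|B|$.

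The next step is to partition $I=I_1\cup I_2$ (disjoint union) in such a way that both of the sets
\[F\cup\bigcup_{i\in I_2}\{x_i,y_i\}\qquad\text{and}\qquad F\cup\bigcup_{i\in I_1}\{x_i,y_i\}\]
have cardinality $|B|$. If $|I|=|B|$, split $I$ into two disjoint pieces each of cardinality $|I|$. Otherwise $|F|=|B|$ and $|I|<|B|$; in this case take $I_1=I$ and $I_2=\emptyset$ (or vice versa), which works since $F$ alone already supplies $|B|$ fixed points.

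Now define $\delta_1$ to swap $x_i$ and $y_i$ for $i\in I_1$ and to fix every other element of $B$, and define $\delta_2$ to swap $x_i$ and $y_i$ for $i\in I_2$ and to fix every other element of $B$. Each $\delta_s$ is an involution since it is built from disjoint transpositions, and by the choice of partition each has fixed-point set of cardinality $|B|$.

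It remains to verify $\tau=\delta_2\circ\delta_1$. Since $\delta_1$ and $\delta_2$ act nontrivially on disjoint subsets of $B$ (namely $\bigcup_{i\in I_1}\{x_i,y_i\}$ and $\bigcup_{i\in I_2}\{x_i,y_i\}$ respectively), their composition acts on each $x_i$ with $i\in I_1$ by first swapping via $\delta_1$ then being fixed by $\delta_2$; on each $x_i$ with $i\in I_2$ by first being fixed by $\delta_1$ then swapped by $\delta_2$; and fixes every element of $F$. In all cases this matches $\tau$. The argument involves only elementary cardinal arithmetic and a case split on whether $|I|=|B|$, so there is no serious obstacle; the only subtlety is handling the degenerate case where $\tau$ has few 2-orbits, which is absorbed by allowing one of $I_1,I_2$ to be empty.
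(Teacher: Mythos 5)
Your proof is correct and follows essentially the same route as the paper: split the 2-element orbits of $\tau$ between $\delta_1$ and $\delta_2$, with a case distinction on whether the number of 2-orbits equals $|B|$ (in which case each half of the orbit set supplies $|B|$ fixed points for the other involution) or is smaller (in which case the fixed-point set of $\tau$ itself has cardinality $|B|$ and one factor may be taken to be the identity, matching the paper's $\tau=\tau\circ\mathrm{id}$). Your verification that the composite equals $\tau$, using that the two involutions move disjoint sets, is the same implicit computation the paper relies on.
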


\begin{proof}
Enumerate the 2-element orbits of $\tau$ as $E_k$ $(k\in K)$ for some set $K$ of cardinality at most that of $B$. If $|K|<|B|$ then the set of fixed points of $\tau$ has cardinality equal to that of $B$ and we can express $\tau=\tau\circ id$. If $|K|=|B|$ we can partition $K$ as $K=K_1\cup K_2$ where $|K_1|=|K_2|=|B|$. Let $B_1=\bigcup_{K_1} E_k$ and $B_2=\bigcup_{K_2}E_k$. Then $B_1$ and $B_2$ are disjoint subsets of $B$ of cardinality equal to $B$. Their union need not be all of $B$, but this is not important. Since $B_1$ and $B_2$ are unions of orbits of $\tau$, we can define $\delta_1$ so that $\delta_1$ agrees with $\tau$ on $B_1$ and is equal to the identity otherwise, and $\delta_2$ so that $\delta_2$ agrees with $\tau$ on $B_2$ and is the identity otherwise. Then $\tau=\delta_2\circ\delta_1$ is a representation with the desired properties. 
\end{proof}

We now come to the desired result about the relation $\approx_2$. 

\begin{thm}
If $(\theta,\theta')$ and $(\theta,\theta'')$ are 2-atoms with the same first coordinate, then $$(\theta,\theta')\approx_2(\theta,\theta'')$$ \vspace{-2ex}
\label{tiger}
\end{thm}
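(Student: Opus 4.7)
The plan is to combine Propositions~\ref{all},~\ref{big 2}, and~\ref{involutions} with a classical decomposition of permutations into involutions. First, note that $B$ is infinite: since $X$ is infinite and $\theta$ is a regular equivalence relation with two blocks $T$ and $B$, we have $|B|=|T|=|X|$.

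Using Proposition~\ref{all}, write $\theta''=\sigma(\theta')$ for some $\sigma\in\Perm(B)$. Thus it suffices to show that for every $\sigma\in\Perm(B)$ we have $(\theta,\theta')\approx_2(\theta,\sigma(\theta'))$. A direct unravelling of the definition of $\sigma(\theta')$ shows that the action of $\Perm(B)$ on $\{\phi:(\theta,\phi)\text{ is a 2-atom}\}$ is compatible with composition, i.e.\ $(\sigma_2\circ\sigma_1)(\theta')=\sigma_2(\sigma_1(\theta'))$. So if we prove the claim for some family of permutations $\mc{F}$ that generates $\Perm(B)$, and observe that $\approx_2$ lets us iterate (using transitivity), we are done.

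The first step is to establish the claim for an arbitrary involution $\tau$ of $B$. By Proposition~\ref{involutions}, write $\tau=\delta_2\circ\delta_1$ where each $\delta_i$ is an involution of $B$ whose fixed-point set has cardinality $|B|$. Applying Proposition~\ref{big 2} to the 2-atom $(\theta,\theta')$ and the involution $\delta_1$ yields $(\theta,\theta')\approx_2(\theta,\delta_1(\theta'))$. Applying Proposition~\ref{big 2} again to the 2-atom $(\theta,\delta_1(\theta'))$ and the involution $\delta_2$ yields $(\theta,\delta_1(\theta'))\approx_2(\theta,\delta_2(\delta_1(\theta')))=(\theta,\tau(\theta'))$. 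Transitivity of $\approx_2$ then gives $(\theta,\theta')\approx_2(\theta,\tau(\theta'))$.

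The second step is to reduce an arbitrary $\sigma\in\Perm(B)$ to the involution case. Here I invoke the classical fact that every permutation of a set is a product of two involutions: decomposing $\sigma$ into its disjoint orbits, each orbit is either a finite cycle or a two-way infinite cycle indexed by $\mathbb{Z}$, and each such cycle can be realised as a product of two explicit reflections (for a finite cycle $\sigma(a_i)=a_{i+1\bmod n}$ one takes $\tau_1(a_i)=a_{-i}$, $\tau_2(a_i)=a_{1-i}$; for an infinite cycle the same formulas work). Assembling these cycle-by-cycle reflections across the disjoint orbits gives $\sigma=\tau_2\circ\tau_1$ for involutions $\tau_1,\tau_2$ of $B$. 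Applying the first step twice now gives
\[
(\theta,\theta')\approx_2(\theta,\tau_1(\theta'))\approx_2(\theta,\tau_2(\tau_1(\theta')))=(\theta,\sigma(\theta'))=(\theta,\theta''),
\]
and transitivity of $\approx_2$ closes the argument.

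The main obstacle here is the second step, the decomposition of an arbitrary permutation into two involutions; once that is in hand, the rest is bookkeeping about compatibility of the action of $\Perm(B)$ with composition and transitivity of $\approx_2$. If one prefers to avoid the product-of-two-involutions fact explicitly, a slightly longer chain of involutions (for example built orbit-by-orbit) suffices, since $\approx_2$ is transitive and arbitrary-length chains are permitted.
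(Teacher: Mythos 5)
Your proof is correct, and it shares the same skeleton as the paper's argument -- reduce via Proposition~\ref{all} to showing every $\sigma\in\Perm(B)$ satisfies $(\theta,\theta')\approx_2(\theta,\sigma(\theta'))$, handle involutions in $U$ (those with $|B|$-many fixed points) by Proposition~\ref{big 2}, pass to arbitrary involutions by Proposition~\ref{involutions}, and chain via transitivity -- but your endgame is genuinely different. The paper closes by observing that the set $S$ of good permutations contains the subgroup $N$ generated by \emph{all} involutions of $B$, noting $N$ is normal in $\Perm(B)$, and then invoking the Baer--Schreier--Ulam theorem (the only normal subgroup of $\Perm(B)$ containing a fixed-point-free element is $\Perm(B)$ itself) to conclude $N=\Perm(B)$. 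You instead use the classical fact that \emph{every} permutation is a product of two involutions, realized explicitly orbit-by-orbit by the two reflections $\tau_1(a_i)=a_{-i}$, $\tau_2(a_i)=a_{1-i}$ on each finite or $\mathbb{Z}$-indexed cycle; combined with Proposition~\ref{involutions} this writes any $\sigma$ as a composite of at most four members of $U$, and four applications of Proposition~\ref{big 2} plus transitivity finish the proof with no appeal to the structure theory of infinite symmetric groups. Your route is more elementary and constructive: it trades a black-box citation for a short verification of the reflection formulas, and it yields an explicit uniform bound on the length of the $\sim_2$-chain connecting two 2-atoms with the same first coordinate -- which is directly relevant to the diameter question raised in the paper's concluding remarks, where the author notes that precisely such a bounded-involution decomposition would sharpen the result. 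The paper's BSU argument, by contrast, is shorter on the page given the citation, but it is non-effective and requires checking that $N$ contains a fixed-point-free element. One small point worth making explicit when you iterate: Proposition~\ref{big 2} is applied the second time to the 2-atom $(\theta,\tau_1(\theta'))$ rather than to $(\theta,\theta')$, which is legitimate since the proposition holds for an arbitrary 2-atom with first coordinate $\theta$; you implicitly use this (together with the compatibility $(\sigma_2\circ\sigma_1)(\theta')=\sigma_2(\sigma_1(\theta'))$, which you correctly verify), and it is the same device the paper formalizes as its induction on the length of a composite from $U$.
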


\begin{proof}
Suppose that the blocks of $\theta$ are $T$ and $B$. Then by Proposition~\ref{all}, the set of all 2-atoms $(\theta,\theta'')$ that have the same first coordinate as $(\theta,\theta')$ is equal to the set of 2-atoms $(\theta,\sigma(\theta'))$ where $\sigma$ is a permutation of $B$. So it is enough to show that $S$ is equal to $\Perm(B)$ where
\[S=\{\sigma\in\Perm(B):(\theta,\theta')\approx_2(\theta,\sigma(\theta'))\}\]

Let $U$ be the set of all involutions $\delta$ of $B$ where the cardinality of the set of fixed points of $\delta$ is equal to the cardinality of $B$. We prove by induction on $n$ that if $\delta_1,\ldots,\delta_n\in U$ then the composite $\delta_n\circ\cdots\circ\delta_1$ is in $S$. The base case $n=1$ is given by Proposition~\ref{big 2}. For the inductive case, suppose $n>1$ and that $\delta_1,\ldots,\delta_n\in U$. Let $\sigma=\delta_{n-1}\circ\cdots\circ\delta_1$ and let $\pi=\delta_n\circ\cdots\circ\delta_1$. By the inductive hypothesis $\sigma\in S$, so $(\theta,\theta')\approx_2(\theta,\sigma(\theta'))$. Since $\theta$ remains unchanged and $\delta_n$ is a permutation of the block $B$ of $\theta$, we may apply Proposition~\ref{big 2} to $(\theta,\sigma(\theta'))$ to obtain that 

\[(\theta,\sigma(\theta'))\approx_2(\theta,\delta_n(\sigma(\theta')))\]
\vspace{-2ex}

\noindent Since $\delta_n(\sigma(\theta'))=\pi(\theta')$ and $\approx_2$ is by definition transitive, then $(\theta,\theta')\approx_2(\theta,\pi(\theta'))$. So $\pi$ is an element of $S$, concluding the inductive proof of the claim. 

Let $G$ be the subgroup of the permutation group $\Perm(B)$ that is generated by $U$, and let $N$ be the subgroup of $\Perm(B)$ that is generated by all of the involutions of $B$. Since each element of $U$ is an involution, it is its own inverse. So the subgroup $G$ generated by $U$ is obtained as the set of all composites $\delta_n\circ\cdots\circ\delta_1$ of members of $U$. Our claim then shows that $S$ contains~$G$. Proposition~\ref{involutions} shows that every involution is the composite of two members of $U$. So $S$ also contains the subgroup $N$ generated by all the involutions of $B$. 

For any involution $\tau$ and any $\lambda\in\Perm(B)$ we have $(\lambda\circ\tau\circ\lambda^{-1})^2=\lambda\circ\tau\circ\lambda^{-1}$ since $\tau$ is its own inverse. Therefore $\lambda\circ\tau\circ\lambda^{-1}$ is an involution of $B$. It follows from basic group theory that the subgroup $N$ of $\Perm(B)$ generated by the set of all involutions is a normal subgroup of $\Perm(B)$. The Baer-Schreier-Ulam Theorem \cite[p.~256]{Dixon-Mortimer} yields that the only normal subgroup of $\Perm(B)$ that has an element with no fixed points is the group $\Perm(B)$ itself. Thus $S=\Perm(B)$ as required. 
\end{proof}

This establishes the result we seek for the relation $\approx_2$. However, we need the corresponding result for $\approx_3$. We begin with our analog of a special 8-enumeration. Now we require an upper and lower case alphabet with 27 elements. We use $A,\ldots,Z,\Omega$ and $a,\ldots,z,\omega$. Again, we use the convention of letting $\xi(A_i)=a_i,\ldots,\xi(\Omega_i)=\omega_i$. 

\begin{defn}
A special 27-enumeration of $X$ consists of a set $I$ and a bijection 
\vspace{-1ex}

\[\xi:\{A_i,\ldots,Z_i,\Omega_i:i\in I\}\to X\]
\vspace{-2ex}

\noindent  We say $\xi$ is compatible with the factor pair $(\theta,\theta')$ being under $(\gamma,\gamma')$ if \vspace{-1ex}
\vspace{-1ex}

\begin{align*}
\theta\,\mbox{ has blocks }\, & \{a_i, c_i, e_i, g_i,i_i,j_i,k_i,x_i,y_i:i\in I\} \\
& \{ b_i, d_i, f_i, h_i, l_i, m_i,n_i, z_i,\omega_i:i\in I\}\\
& \{o_i,p_i,q_i,r_i,s_i,t_i,u_i,v_i,w_i:i\in I\}\\
\theta'\mbox{ has blocks }\,&\{ a_i, b_i,o_i\}, \{ c_i, d_i,r_i\}, \{ i_i, l_i,u_i\}\mbox{ for each }i\in I\\
& \{e_i,f_i,p_i\}, \{g_i,h_i,s_i\}, \{j_i,m_i,v_i\} \mbox{ for each }i\in I\\
& \{x_i,z_i,q_i\}, \{y_i,\omega_i,t_i\},\{k_i,n_i,w_i\} \mbox{ for each }i\in I\\
\gamma\,\mbox{ has blocks }\,&\{ a_i:i\in I\},\ldots,\{ \omega_i:i\in I\}\\
\gamma'\,\mbox{ has blocks }&\{ a_i,\ldots, \omega_i\}\,\mbox{ for each }\,i\in I
\end{align*}
\vspace{-2ex}

\noindent The situation is illustrated in Figure~\ref{3-atom setup}.
\label{special 27}
\end{defn}
\vspace{2ex}

\begin{figure}[h]
\begin{center}
\begin{tikzpicture}[scale=0.9, every node/.style={transform shape}]
\foreach \y in {0,1.5,3} { \draw [line width = .15mm] (0,\y) -- (1,1+\y); \draw [line width = .15mm]  (12,\y) -- (13,1+\y); }
\foreach \y in {0,1.5,3} { \foreach \h in {0,.5,1} \draw (\h,\y+\h) -- (\h+12,\y+\h);  }
\foreach \h in {0,.5,1} {\foreach \y in {0,1.5,3} {
	\draw[fill = white, opacity = .5, rounded corners=3pt] (-.1+\h,-.1+\y+\h) rectangle ++(4.05,0.2);
	\draw[fill = white, opacity = .5, rounded corners=3pt] (3.95+\h,-.1+\y+\h) rectangle ++(4.1,0.2);
	\draw[fill = white, opacity = .5, rounded corners=3pt] (8.05+\h,-.1+\y+\h) rectangle ++(4.1,0.2); } }
\foreach \x in {1.5,5.5,9.5} { \foreach \h in {0,.5,1} \draw (\x+\h,\h) -- (\x+\h,\h+3); }
\foreach \x in {0,4,8} { \foreach \y in {0,1.5,3} { \foreach \h in {0,.5,1} \draw[fill] (1.5+\x+\h,\y+\h) circle [radius=0.04]; }}

\draw[fill = gray, opacity = .35, rounded corners=3pt] (-.1,2.9) rectangle ++(4.05,0.2);
\draw[fill = gray, opacity = .35, rounded corners=3pt] (-.1,1.4) rectangle ++(4.05,0.2);
\draw[fill = gray, opacity = .35, rounded corners=3pt] (3.95,2.9) rectangle ++(4.1,0.2);
\draw[fill = gray, opacity = .35, rounded corners=3pt] (3.95,1.4) rectangle ++(4.1,0.2);
\draw[fill = gray, opacity = .35, rounded corners=3pt] (.4,3.4) rectangle ++(4.05,0.2);
\draw[fill = gray, opacity = .35, rounded corners=3pt] (.4,1.9) rectangle ++(4.05,0.2);
\draw[fill = gray, opacity = .35, rounded corners=3pt] (4.45,3.4) rectangle ++(4.1,0.2);
\draw[fill = gray, opacity = .35, rounded corners=3pt] (4.45,1.9) rectangle ++(4.1,0.2);
\node at (1.25,3) {$a_i$}; \node at (1.75,3.5) {$c_i$}; \node [darkgray!70] at (2.25,4) {$i_i$};
\node at (1.25,1.5) {$b_i$}; \node at (1.75,2) {$d_i$}; \node [darkgray!70] at (2.25,2.5) {$l_i$};
\node [darkgray!70] at (1.25,0) {$o_i$}; \node [darkgray!70] at (1.75,.5) {$r_i$}; \node [darkgray!70] at (2.25,1) {$u_i$};
\node at (5.25,3) {$e_i$}; \node at (5.75,3.5) {$g_i$}; \node [darkgray!70] at (6.25,4) {$j_i$};
\node at (5.25,1.5) {$f_i$}; \node at (5.75,2) {$h_i$}; \node [darkgray!70] at (6.25,2.5) {$m_i$};
\node [darkgray!70] at (5.25,0) {$p_i$}; \node [darkgray!70] at (5.75,.5) {$s_i$}; \node [darkgray!70] at (6.25,1) {$v_i$};
\node [darkgray!70] at (9.25,3) {$x_i$}; \node [darkgray!70] at (9.75,3.5) {$y_i$}; \node [darkgray!70] at (10.25,4) {$k_i$};
\node [darkgray!70] at (9.25,1.5) {$z_i$}; \node [darkgray!70] at (9.75,2) {$\omega_i$}; \node [darkgray!70] at (10.25,2.5) {$n_i$};
\node [darkgray!70] at (9.25,0) {$q_i$}; \node [darkgray!70] at (9.75,.5) {$t_i$}; \node [darkgray!70] at (10.25,1) {$w_i$};
\end{tikzpicture}
\end{center}
\caption{A special 27-enumeration $\xi$ compatible with $(\theta,\theta')$ being under $(\gamma,\gamma')$}
\label{3-atom setup}
\end{figure}
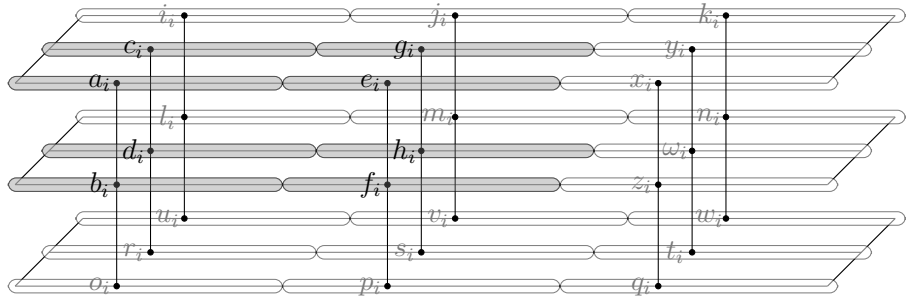
\vspace{2ex}

In this figure, $\theta$ has three blocks, the bottom level, middle level, and top level. The relation $\theta'$ has infinitely many blocks, each with three elements. They are the vertical lines. The 27 ovals are the blocks of $\gamma$, and the 27-element cubes are the blocks of $\gamma'$. The exact pattern to the labelling of the points is not of importance except to note that the shaded portion is labelled exactly as in the special 8-enumeration of Definition~\ref{special 8}. The proof of the following is essentially the same as that of Proposition~\ref{skunk} and we leave it to the reader. 

\begin{prop}
Let $(\theta,\theta')$ and $(\gamma,\gamma')$ be factor pairs. Then there is a special 27-relation $\xi$ that is compatible with $(\theta,\theta')$ being under $(\gamma,\gamma')$ iff the following conditions hold. 
\vspace{1ex}
\begin{enumerate}
\item $\theta$ has three blocks and $\theta'$ is a 3-relation
\item $\gamma$ has 27 blocks and $\gamma'$ is a 27-relation
\item $(\theta,\theta')\leq(\gamma,\gamma')$
\end{enumerate}
\label{skunk 3}
\end{prop}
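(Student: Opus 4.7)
The plan is to mirror, almost verbatim, the proof of Proposition~\ref{skunk}, replacing the roles of $2$ and $8$ by $3$ and $27$ and using a carefully chosen letter-to-coordinate dictionary.

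For the direction ($\Rightarrow$), I would simply read off the three conditions from the block description in Definition~\ref{special 27}. The two $\theta$-blocks listed there --- wait, the \emph{three} $\theta$-blocks --- give a $3$-block partition, and each of the listed $\theta'$-triples has three elements, giving condition (1); similarly the $27$ singleton-per-index $\gamma$-blocks and the $27$-element $\gamma'$-blocks give condition (2). For condition~(3), I would check $\gamma \subseteq \theta$ by noting that each $\gamma$-block $\{\ell_i : i \in I\}$ (for a fixed letter $\ell$) sits inside whichever of the three $\theta$-blocks contains $\ell$; check $\theta' \subseteq \gamma'$ by noting that each listed $\theta'$-triple $\{u_i, v_i, w_i\}$ sits inside the $\gamma'$-block $\{a_i,\ldots,\omega_i\}$ for a common index $i$; and verify $\gamma \circ \theta' = \theta' \circ \gamma$ by observing that both composites produce the equivalence relation whose blocks are the three-$\gamma$-block unions obtained by walking through a $\theta'$-triple.

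For the direction ($\Leftarrow$), I would invoke Proposition~\ref{duck} to get a ternary direct product decomposition $f : X \to X_1 \times X_2 \times X_3$ with $\theta = \ker(\pi_1 \circ f)$, $\theta' = \ker(\pi_{2,3}\circ f)$, $\gamma = \ker(\pi_{1,2}\circ f)$, and $\gamma' = \ker(\pi_3 \circ f)$. From (1) we read $|X_1| = 3$; from (2), since $X/\gamma \cong X_1 \times X_2$ has $27$ elements, we get $|X_2| = 9$; and $X_3 \cong X/\gamma'$ is infinite, say $|X_3| = |I|$. I would then identify $X_1 = \{1,2,3\}$, $X_2 = \{1,\ldots,9\}$, $X_3 = I$, and define a bijection $g : \{A_i,\ldots,\Omega_i : i \in I\} \to X_1 \times X_2 \times X_3$ by a fixed $27$-row dictionary sending the letter $L$ to a pair $(L_1,L_2) \in \{1,2,3\}\times\{1,\ldots,9\}$, and $L_i \mapsto (L_1, L_2, i)$. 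Setting $\xi = f^{-1} \circ g$, the kernels of the various projections of $g$ then give exactly the block decompositions of $\theta, \theta', \gamma, \gamma'$ demanded by Definition~\ref{special 27}, by the same argument used in Proposition~\ref{skunk} to recover the blocks of $\theta$ from those of $X_1 \times X_2 \times X_3$.

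The one genuinely combinatorial obligation --- the only place the $3$-atom case is not a purely notational rewrite of the $2$-atom case --- is exhibiting the dictionary. It must send the three nine-letter sets listed as $\theta$-blocks to the three ``rows'' $\{x_1 = 1\},\{x_1=2\},\{x_1=3\}$, and must send the nine three-letter sets listed as $\theta'$-triples to the nine ``columns'' $\{x_2 = k\}$, $k=1,\ldots,9$. This is possible iff each listed $\theta'$-triple meets each listed $\theta$-block in exactly one letter, and a quick scan of Definition~\ref{special 27} shows this is so: for instance $\{a,b,o\}$ contains one letter from each $\theta$-block, as do $\{c,d,r\}, \{i,l,u\}, \ldots, \{k,n,w\}$. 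Concretely one may take $a{=}(1,1), b{=}(2,1), o{=}(3,1), c{=}(1,2), d{=}(2,2), r{=}(3,2)$, and so on down the nine triples. Once this table is fixed, all remaining verifications reduce to checking that the listed blocks of $\theta, \theta', \gamma, \gamma'$ match the preimages under $g$ of the fibers of $\pi_1, \pi_{2,3}, \pi_{1,2}, \pi_3$, which is routine.
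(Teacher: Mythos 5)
Your proposal is correct and is precisely the argument the paper intends: the paper omits the proof of Proposition~\ref{skunk 3}, stating only that it is ``essentially the same as that of Proposition~\ref{skunk},'' and your adaptation carries this out faithfully, with the forward direction read off from Definition~\ref{special 27} and the converse via Proposition~\ref{duck} and a bijection $g$ built from a letter-to-coordinate dictionary. Your explicit verification that each listed $\theta'$-triple meets each listed $\theta$-block in exactly one letter is exactly the one combinatorial point the paper leaves tacit, and it checks out against Definition~\ref{special 27}.
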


Now an analog of Proposition~\ref{bat} for the 27-element setting. 

\begin{prop}
Let $\xi$ be a special 27-enumeration that is compatible with $(\theta,\theta^1)$ being below $(\gamma,\gamma')$. Three other 2-relations $\theta^s$ for $s=2,3,4$ are described below using the enumeration $\xi$. For each $s,t=1,\ldots,4$ we have $(\theta,\theta^s)\sim_3(\theta,\theta^t)$. 
\label{bat 3}
\end{prop}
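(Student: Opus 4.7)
The plan is to mimic the structure of Proposition~\ref{bat} in the 3-atom setting. Because $\sim_3$ only requires the existence of a single factor pair $(\gamma,\gamma')$ with $|X/\gamma|=27$ that covers both atoms in question, and we are handed such a $(\gamma,\gamma')$ by hypothesis, the task reduces to exhibiting three further 3-relations $\theta^2,\theta^3,\theta^4$ with the properties that (i) each $(\theta,\theta^s)$ is a factor pair, (ii) each $(\theta,\theta^s)$ is a 3-atom, and (iii) each lies below the same $(\gamma,\gamma')$. Once this is done, the definition of $\sim_3$ immediately gives $(\theta,\theta^s)\sim_3(\theta,\theta^t)$ for all $s,t\in\{1,2,3,4\}$.

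First I would describe $\theta^2,\theta^3,\theta^4$ using the labels supplied by the special 27-enumeration $\xi$. In Figure~\ref{3-atom setup} each of the nine 27-blocks of $\gamma'$ (indexed by $i\in I$) consists of a $3\times 3\times 3$ grid indexed by the three $\theta$-levels and by the two horizontal directions. The blocks of $\theta^1$ inside one 27-cube are the nine vertical triples that read off matching positions from the three levels. The three alternative 3-relations $\theta^2,\theta^3,\theta^4$ will be obtained by applying, inside each 27-cube separately, a Latin-square style cyclic shift to the middle and/or bottom levels relative to the top level; for instance, $\theta^2$ could pair the top positions with the middle positions shifted horizontally by one in a $3\times 3$ pattern (analogous to the ``diagonal'' pairings appearing in the $(\theta,\theta^2),(\theta,\theta^3),(\theta,\theta^4)$ diagrams of Proposition~\ref{bat}), and analogously for $\theta^3,\theta^4$ using the other cyclic shifts. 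The essential feature is that in each variant, each new triple still meets every level of $\theta$ exactly once and lies entirely inside a single 27-block of $\gamma'$.

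With the $\theta^s$ so described, I would verify the three required properties just as in Proposition~\ref{bat}. Each block of $\theta^s$ contains exactly one element from each of the three $\theta$-blocks, so $\theta\cap\theta^s=\Delta$ and $\theta\circ\theta^s=\nabla$; hence $(\theta,\theta^s)$ is a factor pair, and since $\theta$ has three (prime) blocks it is a 3-atom by Proposition~\ref{lunk'}. For $(\theta,\theta^s)\leq(\gamma,\gamma')$ I check the three conditions of the definition of $\leq$: the containment $\gamma\subseteq\theta$ is inherited from the hypothesis; the containment $\theta^s\subseteq\gamma'$ holds because every block of $\theta^s$ was designed to lie in a single 27-block of $\gamma'$; and $\gamma\circ\theta^s=\theta^s\circ\gamma$ because both sides have the same blocks, namely the unions of three $\gamma$-ovals along each $\theta^s$-triple.

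The main obstacle will be producing three $\theta^s$ that simultaneously (a) remain genuine 3-relations after the cyclic shifts are performed independently in each 27-cube, (b) are manifestly distinct from each other and from $\theta^1$, and (c) keep every triple inside a single block of $\gamma'$. This is a combinatorial bookkeeping issue rather than a conceptual one, but the picture is sufficiently three-dimensional that it is worth drawing the analogs of the diagrams displayed for Proposition~\ref{bat}; the verification of (i)--(iii) is then essentially a direct transcription of the verification given there, and the conclusion $(\theta,\theta^s)\sim_3(\theta,\theta^t)$ follows immediately from the definition of $\sim_3$.
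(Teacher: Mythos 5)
Your proposal is correct and takes essentially the same route as the paper: exhibit three further 3-relations whose blocks are transversals of the three $\theta$-levels lying inside single 27-cubes, verify $(\theta,\theta^s)\leq(\gamma,\gamma')$ by checking $\gamma\subseteq\theta$, $\theta^s\subseteq\gamma'$, and $\gamma\circ\theta^s=\theta^s\circ\gamma$ (both composites being the equivalence relation whose blocks are unions of three ovals joined by a, possibly bent, vertical line), and then conclude directly from the definition of $\sim_3$. One point to make explicit: the shift pattern must be the \emph{same} in every 27-cube --- your permutability argument tacitly requires each $\theta^s$-triple to join the same three $\gamma$-ovals for all $i\in I$, so ``performed independently in each 27-cube'' must mean cube-by-cube application of one fixed pattern rather than cube-dependent choices, which would destroy $\gamma\circ\theta^s=\theta^s\circ\gamma$.
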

\vspace{2ex}

\begin{center}
\begin{tikzpicture}[scale=0.9, every node/.style={transform shape}]
\foreach \y in {0,1.5,3} { \draw [line width = .15mm] (0,\y) -- (1,1+\y); \draw [line width = .15mm]  (12,\y) -- (13,1+\y); }
\foreach \y in {0,1.5,3} { \foreach \h in {0,.5,1} \draw (\h,\y+\h) -- (\h+12,\y+\h);  }
\foreach \h in {0,.5,1} {\foreach \y in {0,1.5,3} {
	\draw[fill = white, opacity = .5, rounded corners=3pt] (-.1+\h,-.1+\y+\h) rectangle ++(4.05,0.2);
	\draw[fill = white, opacity = .5, rounded corners=3pt] (3.95+\h,-.1+\y+\h) rectangle ++(4.1,0.2);
	\draw[fill = white, opacity = .5, rounded corners=3pt] (8.05+\h,-.1+\y+\h) rectangle ++(4.1,0.2); } }
\foreach \x in {1.5,5.5,9.5} \draw [line width = .01mm] (\x+1,2.5) -- (\x+1,4); 
\foreach \x in {0,4,8} { \foreach \y in {0,1.5,3} { \foreach \h in {0,.5,1} \draw[fill = gray, opacity = .5] (1.5+\x+\h,\y+\h) circle [radius=0.04]; }}
\foreach \x in {0,4} { \foreach \y in {0,1.5,3} { \foreach \h in {0,.5} \draw[fill] (1.5+\x+\h,\y+\h) circle [radius=0.04]; }}
\draw[fill = gray, opacity = .35, rounded corners=3pt] (-.1,2.9) rectangle ++(4.05,0.2);
\draw[fill = gray, opacity = .35, rounded corners=3pt] (-.1,1.4) rectangle ++(4.05,0.2);
\draw[fill = gray, opacity = .35, rounded corners=3pt] (3.95,2.9) rectangle ++(4.1,0.2);
\draw[fill = gray, opacity = .35, rounded corners=3pt] (3.95,1.4) rectangle ++(4.1,0.2);
\draw[fill = gray, opacity = .35, rounded corners=3pt] (.4,3.4) rectangle ++(4.05,0.2);
\draw[fill = gray, opacity = .35, rounded corners=3pt] (.4,1.9) rectangle ++(4.05,0.2);
\draw[fill = gray, opacity = .35, rounded corners=3pt] (4.45,3.4) rectangle ++(4.1,0.2);
\draw[fill = gray, opacity = .35, rounded corners=3pt] (4.45,1.9) rectangle ++(4.1,0.2);
\foreach \x in {1.5,5.5,9.5} \draw [line width = .15mm] (\x+1,1) -- (\x+1,4); 
\foreach \h in {0,.5} \draw [line width = .15mm] (9.5+\h,\h) -- (9.5+\h,\h+3); 
\draw [line width = .3mm] (2,.5) -- (1.5,1.5) -- (2,3.5); 
\draw [line width = .3mm] (1.5,0) -- (2,2) -- (1.5,3); 
\draw [line width = .3mm] (6,.5) -- (5.5,1.5) -- (6,3.5);  
\draw [line width = .3mm] (5.5,0) -- (6,2) -- (5.5,3);
\node at (1.25,3) {$a_i$}; \node at (1.75,3.5) {$c_i$}; \node [darkgray!70] at (2.25,4) {$i_i$};
\node at (1.25,1.5) {$b_i$}; \node at (2.3,2.0) {$d_i$}; \node [darkgray!70] at (2.25,2.75) {$l_i$};
\node [darkgray!70] at (1.25,0) {$o_i$}; \node [darkgray!70] at (1.75,.5) {$r_i$}; \node [darkgray!70] at (2.25,1) {$u_i$};
\node at (5.25,3) {$e_i$}; \node at (5.75,3.5) {$g_i$}; \node [darkgray!70] at (6.25,4) {$j_i$};
\node at (5.25,1.5) {$f_i$}; \node at (6.3,2) {$h_i$}; \node [darkgray!70] at (6.25,2.75) {$m_i$};
\node [darkgray!70] at (5.25,0) {$p_i$}; \node [darkgray!70] at (5.75,.5) {$s_i$}; \node [darkgray!70] at (6.25,1) {$v_i$};
\node [darkgray!70] at (9.25,3) {$x_i$}; \node [darkgray!70] at (9.75,3.5) {$y_i$}; \node [darkgray!70] at (10.25,4) {$k_i$};
\node [darkgray!70] at (9.25,1.5) {$z_i$}; \node [darkgray!70] at (9.75,2) {$\omega_i$}; \node [darkgray!70] at (10.25,2.5) {$n_i$};
\node [darkgray!70] at (9.25,0) {$q_i$}; \node [darkgray!70] at (9.75,.5) {$t_i$}; \node [darkgray!70] at (10.25,1) {$w_i$};
\end{tikzpicture}
\end{center}
\vspace{2ex}

\begin{center}
\begin{tikzpicture}[scale=0.9, every node/.style={transform shape}]
\foreach \y in {0,1.5,3} { \draw [line width = .15mm] (0,\y) -- (1,1+\y); \draw [line width = .15mm]  (12,\y) -- (13,1+\y); }
\foreach \y in {0,1.5,3} { \foreach \h in {0,.5,1} \draw (\h,\y+\h) -- (\h+12,\y+\h);  }
\foreach \h in {0,.5,1} {\foreach \y in {0,1.5,3} {
	\draw[fill = white, opacity = .5, rounded corners=3pt] (-.1+\h,-.1+\y+\h) rectangle ++(4.05,0.2);
	\draw[fill = white, opacity = .5, rounded corners=3pt] (3.95+\h,-.1+\y+\h) rectangle ++(4.1,0.2);
	\draw[fill = white, opacity = .5, rounded corners=3pt] (8.05+\h,-.1+\y+\h) rectangle ++(4.1,0.2); } }
\foreach \x in {1.5,5.5,9.5} \draw [line width = .01mm] (\x+1,2.5) -- (\x+1,4); 
\foreach \x in {0,4,8} { \foreach \y in {0,1.5,3} { \foreach \h in {0,.5,1} \draw[fill = gray, opacity = .5] (1.5+\x+\h,\y+\h) circle [radius=0.04]; }}
\foreach \x in {0,4} { \foreach \y in {0,1.5,3} { \foreach \h in {0,.5} \draw[fill] (1.5+\x+\h,\y+\h) circle [radius=0.04]; }}
\draw[fill = gray, opacity = .35, rounded corners=3pt] (-.1,2.9) rectangle ++(4.05,0.2);
\draw[fill = gray, opacity = .35, rounded corners=3pt] (-.1,1.4) rectangle ++(4.05,0.2);
\draw[fill = gray, opacity = .35, rounded corners=3pt] (3.95,2.9) rectangle ++(4.1,0.2);
\draw[fill = gray, opacity = .35, rounded corners=3pt] (3.95,1.4) rectangle ++(4.1,0.2);
\draw[fill = gray, opacity = .35, rounded corners=3pt] (.4,3.4) rectangle ++(4.05,0.2);
\draw[fill = gray, opacity = .35, rounded corners=3pt] (.4,1.9) rectangle ++(4.05,0.2);
\draw[fill = gray, opacity = .35, rounded corners=3pt] (4.45,3.4) rectangle ++(4.1,0.2);
\draw[fill = gray, opacity = .35, rounded corners=3pt] (4.45,1.9) rectangle ++(4.1,0.2);
\foreach \x in {1.5,5.5,9.5} \draw [line width = .15mm] (\x+1,1) -- (\x+1,4); 
\foreach \h in {0,.5} \draw [line width = .15mm] (9.5+\h,\h) -- (9.5+\h,\h+3); 
\draw [line width = .3mm] (1.5,0) -- (1.5,1.5) -- (1.5,3); 
\draw [line width = .3mm] (2,.5) -- (2,2) -- (2,3.5); 
\draw [line width = .3mm] (6,.5) -- (5.5,1.5) -- (6,3.5);  
\draw [line width = .3mm] (5.5,0) -- (6,2) -- (5.5,3);
\node at (1.25,3) {$a_i$}; \node at (1.75,3.5) {$c_i$}; \node [darkgray!70] at (2.25,4) {$i_i$};
\node at (1.25,1.5) {$b_i$}; \node at (1.75,2.0) {$d_i$}; \node [darkgray!70] at (2.25,2.5) {$l_i$};
\node [darkgray!70] at (1.25,0) {$o_i$}; \node [darkgray!70] at (1.75,.5) {$r_i$}; \node [darkgray!70] at (2.25,1) {$u_i$};
\node at (5.25,3) {$e_i$}; \node at (5.75,3.5) {$g_i$}; \node [darkgray!70] at (6.25,4) {$j_i$};
\node at (5.25,1.5) {$f_i$}; \node at (6.3,2) {$h_i$}; \node [darkgray!70] at (6.25,2.75) {$m_i$};
\node [darkgray!70] at (5.25,0) {$p_i$}; \node [darkgray!70] at (5.75,.5) {$s_i$}; \node [darkgray!70] at (6.25,1) {$v_i$};
\node [darkgray!70] at (9.25,3) {$x_i$}; \node [darkgray!70] at (9.75,3.5) {$y_i$}; \node [darkgray!70] at (10.25,4) {$k_i$};
\node [darkgray!70] at (9.25,1.5) {$z_i$}; \node [darkgray!70] at (9.75,2) {$\omega_i$}; \node [darkgray!70] at (10.25,2.5) {$n_i$};
\node [darkgray!70] at (9.25,0) {$q_i$}; \node [darkgray!70] at (9.75,.5) {$t_i$}; \node [darkgray!70] at (10.25,1) {$w_i$};
\end{tikzpicture}
\end{center}
\vspace{2ex}

\begin{center}
\begin{tikzpicture}[scale=0.9, every node/.style={transform shape}]
\foreach \y in {0,1.5,3} { \draw [line width = .15mm] (0,\y) -- (1,1+\y); \draw [line width = .15mm]  (12,\y) -- (13,1+\y); }
\foreach \y in {0,1.5,3} { \foreach \h in {0,.5,1} \draw (\h,\y+\h) -- (\h+12,\y+\h);  }
\foreach \h in {0,.5,1} {\foreach \y in {0,1.5,3} {
	\draw[fill = white, opacity = .5, rounded corners=3pt] (-.1+\h,-.1+\y+\h) rectangle ++(4.05,0.2);
	\draw[fill = white, opacity = .5, rounded corners=3pt] (3.95+\h,-.1+\y+\h) rectangle ++(4.1,0.2);
	\draw[fill = white, opacity = .5, rounded corners=3pt] (8.05+\h,-.1+\y+\h) rectangle ++(4.1,0.2); } }
\foreach \x in {0,4,8} { \foreach \y in {0,1.5,3} { \foreach \h in {0,.5,1} \draw[fill = gray, opacity = .5] (1.5+\x+\h,\y+\h) circle [radius=0.04]; }}
\foreach \x in {0,4} { \foreach \y in {0,1.5,3} { \foreach \h in {0,.5} \draw[fill] (1.5+\x+\h,\y+\h) circle [radius=0.04]; }}
\draw[fill = gray, opacity = .35, rounded corners=3pt] (-.1,2.9) rectangle ++(4.05,0.2);
\draw[fill = gray, opacity = .35, rounded corners=3pt] (-.1,1.4) rectangle ++(4.05,0.2);
\draw[fill = gray, opacity = .35, rounded corners=3pt] (3.95,2.9) rectangle ++(4.1,0.2);
\draw[fill = gray, opacity = .35, rounded corners=3pt] (3.95,1.4) rectangle ++(4.1,0.2);
\draw[fill = gray, opacity = .35, rounded corners=3pt] (.4,3.4) rectangle ++(4.05,0.2);
\draw[fill = gray, opacity = .35, rounded corners=3pt] (.4,1.9) rectangle ++(4.05,0.2);
\draw[fill = gray, opacity = .35, rounded corners=3pt] (4.45,3.4) rectangle ++(4.1,0.2);
\draw[fill = gray, opacity = .35, rounded corners=3pt] (4.45,1.9) rectangle ++(4.1,0.2);
\foreach \x in {1.5,5.5,9.5} \draw [line width = .15mm] (\x+1,1) -- (\x+1,4); 
\foreach \h in {0,.5} \draw [line width = .15mm] (9.5+\h,\h) -- (9.5+\h,\h+3); 
\draw [line width = .3mm] (2,.5) -- (1.5,1.5) -- (2,3.5); 
\draw [line width = .3mm] (1.5,0) -- (2,2) -- (1.5,3); 
\draw [line width = .3mm] (5.5,0) -- (5.5,1.5) -- (5.5,3); 
\draw [line width = .3mm] (6,.5) -- (6,2) -- (6,3.5);
\node at (1.25,3) {$a_i$}; \node at (1.75,3.5) {$c_i$}; \node [darkgray!70] at (2.25,4) {$i_i$};
\node at (1.25,1.5) {$b_i$}; \node at (2.3,2.0) {$d_i$}; \node [darkgray!70] at (2.25,2.75) {$l_i$};
\node [darkgray!70] at (1.25,0) {$o_i$}; \node [darkgray!70] at (1.75,.5) {$r_i$}; \node [darkgray!70] at (2.25,1) {$u_i$};
\node at (5.25,3) {$e_i$}; \node at (5.75,3.5) {$g_i$}; \node [darkgray!70] at (6.25,4) {$j_i$};
\node at (5.25,1.5) {$f_i$}; \node at (5.75,2) {$h_i$}; \node [darkgray!70] at (6.25,2.5) {$m_i$};
\node [darkgray!70] at (5.25,0) {$p_i$}; \node [darkgray!70] at (5.75,.5) {$s_i$}; \node [darkgray!70] at (6.25,1) {$v_i$};
\node [darkgray!70] at (9.25,3) {$x_i$}; \node [darkgray!70] at (9.75,3.5) {$y_i$}; \node [darkgray!70] at (10.25,4) {$k_i$};
\node [darkgray!70] at (9.25,1.5) {$z_i$}; \node [darkgray!70] at (9.75,2) {$\omega_i$}; \node [darkgray!70] at (10.25,2.5) {$n_i$};
\node [darkgray!70] at (9.25,0) {$q_i$}; \node [darkgray!70] at (9.75,.5) {$t_i$}; \node [darkgray!70] at (10.25,1) {$w_i$};
\end{tikzpicture}
\end{center}
\vspace{3ex}

\begin{proof}
It is a matter of showing that $(\theta,\theta^s)\leq(\gamma,\gamma')$ for $s=1,\ldots,4$. That $\gamma\subseteq\theta$ follows since the blocks of $\gamma$ are the ovals, and these are contained in the three levels that are blocks of $\theta$. For $s=1,\ldots,4$ each, possibly bent, vertical line in the diagram for $(\theta,\theta^s)$ is a subset of $\{a_i,\ldots,\omega_i:i\in I\}$ for some $i\in I$. Therefore the possibly bent vertical lines that are blocks of $\theta^s$ are subsets of the 27-cubes that are blocks of $\gamma'$, so $\theta^s\subseteq\gamma'$. It remains to show that $\gamma\circ\theta^s=\theta^s\circ\gamma$. This follows since each composite is an equivalence relation whose blocks are the unions of three ovals connected by a possibly bent vertical line. 
\end{proof}

Next is an analog of Definition~\ref{nail}.

\begin{defn}
Let $\xi$ be a special 27-enumeration compatible with $(\theta,\theta')$ being below $(\gamma,\gamma')$ and let $I=J\cup K$ be a partition of $I$ with $|K|=|I|$. Define the $J$-twist of $\theta'$ with respect to $\xi$ to be the relation $\theta'_{J, \xi}$ whose blocks are as follows:

\begin{align*}
\mbox{ for }k\in K & \quad \{ a_k, b_k,o_k\}, \{ c_k, d_k,r_k\} \\
\mbox{ for }j\in J & \quad \{ a_j, d_j,o_j\}, \{ c_j, b_j,r_k\} \\
\mbox{ for }i\in I & \quad\{ e_i, f_i,p_i\},\, \{ g_i, h_i,s_i\},\, \{j_i,m_i,v_i\},\,\{ x_i,z_i,q_i\},\, \{y_i,\omega_i,t_i\},\, \{k_i,n_i,w_i\},\,\{i_i,l_i,u_i\} 
\end{align*}
\vspace{0ex}

\noindent Thus the $J$-twist agrees with $\theta'$ except the blocks $\{ a_j, b_j,o_j\},\{ c_j, d_j,r_j\}$ of $\theta$ are replaced with the blocks $\{ a_j, d_j,o_j\},\{ c_j, b_j,r_j\}$ of its $J$-twist. 
\label{nail 27}
\end{defn}

The following is an analog of Proposition~\ref{twist}. 

\begin{prop}
Let $\xi$ be a special 27-enumeration that is compatible with $(\theta,\theta')$ being below $(\gamma,\gamma')$ and let $I=J\cup K$ be a partition of $I$ with $|K|=|I|$. Then for the $J$-twist $\theta'_{J, \xi}$ we have 
\[(\theta,\theta')\approx_3(\theta,\theta'_{J, \xi})\]
\label{twist 27}
\end{prop}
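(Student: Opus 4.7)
The proof will proceed by mimicking the blueprint of Proposition~\ref{twist}, using Proposition~\ref{bat 3} in place of Proposition~\ref{bat} and Proposition~\ref{skunk 3} in place of Proposition~\ref{skunk}. The crucial structural observation is that the $J$-twist of Definition~\ref{nail 27} alters $\theta'$ only on the eight elements $\{a_j, b_j, c_j, d_j : j \in J\}$: every other $3$-element block of $\theta'$ is preserved, and even among the blocks involving those eight elements the only change is $\{a_j, b_j, o_j\} \mapsto \{a_j, d_j, o_j\}$ and $\{c_j, d_j, r_j\} \mapsto \{c_j, b_j, r_j\}$. Thus the twist is literally the $2$-atom twist of Definition~\ref{nail} performed inside the shaded $8$-portion of the special $27$-enumeration, and the argument can follow the same two-hop strategy.

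First I would apply Proposition~\ref{bat 3} directly to $\xi$, selecting the variant $\theta^2$ that bends the two leftmost vertical triples in each shaded column across all of $I$. This gives $(\theta, \theta') \sim_3 (\theta, \theta^2)$.

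Next, using $|K| = |I|$, I would construct a new special $27$-enumeration $\xi'$ indexed by $K$, together with a new factor pair $(\gamma_1, \gamma_1')$, so that $\xi'$ is compatible with $(\theta, \theta^2)$ being under $(\gamma_1, \gamma_1')$. The definition of $\xi'$ on the shaded $8$-portion mirrors the one in the proof of Proposition~\ref{twist}: fix a disjoint copy $J' = \{j' : j \in J\}$ of $J$ and a bijection $\varphi : K \to J' \cup I$, set $\xi'(A_k) = a_k$, $\xi'(B_k) = b_k$, $\xi'(C_k) = c_k$, $\xi'(D_k) = d_k$ for all $k \in K$, and set $(\xi'(E_k), \xi'(F_k), \xi'(G_k), \xi'(H_k))$ equal to $(a_j, d_j, c_j, b_j)$ when $\varphi(k) = j'$ and to $(e_i, f_i, g_i, h_i)$ when $\varphi(k) = i$ (the $b$-$d$ swap being forced in order to preserve the orientations required by Definition~\ref{special 27}). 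The remaining $19$ labels per $k \in K$ are then assigned bijectively to the remaining points of $X$ so that the resulting blocks of $\gamma_1$ and $\gamma_1'$ match the pattern prescribed by Definition~\ref{special 27}. Proposition~\ref{skunk 3} certifies the compatibility. A second application of Proposition~\ref{bat 3}, this time to $\xi'$ and with the same variant, yields a 3-atom $(\theta, \theta^3)$ with $(\theta, \theta^2) \sim_3 (\theta, \theta^3)$, and a direct comparison of blocks shows $\theta^3 = \theta'_{J, \xi}$. Transitivity of $\approx_3$ then gives $(\theta, \theta') \approx_3 (\theta, \theta'_{J, \xi})$ as required.

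The principal obstacle is the explicit construction of the second enumeration $\xi'$ and the verification that the induced $(\gamma_1, \gamma_1')$ is a factor pair above $(\theta, \theta^2)$. The shaded $8$-portion is handled exactly as in the $2$-atom proof, but $19$ additional labels per index in $K$ must be assigned, and one must check both that these can be filled in bijectively and that the resulting blocks of $\gamma_1$ respect the three-level structure of $\theta$ (rather than the two-level structure of the $2$-atom case). The cardinality hypothesis $|K| = |I|$ is exactly what allows this bookkeeping to succeed; once it is carried out, the rest of the argument is a direct transcription of the proof of Proposition~\ref{twist}.
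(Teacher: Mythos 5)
Your proposal is correct and follows the paper's own proof essentially step for step: the same two-hop strategy of applying Proposition~\ref{bat 3} to bend the $\{a_i,b_i,o_i\},\{c_i,d_i,r_i\}$ triples across all of $I$, re-enumerating via $\xi'$ with the forced $b_j$--$d_j$ interchange in the middle ovals (the paper makes your ``remaining 19 labels'' explicit using two bijections $\varphi_1:K\to J'\cup I$ and $\varphi_2:K\to I$, with the completion labels such as $p_k'$ forced by the block structure), applying Proposition~\ref{bat 3} a second time, and invoking transitivity of $\approx_3$. One small correction: the paper checks compatibility of $\xi'$ with $(\theta,\theta^2)$ under $(\gamma_1,\gamma_1')$ by direct inspection of blocks rather than via Proposition~\ref{skunk 3}, which only asserts the existence of \emph{some} compatible enumeration under hypotheses that here include the very condition $(\theta,\theta^2)\leq(\gamma_1,\gamma_1')$ you would still need to verify for your constructed pair --- so the direct verification you defer to in your final paragraph is in fact the necessary route.
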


\begin{proof}
Below is a diagram showing $\xi$ being compatible with $(\theta,\theta')$ being below $(\gamma,\gamma')$. We note that it is assumed that $\xi:\{A_i,\ldots,\Omega_i:i\in I\}\to X$ as described in Definition~\ref{special 27} and that the diagram is labelled as in Figure~\ref{3-atom setup}. 

\begin{center}
\begin{tikzpicture}[scale=0.9, every node/.style={transform shape}]
\foreach \y in {0,1.5,3} { \draw [line width = .15mm] (0,\y) -- (1,1+\y); \draw [line width = .15mm]  (12,\y) -- (13,1+\y); }
\foreach \y in {0,1.5,3} { \foreach \h in {0,.5,1} \draw (\h,\y+\h) -- (\h+12,\y+\h);  }
\foreach \h in {0,.5,1} {\foreach \y in {0,1.5,3} {
	\draw[fill = white, opacity = .5, rounded corners=3pt] (-.1+\h,-.1+\y+\h) rectangle ++(4.05,0.2);
	\draw[fill = white, opacity = .5, rounded corners=3pt] (3.95+\h,-.1+\y+\h) rectangle ++(4.1,0.2);
	\draw[fill = white, opacity = .5, rounded corners=3pt] (8.05+\h,-.1+\y+\h) rectangle ++(4.1,0.2); } }
\foreach \x in {0,4,8} { \foreach \y in {0,1.5,3} { \foreach \h in {0,.5,1} \draw[fill = gray, opacity = .5] (1.5+\x+\h,\y+\h) circle [radius=0.04]; }}
\foreach \x in {0,4} { \foreach \y in {0,1.5,3} { \foreach \h in {0,.5} \draw[fill] (1.5+\x+\h,\y+\h) circle [radius=0.04]; }}
\draw[fill = gray, opacity = .35, rounded corners=3pt] (-.1,2.9) rectangle ++(4.05,0.2);
\draw[fill = gray, opacity = .35, rounded corners=3pt] (-.1,1.4) rectangle ++(4.05,0.2);
\draw[fill = gray, opacity = .35, rounded corners=3pt] (3.95,2.9) rectangle ++(4.1,0.2);
\draw[fill = gray, opacity = .35, rounded corners=3pt] (3.95,1.4) rectangle ++(4.1,0.2);
\draw[fill = gray, opacity = .35, rounded corners=3pt] (.4,3.4) rectangle ++(4.05,0.2);
\draw[fill = gray, opacity = .35, rounded corners=3pt] (.4,1.9) rectangle ++(4.05,0.2);
\draw[fill = gray, opacity = .35, rounded corners=3pt] (4.45,3.4) rectangle ++(4.1,0.2);
\draw[fill = gray, opacity = .35, rounded corners=3pt] (4.45,1.9) rectangle ++(4.1,0.2);
\foreach \x in {1.5,5.5,9.5} \draw [line width = .15mm] (\x+1,1) -- (\x+1,4); 
\foreach \h in {0,.5} \draw [line width = .15mm] (9.5+\h,\h) -- (9.5+\h,\h+3); 
\draw [line width = .3mm] (1.5,0) -- (1.5,3); 
\draw [line width = .3mm] (2,.5) -- (2,3.5); 
\draw [line width = .3mm] (5.5,0) -- (5.5,3); 
\draw [line width = .3mm] (6,.5) -- (6,3.5);
\node at (1.25,3) {$a_i$}; \node at (1.75,3.5) {$c_i$}; \node [darkgray!70] at (2.25,4) {$i_i$};
\node at (1.25,1.5) {$b_i$}; \node at (1.75,2) {$d_i$}; \node [darkgray!70] at (2.25,2.5) {$l_i$};
\node [darkgray!70] at (1.25,0) {$o_i$}; \node [darkgray!70] at (1.75,.5) {$r_i$}; \node [darkgray!70] at (2.25,1) {$u_i$};
\node at (5.25,3) {$e_i$}; \node at (5.75,3.5) {$g_i$}; \node [darkgray!70] at (6.25,4) {$j_i$};
\node at (5.25,1.5) {$f_i$}; \node at (5.75,2) {$h_i$}; \node [darkgray!70] at (6.25,2.5) {$m_i$};
\node [darkgray!70] at (5.25,0) {$p_i$}; \node [darkgray!70] at (5.75,.5) {$s_i$}; \node [darkgray!70] at (6.25,1) {$v_i$};
\node [darkgray!70] at (9.25,3) {$x_i$}; \node [darkgray!70] at (9.75,3.5) {$y_i$}; \node [darkgray!70] at (10.25,4) {$k_i$};
\node [darkgray!70] at (9.25,1.5) {$z_i$}; \node [darkgray!70] at (9.75,2) {$\omega_i$}; \node [darkgray!70] at (10.25,2.5) {$n_i$};
\node [darkgray!70] at (9.25,0) {$q_i$}; \node [darkgray!70] at (9.75,.5) {$t_i$}; \node [darkgray!70] at (10.25,1) {$w_i$};
\node at (15,2) {$(\theta,\theta')$};
\end{tikzpicture}
\end{center}
\vspace{2ex}

Use Proposition~\ref{bat 3} to produce $(\theta,\theta^2)$ shown below with $(\theta,\theta')\sim_3(\theta,\theta^2)$. The relation $\theta^2$ agrees with $\theta'$ except the blocks $\{a_i,b_i,o_i\}$, $\{c_i,d_i,r_i\}$ for $i\in I$ of $\theta'$ are replaced by the blocks $\{a_i,d_i,o_i\}$, $\{c_i,b_i,r_i\}$ for $i\in I$ of $\theta^2$. On this diagram we emphasized that each of the $\gamma$ blocks, the ovals, has its elements indexed indirectly by $I$. 
\vspace{2ex}

\begin{center}
\begin{tikzpicture}[scale=0.9, every node/.style={transform shape}]
\foreach \y in {0,1.5,3} { \draw [line width = .15mm] (0,\y) -- (1,1+\y); \draw [line width = .15mm]  (12,\y) -- (13,1+\y); }
\foreach \y in {0,1.5,3} { \foreach \h in {0,.5,1} \draw (\h,\y+\h) -- (\h+12,\y+\h);  }
\foreach \h in {0,.5,1} {\foreach \y in {0,1.5,3} {
	\draw[fill = white, opacity = .5, rounded corners=3pt] (-.1+\h,-.1+\y+\h) rectangle ++(4.05,0.2);
	\draw[fill = white, opacity = .5, rounded corners=3pt] (3.95+\h,-.1+\y+\h) rectangle ++(4.1,0.2);
	\draw[fill = white, opacity = .5, rounded corners=3pt] (8.05+\h,-.1+\y+\h) rectangle ++(4.1,0.2); } }
\foreach \x in {0,4,8} { \foreach \y in {0,1.5,3} { \foreach \h in {0,.5,1} \draw[fill = gray, opacity = .5] (1.5+\x+\h,\y+\h) circle [radius=0.04]; }}
\foreach \x in {0,4} { \foreach \y in {0,1.5,3} { \foreach \h in {0,.5} \draw[fill] (1.5+\x+\h,\y+\h) circle [radius=0.04]; }}
\draw[fill = gray, opacity = .35, rounded corners=3pt] (-.1,2.9) rectangle ++(4.05,0.2);
\draw[fill = gray, opacity = .35, rounded corners=3pt] (-.1,1.4) rectangle ++(4.05,0.2);
\draw[fill = gray, opacity = .35, rounded corners=3pt] (3.95,2.9) rectangle ++(4.1,0.2);
\draw[fill = gray, opacity = .35, rounded corners=3pt] (3.95,1.4) rectangle ++(4.1,0.2);
\draw[fill = gray, opacity = .35, rounded corners=3pt] (.4,3.4) rectangle ++(4.05,0.2);
\draw[fill = gray, opacity = .35, rounded corners=3pt] (.4,1.9) rectangle ++(4.05,0.2);
\draw[fill = gray, opacity = .35, rounded corners=3pt] (4.45,3.4) rectangle ++(4.1,0.2);
\draw[fill = gray, opacity = .35, rounded corners=3pt] (4.45,1.9) rectangle ++(4.1,0.2);
\foreach \x in {1.5,5.5,9.5} \draw [line width = .15mm] (\x+1,1) -- (\x+1,4); 
\foreach \h in {0,.5} \draw [line width = .15mm] (9.5+\h,\h) -- (9.5+\h,\h+3); 
\draw [line width = .3mm] (2,.5) -- (1.5,1.5) -- (2,3.5); 
\draw [line width = .3mm] (1.5,0) -- (2,2) -- (1.5,3); 
\draw [line width = .3mm] (5.5,0) -- (5.5,1.5) -- (5.5,3); 
\draw [line width = .3mm] (6,.5) -- (6,2) -- (6,3.5);
\node at (1.25,3) {$a_i$}; \node at (1.75,3.5) {$c_i$}; \node [darkgray!70] at (2.25,4) {$i_i$};
\node at (1.25,1.5) {$b_i$}; \node at (2.3,2.0) {$d_i$}; \node [darkgray!70] at (2.25,2.75) {$l_i$};
\node [darkgray!70] at (1.25,0) {$o_i$}; \node [darkgray!70] at (1.85,.25) {$r_i$}; \node [darkgray!70] at (2.25,1) {$u_i$};
\node at (5.25,3) {$e_i$}; \node at (5.75,3.5) {$g_i$}; \node [darkgray!70] at (6.25,4) {$j_i$};
\node at (5.25,1.5) {$f_i$}; \node at (5.75,2) {$h_i$}; \node [darkgray!70] at (6.25,2.5) {$m_i$};
\node [darkgray!70] at (5.25,0) {$p_i$}; \node [darkgray!70] at (5.75,.5) {$s_i$}; \node [darkgray!70] at (6.25,1) {$v_i$};
\node [darkgray!70] at (9.25,3) {$x_i$}; \node [darkgray!70] at (9.75,3.5) {$y_i$}; \node [darkgray!70] at (10.25,4) {$k_i$};
\node [darkgray!70] at (9.25,1.5) {$z_i$}; \node [darkgray!70] at (9.75,2) {$\omega_i$}; \node [darkgray!70] at (10.25,2.5) {$n_i$};
\node [darkgray!70] at (9.25,0) {$q_i$}; \node [darkgray!70] at (9.75,.5) {$t_i$}; \node [darkgray!70] at (10.25,1) {$w_i$};
\draw [decorate,decoration={brace,mirror,amplitude=10pt},xshift=0pt,yshift=0pt](0,-.5) -- (3.9,-.5)node [black,midway,yshift=-19pt] {\footnotesize $I$};
\draw [decorate,decoration={brace,mirror,amplitude=10pt},xshift=0pt,yshift=0pt](4.1,-.5) -- (7.9,-.5)node [black,midway,yshift=-19pt] {\footnotesize $I$};
\draw [decorate,decoration={brace,mirror,amplitude=10pt},xshift=0pt,yshift=0pt](8.1,-.5) -- (11.9,-.5)node [black,midway,yshift=-19pt] {\footnotesize $I$};
\node at (15,2) {$(\theta,\theta^2)$};
\end{tikzpicture}
\end{center}
\vspace{2ex}

We now make use of the partition $I=J\cup K$ to create a new factor pair $(\gamma_1,\gamma_1')$ and a new special 27-enumeration $\xi'$ that realizes $(\theta,\theta^2)$ being under $(\gamma_1,\gamma_1')$. In effect, we move the elements of the nine leftmost ovals whose coordinates have an index $j\in J$ into the corresponding nine middle ovals. This is shown in the diagram below, and a detailed description follows. An important note is that to retain a presentation of the $\theta^2$-classes of the middle ovals, the placements of $b_j$ and $d_j$ for $j\in J$ are interchanged. This is key to producing the $J$-twist. 
\vspace{2ex}

\begin{center}
\begin{tikzpicture}[scale=0.9, every node/.style={transform shape}]
\foreach \y in {0,1.5,3} { \draw [line width = .15mm] (0,\y) -- (1,1+\y); \draw [line width = .15mm]  (12,\y) -- (13,1+\y); }
\foreach \y in {0,1.5,3} { \foreach \h in {0,.5,1} \draw (\h,\y+\h) -- (\h+12,\y+\h);  }
\foreach \h in {0,.5,1} {\foreach \y in {0,1.5,3} {
	\draw[fill = white, opacity = .5, rounded corners=3pt] (-.1+\h,-.1+\y+\h) rectangle ++(4.05,0.2);
	\draw[fill = white, opacity = .5, rounded corners=3pt] (3.95+\h,-.1+\y+\h) rectangle ++(4.1,0.2);
	\draw[fill = white, opacity = .5, rounded corners=3pt] (8.05+\h,-.1+\y+\h) rectangle ++(4.1,0.2); } }
\foreach \y in {0,1.5,3} { \draw[fill] (6.8,\y) circle [radius=0.04]; } 
\foreach \y in {0,1.5,3} { \draw[fill] (7.3,.5+\y) circle [radius=0.04]; } 
\foreach \y in {0,1.5,3} { \draw[fill] (7.8,1+\y) circle [radius=0.04]; } 
\foreach \x in {0,3.5} { \foreach \y in {0,1.5,3} { \foreach \h in {0,.5} \draw[fill = gray, opacity = .5] (1.5+\x+\h,\y+\h) circle [radius=0.04]; }}
\foreach \x in {0,3.5,8} { \foreach \y in {0,1.5,3} { \foreach \h in {0,.5,1} \draw[fill = gray, opacity = .5] (1.5+\x+\h,\y+\h) circle [radius=0.04]; }} 
\foreach \x in {0,3.5} { \foreach \y in {0,1.5,3} { \foreach \h in {0,.5} \draw[fill] (1.5+\x+\h,\y+\h) circle [radius=0.04]; }}
\draw[fill = gray, opacity = .35, rounded corners=3pt] (-.1,2.9) rectangle ++(4.05,0.2);
\draw[fill = gray, opacity = .35, rounded corners=3pt] (-.1,1.4) rectangle ++(4.05,0.2);
\draw[fill = gray, opacity = .35, rounded corners=3pt] (3.95,2.9) rectangle ++(4.1,0.2);
\draw[fill = gray, opacity = .35, rounded corners=3pt] (3.95,1.4) rectangle ++(4.1,0.2);
\draw[fill = gray, opacity = .35, rounded corners=3pt] (.4,3.4) rectangle ++(4.05,0.2);
\draw[fill = gray, opacity = .35, rounded corners=3pt] (.4,1.9) rectangle ++(4.05,0.2);
\draw[fill = gray, opacity = .35, rounded corners=3pt] (4.45,3.4) rectangle ++(4.1,0.2);
\draw[fill = gray, opacity = .35, rounded corners=3pt] (4.45,1.9) rectangle ++(4.1,0.2);
\draw [line width = .3mm] (2,.5) -- (1.5,1.5) -- (2,3.5); 
\draw [line width = .3mm] (1.5,0) -- (2,2) -- (1.5,3); 
\foreach \x in {1.5,5,9.5} \draw [line width = .15mm] (\x+1,1) -- (\x+1,4); 
\foreach \h in {0,.5} \draw [line width = .15mm] (9.5+\h,\h) -- (9.5+\h,\h+3); 
\draw [line width = .3mm] (5,0) -- (5,1.5) -- (5,3); 
\draw [line width = .3mm] (5.5,.5) -- (5.5,2) -- (5.5,3.5);
\draw [line width = .3mm] (6.8,0) -- (6.8,1.5) -- (6.8,3); 
\draw [line width = .3mm] (7.3,.5) -- (7.3,2) -- (7.3,3.5); 
\draw [line width = .15mm] (7.8,1) -- (7.8,2.5) -- (7.8,4); 
\node at (1.25,3) {$a_k$}; \node at (1.75,3.5) {$c_k$}; \node [darkgray!70] at (2.25,4) {$i_k$};
\node at (1.25,1.5) {$b_k$}; \node at (2.3,2.0) {$d_k$}; \node [darkgray!70] at (2.25,2.75) {$l_k$};
\node [darkgray!70] at (1.25,0) {$o_k$}; \node [darkgray!70] at (1.85,.25) {$r_k$}; \node [darkgray!70] at (2.25,1) {$u_k$};
\node at (4.75,3) {$a_j$}; \node at (5.25,3.5) {$c_j$}; \node [darkgray!70] at (5.75,4) {$i_j$};
\node at (4.75,1.5) {$d_j$}; \node at (5.25,2.0) {$b_j$}; \node [darkgray!70] at (5.75,2.75) {$l_j$};
\node [darkgray!70] at (4.75,0) {$o_j$}; \node [darkgray!70] at (5.25,.5) {$r_j$}; \node [darkgray!70] at (5.75,1) {$u_j$};
\node at (7.05,3) {$e_i$}; \node at (7.55,3.5) {$g_i$}; \node [darkgray!70] at (8.05,4) {$j_i$};
\node at (7.05,1.5) {$f_i$}; \node at (7.55,2) {$h_i$}; \node [darkgray!70] at (8.05,2.5) {$m_i$};
\node [darkgray!70] at (7.05,0) {$p_i$}; \node [darkgray!70] at (7.55,.5) {$s_i$}; \node [darkgray!70] at (8.05,1) {$v_i$};
\node [darkgray!70] at (9.25,3) {$x_i$}; \node [darkgray!70] at (9.75,3.5) {$y_i$}; \node [darkgray!70] at (10.25,4) {$k_i$};
\node [darkgray!70] at (9.25,1.5) {$z_i$}; \node [darkgray!70] at (9.75,2) {$\omega_i$}; \node [darkgray!70] at (10.25,2.5) {$n_i$};
\node [darkgray!70] at (9.25,0) {$q_i$}; \node [darkgray!70] at (9.75,.5) {$t_i$}; \node [darkgray!70] at (10.25,1) {$w_i$};
\draw [decorate,decoration={brace,mirror,amplitude=10pt},xshift=0pt,yshift=0pt](0,-.5) -- (3.9,-.5)node [black,midway,yshift=-19pt] {\footnotesize $K$};
\draw [decorate,decoration={brace,mirror,amplitude=10pt},xshift=0pt,yshift=0pt](4.1,-.5) -- (5.7,-.5)node [black,midway,yshift=-19pt] {\footnotesize $J$};
\draw [decorate,decoration={brace,mirror,amplitude=10pt},xshift=0pt,yshift=0pt](5.8,-.5) -- (7.9,-.5)node [black,midway,yshift=-19pt] {\footnotesize $I$};
\draw [decorate,decoration={brace,mirror,amplitude=10pt},xshift=0pt,yshift=0pt](8.1,-.5) -- (11.9,-.5)node [black,midway,yshift=-19pt] {\footnotesize $I$};
\node at (15,2) {$(\theta,\theta^2)$};
\end{tikzpicture}
\end{center}
\vspace{2ex}

For an explicit description of this construction, we first let $J'=\{j':j\in J\}$ be a set in bijective correspondence with $J$ and disjoint from $I$. Since $K$ has the same cardinality as $I$ there are bijections $\varphi_1:K\to J'\cup I$ and $\varphi_2:K\to I$. We define a special 27-enumeration $\xi':\{A_k,\ldots,\Omega_k:k\in K\}\to X$. To do so, recall the convention that $\xi(A_i)=a_i,\ldots,\xi(\Omega_i)=\omega_i$ and that $\xi'(A_k)=a'_k,\ldots,\xi'(\Omega_k)=\omega'_k$. Then set for each $k\in K$ 

\[ a_k'=a_k, b_k'=b_k,o_k'=o_k,c_k'=c_k,d_k'=d_k,r_k'=r_k,i_k'=i_k,l_k'=l_k,u_k'=u_k\]
\vspace{-1ex}

\noindent So the nine ovals on the diagram above are labelled as expected. To describe the middle nine ovals of the diagram above we must describe the images under $\xi'$ of $E_k,F_k,P_k,G_k,H_k,S_k,J_k,M_k,V_k$. We do this only for $E_k$, the others follow the same pattern. 

\[ e_k' = \begin{cases} a_j & \mbox{ if } \varphi_1(k)=j' \\ e_i & \mbox{ if }\varphi_1(k)=i \end{cases} \]
\vspace{0ex}

\noindent The final nine ovals at right are given by the images under $\xi'$ of $X_k,Z_k,Q_k,Y_k,\Omega_k,T_k,K_k,N_k,W_k$. We set $x_k' = x_i$ where $\varphi_2(k)=i$, and similarly with the others. Then define $\gamma_1$ to have as blocks $\{a_k':k\in K\},\ldots,\{\omega_k':k\in K\}$ and $\gamma_1'$ to have as blocks $\{a_k',\ldots,\omega_k'\}$ for $k\in K$. It is then a simple but time consuming matter to verify that $\xi$ is a special 27-enumeration compatible with $(\theta,\theta^2)$ being below $(\gamma_1,\gamma_1')$ and given by the diagram above. 

Now use Proposition~\ref{bat 3} to produce from the special 27-enumeration $\xi'$ compatible with $(\theta,\theta^2)$ being under $(\gamma_1,\gamma_1')$ the 3-atom $(\theta,\theta^3)$ shown below with $(\theta,\theta^2)\sim_3(\theta,\theta^3)$. 
\vspace{2ex}

\begin{center}
\begin{tikzpicture}[scale=0.9, every node/.style={transform shape}]
\foreach \y in {0,1.5,3} { \draw [line width = .15mm] (0,\y) -- (1,1+\y); \draw [line width = .15mm]  (12,\y) -- (13,1+\y); }
\foreach \y in {0,1.5,3} { \foreach \h in {0,.5,1} \draw (\h,\y+\h) -- (\h+12,\y+\h);  }
\foreach \h in {0,.5,1} {\foreach \y in {0,1.5,3} {
	\draw[fill = white, opacity = .5, rounded corners=3pt] (-.1+\h,-.1+\y+\h) rectangle ++(4.05,0.2);
	\draw[fill = white, opacity = .5, rounded corners=3pt] (3.95+\h,-.1+\y+\h) rectangle ++(4.1,0.2);
	\draw[fill = white, opacity = .5, rounded corners=3pt] (8.05+\h,-.1+\y+\h) rectangle ++(4.1,0.2); } }
\foreach \y in {0,1.5,3} { \draw[fill] (6.8,\y) circle [radius=0.04]; } 
\foreach \y in {0,1.5,3} { \draw[fill] (7.3,.5+\y) circle [radius=0.04]; } 
\foreach \y in {0,1.5,3} { \draw[fill] (7.8,1+\y) circle [radius=0.04]; } 
\foreach \x in {0,3.5} { \foreach \y in {0,1.5,3} { \foreach \h in {0,.5} \draw[fill = gray, opacity = .5] (1.5+\x+\h,\y+\h) circle [radius=0.04]; }}%
\foreach \x in {0,3.5,8} { \foreach \y in {0,1.5,3} { \foreach \h in {0,.5,1} \draw[fill = gray, opacity = .5] (1.5+\x+\h,\y+\h) circle [radius=0.04]; }} %
\foreach \x in {0,3.5} { \foreach \y in {0,1.5,3} { \foreach \h in {0,.5} \draw[fill] (1.5+\x+\h,\y+\h) circle [radius=0.04]; }}%
\draw[fill = gray, opacity = .35, rounded corners=3pt] (-.1,2.9) rectangle ++(4.05,0.2);
\draw[fill = gray, opacity = .35, rounded corners=3pt] (-.1,1.4) rectangle ++(4.05,0.2);
\draw[fill = gray, opacity = .35, rounded corners=3pt] (3.95,2.9) rectangle ++(4.1,0.2);
\draw[fill = gray, opacity = .35, rounded corners=3pt] (3.95,1.4) rectangle ++(4.1,0.2);
\draw[fill = gray, opacity = .35, rounded corners=3pt] (.4,3.4) rectangle ++(4.05,0.2);
\draw[fill = gray, opacity = .35, rounded corners=3pt] (.4,1.9) rectangle ++(4.05,0.2);
\draw[fill = gray, opacity = .35, rounded corners=3pt] (4.45,3.4) rectangle ++(4.1,0.2);
\draw[fill = gray, opacity = .35, rounded corners=3pt] (4.45,1.9) rectangle ++(4.1,0.2);
\draw [line width = .3mm] (2,.5) -- (2,2) -- (2,3.5); 
\draw [line width = .3mm] (1.5,0) -- (1.5,1.5) -- (1.5,3); 
\foreach \x in {1.5,5,9.5} \draw [line width = .15mm] (\x+1,1) -- (\x+1,4); 
\foreach \h in {0,.5} \draw [line width = .15mm] (9.5+\h,\h) -- (9.5+\h,\h+3); 
\draw [line width = .3mm] (5,0) -- (5,1.5) -- (5,3); 
\draw [line width = .3mm] (5.5,.5) -- (5.5,2) -- (5.5,3.5);
\draw [line width = .3mm] (6.8,0) -- (6.8,1.5) -- (6.8,3); 
\draw [line width = .3mm] (7.3,.5) -- (7.3,2) -- (7.3,3.5); 
\draw [line width = .15mm] (7.8,1) -- (7.8,2.5) -- (7.8,4); 
\node at (1.25,3) {$a_k$}; \node at (1.75,3.5) {$c_k$}; \node [darkgray!70] at (2.25,4) {$i_k$};
\node at (1.25,1.5) {$b_k$}; \node at (1.75,2.0) {$d_k$}; \node [darkgray!70] at (2.25,2.75) {$l_k$};
\node [darkgray!70] at (1.25,0) {$o_k$}; \node [darkgray!70] at (1.75,.5) {$r_k$}; \node [darkgray!70] at (2.25,1) {$u_k$};
\node at (4.75,3) {$a_j$}; \node at (5.25,3.5) {$c_j$}; \node [darkgray!70] at (5.75,4) {$i_j$};
\node at (4.75,1.5) {$d_j$}; \node at (5.25,2.0) {$b_j$}; \node [darkgray!70] at (5.75,2.75) {$l_j$};
\node [darkgray!70] at (4.75,0) {$o_j$}; \node [darkgray!70] at (5.25,.5) {$r_j$}; \node [darkgray!70] at (5.75,1) {$u_j$};
\node at (7.05,3) {$e_i$}; \node at (7.55,3.5) {$g_i$}; \node [darkgray!70] at (8.05,4) {$j_i$};
\node at (7.05,1.5) {$f_i$}; \node at (7.55,2) {$h_i$}; \node [darkgray!70] at (8.05,2.5) {$m_i$};
\node [darkgray!70] at (7.05,0) {$p_i$}; \node [darkgray!70] at (7.55,.5) {$s_i$}; \node [darkgray!70] at (8.05,1) {$v_i$};
\node [darkgray!70] at (9.25,3) {$x_i$}; \node [darkgray!70] at (9.75,3.5) {$y_i$}; \node [darkgray!70] at (10.25,4) {$k_i$};
\node [darkgray!70] at (9.25,1.5) {$z_i$}; \node [darkgray!70] at (9.75,2) {$\omega_i$}; \node [darkgray!70] at (10.25,2.5) {$n_i$};
\node [darkgray!70] at (9.25,0) {$q_i$}; \node [darkgray!70] at (9.75,.5) {$t_i$}; \node [darkgray!70] at (10.25,1) {$w_i$};
\draw [decorate,decoration={brace,mirror,amplitude=10pt},xshift=0pt,yshift=0pt](0,-.5) -- (3.9,-.5)node [black,midway,yshift=-19pt] {\footnotesize $K$};
\draw [decorate,decoration={brace,mirror,amplitude=10pt},xshift=0pt,yshift=0pt](4.1,-.5) -- (5.7,-.5)node [black,midway,yshift=-19pt] {\footnotesize $J$};
\draw [decorate,decoration={brace,mirror,amplitude=10pt},xshift=0pt,yshift=0pt](5.8,-.5) -- (7.9,-.5)node [black,midway,yshift=-19pt] {\footnotesize $I$};
\draw [decorate,decoration={brace,mirror,amplitude=10pt},xshift=0pt,yshift=0pt](8.1,-.5) -- (11.9,-.5)node [black,midway,yshift=-19pt] {\footnotesize $I$};
\node at (15,2) {$(\theta,\theta^3)$};
\end{tikzpicture}
\end{center}
\vspace{2ex}

\noindent Then $\theta^3$ agrees with $\theta'$ except that the blocks $\{a_j,b_j,o_j\},\{c_j,d_j,r_j\}$ for $j\in J$ of $\theta'$ have been replaced by the blocks $\{a_j,d_j,o_j\}, \{c_j,b_j,r_j\}$ for $j\in J$ of $\theta^3$. So $\theta^3$ is the $J$-twist $\theta'_{J, \xi}$. Since $(\theta,\theta')\sim_3(\theta,\theta^2)$ and $(\theta,\theta^2)\sim_3(\theta,\theta^3)$, we have $(\theta,\theta')\approx_3(\theta,\theta'_{J, \xi})$ as required. 
\end{proof}

We continue as in the 2-atom setting. In the following, we assume that $\theta$ is a fixed regular relation that has three blocks that we call $T$, $M$, and $B$ for top, middle and bottom. The relations $\theta'$ with $(\theta,\theta')$ a factor pair are exactly those 3-relations where each block of $\theta'$ has one element of $T$, one element of $M$ and one element of $B$. A typical situation is shown below. 

\begin{center}
\begin{tikzpicture}
\draw (0,0) -- (12,0); \draw (0,1) -- (12,1); \draw (0,2) -- (12,2);
\foreach \x in {0,1,2,3,4,5,6,7,8,9,10,11,12} { \draw (\x,0) -- (\x,2); }
\foreach \x in {0,1,2,3,4,5,6,7,8,9,10,11,12} { \draw[fill] (\x, 0) circle [radius=0.04]; \draw[fill] (\x ,1) circle [radius=0.04]; \draw[fill] (\x ,2) circle [radius=0.04];}
\node at (14,1) {$(\theta,\theta')$};
\node at (-.7,2) {$T$}; \node at (-.7,0) {$B$}; \node at (-.7,1) {$M$};
\end{tikzpicture}
\end{center}
\vspace{1ex}

\begin{defn}
Let $(\theta,\theta')$ be a 3-atom where the blocks of $\theta$ are $T$, $M$, and $B$. Then for each permutation $\sigma$ of $M$ and each permutation $\rho$ of $B$ define the 3-relations $\sigma(\theta')$ and $\rho(\theta')$ as follows. 
\begin{align*}
\sigma(\theta') \,\,\mbox{ has blocks }&\mbox{ $\,\{x,\sigma(y),z\}\,$ where $\{x,y,z\}$ is a block of $\theta'$ and $y\in M$} \\
\rho(\theta') \,\,\mbox{ has blocks }&\mbox{ $\,\{x,y,\rho(z)\}\,$ where $\{x,y,z\}$ is a block of $\theta'$ and $z\in B$}
\end{align*}
\vspace{-1ex}
\end{defn}

Next we establish an analog of Proposition~\ref{big 2}. 

\begin{prop}
Let $(\theta,\theta')$ be a 3-atom where the blocks of $\theta$ are $T,M,B$. If $\delta$ is an involution of $M$ whose set of fixed points has the same cardinality as $M$, then $(\theta,\theta')\approx(\theta,\delta(\theta'))$. 
\label{big 3}
\end{prop}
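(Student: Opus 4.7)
The plan is to mirror the proof of Proposition~\ref{big 2} in the 3-atom setting, with Proposition~\ref{twist 27} playing the role that Proposition~\ref{twist} played there. The key observation is that if we can build a special 27-enumeration $\xi$ of $X$ in which the 2-element orbits of $\delta$ on $M$ are precisely the pairs $\{b_j,d_j\}$ for $j\in J$, then by Definition~\ref{nail 27} the $J$-twist $\theta'_{J,\xi}$ agrees with $\delta(\theta')$: inside the middle block it swaps $b_j\leftrightarrow d_j$ for each $j\in J$ and fixes everything else, which is exactly the action of $\delta$ on $\theta'$.

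To build $\xi$, I would first enumerate the 2-element orbits of $\delta$ by a set $J$, labeling the $j$-th orbit as $\{b_j,d_j\}\subseteq M$. Since $(\theta,\theta')$ is a factor pair and $\theta'$ is a 3-relation, each $b_j$ lies in a unique block of $\theta'$, determining $a_j\in T$ and $o_j\in B$ with $\{a_j,b_j,o_j\}$ a block of $\theta'$; similarly $d_j$ determines $c_j,r_j$. Distinctness of these labels across different orbits is automatic from the factor-pair property. The hypothesis $|\mathrm{Fix}(\delta)|=|M|$ together with the fact that $M$ is infinite (as $\theta'$ is a 3-relation on an infinite $X$) allows us to split the fixed-point set of $\delta$ into a portion of size $7|J|$, grouped into $7$-tuples $\{f_j,h_j,l_j,m_j,n_j,z_j,\omega_j\}$ for $j\in J$, and a second portion of cardinality $|M|$, grouped into $9$-tuples $\{b_k,d_k,f_k,h_k,l_k,m_k,n_k,z_k,\omega_k\}$ indexed by a set $K$ with $|K|=|M|$. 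The factor-pair property then provides the unique elements of $T$ and $B$ needed to complete each 27-block label $\{A_i,\ldots,\Omega_i\}$ for $i\in I=J\cup K$, producing a bijection $\xi:\{A_i,\ldots,\Omega_i:i\in I\}\to X$.

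With $\xi$ in hand I would define $(\gamma,\gamma')$ by declaring $\gamma$ to have blocks $\{a_i:i\in I\},\ldots,\{\omega_i:i\in I\}$ and $\gamma'$ to have blocks $\{a_i,\ldots,\omega_i\}$ for each $i\in I$. By construction $\xi$ satisfies all the conditions of Definition~\ref{special 27} for being compatible with $(\theta,\theta')$ being under $(\gamma,\gamma')$. Since $|K|=|M|=|I|$, Proposition~\ref{twist 27} applies and yields $(\theta,\theta')\approx_3(\theta,\theta'_{J,\xi})$. By the opening observation $\theta'_{J,\xi}=\delta(\theta')$, giving $(\theta,\theta')\approx_3(\theta,\delta(\theta'))$.

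The main obstacle is bookkeeping rather than deep ideas: one must verify that the chosen partition of the fixed-point set has enough room for both the $7|J|$ extra middle positions accompanying the orbits indexed by $J$ and the $|M|$ positions accompanying $K$. This reduces to a short cardinality check distinguishing the cases $|J|$ finite (where all fixed points lie in the $K$-portion up to a finite correction) and $|J|=|M|$ (where cardinal arithmetic for the infinite cardinal $|M|$ yields $|M|=7|J|+|M|$); the rest is a direct translation of the 2-atom argument to the 27-element template.
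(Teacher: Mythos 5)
Your proposal is correct and follows essentially the same route as the paper's proof: enumerate the $2$-element orbits of $\delta$ by $J$, absorb the fixed points into $7$-tuples for $J$ and $9$-tuples for a set $K$ with $|K|=|M|=|I|$, use the factor-pair property to complete the labels from $T$ and $B$ into a special $27$-enumeration $\xi$ with $(\gamma,\gamma')$ defined exactly as you state, and observe that $\theta'_{J,\xi}=\delta(\theta')$ so that Proposition~\ref{twist 27} gives the result. Your closing cardinality check (splitting the cases $|J|$ finite versus $|J|=|M|$) is a point the paper passes over silently, and it is handled correctly.
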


\begin{proof}
The proof follows that of Proposition~\ref{big 2}. Enumerate the 2-element orbits of $\delta$ with a set $J$, then label the elements of the $j^{th}$ orbit as $b_j$ and $d_j$. The remaining elements of $M$ are fixed points of $\delta$. Split these into two sets, one of cardinality 7 times that of $J$, and the other infinite. Enumerate the elements of this first set as $f_j,h_j,l_j,m_j,n_j,z_j,\omega_j$ $(j\in J)$. Split this second set of fixed points into infinitely many 9-tuples, enumerate these 9-tuples over some set $K$ that is disjoint from $J$, and then enumerate the elements of the $k^{th}$ 9-tuple as $b_k,d_k,f_k,h_k,l_k,m_k,n_k,z_k,\omega_k$. Set $I=J\cup K$. Then the elements of $M$ are enumerated as 

\[ b_i,d_i,f_i,h_i,l_i,m_i,n_i,z_i,\omega_i \quad (i\in I)\]
\vspace{-1ex}

Enumerate the elements of $T$ as $a_i,c_i,e_i,g_i,i_i,j_i,k_i,x_i,y_i$ $(i\in I)$ and enumerate the elements of $B$ as $o_i,r_i,p_i,s_i,u_i,v_i,w_i,q_i,t_i$ $(i\in I)$ in the unique way possible with $\{a_i,b_i,o_i\},\ldots,\{y_i,\omega_i,t_i\}$ blocks of $\theta'$. Define $\gamma$ and $\gamma'$ so that the blocks of $\gamma$ are $\{a_i:i\in I\},\ldots,\{\omega_i:i\in I\}$ and the blocks of $\gamma'$ are $\{a_i,\ldots,\omega_i:i\in I\}$. 

Define $\xi:\{A_i,\ldots,\Omega_i:i\in I\}\to X$ by setting $\xi(A_i)=a_i,\ldots,\xi(\Omega_i)=\omega_i$. The constructions provided yield that $\xi$ is a special 27-enumeration compatible with $(\theta,\theta')$ being under $(\gamma,\gamma')$ with the labeling exactly as in Figure~\ref{3-atom setup}. Further, the $J$-twist $\theta'_{J, \xi}$ is exactly $\delta(\theta')$. The result then follows by Proposition~\ref{twist 27}. 
\end{proof}

We come now to the aim of this section.

\begin{thm}
For any two 3-atoms $(\theta,\theta')$ and $(\theta,\theta'')$ with the same first component, 

$$(\theta,\theta')\approx_3(\theta,\theta'')$$
\label{david}
\end{thm}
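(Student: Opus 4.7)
The plan is to adapt the proof of Theorem~\ref{tiger} to the $3$-atom setting, the only substantive novelty being that a $3$-atom $(\theta,\phi)$ with fixed first coordinate $\theta$ is determined by \emph{two} permutations instead of one. Let the blocks of $\theta$ be $T,M,B$, each of which is infinite (of cardinality $|X|$), since the factor-pair condition forces the block size of $\theta$ to equal the number of blocks of $\theta'$. Using $\theta'$ to canonically identify elements across levels, a routine analog of Proposition~\ref{all} shows that the $3$-atoms $(\theta,\phi)$ with first coordinate $\theta$ are in bijection with $\Perm(M)\times\Perm(B)$: the pair $(\sigma,\rho)$ corresponds to the $3$-relation $\sigma\rho(\theta')$ with blocks $\{t,\sigma(m),\rho(b)\}$ for each block $\{t,m,b\}$ of $\theta'$. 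Thus it suffices to prove $(\theta,\theta')\approx_3(\theta,\sigma\rho(\theta'))$ for all such $\sigma$ and $\rho$.

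First I would prove $(\theta,\theta')\approx_3(\theta,\sigma(\theta'))$ for every $\sigma\in\Perm(M)$, following the proof of Theorem~\ref{tiger} almost verbatim. Set $S_M=\{\sigma\in\Perm(M):(\theta,\theta')\approx_3(\theta,\sigma(\theta'))\}$, let $U_M$ be the set of involutions of $M$ whose fixed-point set has cardinality $|M|$, and let $N_M$ be the subgroup of $\Perm(M)$ generated by all involutions. Induction on $n$, with base case supplied by Proposition~\ref{big 3}, shows that every product of $n$ elements of $U_M$ lies in $S_M$. Proposition~\ref{involutions}, applied to $M$, writes every involution of $M$ as a product of two elements of $U_M$, so $S_M\supseteq N_M$. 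Since $N_M$ is a normal subgroup of $\Perm(M)$ containing fixed-point-free permutations, the Baer--Schreier--Ulam theorem forces $N_M=\Perm(M)$, whence $S_M=\Perm(M)$.

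Next, the same argument with $B$ in place of $M$ yields $(\theta,\theta')\approx_3(\theta,\rho(\theta'))$ for every $\rho\in\Perm(B)$. The point is that the three levels of $\theta$ play symmetric roles in the special $27$-enumeration of Definition~\ref{special 27}: reassigning which alphabet letters label the middle versus the bottom (and correspondingly adjusting Definition~\ref{nail 27} so that the $J$-twist swaps a pair of bottom elements rather than a pair of middle elements) produces a mirror version of Propositions~\ref{twist 27} and~\ref{big 3} in which $B$ plays the role of $M$. With this $B$-analog of Proposition~\ref{big 3} in hand, the inductive / Baer--Schreier--Ulam argument runs identically in $\Perm(B)$.

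To conclude, let $(\theta,\theta'')$ be an arbitrary $3$-atom with first coordinate $\theta$ and write $\theta''=\sigma\rho(\theta')$ with $\sigma\in\Perm(M)$ and $\rho\in\Perm(B)$. The third paragraph gives $(\theta,\theta')\approx_3(\theta,\rho(\theta'))$, and applying the result of the second paragraph to the $3$-atom $(\theta,\rho(\theta'))$ in place of $(\theta,\theta')$ yields $(\theta,\rho(\theta'))\approx_3(\theta,\sigma\rho(\theta'))=(\theta,\theta'')$. Transitivity of $\approx_3$ then gives $(\theta,\theta')\approx_3(\theta,\theta'')$. The main obstacle is verifying that the apparent asymmetry between $M$ and $B$ in Definition~\ref{special 27} is purely a labeling artifact, so that Proposition~\ref{big 3} genuinely has a $B$-version; once this symmetry check is granted, the proof is a faithful transcription of the $2$-atom argument of Theorem~\ref{tiger}.
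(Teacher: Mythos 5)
Your proposal is correct and follows essentially the same route as the paper: parametrize the 3-atoms over $\theta$ by a permutation of $M$ and a permutation of $B$, run the Theorem~\ref{tiger} argument (Proposition~\ref{big 3}, products of large-fixed-point involutions, Baer--Schreier--Ulam) to get all of $\Perm(M)$, invoke the level symmetry to do the same for $\Perm(B)$, and compose by transitivity of $\approx_3$ --- the paper disposes of the bottom level with the same one-line observation that the arguments apply equally to $B$. The only cosmetic difference is order of composition: you move by $\rho\in\Perm(B)$ first and then apply the $M$-result at the shifted atom $(\theta,\rho(\theta'))$, while the paper fixes $\sigma\in\Perm(M)$ first and defines the set $T_\sigma$ of bottom permutations relative to $(\theta,\sigma(\theta'))$.
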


\begin{proof}
We follow the proof of Theorem~\ref{tiger}. Suppose $(\theta,\theta')$ is a 3-atom and set 

\[ S = \{\sigma\in\Perm(M):(\theta,\theta')\approx_3(\theta,\sigma(\theta'))\}\]
\vspace{-1ex}

\noindent Let $U$ be the set of all involutions of $M$ whose set of fixed points has cardinality equal to that of $M$. Then as in the proof of Theorem~\ref{tiger}, Proposition~\ref{big 3} shows that $S$ contains all composites of members of $U$. Therefore, as in the proof of Theorem~\ref{tiger}, $S$ contains the subgroup of $\Perm(M)$ that is generated by $U$, and therefore $S=\Perm(M)$. 

Now fix any $\sigma\in\Perm(M)$ and consider 

\[ T_\sigma = \{\rho\in\Perm(B):(\theta,\sigma(\theta'))\approx_3(\theta,\rho(\sigma(\theta')))\}\]
\vspace{-1ex}

\noindent Our arguments apply also to the bottom level $B$ of $\theta$, and so $T_\sigma=\Perm(B)$. 
Then for any $\theta''$ with $(\theta,\theta'')$ a 3-atom, there is a permutation $\sigma$ of $M$ and a permutation $\rho$ of $B$ with $\rho(\sigma(\theta'))=\theta''$. Since $\sigma\in S$ then $(\theta,\theta')\approx_3(\theta,\sigma(\theta'))$, and since $\rho\in T_\sigma$, then $(\theta,\sigma(\theta'))\approx_3(\theta,\rho(\sigma(\theta')))$. Since $\approx_3$ is by definition transitive, $(\theta,\theta')\approx(\theta,\theta'')$. 
\end{proof}


\section{From automorphisms of $\Fact(X)$ to automorphisms of $\Eq*(X)$}

In this section we show that if $\alpha$ is an automorphism of $\Fact(X)$ and $(\theta,\theta')$, $(\theta,\theta'')$ are factor pairs with the same first component and $X/\theta$ infinite, then $\alpha(\theta,\theta')$ and $\alpha(\theta,\theta'')$ also have the same first component. This is Proposition~\ref{f}, restated as Proposition~\ref{larry} below. It is used to define from $\alpha$ an automorphism $\beta$ of $\Eq*(X)$. This is Proposition~\ref{g}, restated as Proposition~\ref{g'} below. 

\begin{defn}
Let $M$ and $N$ be sets where the elements of $M$ are ordered pairs and the elements of $N$ are ordered pairs. A map $f:M\to N$ is said to preserve first coordinates if whenever $(m,m')$ and $(m,m'')$ are elements of $M$ with the same first coordinate, then their images $f(m,m')$ and $f(m,m'')$ also have the same first coordinate. 
\end{defn}

In Proposition~\ref{dopey} it was shown that for each permutation $\sigma\in\Perm(X)$ there is an automorphism $\Gamma(\sigma)$ of $\Fact(X)$ given by $\Gamma(\sigma)(\theta,\theta') = (\sigma(\theta),\sigma(\theta'))$. Clearly each $\Gamma(\sigma)$ preserves first coordinates. We will required the following result that is of interest in its own right as it shows that a fragment of transitivity holds for the automorphism group of $\Fact(X)$. 

\begin{prop}
Let $(\theta,\theta')$ and $(\phi,\phi')$ be factor pairs of $X$ and let $\kappa$ and $\lambda$ be cardinals where
\vspace{1ex}

\begin{enumerate}
\item both $\theta$ and $\phi$ have $\kappa$ equivalence classes
\item both $\theta'$ and $\phi'$ have $\lambda$ equivalence classes
\end{enumerate}
\vspace{1ex}

\noindent Then there is a permutation $\sigma\in\Perm(X)$ with $\Gamma(\sigma)$ mapping $(\theta,\theta')$ to $(\phi,\phi')$. Further, any two factor pairs $(\theta,\theta')$ and $(\theta,\theta'')$ with the same first coordinate satisfy these two conditions, and in this case $\sigma$ can be chosen so that $(x,\sigma(x))\in\theta$ for each $x\in X$. 
\label{move}
\end{prop}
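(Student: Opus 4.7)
The plan is to use the factor pair condition to identify $X$ with a Cartesian product $I\times J$ coordinate-wise, and then transport one such identification to the other by defining $\sigma$ to send one enumeration of $X$ to the other.

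First I would enumerate the $\theta$-blocks as $B_i\ (i\in I)$ with $|I|=\kappa$ and the $\theta'$-blocks as $C_j\ (j\in J)$ with $|J|=\lambda$. Since the canonical map $X\to X/\theta\times X/\theta'$ is a bijection, for each pair $(i,j)$ there is a unique element $x_{i,j}\in B_i\cap C_j$, and by construction $(x_{i,j},x_{p,q})\in\theta$ iff $i=p$, and $(x_{i,j},x_{p,q})\in\theta'$ iff $j=q$. Because by hypothesis $|X/\phi|=\kappa$ and $|X/\phi'|=\lambda$, I can reuse the same index sets $I$ and $J$ to enumerate the $\phi$-blocks as $B_i'$ and the $\phi'$-blocks as $C_j'$, producing elements $y_{i,j}\in B_i'\cap C_j'$ with the analogous property.

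Define $\sigma:X\to X$ by $\sigma(x_{i,j})=y_{i,j}$; this is a bijection since both systems enumerate $X$. For the main verification, $(x,y)\in\theta$ iff $x$ and $y$ share a first index iff $\sigma(x)$ and $\sigma(y)$ share a first index iff $(\sigma(x),\sigma(y))\in\phi$; thus $\sigma(\theta)=\phi$, and an identical argument gives $\sigma(\theta')=\phi'$. Hence $\Gamma(\sigma)(\theta,\theta')=(\phi,\phi')$.

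For the supplementary claim, note that if $(\theta,\theta')$ and $(\theta,\theta'')$ share their first coordinate, then $|X/\theta|$ is common to both, and Proposition~\ref{zoop} forces $|X/\theta'|=|X/\theta''|$, since both are equal to the common cardinality of the $\theta$-blocks. So the two cardinality hypotheses are automatic. To secure the refinement $(x,\sigma(x))\in\theta$, I would carry out the construction above using the \emph{same} enumeration $B_i\ (i\in I)$ of the $\theta$-blocks for both factor pairs; then $x_{i,j}$ and $y_{i,j}$ both lie in $B_i$, giving $(x_{i,j},\sigma(x_{i,j}))\in\theta$ automatically. I do not expect any serious obstacle: the essential content is that a factor pair amounts to a coordinatization of $X$, and two coordinatizations with matching dimensions are relabellings of one another. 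The only point requiring care is to reuse the same first-coordinate enumeration when the two factor pairs share their first entry, in order to obtain the fiber-preserving permutation.
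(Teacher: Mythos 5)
Your proposal is correct and follows essentially the same route as the paper: coordinatize $X$ twice via the bijections $X\to X/\theta\times X/\theta'$ and $X\to X/\phi\times X/\phi'$, define $\sigma(x_{i,j})=y_{i,j}$, and for the supplementary claim reuse the same enumeration of the $\theta$-blocks (the paper's ``taking $F_j=H_j$'') so that $x_{i,j}$ and $\sigma(x_{i,j})$ lie in a common $\theta$-class. Your use of Proposition~\ref{zoop} to see that the cardinality hypotheses are automatic for two factor pairs sharing a first coordinate is exactly the paper's justification as well.
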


\begin{proof}
Enumerate the classes of $\theta$ as $F_j$ $(j\in \lambda)$, the classes of $\theta'$ as $E_i$ $(i\in\kappa)$, the classes of $\phi$ as $H_j$ $(j\in\lambda)$, and the classes of $\phi'$ as $G_i$ $(i\in\kappa)$. The elements of $X$ can be enumerated as $x_{i,j}$ where $x_{i,j}$ is the unique element of $E_i\cap F_j$. They can also be enumerated as $y_{i,j}$ where $y_{i,j}$ is the unique element of $G_i\cap H_j$. Define $\sigma(x_{i,j})=y_{i,j}$. Then $\sigma$ is a permutation of $X$. Note 
\begin{align*}
\,E_i=\{x_{i,j}:j\in\lambda\}\quad & \quad \,F_j=\{x_{i,j}:i\in\kappa\}\\
G_i=\{y_{i,j}:j\in\lambda\}\quad & \quad H_j=\{x_{i,j}:i\in\kappa\}
\end{align*}
So $\sigma$ maps $E_i$ to $G_i$ and $F_j$ to $H_j$. This implies that $\sigma(\theta)=\phi$ and $\sigma(\theta')=\phi'$, hence $\Gamma(\sigma)$ maps $(\theta,\theta')$ to $(\phi,\phi')$. If we begin with $(\theta,\theta')$ and $(\theta,\theta'')$, then the number of blocks of both $\theta'$ and $\theta''$ is equal to the cardinality of each block of $\theta$, so the two conditions are satisfied. Then in the proof above, taking $F_j=H_j$, we have that $x_{i,j}$ and $\sigma(x_{i,j})=y_{i,j}$ both belong to the equivalence class $F_j=H_j$ of $\theta=\phi$. It follows that $(x_{i,j},\sigma(x_{i,j}))\in\theta$. 
\end{proof}

We now come to the crucial result that required the efforts of the previous section. 

\begin{prop}
If $\alpha$ is an automorphism of $\Fact(X)$ and $(\theta,\theta')$ and $(\theta,\theta'')$ are 3-atoms with the same first coordinate, then $\alpha(\theta,\theta')$ and $\alpha(\theta,\theta'')$ have the same first coordinate. 
\label{nm}
\end{prop}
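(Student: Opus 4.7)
The plan is to leverage Theorem~\ref{david} together with the 27-element classification of Theorem~\ref{27}. First I would reduce the claim to the case $(\theta,\theta')\sim_3(\theta,\theta'')$: Theorem~\ref{david} provides a finite $\sim_3$-chain linking the two 3-atoms, and the condition ``$\alpha$-images share a first coordinate'' is transitive along such a chain, so a single $\sim_3$-step suffices. This places both 3-atoms inside an interval $I=[(\nabla,\Delta),(\gamma,\gamma')]$ with $|X/\gamma|=27$.

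Next I would exploit the fact that $\alpha$ restricts to an orthomodular isomorphism of $I$ with an interval of the same form. Setting $(\lambda,\lambda')=\alpha(\gamma,\gamma')$, Proposition~\ref{e'} gives $|X/\lambda|=27$, so $\alpha$ restricts to an orthomodular isomorphism from $I$ onto $I'=[(\nabla,\Delta),(\lambda,\lambda')]$. Composing this restriction with the interval isomorphisms $I\cong\Fact(X/\gamma)$ and $I'\cong\Fact(X/\lambda)$ of Theorem~\ref{interval} produces an orthomodular isomorphism $\bar\alpha\colon\Fact(X/\gamma)\to\Fact(X/\lambda)$ between $\Fact$-structures of two $27$-element sets.

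Now I would invoke Theorem~\ref{27}. Pick any bijection $\mu\colon X/\lambda\to X/\gamma$ and let $\Gamma(\mu)$ denote the evident isomorphism $\Fact(X/\lambda)\to\Fact(X/\gamma)$ sending $(\bar\theta,\bar\theta')$ to $(\mu(\bar\theta),\mu(\bar\theta'))$. Then $\Gamma(\mu)\circ\bar\alpha$ is an automorphism of the finite orthomodular poset $\Fact(X/\gamma)$ and so, by Theorem~\ref{27}, equals $\Gamma(\rho)$ for some permutation $\rho$ of $X/\gamma$. Consequently $\bar\alpha$ is induced by the bijection $\mu^{-1}\circ\rho\colon X/\gamma\to X/\lambda$, and any such bijection-induced map visibly sends factor pairs with common first coordinate to factor pairs with common first coordinate. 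Because the interval isomorphisms of Theorem~\ref{interval} are built from the quotient maps $X\to X/\gamma$ and $X\to X/\lambda$, they match up ``common first coordinate in $\Fact(X)$'' with ``common first coordinate in $\Fact(X/\gamma)$'' (and similarly for $\lambda$), so the desired conclusion about $\alpha(\theta,\theta')$ and $\alpha(\theta,\theta'')$ follows.

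The genuinely hard work is already done: the main obstacle was Theorem~\ref{david}, whose proof in Section~7 required the geometric $J$-twist construction inside a special $27$-enumeration and the appeal to the Baer--Schreier--Ulam theorem to pass from twist-induced involutions to arbitrary permutations. The only additional subtlety in the plan above is verifying that the interval isomorphism of Theorem~\ref{interval} transports the ``common first coordinate'' relation correctly, but this is immediate once one notes that a factor pair $(\phi,\phi')\leq(\gamma,\gamma')$ has $\gamma\subseteq\phi$, so that $\phi$ descends to a well-defined quotient $\phi/\gamma$ on $X/\gamma$ which serves as the first coordinate of the image factor pair.
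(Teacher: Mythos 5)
Your proposal is correct and follows the paper's argument in all essentials: reduce via Theorem~\ref{david} and the transitivity of $\approx_3$ to a single $\sim_3$-step inside an interval $[(\nabla,\Delta),(\gamma,\gamma')]$ with $|X/\gamma|=27$, use Proposition~\ref{e'} to see the image interval has the same form, pass through the interval isomorphisms of Theorem~\ref{interval} while checking that they preserve first coordinates, and invoke Theorem~\ref{27} to conclude that the induced map on the 27-element quotient is bijection-induced and hence preserves common first coordinates. The only deviation is organizational: where the paper applies Proposition~\ref{move} to find a permutation $\sigma$ of $X$ with $\Gamma(\sigma)$ carrying $\alpha(\gamma,\gamma')$ back to $(\gamma,\gamma')$, so that Theorem~\ref{27} is applied to an automorphism of a single interval, you instead normalize at the quotient level with an arbitrary bijection $\mu\colon X/\lambda\to X/\gamma$ between the two 27-element sets --- a harmless and slightly more economical step that bypasses Proposition~\ref{move} at this point in the argument.
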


\begin{proof}
By Theorem~\ref{david}, $(\theta,\theta')\approx_3(\theta,\theta'')$. Since $\approx_3$ is the transitive closure of $\sim_3$, it is enough to show that if $(\theta,\theta')$ and $(\theta,\theta'')$ are 3-atoms with $(\theta,\theta')\sim_3(\theta,\theta'')$, then $\alpha(\theta,\theta')$ and $\alpha(\theta,\theta'')$ have the same first coordinate. Suppose that $(\theta,\theta')\sim_3(\theta,\theta'')$. Then by definition there is a factor pair $(\gamma,\gamma')$ with $|X/\gamma\,|=27$ and both $(\theta,\theta')$ and $(\theta,\theta'')$ belonging to the interval $[(\nabla,\Delta),(\gamma,\gamma')]$. 

The proof of Theorem~\ref{interval} in \cite{Taewon} gives mutually inverse isomorphisms $\Sigma$ and $\Phi$ 
\begin{center}
\begin{tikzpicture}[scale = 2.0]
\node at (0,0) {$[(\nabla,\Delta),(\gamma,\gamma')]$};
\node at (2,0) {$\Fact(X/\gamma)$};
\draw [->] (0.8,0.04) to (1.4,0.04);
\draw [->] (1.4,-0.02) to (0.8,-0.02);
\node at (1.1,0.2) {$\Sigma$};
\node at (1.1,-0.17) {$\Phi$};
\end{tikzpicture}
\end{center}
\noindent In the setting of sets, these isomorphisms are as follows. 
\vspace{-1ex}

\begin{align*}
\Sigma(\theta,\theta') &= (\theta/\gamma,(\theta'\circ\gamma)/\gamma)\\ 
\Phi(\mu,\nu) &= (\hat{\mu},\hat{\nu}\cap\gamma')
\end{align*}
\vspace{-1ex}

\noindent Here $\theta/\gamma=\{(x/\gamma,y/\gamma):(x,y)\in\theta\}$ and $\hat{\mu}=\{(x,y):(x/\gamma,y/\gamma)\in\mu\}$. The key point is that both $\Sigma$ and $\Phi$ preserve first coordinates. 

Let $\alpha(\gamma,\gamma') = (\gamma_1,\gamma_1')$. Since $X/\gamma$ has 27 elements, $\gamma'$ is a 27-relation. So by Proposition~\ref{e'}, $\gamma_1'$ is a 27-relation, hence $X/\gamma_1$ has 27 elements. So $\gamma$ and $\gamma_1$ both have 27 equivalence classes, and the cardinalities of the sets of equivalence classes of $\gamma'$ and $\gamma_1'$ both equal the cardinality of $X$. So By Proposition~\ref{move} there is a permutation $\sigma$ of $X$ with $\Gamma(\sigma)(\gamma_1,\gamma_1')=(\gamma,\gamma')$. 

Set 
\begin{align*}
\delta & \,\,=\,\, \Gamma(\sigma)\circ\alpha\\
I&\,\,=\,\, [(\nabla,\Delta),(\gamma,\gamma')] 
\end{align*}
\vspace{-1ex}

\noindent Then $\delta$ is an automorphism of $\Fact(X)$ that maps $(\gamma,\gamma')$ to itself. Since $I$ consists of all elements of $\Fact(X)$ beneath $(\gamma,\gamma')$, we have that the restriction $\delta|I$ is an automorphism of the \ts{omp} $I$.  Using the mutually inverse isomorphisms between $I$ and $\Fact(X/\gamma)$ described above, let $\beta$ be the automorphism of $\Fact(X/\gamma)$ given by 

\[\beta = \Sigma\,\circ\,(\delta|I)\,\circ\,\Phi\]
\vspace{-1ex}

Since $X/\gamma$ is a 27-element set, Theorem~\ref{27} shows that $\beta$ is given by a permutation of $X/\gamma$, that is, there is a permutation $\rho$ of $X/\gamma$ with $\beta=\Gamma(\rho)$. In particular, $\beta$ preserves first coordinates. Then $\delta|I=\Phi\circ\beta\circ\Sigma$ is a composite of maps that preserve first coordinates, hence $\delta|I$ preserves first coordinates. Since $(\theta,\theta')$ and $(\theta,\theta'')$ belong to $I$, then $\delta(\theta,\theta')$ and $\delta(\theta,\theta'')$ have the same first coordinates, and since $\Gamma(\sigma^{-1})$ preserves first coordinates, then $(\Gamma(\sigma^{-1})\circ\delta)(\theta,\theta')$ and $(\Gamma(\sigma^{-1})\circ\delta)(\theta,\theta'')$ have the same first coordinates. But $\Gamma(\sigma^{-1})$ is the inverse of $\Gamma(\sigma)$, hence $\Gamma(\sigma^{-1})\circ\delta = \alpha$. This provides the result. 
\end{proof}

\begin{prop}
If $\alpha$ is an automorphism $\alpha$ of $\Fact(X)$ and $(\theta,\theta')$ and $(\theta,\theta'')$ are elements of $\Fact(X)$ with the same first coordinate and $X/\theta$ infinite, then $\alpha(\theta,\theta')$ and $\alpha(\theta,\theta'')$ have the same first coordinate. 
\label{larry}
\end{prop}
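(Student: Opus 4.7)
The plan is to reduce Proposition~\ref{larry} to Proposition~\ref{nm} by using Proposition~\ref{b'}(2) to express the first coordinates of $\alpha(\theta,\theta')$ and $\alpha(\theta,\theta'')$ as intersections indexed by 3-atoms, and then identifying those intersections.

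Write $\alpha(\theta,\theta')=(\lambda_1,\lambda_1')$ and $\alpha(\theta,\theta'')=(\lambda_2,\lambda_2')$. A routine preliminary: Propositions~\ref{lunk'}(2) and~\ref{zoop} together show that ``$X/(\mbox{first coordinate})$ is infinite'' is equivalent to ``not of finite height,'' a property that $\alpha$ preserves; hence $X/\lambda_1$ and $X/\lambda_2$ are both infinite, so Proposition~\ref{b'}(2) applies with $p=3$ to all four factor pairs in sight.

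The heart of the argument is the claim that the set
\[\mathcal{F}=\{\phi:\mbox{some }(\phi,\phi_1')\le(\theta,\theta')\mbox{ is a 3-atom of }\Fact(X)\}\]
is independent of the second coordinate: more precisely, I claim $\mathcal{F}$ equals the set of $\phi$ with $\theta\subseteq\phi$ such that $\phi$ is a regular equivalence with three equivalence classes, a description in which $\theta'$ does not appear. The forward inclusion follows directly from Propositions~\ref{zoop} and~\ref{lunk'}(1). For the reverse, given such a $\phi$, the quotient $\phi/\theta$ is a regular equivalence on $X/\theta$ with three classes, so Proposition~\ref{stinky} produces a factor pair of $X/\theta$ whose first coordinate is $\phi/\theta$; the isomorphism of Theorem~\ref{interval} pulls this back to a factor pair $(\phi,\phi_1')\le(\theta,\theta')$ whose second coordinate is a 3-relation (by Proposition~\ref{zoop}), hence a 3-atom. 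Running the identical construction starting from $(\theta,\theta'')$ produces the corresponding 3-atom beneath $(\theta,\theta'')$, confirming the claim.

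Finally, for each $\phi\in\mathcal{F}$ fix 3-atoms $(\phi,\phi_1^\phi)\le(\theta,\theta')$ and $(\phi,\phi_2^\phi)\le(\theta,\theta'')$. Since they share first coordinate $\phi$, Proposition~\ref{nm} provides a common first coordinate $\mu_\phi$ for $\alpha(\phi,\phi_1^\phi)$ and $\alpha(\phi,\phi_2^\phi)$. Because $\alpha$ and $\alpha^{-1}$ carry 3-atoms to 3-atoms (Proposition~\ref{crud}), $\alpha$ restricts to bijections between 3-atoms beneath $(\theta,\theta')$ and 3-atoms beneath $(\lambda_1,\lambda_1')$, and likewise with primes replaced by double primes. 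Substituting into Proposition~\ref{b'}(2) then yields
\[\lambda_1\,=\,\bigcap_{\phi\in\mathcal{F}}\mu_\phi\,=\,\lambda_2,\]
the desired equality of first coordinates. The main obstacle is the $\theta'$-independence of $\mathcal{F}$; it is brief once Theorem~\ref{interval} and Proposition~\ref{stinky} are in hand, but without that translation between $[(\nabla,\Delta),(\theta,\theta')]$ and $\Fact(X/\theta)$ the combinatorial description of admissible first coordinates is not immediately transparent.
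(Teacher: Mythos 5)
Your skeleton (pull $\lambda_1,\lambda_2$ down to 3-atoms via Proposition~\ref{b'}(2), identify first coordinates of images via Proposition~\ref{nm}, intersect) is sound, but the pivotal claim fails: the set $\mathcal{F}$ is \emph{not} the set of regular three-class equivalences $\phi\supseteq\theta$, and the step that breaks is your assertion that $\phi/\theta$ is regular on $X/\theta$. Regularity does not pass to quotients over infinite sets, because infinite cardinal arithmetic permits $\nu_1\kappa=\nu_2\kappa$ with $\nu_1\neq\nu_2$. Concretely, let $X=\mathbb{N}\times\mathbb{N}$, let $\theta$ relate $(i,j)$ and $(i',j')$ iff $j=j'$ and $\theta'$ iff $i=i'$, so $(\theta,\theta')$ is a factor pair with $X/\theta$ infinite. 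Let $\phi\supseteq\theta$ have the three blocks $\mathbb{N}\times\{0\}$, $\mathbb{N}\times\{1\}$, and $\mathbb{N}\times\{j:j\geq 2\}$. All three blocks are countably infinite, so $\phi$ is regular and lies in your described set, yet $\phi/\theta$ has blocks of sizes $1,1,\aleph_0$. Moreover $\phi\notin\mathcal{F}$: if $(\phi,\psi)\leq(\theta,\theta')$ were a 3-atom, then $\phi\cap\psi=\Delta$ and $\phi\circ\psi=\nabla$ force each block of $\psi$ to be a transversal of the three $\phi$-blocks, while $\psi\subseteq\theta'$ forces it to lie in a single column $\{i\}\times\mathbb{N}$, which meets the first two $\phi$-blocks only in $(i,0)$ and $(i,1)$; hence every $(i,j)$ with $j\geq 2$ would need the block $\{(i,0),(i,1),(i,j)\}$, putting $(i,0)$ in infinitely many distinct blocks of $\psi$. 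So the reverse inclusion of your characterization is false, and with it the existence of the 3-atoms $(\phi,\phi_1^\phi)\leq(\theta,\theta')$ and $(\phi,\phi_2^\phi)\leq(\theta,\theta'')$ that your final intersection $\lambda_1=\bigcap_{\phi\in\mathcal{F}}\mu_\phi=\lambda_2$ requires. (Your forward inclusion, via Propositions~\ref{zoop} and~\ref{lunk'}, is correct.)

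The gap is repairable inside your strategy: characterize $\mathcal{F}$ as $\{\phi:\theta\subseteq\phi$ and $\phi/\theta$ is regular with three classes on $X/\theta\}$. The forward inclusion then follows because the isomorphism of Theorem~\ref{interval} sends a 3-atom below $(\theta,\theta')$ to an atom of $\Fact(X/\theta)$ with first coordinate $\phi/\theta$, regular by Proposition~\ref{zoop}; the reverse inclusion is exactly your Proposition~\ref{stinky}-plus-Theorem~\ref{interval} argument; and the description still makes no mention of $\theta'$, which is all your endgame needs. It is worth noting that the paper avoids any intrinsic characterization of admissible first coordinates: it instead uses Proposition~\ref{move} to produce a permutation $\sigma$ with $\Gamma(\sigma)(\theta,\theta')=(\theta,\theta'')$ and $(x,\sigma(x))\in\theta$ for all $x$, shows $\sigma(\phi)=\phi$ for every $\phi\supseteq\theta$, and thereby transports a witness $\phi'$ with $(\phi,\phi')\leq(\theta,\theta')$ to the witness $\sigma(\phi')$ below $(\theta,\theta'')$ --- sidestepping precisely the quotient-regularity subtlety on which your version founders; the remainder of the paper's proof (pulling 3-atoms back through $\alpha^{-1}$, Proposition~\ref{crud}, Proposition~\ref{nm}, and the intersection of Proposition~\ref{b'}(2)) coincides with yours.
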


\begin{proof}
By Proposition~\ref{move} there is $\sigma\in\Perm(X)$ with $\Gamma(\sigma)(\theta,\theta')=(\theta,\theta'')$ and $(x,\sigma(x))\in\theta$ for each $x\in X$. We use this to show that for any equivalence relation $\phi$

\begin{equation}
\tag{A}
\label{zigger}
\mbox{there is $\phi'$ with $(\phi,\phi')\leq(\theta,\theta')$ iff there is $\phi''$ with $(\phi,\phi'')\leq(\theta,\theta'')$}
\end{equation}
\vspace{-1ex}

Suppose that $\phi'$ is a relation with $(\phi,\phi')\leq (\theta,\theta')$. Then $\Gamma(\sigma)(\phi,\phi')\leq\Gamma(\sigma)(\theta,\theta')$, giving that $(\sigma(\phi),\sigma(\phi')) \leq (\theta,\theta'')$. Since $(\phi,\phi')\leq(\theta,\theta')$, we have $\theta\subseteq\phi$. We claim that $\sigma(\phi)=\phi$. Suppose $(x,y)\in\phi$. Using the property that $(x,\sigma(x))\in\theta$ for each $x\in X$ we have 

\[\sigma (x)\, \theta\, x\, \phi\, y\, \theta\, \sigma (y)\] 
\vspace{-1ex}

\noindent Then since $\theta\subseteq\phi$, this shows that $(\sigma(x),\sigma(y)) \in \phi$. So $\sigma(\phi)\subseteq\phi$. Note that $\sigma^{-1}$ satisfies $(\sigma^{-1}(y),y)\in\theta$ for each $y\in X$, as can be seen by substituting $y=\sigma(x)$ into the corresponding property for $\sigma$. Then if $(x,y)\in\phi$, 
\[ \sigma^{-1}(x)\,\theta\, x\,\phi\, y\,\theta\,\sigma^{-1}(y) \]
\vspace{-1ex}

\noindent Since $\theta\subseteq\phi$, this shows that $(\sigma^{-1}(x),\sigma^{-1}(y))\in\phi$, and hence that $(x,y)\in\sigma(\phi)$. So $\phi\subseteq\sigma(\phi)$, showing equality. Since $\phi=\sigma(\phi)$ and $(\sigma(\phi),\sigma(\phi'))\leq(\theta,\theta'')$, there exists $\phi''=\sigma(\phi')$ with $(\phi,\phi'')\leq(\theta,\theta'')$. This gives one implication in (\ref{zigger}) and symmetry gives the other. 

We now show that $\alpha(\theta,\theta')$ and $\alpha(\theta,\theta'')$ have the same first coordinate. Suppose that $\alpha(\theta,\theta') = (\gamma,\gamma')$ and $\alpha(\theta,\theta'')=(\mu,\mu')$. Since $X/\theta$ is infinite, $\theta$ has infinitely many blocks. So the blocks of both $\theta'$ and $\theta''$ must contain infinitely many elements. So neither $\theta,\theta'$ is an $n$-relation for any natural number $n$. Proposition~\ref{e'} gives that neither $\gamma',\mu'$ is an $n$-relation for any $n$, and therefore that both $X/\gamma$ and $X/\mu$ are infinite. 

Since both $X/\gamma$ and $X/\mu$ are infinite, we may apply Proposition~\ref{b'} in the case of 3-atoms. This gives 
\[\gamma = \bigcap\{\xi:\mbox{ there is a 3-atom }(\xi,\xi')\leq(\gamma,\gamma')\}\]
\[\mu = \bigcap\{\epsilon:\mbox{ there is a 3-atom }(\epsilon,\epsilon')\leq(\mu,\mu')\}\]
\vspace{0ex}

Suppose there is a 3-atom $(\xi,\xi')\leq(\gamma,\gamma')$ and let $\alpha^{-1}(\xi,\xi')=(\phi,\phi')$. Then $(\phi,\phi')\leq(\theta,\theta')$. So by (\ref{zigger}) there is $\phi''$ with $(\phi,\phi'')\leq(\theta,\theta'')$. Since $(\xi,\xi')$ is a 3-atom and $\alpha^{-1}$ is an automorphism, Proposition~\ref{crud} yields that $(\phi,\phi')$ is a 3-atom. So $X/\phi$ has 3 elements, and therefore $(\phi,\phi'')$ is also a 3-atom. Since the 3-atoms $(\phi,\phi')$ and $(\phi,\phi'')$ have the same first coordinate, by Proposition~\ref{nm} $\alpha(\phi,\phi')$ and $\alpha(\phi,\phi'')$ have the same first coordinate. But $\alpha(\phi,\phi')=(\xi,\xi')$, so its first coordinate is $\xi$, hence the first coordinate of $\alpha(\phi,\phi'')$ is also $\xi$. So there is $\xi''$ with $\alpha(\phi,\phi'')=(\xi,\xi'')$. By Proposition~\ref{crud} $(\xi,\xi'')$ is a 3-atom, and since $(\phi,\phi'')\leq(\theta,\theta'')$ we have $(\xi,\xi'')\leq(\mu,\mu')$. So each $\xi$ that occurs in the meet for $\gamma$ also occurs in the meet for $\mu$. So $\mu\leq\gamma$, and $\gamma\leq\mu$ by symmetry. 
\end{proof}

We now come to the culmination of our considerations of the structure of $\Fact(X)$, the following result which is a restatement of Theorem~\ref{g}. Here we recall from Definition~\ref{fool} that $\Eq*(X)$ is the poset set of all finite regular equivalence relations on $X$.

\begin{thm}
For any automorphism $\alpha$ of $\Fact(X)$ there is an automorphism $\beta$ of $\Eq*(X)$ with the following properties.
\vspace{1ex}
\begin{enumerate}
\item $\theta$ is an $n$-relation iff $\beta(\theta)$ is an $n$-relation
\item if $(\theta,\theta')\in\Fact(X)$ with $\theta\in\Eq*(X)$, then $\beta(\theta)$ is the first component of $\alpha(\theta,\theta')$
\end{enumerate}
\label{g'}
\end{thm}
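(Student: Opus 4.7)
The plan is to define $\beta:\Eq*(X)\to\Eq*(X)$ directly by setting $\beta(\theta)$ equal to the first coordinate of $\alpha(\theta,\theta')$ for any choice of complement $\theta'$, and then to check in turn that this is well-defined, lands in $\Eq*(X)$, has properties (1) and (2), and is an order isomorphism. Existence of a complement $\theta'$ follows from Proposition~\ref{stinky} since $\theta\in\Eq*(X)$ is regular. For well-definedness, observe that any $\theta\in\Eq*(X)$ is an $n$-relation for some natural number $n$, and since $X$ is infinite, $\theta$ has infinitely many blocks, so $X/\theta$ is infinite. Thus if $(\theta,\theta')$ and $(\theta,\theta'')$ are two factor pairs with the same first coordinate $\theta$, Proposition~\ref{larry} applies to give that $\alpha(\theta,\theta')$ and $\alpha(\theta,\theta'')$ share their first coordinate, which is the candidate value $\beta(\theta)$. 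Property (2) is then immediate from the definition, and property (1) is exactly Proposition~\ref{e'} applied to first coordinates; in particular this shows $\beta(\theta)\in\Eq*(X)$.

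The main step is order preservation. Suppose $\theta_1\subseteq\theta_2$ in $\Eq*(X)$, where $\theta_1$ is an $n_1$-relation and $\theta_2$ is an $n_2$-relation. Each block of $\theta_2$ is a union of blocks of $\theta_1$, all of size $n_1$, so $n_1$ divides $n_2$; write $m=n_2/n_1$. Enumerate the blocks of $\theta_2$ by a set $K$, inside each $\theta_2$-block enumerate its $m$ contained $\theta_1$-blocks by $T=\{1,\dots,m\}$, and inside each $\theta_1$-block enumerate its elements by $S=\{1,\dots,n_1\}$, giving $X=\{x_{k,t,s}:k\in K,\,t\in T,\,s\in S\}$. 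Define $\theta_1'$ to have blocks $\{x_{k,t,s}:k\in K,\,t\in T\}$ for each fixed $s$, and $\theta_2'$ to have blocks $\{x_{k,t,s}:k\in K\}$ for each fixed pair $(t,s)$. A direct check using the indexing shows that $(\theta_1,\theta_1')$ and $(\theta_2,\theta_2')$ are factor pairs, that $\theta_2'\subseteq\theta_1'$, and that $\theta_1$ and $\theta_2'$ permute (both composites $\theta_1\circ\theta_2'$ and $\theta_2'\circ\theta_1$ have blocks $\{x_{k,t,s}:k\in K,s\in S\}$ indexed by the single coordinate $t$). By the definition of $\le$ in $\Fact(X)$ this gives $(\theta_2,\theta_2')\le(\theta_1,\theta_1')$. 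Applying the orthomodular poset automorphism $\alpha$ preserves $\le$, so $\alpha(\theta_2,\theta_2')\le\alpha(\theta_1,\theta_1')$, and comparing first coordinates yields $\beta(\theta_1)\subseteq\beta(\theta_2)$.

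For bijectivity and order preservation of $\beta^{-1}$, apply the same construction to the automorphism $\alpha^{-1}$ to obtain a map $\beta^*:\Eq*(X)\to\Eq*(X)$ with the analogous properties; it suffices to show $\beta^*\circ\beta=\mathrm{id}=\beta\circ\beta^*$. Given $\theta\in\Eq*(X)$, choose $\theta'$ with $(\theta,\theta')\in\Fact(X)$ and write $\alpha(\theta,\theta')=(\beta(\theta),\mu')$. Then $(\beta(\theta),\mu')$ is itself a factor pair, so $\mu'$ is a complement of $\beta(\theta)$ available for computing $\beta^*(\beta(\theta))$; taking that choice, $\beta^*(\beta(\theta))$ equals the first coordinate of $\alpha^{-1}(\beta(\theta),\mu')=(\theta,\theta')$, which is $\theta$. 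The symmetric identity holds by the same reasoning applied in reverse, so $\beta$ is a bijection with $\beta^{-1}=\beta^*$, and the order-preservation argument above, repeated for $\alpha^{-1}$, shows $\beta^{-1}$ is order preserving.

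The only genuinely non-routine step is the construction in the middle paragraph: every other part is either bookkeeping or a direct quotation of Propositions~\ref{stinky}, \ref{e'}, and~\ref{larry}. The potential obstacle is verifying compatibility of the chosen complements $\theta_1',\theta_2'$ — specifically that $\theta_1$ and $\theta_2'$ permute — but the ternary coordinate system $(k,t,s)$ reduces this to matching indices, making the required Boolean sublattice of $\Eqq(X)$ of pairwise permuting relations (in the sense of Proposition~\ref{duck}) essentially manifest.
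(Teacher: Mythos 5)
Your proposal is correct and follows essentially the same route as the paper: define $\beta(\theta)$ as the first coordinate of $\alpha(\theta,\theta')$, get well-definedness from Proposition~\ref{larry}, property (1) from Proposition~\ref{e'}, and order preservation via the identical triple-indexed construction of compatible complements $\theta_1',\theta_2'$ giving $(\theta_2,\theta_2')\leq(\theta_1,\theta_1')$. Your explicit verification that $\beta^*\circ\beta=\mathrm{id}$ merely spells out what the paper asserts in one line when it applies the same argument to $\alpha^{-1}$.
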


\begin{proof}
The relations $\theta\in\Eq*(X)$ are those where there is a natural number $n$ so that every block has $n$ elements. Such $\theta$ must have infinitely many blocks, so have $X/\theta$ infinite. As with every regular equivalence relation, there is an equivalence relation $\theta'$ with $(\theta,\theta')$ a factor pair. Theorem~\ref{larry} shows that if $\theta'$ and $\theta''$ are such that $(\theta,\theta')$ and $(\theta,\theta'')$ are factor pairs, then $\alpha(\theta,\theta')$ and $\alpha(\theta,\theta'')$ have the same first component. Proposition~\ref{e'} shows that if $\theta$ is an $n$-relation, then so is the first component of $\alpha(\theta,\theta')$. Therefore we can define a map $\beta$ from $\Eq*(X)$ to itself by setting $\beta(\theta)$ to be the first component of any $\alpha(\theta,\theta')$, and this map $\beta$ has the property that if $\theta$ is an $n$-relation, then so is $\beta(\theta)$. The same argument applied to the automorphism $\alpha^{-1}$ produces another map from $\Eq*(X)$ to itself that is the inverse of $\beta$, so $\beta$ is a bijection from $\Eq*(X)$ to itself. 

It remains to show that $\beta$ is order preserving. Suppose that $\theta,\phi\in\Eq*(X)$ with $\theta\subseteq\phi$. Then there are natural numbers $m,n$ so that $\theta$ is an $n$-relation and $\phi$ is an $m$-relation. Since $\theta\subseteq\phi$ each block of $\phi$ is a union of blocks of $\theta$, and therefore $m$ is a multiple of $n$. Say $m=pn$. Enumerate the blocks of $\phi$ as $F_i$ $(i\in\kappa)$ for some infinite cardinal $\kappa$. Each block $F_i$ for $i\in\kappa$ will contain exactly $p$ blocks of $\theta$, enumerate these blocks of $\theta$ as $E_{i,j}$ for $1\leq j\leq p$. For $i\in\kappa$ and $1\leq j\leq p$ the block $E_{i,j}$ of $\theta$ has $n$ elements. Enumerate these as $x_{i,j,k}$ where $1\leq k\leq n$. Now define $\theta'$ and $\phi'$ by requiring 
\begin{align*}
\phi'\mbox{ has blocks }&\quad \{x_{i,j,k}:i\in\kappa\}\mbox{ for each }1\leq j\leq p, 1\leq k\leq n\\
\theta'\mbox{ has blocks }&\quad \{x_{i,j,k}:i\in\kappa,1\leq j\leq p\}\mbox{ for each } 1\leq k\leq n
\end{align*}

Since the intersection of blocks of $\theta$ and $\theta'$ contains exactly one element, and the intersection of blocks of $\phi$ and $\phi'$ contains exactly one element, $(\theta,\theta')$ and $(\phi,\phi')$ are factor pairs. We have $\theta\subseteq\phi$ and each block of $\phi'$ is contained in a block of $\theta'$ so $\phi'\subseteq\theta'$. That $\theta\circ\phi'=\phi'\circ\theta$ follows since both expressions give the equivalence relation whose blocks are $\{x_{i,j,k}:i\in\kappa,1\leq k\leq n\}$ for each $1\leq j\leq p$. So $(\phi,\phi')\leq(\theta,\theta')$. Then $\alpha(\phi,\phi')\leq\alpha(\theta,\theta')$. This implies that the first component $\beta(\theta)$ of $\alpha(\theta,\theta')$ is contained in the first component of $\alpha(\phi,\phi')$. So $\beta(\theta)\subseteq\beta(\phi)$, giving that $\beta$ is order preserving, hence an automorphism of $\Eq*(X)$. 
\end{proof}

\section{From automorphisms of $\Eq*(X)$ to maps of subsets of $X$}

We now begin the second half of the program. The eventual aim is to begin with an automorphism $\beta$ of $\Eq*(X)$ that takes $n$-relations to $n$-relations, and to produce from this a permutation $\sigma$ of $X$ so that $\beta(\theta)=\sigma(\theta)$ for each $\theta\in\Eq*(X)$. In this section we take the first steps and construct from $\beta$ and any 2-relation $\pi$ a bijection from the subsets of $X$ that are the union of blocks of $\pi$ to the subsets of $X$ that are the union of blocks of $\beta(\pi)$. 

\begin{defn} 
For a set $Y$, a subset $S$ of $\Eq*(Y)$ and $k,n$ natural numbers, let 
\vspace{1ex}
\begin{align*}
\U_k(S) &= \{\phi\in\Eq*(Y):\mbox{$\phi$ is an $k$-relation and $\theta\subseteq\phi$ for each }\theta\in S\}\\
\LL_n(S) &= \{\phi\in\Eq*(Y):\mbox{$\phi$ is an $n$-relation and $\phi\subseteq\theta$ for each }\theta\in S\}
\end{align*}
\end{defn}

A few remarks about usage are worthwhile. When $S$ is a single element, or small finite set of relations, we often use shorthand such as $\U_k(\theta)$ or $\U_k(\theta_1,\theta_2)$ with obvious meaning. This notion will be applied not only to $\Eq*(X)$ for our given infinite set $X$, but also to subsets $Y$ of~$X$. We will not explicitly mention the ambient set $X$ or $Y$ when using $\U_k$ or $\LL_n$ since this is clear from the context. 

\begin{prop} 
For natural numbers $k,n$ 
\vspace{1ex}
\begin{enumerate}
\item $S\subseteq T\Rightarrow \U_k(T)\subseteq\U_k(S)$
\item $S\subseteq T\Rightarrow \LL_n(T)\subseteq\LL_n(S)$
\item if $S$ is a set of $n$-relations, then $S\subseteq\LL_n\!\U_k(S)$
\item if $S$ is a set of $k$-relations, then $S\subseteq\U_k \LL_n(S)$
\item if $S$ is a set of $n$-relations, then $\U_k\LL_n\!\U_k(S)=\U_k(S)$
\item if $S$ is a set of $k$-relations, then $\,\LL_n\!\U_k \LL_n(S)=\LL_n(S)$
\item $S=\LL_n\!\U_k(S) \Leftrightarrow$ there exists a set $T$ of $k$-relations with $S=\LL_n(T)$
\item $S=\U_k \LL_n(S)\Leftrightarrow$ there exists $T$ with $S=\U_k(T)$
\item $\U_k(\bigcup_I S_i) = \bigcap_I \U_k(S_i)$
\item $\LL_n(\bigcup_I S_i) = \bigcap_I\LL_n(S_i)$
\end{enumerate}
\label{n}
\end{prop}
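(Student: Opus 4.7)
The plan is to recognize the whole proposition as a standard Galois-connection calculation between the families of $n$-relations and $k$-relations in $\Eq*(Y)$, with $\U_k$ and $\LL_n$ playing the roles of the two polarities. Each item follows by the usual formal manipulations, and I would prove them essentially in the order listed, since the later items depend on the earlier ones.

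First, items (1) and (2) are immediate from the definitions: the condition ``$\theta\subseteq\phi$ for every $\theta\in S$'' is a conjunction of containments, so enlarging the indexing set from $S$ to $T$ only adds constraints and can only shrink the resulting set of $\phi$'s. Items (9) and (10) follow by the same observation, since the condition indexed by $\bigcup_I S_i$ decomposes as the conjunction over $i\in I$ of the conditions indexed by each $S_i$. For (3), if $S$ consists of $n$-relations and $\theta\in S$, then by the definition of $\U_k$ every $\phi\in\U_k(S)$ satisfies $\theta\subseteq\phi$, so $\theta\in\LL_n\U_k(S)$. Item (4) is the dual statement with the roles of $\U_k$ and $\LL_n$ interchanged.

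For (5), begin with (3) applied to $S$ to get $S\subseteq\LL_n\U_k(S)$, then use the monotonicity-reversal in (1) to obtain $\U_k\LL_n\U_k(S)\subseteq\U_k(S)$. The reverse inclusion $\U_k(S)\subseteq\U_k\LL_n\U_k(S)$ comes from (4) applied to $\U_k(S)$, which is a set of $k$-relations by definition. Item (6) is proved symmetrically by swapping the roles of $\U_k$ and $\LL_n$. Note that the hypotheses on $S$ in (3)--(6) are genuinely needed: $\LL_n\U_k(S)$ contains only $n$-relations, so the inclusion in (3) cannot hold unless $S$ already lies in the $n$-relations, and similarly for (4).

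Finally, for (7), the forward direction is trivial: if $S=\LL_n\U_k(S)$ take $T=\U_k(S)$, which consists of $k$-relations by construction. For the converse, if $S=\LL_n(T)$ for some set $T$ of $k$-relations, then by (6) we have $\LL_n\U_k\LL_n(T)=\LL_n(T)$, which rewrites as $\LL_n\U_k(S)=S$. Item (8) is proved by the symmetric argument using (5) in place of (6); observe that no restriction on $T$ is needed in (8) because $\U_k(T)$ is automatically a set of $k$-relations regardless of the nature of $T$. There is no real obstacle here --- the main thing to keep straight is which side of the Galois connection each hypothesis lives on, and to use the appropriate closure identity (5) or (6) at the final step.
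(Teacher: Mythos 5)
Your proposal is correct and follows essentially the same route as the paper's proof: items (1), (2), (9), (10) directly from the definitions, (3) by the same element-chase (with (4) by symmetry), (5) assembled from (3), (1), and (4) applied to $\U_k(S)$, and (7) with the witness $T=\U_k(S)$ in the forward direction and the closure identity (6) in the converse. Your added observations --- that the $n$-relation/$k$-relation hypotheses in (3)--(6) are genuinely needed and that (8) requires no restriction on $T$ since $\U_k(T)$ consists of $k$-relations automatically --- are accurate and consistent with the paper's statement.
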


\begin{proof}
(1) and (2) are obvious from the form of the definition. For (3), if $\theta\in S$, then by definition any $\phi\in\U_k(S)$ contains $\theta$, and since every relation in $S$ is an $n$-relation, then $\theta$ is an $n$-relation contained in every member of $\U_k(S)$, hence $\theta\in\LL_n\!\U_k(S)$. Then (4) is by symmetry with (3). For (5), part (3) gives that $S\subseteq\LL_n\!\U_k(S)$, and then part (1) gives $\U_k\LL_n\U_k(S)\subseteq\U_k(S)$. But by definition $\U_k(S)$ is a set of $k$-relations, so by part (4) $\U_k(S)\subseteq\U_k\LL_n\U_k(S)$. This establishes (5) and (6) is by symmetry. For (7) ``$\,\Rightarrow$'' if $S=U_k\LL_n(S)$, put $T=\U_k(S)$. For ``$\Leftarrow$'' if $S=\LL_n(T)$ then $\LL_n\!\U_k(S)=\LL_n\!\U_k\LL_n(T)$, and by (6) $\LL_n\!\U_k\LL_n(T) = \LL_n(T) = S$. This establishes (7), and~(8) is by symmetry. Finally (9) and (10) are simple consequences of the definitions. 
\end{proof}

These considerations should be familiar from the construction of MacNeille completions. We extend the notion of normal ideals to this setting in the following. 

\begin{defn}
A subset $S\subseteq\Eq*(X)$ is an $(n,k)$-normal ideal if $S=\LL_n\!\U_k(S)$. 
\end{defn}

\begin{prop}
For $n$, $k$ natural numbers and $S$ a set of $n$-relations, $\LL_n\!\U_k(S)$ is the smallest $(n,k)$-normal ideal that contains $S$. Call this the  $(n,k)$-normal ideal generated by $S$. 
\label{p}
\end{prop}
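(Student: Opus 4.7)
The plan is to prove this directly from the Galois-connection properties already catalogued in Proposition~\ref{n}, rather than unpacking the definitions. The three things to check are (i) that $S\subseteq \LL_n\!\U_k(S)$, (ii) that $\LL_n\!\U_k(S)$ is itself an $(n,k)$-normal ideal, and (iii) that it sits inside every $(n,k)$-normal ideal containing $S$.

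First I would observe (i) is immediate: the hypothesis that every element of $S$ is an $n$-relation lets us apply Proposition~\ref{n}(3) to conclude $S\subseteq \LL_n\!\U_k(S)$.

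For (ii), set $T = \U_k(S)$, so $\LL_n\!\U_k(S) = \LL_n(T)$. By definition every member of $\LL_n(T)$ is an $n$-relation, so Proposition~\ref{n}(6) applies and gives
\[
\LL_n\!\U_k\bigl(\LL_n(T)\bigr)=\LL_n(T),
\]
which is exactly the statement $\LL_n\!\U_k\bigl(\LL_n\!\U_k(S)\bigr)=\LL_n\!\U_k(S)$. Hence $\LL_n\!\U_k(S)$ is an $(n,k)$-normal ideal.

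For (iii), suppose $J$ is any $(n,k)$-normal ideal with $S\subseteq J$; I would apply the order-reversing behaviour of $\U_k$ and $\LL_n$ from Proposition~\ref{n}(1),(2) in succession. From $S\subseteq J$ we get $\U_k(J)\subseteq \U_k(S)$, and from this $\LL_n\!\U_k(S)\subseteq \LL_n\!\U_k(J)=J$, where the last equality uses that $J$ is $(n,k)$-normal. This proves minimality.

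I do not anticipate any real obstacle here; the work was done in Proposition~\ref{n}, and the present statement is the standard ``closure operator from a Galois connection'' bookkeeping specialized to our setting. The only care required is to keep track of which sets are known to consist of $n$-relations (so that parts~(3) and~(6) of Proposition~\ref{n} can be invoked), and this is guaranteed by the hypothesis on $S$ and by the definition of $\LL_n$.
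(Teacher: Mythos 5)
Your proof is correct and follows essentially the same route as the paper: containment of $S$ via Proposition~\ref{n}(3), normality via the idempotence identity (the paper cites part~(7), whose relevant direction is itself proved from the part~(6) identity you invoke directly), and minimality via parts~(1) and~(2) together with normality of the competing ideal. One small citation fix: the hypothesis needed to apply part~(6) to $T=\U_k(S)$ is that $T$ is a set of $k$-relations — immediate from the definition of $\U_k$ — not, as you write, that $\LL_n(T)$ consists of $n$-relations.
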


\begin{proof}
There is a set $T=\U_k(S)$ of $k$-relations with $\LL_n\!\U_k(S) = \LL_n(T)$, so by Proposition~\ref{n} part~(7) $\LL_n\!\U_k(S)$ is an $(n,k)$-normal ideal. Since $S$ is a set of $n$-relations, then by part (3) of Proposition~\ref{n} this $(n,k)$-normal ideal $\LL_n\!\U_k(S)$ contains $S$. If $N$ is another such, then $S\subseteq N$ implies by Proposition~\ref{n} parts (1) and (2) that $\LL_n\!\U_k(S)\subseteq\LL_n\!\U_k(N) = N$. 
\end{proof}

\begin{prop}
For any 2-relation $\theta$ on an infinite set $X$ we have $\LL_2\!\U_4(\theta)=\{\theta\}$. 
\label{q}
\end{prop}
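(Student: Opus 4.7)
\noindent\textbf{Proof proposal for Proposition~\ref{q}.}
The containment $\{\theta\}\subseteq\LL_2\U_4(\theta)$ is immediate from Proposition~\ref{n}(3) since $\theta$ is a $2$-relation. The substance of the claim is the reverse containment, namely that any $2$-relation $\psi$ which sits below every $4$-relation containing $\theta$ must equal $\theta$.

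The plan is to prove the contrapositive: given a $2$-relation $\psi\neq\theta$, I will construct a single $4$-relation $\phi\supseteq\theta$ witnessing $\psi\not\subseteq\phi$. First I would observe that since $\psi$ and $\theta$ are both $2$-relations and are distinct, there must exist a pair $(x,y)\in\psi\setminus\theta$. Indeed, if every $\psi$-pair were a $\theta$-pair, then each $2$-block of $\psi$ would coincide with a $2$-block of $\theta$ (blocks of size $2$ cannot strictly contain one another), forcing $\psi=\theta$. Let $\{x,x'\}$ and $\{y,y'\}$ be the blocks of $\theta$ containing $x$ and $y$, which are distinct blocks because $(x,y)\notin\theta$.

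Next I would build $\phi$ by pairing up the blocks of $\theta$. Since $X$ is infinite and $\theta$ is a $2$-relation, the set $B$ of blocks of $\theta$ has infinite cardinality $\kappa$, so $|B|=2\kappa$ and $B$ admits a partition into $2$-element subsets. I choose such a partition with the additional property that $\{x,x'\}$ is paired with some block other than $\{y,y'\}$; this is possible because $B$ has at least three elements, so after committing $\{x,x'\}$ to some block $\{z,z'\}\notin\{\{x,x'\},\{y,y'\}\}$ the remaining blocks still form an infinite set and can be paired arbitrarily. Let $\phi$ be the equivalence relation whose blocks are the unions of each pair in this partition. Then each block of $\phi$ has exactly $4$ elements, so $\phi$ is a $4$-relation, and every $\theta$-block sits inside a $\phi$-block, so $\theta\subseteq\phi$; thus $\phi\in\U_4(\theta)$. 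By construction $x$ and $y$ lie in different $\phi$-blocks, so $(x,y)\notin\phi$, and since $(x,y)\in\psi$ we conclude $\psi\not\subseteq\phi$, hence $\psi\notin\LL_2\U_4(\theta)$.

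I do not expect a serious obstacle here; the only subtlety is ensuring the pairing of $\theta$-blocks can avoid the forbidden pair $\{\{x,x'\},\{y,y'\}\}$, and this is cleanly handled by the infinitude of $X$ (equivalently, the infinitude of the set of $\theta$-blocks). The argument uses nothing more than elementary cardinal arithmetic together with the definitions of $\U_4$ and $\LL_2$.
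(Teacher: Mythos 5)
Your proposal is correct and follows essentially the same route as the paper: both construct a $4$-relation in $\U_4(\theta)$ by pairing up the blocks of $\theta$, with the pairing arranged so that a specific witnessing pair is separated. The only cosmetic difference is that you pick $(x,y)\in\psi\setminus\theta$ and keep the $\theta$-blocks of $x$ and $y$ unpaired, while the paper picks $(x,y)\in\theta\setminus\phi$ and separates an auxiliary $\phi$-pair $(x,z)$; these are interchangeable manipulations of the same construction.
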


\begin{proof}
By Proposition~\ref{p} we have $\theta\in\LL_2\!\U_4(\theta)$. Suppose that $\phi$ is a 2-relation that is not equal to $\theta$. Then there are $x,y$ with $(x,y)\in\theta$ and $(x,y)\not\in\phi$. Let $z$ be distinct from $x$ with $(x,z)\in\phi$. Choose two elements $a,b$ distinct from $x,y,z$ with $(a,b)\in\theta$. Then $\{x,y\}$ and $\{a,b\}$ are blocks of $\theta$. Enumerate the blocks of $\theta$ as $E_i$ $(i\in\kappa)$ with $E_0=\{x,y\}$ and $E_1=\{a,b\}$. Form a 4-relation $\gamma$ whose blocks are $F_i$ $(i\in\kappa)$ where $F_i=E_{2i}\cup E_{2i+1}$. Then $\theta\subseteq\gamma$ so $\gamma\in\U_4(\theta)$. But $x,z$ are in different blocks of $\gamma$, so $\phi$ is not a subset of $\gamma$. Thus $\phi\not\in\LL_i\!\U_4(\theta)$. 
\end{proof}

\begin{defn}
For a set $X$, an equivalence relation $\pi$ on $X$ and a natural number $n$, let 
\vspace{1ex}

\begin{enumerate}
\item $\mathcal{P}(n,\pi)$ be the set of all $n$-element subsets of $X$ that are the union of blocks of $\pi$
\item $\mathcal{P}(\pi)$ be the set of all subsets of $X$ that are the union of blocks of $\pi$
\item $\mathcal{P}(n)$ be the set of all $n$-element subsets of $X$
\end{enumerate}
\label{kilnk}
\end{defn}

\begin{defn}
We say that distinct 2-relations $\theta_1,\theta_2$ agree except on the 4-set $A$ if $A$ is a 4-element set and 
\vspace{1ex}

\begin{enumerate}
\item $A$ is the union of two blocks of $\theta_1$ and $A$ is the union of two blocks of $\theta_2$
\item $\theta_1$ and $\theta_2$ are not equal on $A$
\item $\theta_1$ and $\theta_2$ are equal on $X\setminus A$
\end{enumerate}
\end{defn}

The following observation lies at the heart of producing a property that is preserved by an automorphism $\beta$ of $\Eq*(X)$ that connects 2-relations to 4-elements subsets of $X$. This proof of this proposition is self evident from its statement. 

\begin{prop}
Let $\theta_1$ be a 2-relation and $\{a,b\},\{c,d\}$ be blocks of $\theta_1$. Then there are exactly two 2-relations $\theta_2,\theta_3$ that agree with $\theta_1$ except on $A=\{a,b,c,d\}$. These relations agree with $\theta_1$ on $X\setminus A$ and partition $A$ as follows:
\begin{align*}
\theta_2&\quad\mbox{has blocks }\{a,c\},\{b,d\}\\
\theta_3&\quad\mbox{has blocks }\{a,d\},\{b,c\}
\end{align*}
Further, any two of $\theta_1,\theta_2,\theta_3$ agree except on $A$, and $A$ is the unique set for which this is true. 
\label{tater}
\end{prop}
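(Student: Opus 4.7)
The plan is direct combinatorial verification, exploiting the fact that a 4-element set admits exactly three partitions into two 2-element blocks.

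For the existence and enumeration of $\theta_2, \theta_3$, observe that any 2-relation $\theta$ on $X$ which agrees with $\theta_1$ on $X \setminus A$ and has $A$ as a union of two of its blocks is completely determined by its restriction to $A$; that restriction must be a partition of $A$ into two 2-element subsets. The three such partitions of $A = \{a,b,c,d\}$ are $\{\{a,b\},\{c,d\}\}$, $\{\{a,c\},\{b,d\}\}$, and $\{\{a,d\},\{b,c\}\}$. The first reproduces $\theta_1$ while the latter two define precisely the relations $\theta_2, \theta_3$ of the statement. Conversely, gluing either of these two alternative partitions of $A$ onto the restriction of $\theta_1$ to $X \setminus A$ produces a genuine 2-relation on $X$ that agrees with $\theta_1$ except on $A$. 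Hence there are exactly two such $\theta$.

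For the pairwise ``agree except on $A$'' claim, each of $\theta_1, \theta_2, \theta_3$ has the same blocks outside $A$ (namely those blocks of $\theta_1$ distinct from $\{a,b\}$ and $\{c,d\}$), so any two coincide on $X \setminus A$, while their three pairwise distinct restrictions to $A$ guarantee disagreement on $A$ itself. Condition (1) in the definition of ``agree except on $A$'' is immediate since in each case $A$ is the union of two blocks.

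For uniqueness of $A$, suppose distinct 2-relations $\rho, \sigma$ agree except on some 4-set $B$. Any pair on which $\rho$ and $\sigma$ disagree must have both coordinates in $B$, since they are equal on $X \setminus B$ and since $B$ is a union of blocks of both, so no block of either straddles $B$. Because the induced partitions of $B$ by $\rho$ and $\sigma$ are two distinct partitions of a 4-set into pairs, they differ in every element's partner, so every element of $B$ lies in some disagreement pair. Consequently $B$ equals the union of those $\rho$-blocks meeting a disagreement pair, a set intrinsically determined by $\rho$ and $\sigma$. This forces $A$ to be unique. No step presents real difficulty; the only mild point is the observation that two distinct pair-partitions of a 4-set differ in every element's partner, which is exactly what rules out an alternative $A$.
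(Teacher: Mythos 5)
Your proof is correct, and it is exactly the direct verification the paper has in mind: the paper offers no written argument at all, remarking only that ``this proof of this proposition is self evident from its statement.'' Your enumeration via the three pair-partitions of a 4-set, and in particular your observation that two distinct pair-partitions of a 4-set share no block (so the disagreement set is intrinsically all of $A$, forcing uniqueness of $A$), is a faithful and complete spelling-out of what the paper leaves implicit.
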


\begin{prop}
Let $\theta_1,\theta_2$ be distinct 2-relations on the infinite set $X$ and let $N=\LL_2\!\U_4(\theta_1,\theta_2)$ be the $(2,4)$-normal ideal they generate. Then the following are equivalent. 
\vspace{1ex}

\begin{enumerate}
\item $\theta_1,\theta_2$ agree except on a 4-set $A$
\item $N$ has exactly three elements $\theta_1,\theta_2,\theta_3$
\end{enumerate}
\vspace{1ex}

\noindent When these conditions are satisfied $\theta_3$ is the unique 2-relation so that any two of $\theta_1,\theta_2,\theta_3$ agree except on $A$, and the normal ideal $N$ is generated by any two of $\theta_1,\theta_2,\theta_3$. 
\label{iti}
\end{prop}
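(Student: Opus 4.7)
The plan is to analyse the structure of $\U_4(\theta_1,\theta_2)$ and of $N$ by localizing at the ``disagreement set'' $A:=\{x\in X:\text{the }\theta_1\text{-block of }x\neq\text{the }\theta_2\text{-block of }x\}$. A short check shows $A$ is simultaneously a union of $\theta_1$-blocks and of $\theta_2$-blocks and has $|A|\neq 2$; write $\theta_0:=\theta_1|_{X\setminus A}=\theta_2|_{X\setminus A}$. I would first prove a non-crossing lemma: for every $\phi\in\U_4(\theta_1,\theta_2)$ each $4$-block of $\phi$ lies wholly inside $A$ or wholly inside $X\setminus A$. The argument is local: a block containing one element $x\in A$ and one element $y\notin A$ is forced to contain the $\theta_0$-pair of $y$ together with the $\theta_1$-pair of $x$, filling the block, yet leaving the $\theta_2$-partner of $x$ no room. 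The same non-crossing then applies to every $\theta\in N$, so $\theta$ splits as $\theta|_A$ plus $\theta|_{X\setminus A}$.

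Next I would unravel $\phi|_A$ through the red/blue graph $G$ on vertex set $A$ with red edges the $\theta_1$-pairs and blue edges the $\theta_2$-pairs. Each vertex of $A$ has exactly one red and one distinct blue neighbour, so $G$ is a $2$-regular properly $2$-edge-coloured graph; hence $G$ decomposes into disjoint alternating cycles of even length $\geq 4$ and two-way-infinite alternating paths. A $4$-block of $\phi|_A$ must be closed under both the red and blue partner operations, and the smallest such closed set containing any vertex is the whole $G$-component of that vertex. Consequently, $\U_4(\theta_1,\theta_2)\neq\emptyset$ forces every $G$-component to be a $4$-cycle, and in this case $\phi|_A$ is uniquely the partition into these $4$-cycles, while $\phi|_{X\setminus A}$ ranges over all $4$-relations on $X\setminus A$ extending $\theta_0$.

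Combining these observations gives $|N|=3^{\kappa}$ when $\U_4(\theta_1,\theta_2)$ is nonempty, where $\kappa$ is the number of $G$-components: on each $4$-cycle component there are exactly three $2$-partitions with independent choices, and on $X\setminus A$ (which is infinite whenever $|N|$ could equal $3$) Proposition~\ref{q} forces $\theta|_{X\setminus A}=\theta_0$. If instead $\U_4(\theta_1,\theta_2)=\emptyset$ then $N$ equals the set of all $2$-relations on $X$, which is infinite. So $|N|=3$ is equivalent to $\kappa=1$, i.e.\ to $A$ being a single $4$-cycle, which is precisely the statement that $|A|=4$ and $\theta_1,\theta_2$ agree except on $A$, proving (1)$\iff$(2).

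The final claim is then immediate: when $|A|=4$, the three members of $N$ correspond to the three $2$-partitions of $A$ glued to $\theta_0$ outside $A$, which by Proposition~\ref{tater} are exactly the three $2$-relations on $X$ pairwise agreeing except on $A$, uniquely identifying $\theta_3$; rerunning the analysis with any two of $\theta_1,\theta_2,\theta_3$ in place of $\theta_1,\theta_2$ yields the same disagreement set $A$ and hence the same normal ideal $N$. The main obstacle is the red--blue graph analysis together with the non-crossing lemma; once those structural facts are in hand, everything else is a clean count resting on Propositions~\ref{q} and~\ref{tater}.
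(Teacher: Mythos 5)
Your proposal is correct, but it reaches the proposition by a genuinely different route than the paper. The paper argues the two implications directly and locally: for (1)$\Rightarrow$(2) it notes that every 4-relation above $\theta_1,\theta_2$ must have $A$ as a block (so $\theta_3\in N$), and then excludes any further 2-relation $\phi$ by adapting the proof of Proposition~\ref{q} to build a 4-relation upper bound of $\theta_1,\theta_2,\theta_3$ that separates a pair $(x,y)\in\theta_1$ with $x,y\in X\setminus A$ and $(x,y)\notin\phi$; for (2)$\Rightarrow$(1) it extracts a disagreement block $A$ from any 4-relation upper bound and shows that a second, disjoint disagreement block $B$ would force nine 2-relations (three partitions of $A$ times three of $B$) into $N$, contradicting $|N|=3$. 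You instead prove a structure theorem: your non-crossing lemma (whose five-element counting argument is sound, since the $\theta_1$- and $\theta_2$-partners of a point of the disagreement set $A$ lie in $A$ and are distinct) together with the red/blue alternating-cycle decomposition of $A$ pins down $\phi|_A$ uniquely for every $\phi\in\U_4(\theta_1,\theta_2)$ and yields the exact count $|N|=3^{\kappa}$, with $N$ equal to all 2-relations when $\U_4(\theta_1,\theta_2)=\emptyset$; the equivalence and the final uniqueness claims then fall out at once. Your route is longer but buys more: a complete description of $\LL_2\!\U_4(\theta_1,\theta_2)$ for an \emph{arbitrary} pair of distinct 2-relations, an explanation of why the true dichotomy is ``exactly $3^\kappa$'' rather than the paper's ad hoc ``three versus at least nine,'' and a cycle analysis that is essentially the paper's own Proposition~\ref{toe}, which it only develops later for Section~10. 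One caution: the blanket formula $|N|=3^{\kappa}$ invokes Proposition~\ref{q} on $X\setminus A$, which requires $X\setminus A$ infinite; in the degenerate case where $X\setminus A$ consists of exactly two blocks of the common restriction $\theta_0$, the count acquires an extra factor of $3$ (only one pairing exists there, so all three partitions of that 4-set enter $N$). Since in that case $A$ is infinite and hence $\kappa$ is infinite, the stated cardinality is unharmed and your parenthetical correctly confines the use of Proposition~\ref{q} to the only case relevant for $|N|=3$, but a careful write-up should note this exception explicitly.
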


\begin{proof}
(1) ``$\,\Rightarrow\,$'' (2) 
Suppose $A=\{a,b,c,d\}$ where $\{a,b\},\{c,d\}$ are blocks of $\theta_1$ and $\{a,c\},\{b,d\}$ are blocks of $\theta_2$. Let $\theta_3$ be the third relation that agrees with $\theta_1$ except on $A$ as given in Proposition~\ref{tater}. Any 4-relation $\gamma$ that is an upper bound of $\theta_1,\theta_2$ has $A$ as one of its blocks and therefore $\theta_3\subseteq\gamma$. So $\theta_1,\theta_2,\theta_3$ all belong to $N$. Suppose that $\phi$ is a 2-relation distinct from $\theta_1,\theta_2,\theta_3$. Then $\phi$ must differ from $\theta_1$ on $X\setminus A$. So there are $x,y\in X\setminus A$ with $(x,y)\in\theta_i$ for $i=1,2,3$ and $(x,y)\not\in\phi$. An obvious modification of the proof of Proposition~\ref{q} produces a 4-relation $\gamma$ that is an upper bound of $\theta_1,\theta_2,\theta_3$ but does not contain $\phi$. 

(2) ``$\,\Rightarrow\,$'' (1) To have $N$ not be the set of all 2-relations, there must be a 4-relation $\gamma$ that is an upper bound of $\theta_1$ and $\theta_2$. Then each block of $\gamma$ is the union of two blocks of $\theta_1$ and the union of two blocks of $\theta_2$. Since $\theta_1\neq\theta_2$ there is a block $\{a,b\}$ of $\theta_1$ that is not a block of $\theta_2$. Suppose $\{a,c\}$ is the block of $\theta_2$ that contains $a$ and that $A=\{a,b,c,d\}$ is the block of $\gamma$ that contains $a$. Then $\{a,b\},\{c,d\}$ are blocks of $\theta_1$ and $\{a,c\},\{b,d\}$ are blocks of $\theta_2$. If $\theta_1$ and $\theta_2$ do not agree on $X\setminus A$, we can repeat this process to find blocks $\{e,f\},\{g,h\}$ of $\theta_1$ that are disjoint from $A$ with $\{e,g\},\{f,h\}$ blocks of $\theta_2$. Set $B=\{e,f,g,h\}$. Then every 4-relation that contains $\theta_1$ and $\theta_2$ has both $A$ and $B$ as blocks. There are nine 2-relations that agree with $\theta_1$ on $X\setminus (A\cup B)$ and have $A$ as the union of two of their blocks and $B$ as the union of two of their blocks. All of these relations belong to $N$, a contradiction. So $\theta_1$ and $\theta_2$ agree on $X\setminus A$. 

The further comments are obvious from what has been shown. 
\end{proof}

We now connect these notions to the automorphism $\beta$ of $\Eq*(X)$ produced by Theorem~\ref{g'}. We introduce some terminology to make statements more concise. 

\begin{defn}
An automorphism of $\Eq*(X)$ is size preserving if $\theta$ being an $n$-relation implies that $\beta(\theta)$ is an $n$-relation. 
\end{defn}

The poset $\Eq*(X)$ is a labelled poset, meaning that to each element is assigned a natural number $n$. A size preserving automorphism of $\Eq*(X)$ is an automorphism of this labelled poset. It is clear that if $\beta$ is size preserving, then its inverse $\beta^{-1}$ is also size preserving. The automorphism $\beta$ produced by Theorem~\ref{g'} is size preserving. 

\begin{prop}
For a size preserving automorphism $\beta$ of $\Eq*(X)$ and a 2-relation $\pi$ there are bijections
\begin{align*}
\sigma^\pi_4&:\mathcal{P}(4,\pi)\to\mathcal{P}(4,\beta(\pi))\\
\mu^\pi_4&:\mathcal{P}(4,\pi)\to\mathcal{P}(4,\beta^{-1}(\pi))
\end{align*}
\vspace{-1ex}

\noindent with $\sigma^\pi_4$ and $\mu^{\beta(\pi)}_4$ inverses. These maps satisfy for any 2-relation $\theta$ and any $A\in\mc{P}(4,\pi)$ 
\vspace{1.5ex}

\begin{enumerate}
\item $\pi$ and $\theta$ agree except on $A\,\,$ iff $\,\,\beta(\pi)$ and $\beta(\theta)$ agree except on $\sigma^\pi_4(A)$ \vspace{1ex}
\item $\pi$ and $\theta$ agree except on $A\,\,$ iff $\,\,\beta^{-1}(\pi)$ and $\beta^{-1}(\theta)$ agree except on $\mu^\pi_4(A)$
\end{enumerate}
\label{s}
\end{prop}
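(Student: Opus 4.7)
The plan is to lift the combinatorial characterization of ``agreeing except on a 4-set'' provided by Proposition~\ref{iti}: two distinct 2-relations agree except on a 4-set exactly when the $(2,4)$-normal ideal they generate has three elements, in which case the three relations in that ideal pairwise agree except on a common, uniquely determined 4-set. Since $\beta$ is a size-preserving order automorphism of $\Eq*(X)$, it restricts to bijections on $n$-relations for each $n$ and, being order preserving in both directions, commutes with $\U_4$ and $\LL_2$. Consequently $\beta$ carries three-element $(2,4)$-normal ideals of 2-relations to three-element $(2,4)$-normal ideals of 2-relations, and this single observation is the engine driving everything that follows.

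\textbf{Construction of $\sigma^\pi_4$.} Given $A\in\mc{P}(4,\pi)$, Proposition~\ref{tater} furnishes a 2-relation $\theta\neq\pi$ that agrees with $\pi$ except on $A$, and Proposition~\ref{iti} identifies the normal ideal $N=\LL_2\U_4(\pi,\theta)$ as a three-element set whose members pairwise agree except on the single 4-set $A$. The image $\beta(N)$ is again a three-element $(2,4)$-normal ideal of 2-relations containing $\beta(\pi)$, so Proposition~\ref{iti} attaches to it a unique common 4-set, which we define to be $\sigma^\pi_4(A)$. The ``further'' statement of Proposition~\ref{iti} shows $N$ is determined by $A$ alone (the same ideal arises whichever $\theta\neq\pi$ we chose), so $\sigma^\pi_4$ is well-defined; and by the definition of agreeing except on a 4-set, $\sigma^\pi_4(A)$ is a union of two blocks of $\beta(\pi)$, i.e.\ an element of $\mc{P}(4,\beta(\pi))$. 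The map $\mu^\pi_4$ is defined by the identical recipe with $\beta^{-1}$ in place of $\beta$.

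\textbf{Bijectivity and the inverse identity.} Injectivity of $\sigma^\pi_4$ follows because distinct $A\neq A'$ in $\mc{P}(4,\pi)$ yield distinct normal ideals (by Proposition~\ref{tater} they contain different 2-relations), and $\beta$ is injective on sets of 2-relations. For surjectivity and the assertion $\mu^{\beta(\pi)}_4\circ\sigma^\pi_4=\mathrm{id}$, start from $\sigma^\pi_4(A)$ with witness $\beta(\theta)$: applying $\mu^{\beta(\pi)}_4$ uses $\beta^{-1}$ and the ideal $\beta^{-1}(\beta(N))=N$, whose associated 4-set is $A$. A symmetric argument applied to an arbitrary $B\in\mc{P}(4,\beta(\pi))$ shows $\sigma^\pi_4$ hits every such $B$, completing the bijection. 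Properties (1) and (2) are then immediate unwrappings: the hypothesis ``$\pi$ and $\theta$ agree except on $A$'' is exactly the condition that $\LL_2\U_4(\pi,\theta)$ is a three-element ideal with common 4-set $A$, which $\beta$ transports to the corresponding condition with common 4-set $\sigma^\pi_4(A)$; part (2) follows by applying the same reasoning to $\beta^{-1}$.

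The only genuinely substantive point to verify is the opening claim that $\beta$ commutes with $\U_4$ and $\LL_2$, so that it sends $(2,4)$-normal ideals of 2-relations to $(2,4)$-normal ideals of 2-relations of the same cardinality; this is where size preservation (to keep the ``$2$'' and ``$4$'' labels intact) meets order preservation (to translate the inequalities defining $\U_4$ and $\LL_2$). Once that is established, Propositions~\ref{tater} and~\ref{iti} do all remaining work, and no new combinatorics on $X$ itself is needed in this section.
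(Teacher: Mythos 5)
Your proposal is correct and takes essentially the same approach as the paper: both define $\sigma^\pi_4(A)$ by transporting the three-element $(2,4)$-normal ideal $N=\LL_2\!\U_4(\pi,\theta_1)=\{\pi,\theta_1,\theta_2\}$ through the size preserving automorphism $\beta$ and reading off the unique common 4-set via Propositions~\ref{tater} and~\ref{iti}, with well-definedness and properties (1) and (2) extracted from the uniqueness statements there. The only cosmetic difference is that you verify $\mu^{\beta(\pi)}_4\circ\sigma^\pi_4=\mathrm{id}$ directly by tracking $\beta^{-1}(\beta(N))=N$, whereas the paper deduces mutual invertibility from properties (1) and (2); these amount to the same bookkeeping.
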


\begin{proof}
Let $A\in\mc{P}(4,\pi)$. By Proposition~\ref{tater} there are exactly two 2-relations $\theta_1$ and $\theta_2$ that agree with $\pi$ except on $A$, and by Proposition~\ref{iti} $N=\{\pi,\theta_1,\theta_2\}$ is the $(2,4)$-normal ideal generated by any two of its elements. Since $\beta$ is size preserving and an order isomorphism, it follows that $\beta(N) = \{\beta(\pi),\beta(\theta_1),\beta(\theta_2)\}$ is also the $(2,4)$-normal ideal generated by any two of its elements. So by Proposition~\ref{iti} may define 

\[\sigma^\pi_4(A) = B \quad \mbox{ where any two of $\beta(\pi), \beta(\theta_1),\beta(\theta_2)$ agree except on $B$}\]
\vspace{-1ex} 

\noindent This defines a map $\sigma^\pi_4:\mc{P}(4,\pi)\to\mc{P}(4,\beta(\pi))$. Since $\theta_1,\theta_2$ are the only 2-relations that agree with $\pi$ except on $A$ and $\beta(\theta_1),\beta(\theta_2)$ are the only 2-relations that agree with $\beta(\pi)$ except on $\sigma^\pi_4(A)$, property (1) holds. Identical reasoning using the size preserving automorphism $\beta^{-1}$ produces $\mu^\pi_4$ that satisfies (2). Finally, properties (1) and (2) applied to $\sigma^\pi_4$ and $\mu^{\beta(\pi)}_4$ show they are mutual inverses. 
\end{proof}

\begin{prop}
Let $\pi$ be a 2-relation and $A$ and $B$ be distinct sets in $\mc{P}(4,\pi)$. Let $\theta_1,\theta_2$ be the two 2-relations that agree with $\pi$ except on $A$, and let $\phi_1,\phi_2$ be the two 2-relations that agree with $\pi$ except on $B$. Then 
\[\mbox{ $A\cap B=\emptyset\quad$ iff $\quad\U_4(\{\pi,\theta_1,\theta_2,\phi_1,\phi_2\})\neq\emptyset$}\]
\vspace{-3ex}
\label{t}
\end{prop}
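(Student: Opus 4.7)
The plan is to prove both directions constructively, using the fact that any 4-relation above $\pi$ has blocks that are unions of pairs of $\pi$-blocks.

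For the forward direction, I would assume $A\cap B = \emptyset$ and build an explicit upper bound $\gamma \in \U_4(\{\pi,\theta_1,\theta_2,\phi_1,\phi_2\})$. Since $A$ is the union of two blocks of $\pi$ and $B$ is the union of two different blocks of $\pi$, the set $X\setminus(A\cup B)$ is the union of the remaining blocks of $\pi$. Because $X$ is infinite, $\pi$ has infinitely many blocks, so after removing the four pairs in $A\cup B$ there are still infinitely many 2-blocks of $\pi$ in $X\setminus(A\cup B)$; partition this collection of 2-blocks arbitrarily into pairs, and let the union of each such pair be a 4-element block. Define $\gamma$ to be the equivalence relation whose blocks are $A$, $B$, and these 4-element sets. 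Then $\gamma$ is a 4-relation with $\pi\subseteq\gamma$, and because $\theta_i$ (resp. $\phi_i$) agrees with $\pi$ on $X\setminus A$ (resp. $X\setminus B$), and its blocks inside $A$ (resp. $B$) are subsets of $A$ (resp. $B$), we get $\theta_i,\phi_i\subseteq\gamma$.

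For the reverse direction, suppose $\gamma\in\U_4(\{\pi,\theta_1,\theta_2,\phi_1,\phi_2\})$, so $\gamma$ is a 4-relation containing all five 2-relations. Write $A=\{a,b,c,d\}$ with $\{a,b\},\{c,d\}$ the two blocks of $\pi$ whose union is $A$. Because $\theta_1$ is a 2-relation that agrees with $\pi$ outside $A$ but differs inside, by Proposition~\ref{tater} its blocks inside $A$ are a partition of $A$ different from $\{\{a,b\},\{c,d\}\}$; without loss of generality these are $\{a,c\},\{b,d\}$. From $\pi\subseteq\gamma$ we get that $a,b$ lie in a common $\gamma$-block and $c,d$ lie in a common $\gamma$-block; from $\theta_1\subseteq\gamma$ we get that $a,c$ lie in a common $\gamma$-block. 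Transitivity forces $a,b,c,d$ into a single $\gamma$-block, which has exactly four elements since $\gamma$ is a 4-relation, so that block equals $A$. The identical argument applied to $B$ shows $B$ is also a block of $\gamma$. Since distinct blocks of the equivalence relation $\gamma$ are disjoint and $A\neq B$ by hypothesis, we conclude $A\cap B=\emptyset$.

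The only delicate point is the forward construction: one has to verify that infinitely many blocks of $\pi$ remain in $X\setminus(A\cup B)$ so the pairing is possible, and this uses only that $X$ is infinite. The reverse direction is essentially a rigidity argument — once $\pi\subseteq\gamma$ fixes the ``horizontal'' pairings and $\theta_1\subseteq\gamma$ adds a ``transverse'' pairing, the four points of $A$ are forced together and the 4-bound on block size collapses them into one block. No obstacle of real substance appears; the main care needed is bookkeeping of which blocks of $\pi$ sit inside $A$ versus $B$.
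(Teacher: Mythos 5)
Your proof is correct and follows essentially the same route as the paper: in the forward direction the paper likewise builds $\gamma$ with blocks $A$, $B$, and arbitrary pairings of the remaining blocks of $\pi$, and in the reverse direction it observes that any 4-relation bounding all five relations must have both $A$ and $B$ as blocks, forcing disjointness since $A\neq B$. Your write-up merely spells out the rigidity step (transitivity collapsing $a,b,c,d$ into one 4-element block) that the paper leaves implicit, which is a fine addition but not a different argument.
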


\begin{proof}
If $A$ and $B$ are disjoint we can enumerate the blocks of $\pi$ as $E_i$ $(i\in\kappa)$ so that $E_0\cup E_1=A$ and $E_2\cup E_3 = B$. Create a 4-relation $\gamma$ with blocks $F_i$ $(i\in\kappa)$ where $F_0=A$, $F_1=B$, and $F_i=E_{2i}\cup E_{2i+1}$ for each $i\geq 4$. Then $\gamma$ is an upper bound of $\pi,\theta_1,\theta_2,\phi_1,\phi_2$. Conversely, suppose there is a 4-relation $\gamma$ that contains $\pi,\theta_1,\theta_2,\phi_1,\phi_2$. Then $\gamma$ has both $A$ and $B$ as blocks. So $A,B$ must either be equal or disjoint, and we have assumed they are distinct. 
\end{proof}

\begin{cor}
For $A,B\in\mc{P}(4,\pi)$ we have $A\cap B=\emptyset\,$ iff $\,\sigma^\pi_4(A)\cap\sigma^\pi_4(B)=\emptyset$. 
\label{u}
\end{cor}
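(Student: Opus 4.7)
The plan is to chain three equivalences through the automorphism $\beta$, using Proposition~\ref{t} to translate the disjointness of $A$ and $B$ into a purely order-theoretic statement about certain $2$-relations in $\Eq*(X)$, observing that $\beta$ preserves this statement, and then translating back via the definition of $\sigma^\pi_4$. The case $A=B$ is immediate since $\sigma^\pi_4$ is a bijection, so I would assume $A\neq B$ throughout.

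First I would let $\theta_1,\theta_2$ be the two $2$-relations agreeing with $\pi$ except on $A$, and $\phi_1,\phi_2$ the two $2$-relations agreeing with $\pi$ except on $B$, as described in Proposition~\ref{tater}. By Proposition~\ref{t},
\[
A\cap B=\emptyset\quad\Longleftrightarrow\quad \U_4(\{\pi,\theta_1,\theta_2,\phi_1,\phi_2\})\neq\emptyset.
\]

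Next, since $\beta$ is a size-preserving automorphism of $\Eq*(X)$, it restricts to a bijection of the $4$-relations onto the $4$-relations and preserves the containment ordering. Hence for any set $S\subseteq\Eq*(X)$ we have $\U_4(S)\neq\emptyset$ iff $\U_4(\beta(S))\neq\emptyset$. Applying this with $S=\{\pi,\theta_1,\theta_2,\phi_1,\phi_2\}$ shows that the right-hand side above is equivalent to
\[
\U_4(\{\beta(\pi),\beta(\theta_1),\beta(\theta_2),\beta(\phi_1),\beta(\phi_2)\})\neq\emptyset.
\]

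Finally, by Proposition~\ref{s}\,(1), $\beta(\theta_1)$ and $\beta(\theta_2)$ are precisely the two $2$-relations that agree with $\beta(\pi)$ except on $\sigma^\pi_4(A)$, and similarly $\beta(\phi_1),\beta(\phi_2)$ are the two $2$-relations that agree with $\beta(\pi)$ except on $\sigma^\pi_4(B)$. Because $\sigma^\pi_4$ is a bijection and $A\neq B$, we also have $\sigma^\pi_4(A)\neq\sigma^\pi_4(B)$. A second application of Proposition~\ref{t}, now with $\beta(\pi)$ in place of $\pi$ and $\sigma^\pi_4(A),\sigma^\pi_4(B)$ in place of $A,B$, turns the non-emptiness of this upper bound set into the condition $\sigma^\pi_4(A)\cap\sigma^\pi_4(B)=\emptyset$. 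Chaining the three equivalences gives the corollary. I do not anticipate any real obstacle; the substantive combinatorial content already lives in Propositions~\ref{iti}, \ref{s}, and~\ref{t}, and this corollary is a direct functorial consequence.
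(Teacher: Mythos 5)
Your proposal is correct and takes essentially the same route as the paper's proof: both directions rest on Proposition~\ref{t}, together with Proposition~\ref{s}(1) identifying $\beta(\theta_1),\beta(\theta_2)$ and $\beta(\phi_1),\beta(\phi_2)$ as the two 2-relations agreeing with $\beta(\pi)$ except on $\sigma^\pi_4(A)$ and $\sigma^\pi_4(B)$ respectively, with the automorphism transporting 4-relation upper bounds between the two sides. The only difference is cosmetic: you package the converse by observing that $\beta^{-1}$ is also size preserving, so $\U_4(S)\neq\emptyset$ iff $\U_4(\beta(S))\neq\emptyset$, whereas the paper invokes the mirror argument with $\mu^{\beta(\pi)}$ --- which amounts to the same thing.
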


\begin{proof}
We use the same notation as was used in Proposition~\ref{t}. Suppose that $A\cap B=\emptyset$. By Proposition~\ref{t} there is a 4-relation $\gamma$ that is an upper bound of $\pi,\theta_1,\theta_2,\phi_1,\phi_2$. Then $\beta(\gamma)$ is an upper bound of $\beta(\pi),\beta(\theta_1),\beta(\theta_2),\beta(\phi_1),\beta(\phi_2)$. Since $\beta(\pi),\beta(\theta_1),\beta(\theta_2)$ agree except on $\sigma^\pi_4(A)$ and $\beta(\pi),\beta(\phi_1),\beta(\phi_2)$ agree except on $\sigma^\pi_4(B)$, then by Proposition~\ref{t} $\sigma^\pi_4(A)\cap\sigma^\pi_4(B)\neq\emptyset$. For the converse, apply the same reasoning using $\mu^{\beta(\pi)}$. 
\end{proof}

\begin{prop}
Suppose $A$, $B$ and $C$ are distinct elements of $\mc{P}(4,\pi)$ such that each pair of them intersects non-trivially. Let $\theta_1,\theta_2$ be the two 2-relations that agree with $\pi$ except on $A$, let $\phi_1,\phi_2$ be the two 2-relations that agree with $\pi$ except on $B$, and let $\chi_1,\chi_2$ be the two 2-relations that agree with $\pi$ except on $C$. Then

\[\mbox{ $A\cap B\cap C=\emptyset\quad$ iff $\quad\U_6(\{\pi,\theta_1,\theta_2,\phi_1,\phi_2,\chi_1,\chi_2\})\neq\emptyset$}\]
\vspace{0ex}

\noindent Therefore $A\cap B\cap C=\emptyset\,$ iff $\,\sigma_4^\pi(A)\cap\sigma_4^\pi(B)\cap\sigma_4^\pi(C)=\emptyset$.
\label{v1}
\end{prop}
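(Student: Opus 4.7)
First I would unpack the combinatorial structure. Each of $A, B, C$ is a disjoint union of exactly two $\pi$-blocks, so, since blocks of $\pi$ are pairwise disjoint, each pairwise intersection $A \cap B$, $A \cap C$, $B \cap C$ consists of a single shared $\pi$-block. Hence $A \cap B \cap C = \emptyset$ iff these three shared blocks are pairwise distinct, in which case $|A \cup B \cup C| = 6$; otherwise all three pairs share the same $\pi$-block, giving $|A \cup B \cup C|=8$.

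Next I would establish the first equivalence by mimicking the argument in Proposition~\ref{iti}. Any 4-relation containing both $\pi$ and a 2-relation $\theta_i$ that agrees with $\pi$ except on the 4-set $A$ must contain $A$ inside a single block, since together the two pair-partitions of $A$ force all four elements to be identified. Applied to any 6-relation $\gamma \in \U_6(\{\pi,\theta_1,\theta_2,\phi_1,\phi_2,\chi_1,\chi_2\})$, each of $A,B,C$ must lie in a single block of $\gamma$, and the non-trivial pairwise intersections then force all three to lie in the \emph{same} block of $\gamma$. Thus such a $\gamma$ exists iff $|A \cup B \cup C| \le 6$, which by the first paragraph is equivalent to $A \cap B \cap C = \emptyset$. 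When this holds I would construct $\gamma$ explicitly by taking $A \cup B \cup C$ itself as one block and grouping the remaining (infinitely many) $\pi$-blocks into triples to form the other 6-element blocks; when it fails, $|A \cup B \cup C|=8 > 6$ rules out any such upper bound.

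Finally the second equivalence follows by transport through $\beta$. By Proposition~\ref{s}(1) the two 2-relations that agree with $\beta(\pi)$ except on $\sigma_4^\pi(A)$ are exactly $\beta(\theta_1), \beta(\theta_2)$, and analogously for $B$ and $C$; by Corollary~\ref{u} the sets $\sigma_4^\pi(A), \sigma_4^\pi(B), \sigma_4^\pi(C)$ are distinct and pairwise non-trivially intersecting, so the hypotheses of the first equivalence hold on the $\beta(\pi)$ side. Since $\beta$ is a size-preserving order isomorphism of $\Eq*(X)$, it carries 6-relation upper bounds to 6-relation upper bounds, so applying the first equivalence to both $\pi$ and $\beta(\pi)$ (with their respective triples of 4-sets) yields $A \cap B \cap C = \emptyset$ iff $\sigma_4^\pi(A) \cap \sigma_4^\pi(B) \cap \sigma_4^\pi(C) = \emptyset$. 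I expect no real obstacle here; the only care needed is in the constructive direction, checking that the proposed $\gamma$ actually contains all seven prescribed relations, which is immediate from the fact that they all agree with $\pi$ off $A \cup B \cup C$ and that this entire set has been placed inside a single 6-block of $\gamma$.
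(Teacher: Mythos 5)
Your proposal is correct and follows essentially the same route as the paper: the same dichotomy $|A\cup B\cup C|\in\{6,8\}$ driven by the observation that each of $A,B,C$ must lie inside a single block of any common upper bound, the same explicit 6-relation built by taking $A\cup B\cup C$ as one block and grouping the remaining $\pi$-blocks into triples, and the same transport of the hypotheses through $\beta$ via Proposition~\ref{s} and Corollary~\ref{u}. The only (inessential) difference is that you handle both directions of the final equivalence at once by noting $\beta$ is a size-preserving order isomorphism, where the paper argues the two directions separately with $\sigma^\pi_4$ and $\mu^{\beta(\pi)}_4$.
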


\begin{proof}
Since each of $A,B$ and $C$ is a union of two blocks of $\pi$, they are distinct, and any two intersect non-trivially, there are two possibilities: either $A,B,C$ all contain the same block of $\pi$ or $A\cap B\cap C=\emptyset$. In the first case $A\cup B\cup C$ has eight elements. In the second case there must be blocks $E_1,E_2,E_3$ of $\pi$ with $A=E_1\cup E_2$, $B=E_1\cup E_3$ and $C=E_2\cup E_3$, so $A\cup B\cup C$ has six elements. Any 6-relation that contains $\pi,\theta_1,\theta_2,\phi_1,\phi_2,\chi_1,\chi_2$ must have a block that contains $A$, a block that contains $B$, and a block that contains $C$. If the first case happens then there is no 6-relation that is an upper bound of these elements since $A,B,C$ overlap in a 2-element set and their union has eight elements so cannot be contained in a 6-element block. In the second case, form a 6-relation $\gamma$ by letting $A\cup B\cup C$ be one of its blocks, then put the infinitely many blocks of $\pi$ not contained in $A\cup B\cup C$ into groups of three to form the other 6-element blocks of $\gamma$. Then $\gamma$ is an upper bound of $\pi,\theta_1,\theta_2,\phi_1,\phi_2,\chi_1,\chi_2$.
\end{proof}

\begin{prop}
If $A$, $B$ and $C$ are distinct elements of $\mc{P}(4,\pi)$ such that each pair of them intersects non-trivially, then 
\[\mbox{$A\cap B\cap C=\emptyset\quad$ iff $\quad\sigma_4^\pi(A)\cap\sigma_4^\pi(B)\cap\sigma_4^\pi(C)=\emptyset$}\]
\vspace{-3ex}
\label{v}
\end{prop}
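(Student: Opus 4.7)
The plan is to reduce Proposition~\ref{v} to Proposition~\ref{v1} by transporting the hypothesis and the conclusion across $\beta$. First I would check that the triple $(\sigma_4^\pi(A),\sigma_4^\pi(B),\sigma_4^\pi(C))$ still satisfies the hypothesis of Proposition~\ref{v1}: the three sets are distinct because $\sigma_4^\pi$ is a bijection by Proposition~\ref{s}, and each pair of them intersects non-trivially by Corollary~\ref{u}, since the corresponding pair of $A,B,C$ does.

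Next I would apply Proposition~\ref{v1} to both triples. For the triple $(A,B,C)$, let $\theta_1,\theta_2$ be the two 2-relations agreeing with $\pi$ except on $A$, and similarly $\phi_1,\phi_2$ for $B$ and $\chi_1,\chi_2$ for $C$. Then $A\cap B\cap C=\emptyset$ iff
\[\U_6(\{\pi,\theta_1,\theta_2,\phi_1,\phi_2,\chi_1,\chi_2\})\neq\emptyset.\]
By the defining property (1) of $\sigma^\pi_4$ in Proposition~\ref{s}, the two 2-relations agreeing with $\beta(\pi)$ except on $\sigma^\pi_4(A)$ are precisely $\beta(\theta_1),\beta(\theta_2)$, and likewise for $B$ and $C$. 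So Proposition~\ref{v1} applied to $(\sigma^\pi_4(A),\sigma^\pi_4(B),\sigma^\pi_4(C))$ gives $\sigma^\pi_4(A)\cap\sigma^\pi_4(B)\cap\sigma^\pi_4(C)=\emptyset$ iff
\[\U_6(\{\beta(\pi),\beta(\theta_1),\beta(\theta_2),\beta(\phi_1),\beta(\phi_2),\beta(\chi_1),\beta(\chi_2)\})\neq\emptyset.\]

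Finally, I would observe that these two $\U_6$-conditions are equivalent because $\beta$ is a size-preserving order automorphism of $\Eq*(X)$. Indeed, $\gamma$ is a 6-relation containing each of $\pi,\theta_1,\ldots,\chi_2$ iff $\beta(\gamma)$ is a 6-relation containing each of $\beta(\pi),\beta(\theta_1),\ldots,\beta(\chi_2)$, so
\[\U_6(\{\pi,\theta_1,\theta_2,\phi_1,\phi_2,\chi_1,\chi_2\})=\beta^{-1}\bigl(\U_6(\{\beta(\pi),\beta(\theta_1),\beta(\theta_2),\beta(\phi_1),\beta(\phi_2),\beta(\chi_1),\beta(\chi_2)\})\bigr),\]
and one of these is non-empty iff the other is. Combining the three equivalences gives the result.

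I do not anticipate a real obstacle here; the content of the proposition is almost entirely packaged inside Proposition~\ref{v1} and Proposition~\ref{s}. The only thing one needs to be a bit careful about is remembering to verify that the hypothesis transfers to the image triple (via Corollary~\ref{u} and injectivity of $\sigma^\pi_4$), so that Proposition~\ref{v1} is applicable on the $\beta$-image side. No new combinatorics is needed.
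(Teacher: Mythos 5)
Your proof is correct and follows essentially the same route as the paper: transfer the hypotheses to the image triple via the bijectivity of $\sigma^\pi_4$ (Proposition~\ref{s}) and Corollary~\ref{u}, apply Proposition~\ref{v1} on both sides, and match the two $\U_6$-conditions under $\beta$ — the paper handles the backward direction by rerunning the argument with $\mu^{\beta(\pi)}_4$, which is just your $\beta^{-1}$ phrasing in disguise. Your packaging of both directions into the single equivalence $\U_6(S)=\beta^{-1}\bigl(\U_6(\beta(S))\bigr)$ is slightly cleaner, and your stated conclusion of the second application of Proposition~\ref{v1} is the correct one (the paper's proof at that point has a typographical slip, asserting $\sigma^\pi_4(A)\cap\sigma^\pi_4(B)\cap\sigma^\pi_4(C)\neq\emptyset$ where $=\emptyset$ is meant).
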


\begin{proof}
Let $\theta_1,\theta_2$ be the 2-relations that agree with $\pi$ except on $A$; $\phi_1,\phi_2$ agree with $\pi$ except on $B$; and $\chi_1,\chi_2$ agree with $\pi$ except on $C$. Then by Proposition~\ref{s}
\vspace{-1ex}

\begin{align*}
\beta(\theta_1),\beta(\theta_2)&\mbox{ agree with $\beta(\pi)$ except on $\sigma^\pi_4(A)$}\\
\beta(\phi_1),\beta(\phi_2)&\mbox{ agree with $\beta(\pi)$ except on $\sigma^\pi_4(B)$}\\
\beta(\chi_1),\beta(\chi_2)&\mbox{ agree with $\beta(\pi)$ except on $\sigma^\pi_4(C)$}
\end{align*} 
\vspace{-1ex}

\noindent Since $\sigma^\pi_4$ is a bijection and $A,B,C$ are distinct, then $\sigma^\pi_4(A),\sigma^\pi_4(B)$, $\sigma^\pi_4(C)$ are distinct. Since any two of $A,B,C$ intersect non-trivially, by Corollary~\ref{u} any two of $\sigma^\pi_4(A),\sigma^\pi_4(B)$, $\sigma^\pi_4(C)$ intersect non-trivially. So $\sigma^\pi_4(A),\sigma^\pi_4(B),\sigma^\pi_4(C)$ satisfy the assumptions of Proposition~\ref{v1}.

Suppose $A\cap B\cap C=\emptyset$. Applying Proposition~\ref{v1} to $A,B,C$ there is a 6-relation $\gamma$ that is an upper bound of $\pi,\theta_1,\theta_2,\phi_1,\phi_2,\chi_1,\chi_2$. So $\beta(\gamma)$ is a 6-relation that is an upper bound of $\beta(\pi),\beta(\theta_1),\beta(\theta_2),\beta(\phi_1),\beta(\phi_2),\beta(\chi_1),\beta(\chi_2)$. Applying Proposition~\ref{v1} to $\sigma^\pi_4(A),\sigma^\pi_4(B),\sigma^\pi_4(C)$ gives that $\sigma^\pi_4(A)\cap \sigma^\pi_4(B)\cap \sigma^\pi_4(C)\neq\emptyset$. This provides one direction of the statement. The other is provided by the same argument using the map $\mu^{\beta(\pi)}_4$. 
\end{proof}

We require one more technical fact about the maps $\sigma^\pi_4$. 

\begin{prop}
For $\pi$ a 2-relation, $P\in\mc{P}(2,\pi)$, and $A,B,C,D\in\mc{P}(4,\pi)$, 
\vspace{1ex}

\begin{enumerate}
\item if $A\cap B = P$ then $\sigma^\pi_4(A)\cap\sigma^\pi_4(B)$ is an element of $\mc{P}(2,\beta(\pi))$\vspace{1ex}
\item if $A\cap B=P\!$ and $\,C\cap D=P$ then $\sigma^\pi_4(A)\,\cap\,\sigma^\pi_4(B)\,=\,\sigma^\pi_4(C)\,\cap\,\sigma^\pi_4(D)$
\end{enumerate}
\label{M}
\end{prop}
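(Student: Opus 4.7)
The plan is to prove $(1)$ by packaging $A$ and $B$ together with a carefully chosen third $4$-set and applying Proposition~\ref{v} together with Corollary~\ref{u}. Write $A=P\cup P_1$ and $B=P\cup P_2$ where $P_1,P_2$ are two distinct $\pi$-blocks different from $P$, and set $E=P_1\cup P_2\in\mc{P}(4,\pi)$. Then $A,B,E$ are distinct, every pair meets nontrivially ($A\cap B=P$, $A\cap E=P_1$, $B\cap E=P_2$), yet $A\cap B\cap E=\emptyset$. Corollary~\ref{u} gives $\sigma^\pi_4(A)\cap\sigma^\pi_4(B)\neq\emptyset$, while Proposition~\ref{v} gives $\sigma^\pi_4(A)\cap\sigma^\pi_4(B)\cap\sigma^\pi_4(E)=\emptyset$. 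Since $\sigma^\pi_4(A)$ and $\sigma^\pi_4(B)$ are each unions of blocks of the $2$-relation $\beta(\pi)$, so is their intersection; hence it has cardinality $0$, $2$, or $4$. It is nonempty, and it cannot be $4$ because that would force $\sigma^\pi_4(A)=\sigma^\pi_4(B)$, contradicting injectivity of $\sigma^\pi_4$. Therefore the intersection is a single block of $\beta(\pi)$, proving~$(1)$.

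For $(2)$, the heart of the argument is the sub-lemma: if $A,B,C\in\mc{P}(4,\pi)$ are distinct with $A\cap B=A\cap C=P$, then $\sigma^\pi_4(A)\cap\sigma^\pi_4(B)=\sigma^\pi_4(A)\cap\sigma^\pi_4(C)$. To see this, note that $B$ and $C$ both contain $P$ and, being unions of two $\pi$-blocks each, must satisfy $B\cap C=P$ (otherwise $B=C$). So $A,B,C$ are pairwise distinct with each pairwise intersection nontrivial and triple intersection $P\neq\emptyset$. Proposition~\ref{v} then yields $\sigma^\pi_4(A)\cap\sigma^\pi_4(B)\cap\sigma^\pi_4(C)\neq\emptyset$. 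By $(1)$, $Q_{AB}:=\sigma^\pi_4(A)\cap\sigma^\pi_4(B)$ is a single block of $\beta(\pi)$, and since $\sigma^\pi_4(C)$ is a union of blocks of $\beta(\pi)$, the non-empty set $Q_{AB}\cap\sigma^\pi_4(C)$ must be all of $Q_{AB}$. Thus $Q_{AB}\subseteq\sigma^\pi_4(A)\cap\sigma^\pi_4(C)=Q_{AC}$; both are $2$-element sets, so $Q_{AB}=Q_{AC}$.

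To pass from the sub-lemma to the full statement $(2)$, I would reduce the general case $A\cap B=P=C\cap D$ to the sub-lemma by introducing an auxiliary $A'=P\cup P'$, where $P'$ is any $\pi$-block distinct from $P$ and from the other blocks occurring in $A,B,C,D$; since $X$ is infinite and $\pi$ is a $2$-relation, $\pi$ has infinitely many blocks, so such a $P'$ exists. Then $A'$ differs from each of $A,B,C,D$ and meets each of them in $P$. Three applications of the sub-lemma — to $\{A,B,A'\}$, then $\{A',A,C\}$ (or trivially if $A=C$), then $\{C,A',D\}$ — chain together to give $Q_{AB}=Q_{AA'}=Q_{CA'}=Q_{CD}$. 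The degenerate cases where $\{A,B\}$ and $\{C,D\}$ overlap are handled by omitting the corresponding step. The main obstacle is really the first paragraph's bookkeeping: verifying that the image intersection cannot be all $4$ elements and justifying why intersections of unions-of-blocks with single blocks are all-or-nothing; the chaining in paragraph three is then essentially routine once the sub-lemma is in hand.
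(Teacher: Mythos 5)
Your proof is correct and takes essentially the same route as the paper: part (1) is the paper's argument (nonemptiness from Corollary~\ref{u}, the 4-element case excluded by injectivity of $\sigma^\pi_4$ --- your auxiliary set $E$ and the appeal to Proposition~\ref{v} there are superfluous), and your sub-lemma for (2) is exactly the paper's three-set case, proved the same way via Proposition~\ref{v} and the fact that $\sigma^\pi_4(C)$ is a union of blocks of $\beta(\pi)$. The only cosmetic difference is the chaining: you route through a fresh auxiliary set $A'$, while the paper observes that when $A,B,C,D$ are all distinct one automatically has $A\cap C=P$ and applies the three-set case twice directly; both reductions rest on the identical sub-lemma.
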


\begin{proof}
(1) Let $\sigma^\pi_4(A)\cap\sigma^\pi_4(B) = Q$. Since $Q$ is the intersection of two sets belonging to $\mc{P}(4,\beta(\pi))$ it is either (i) empty, (ii) is the union of two blocks of $\beta(\pi)$, or (iii) is a block of $\beta(\pi)$. Since $A\cap B = P$ we have $A\cap B\neq\emptyset$ and $A\neq B$. So Corollary~\ref{u} excludes the first possibility, and the fact that $\sigma^\pi_4$ is a bijection excludes the second. 

(2) Let $\sigma^\pi_4(C)\cap\sigma^\pi_4(D)=R$, and note that (1) also implies that $R$ is a block of $\beta(\pi)$. To prove (2) we first note that if $\{A,B\}=\{C,D\}$ there is nothing to prove. We next consider the case that exactly one of $A,B$ is equal to one of $C,D$. For this case, we may assume without loss of generality that $A=D$. Since $A\cap B = P$ and $P\subseteq C$, we have $A\cap B\cap C=P$. Then Proposition~\ref{v} gives 
\[\sigma_4^\pi(A)\cap\sigma_4^\pi(B)\cap\sigma_4^\pi(C)\neq\emptyset\]
\vspace{-1ex}

\noindent This set is equal to $Q\cap R$, and since it is a non-empty intersection of two blocks of $\beta(\pi)$, we must have $Q=R$ as required. 

It remains to consider the case when $A,B,C,D$ are all distinct. Since $A\cap B = P$ and $C\cap D=P$ and all four sets are distinct, then $A\cap C=P$. We can apply the case proved above twice, once to the situation that $A\cap B = P = A\cap C$, and again to the situation that $C\cap A = P = C\cap D$. These yield 

\[\sigma_4^\pi(A)\cap\sigma_4^\pi(B) = \sigma_4^\pi(A)\cap\sigma_4^\pi(C)\quad\mbox{ and }\quad \sigma_4^\pi(C)\cap\sigma_4^\pi(A)=\sigma_4^\pi(C)\cap\sigma_4^\pi(D)\]
\vspace{-1ex}

\noindent So $\sigma_4^\pi(A)\cap\sigma_4^\pi(B) = \sigma_4^\pi(C)\cap\sigma_4^\pi(D)$ as required. 
\end{proof}

We next move from bijections between certain 4-element subsets of $X$ to bijections between certain 2-element subsets of $X$. 

\begin{prop}
For any 2-relation $\pi$ there are bijections 
\vspace{-1ex}

\begin{align*}
\sigma_2^\pi&:\mc{P}(2,\pi)\to\mc{P}(2,\beta(\pi))\\
\mu^\pi_2&:\mc{P}(2,\pi)\to\mc{P}(2,\beta^{-1}(\pi))
\end{align*}
\vspace{-1ex}

\noindent with $\sigma^\pi_2$ and $\mu^{\beta(\pi)}_2$ inverses. These maps satisfy for any $P\in\mc{P}(2,\pi)$ and $A,B\in\mc{P}(4,\pi)$
\vspace{1ex}

\begin{enumerate}
\item $A\cap B = P\,\,$ iff $\,\,\sigma^\pi_4(A)\cap\sigma^\pi_4(B)=\sigma^\pi_2(P)$\vspace{1ex}
\item $A\cap B = P\,\,$ iff $\,\,\mu^\pi_4(A)\cap\mu^\pi_4(B)=\mu^\pi_2(P)$
\end{enumerate}
\label{w}
\end{prop}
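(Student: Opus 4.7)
The plan is to define $\sigma^\pi_2$ by exploiting the content of Proposition~\ref{M}. Given $P \in \mc{P}(2,\pi)$, since $X$ is infinite the 2-relation $\pi$ has infinitely many blocks, so one can pick any two distinct blocks $Q, R$ of $\pi$ other than $P$, form $A = P \cup Q$ and $B = P \cup R$ in $\mc{P}(4,\pi)$, and set
\[\sigma^\pi_2(P) \;=\; \sigma^\pi_4(A) \cap \sigma^\pi_4(B).\]
Proposition~\ref{M}(1) tells us that this intersection lies in $\mc{P}(2,\beta(\pi))$, and Proposition~\ref{M}(2) guarantees it is independent of the choice of $A, B$. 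I would define $\mu^\pi_2$ in exactly the same way using $\mu^\pi_4$.

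Next I would verify that $\mu^{\beta(\pi)}_2 \circ \sigma^\pi_2 = \mathrm{id}$ and $\sigma^\pi_2 \circ \mu^{\beta(\pi)}_2 = \mathrm{id}$, which makes $\sigma^\pi_2$ a bijection whose inverse is $\mu^{\beta(\pi)}_2$. Given $P \in \mc{P}(2,\pi)$, pick $A, B \in \mc{P}(4,\pi)$ with $A \cap B = P$; writing $A' = \sigma^\pi_4(A)$ and $B' = \sigma^\pi_4(B)$, the definition gives $\sigma^\pi_2(P) = A' \cap B'$, and then $\mu^{\beta(\pi)}_2(A' \cap B') = \mu^{\beta(\pi)}_4(A') \cap \mu^{\beta(\pi)}_4(B') = A \cap B = P$, where the last equality uses that $\mu^{\beta(\pi)}_4$ and $\sigma^\pi_4$ are mutual inverses by Proposition~\ref{s}. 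The other composite is symmetric.

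Finally, for property (1) the forward direction is immediate from the definition of $\sigma^\pi_2$. For the converse, suppose $\sigma^\pi_4(A) \cap \sigma^\pi_4(B) = \sigma^\pi_2(P)$. The right-hand side is a 2-element block of $\beta(\pi)$, hence nonempty, so Corollary~\ref{u} forces $A \cap B \neq \emptyset$; moreover the intersection on the left has $2$ (not $4$) elements, so $\sigma^\pi_4(A) \neq \sigma^\pi_4(B)$ and therefore $A \neq B$. Two distinct elements of $\mc{P}(4,\pi)$ with nonempty intersection must share exactly one block of $\pi$, so $A \cap B = P'$ for some $P' \in \mc{P}(2,\pi)$. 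By the definition of $\sigma^\pi_2$ we then have $\sigma^\pi_2(P') = \sigma^\pi_4(A) \cap \sigma^\pi_4(B) = \sigma^\pi_2(P)$, and injectivity forces $P = P'$. Property (2) is obtained by the same argument with $\mu$ in place of $\sigma$. The technical heart of the matter has already been carried out in Proposition~\ref{M}, so the main obstacle here is simply the bookkeeping of checking that the map is well-defined and that composing with the candidate inverse really does return the identity.
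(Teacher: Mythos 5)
Your proposal is correct and takes essentially the same approach as the paper: both define $\sigma^\pi_2(P)$ as $\sigma^\pi_4(A)\cap\sigma^\pi_4(B)$ for a choice of $A,B\in\mc{P}(4,\pi)$ with $A\cap B=P$, invoke Proposition~\ref{M} for well-definedness and for the value landing in $\mc{P}(2,\beta(\pi))$, and obtain $\mu^\pi_2$ by the identical construction applied to $\beta^{-1}$. The only difference is the order of deduction — the paper gets the mutual-inverse property from conditions (1) and (2), while you verify the composites $\mu^{\beta(\pi)}_2\circ\sigma^\pi_2$ and $\sigma^\pi_2\circ\mu^{\beta(\pi)}_2$ directly and then derive the converse of (1) from injectivity — and your version actually spells out that converse direction (via Corollary~\ref{u} and the fact that two distinct members of $\mc{P}(4,\pi)$ meeting non-trivially share exactly one block) in more detail than the paper's terse statement.
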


\begin{proof}
For any $P\in\mc{P}(2,\pi)$ there are sets $A,B\in\mc{P}(4,\pi)$ with $A\cap B=P$. So we may define $\sigma^\pi_2$ by setting 
\[\sigma^\pi_2(P)=\sigma^\pi_4(A)\cap \sigma^\pi_4(B)\quad\mbox{ if }A,B\in\mc{P}(4,\pi) \mbox{ and }A\cap B = P \]
\vspace{-1ex}

\noindent Proposition~\ref{M} shows that this map is well defined, and it therefore satisfies condition~(1). Identical reasoning applied to the size preserving automorphism $\beta^{-1}$ of $\Eq*(X)$ provides a map $\mu^\pi_2$ that satisfies~(2). Then conditions~(1) and (2) imply that $\sigma^\pi_2$ and $\mu^{\beta(\pi)}_2$ are inverses of one another, and so both are bijections. 
\end{proof}

\begin{prop}
If $A$ is the union of distinct blocks $P$ and $Q$ of $\pi$. Then 

\[\sigma_4^\pi(A)=\sigma_2^\pi(P)\cup\sigma_2^\pi(Q)\]
\vspace{-2ex}

\noindent A similar statement holds for $\mu^\pi_4$.
\label{x}
\end{prop}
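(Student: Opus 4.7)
The plan is to reduce the statement to the two-block characterization of $\sigma_2^\pi$ provided by Proposition~\ref{w}. The key observation is that since $X$ is infinite, the 2-relation $\pi$ has infinitely many blocks, so one can pick a third block $R$ of $\pi$ distinct from $P$ and $Q$. Setting $B = P\cup R$ and $C = Q\cup R$ gives two further elements of $\mc{P}(4,\pi)$ satisfying $A\cap B = P$ and $A\cap C = Q$, which is exactly the setup required by Proposition~\ref{w}(1).

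Applying Proposition~\ref{w}(1) to the pairs $(A,B)$ and $(A,C)$ gives
\[
\sigma_2^\pi(P) = \sigma_4^\pi(A)\cap\sigma_4^\pi(B) \quad\text{and}\quad \sigma_2^\pi(Q)=\sigma_4^\pi(A)\cap\sigma_4^\pi(C),
\]
so both $\sigma_2^\pi(P)$ and $\sigma_2^\pi(Q)$ are subsets of $\sigma_4^\pi(A)$, yielding the containment $\sigma_2^\pi(P)\cup\sigma_2^\pi(Q)\subseteq\sigma_4^\pi(A)$.

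For the reverse containment a cardinality argument suffices. The set $\sigma_4^\pi(A)$ belongs to $\mc{P}(4,\beta(\pi))$ and hence has exactly $4$ elements. Since $\sigma_2^\pi$ is a bijection (Proposition~\ref{w}) and $P\neq Q$, the images $\sigma_2^\pi(P)$ and $\sigma_2^\pi(Q)$ are distinct elements of $\mc{P}(2,\beta(\pi))$, that is, distinct blocks of the equivalence relation $\beta(\pi)$, hence disjoint. Their union therefore has $4$ elements and must coincide with $\sigma_4^\pi(A)$.

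There is no serious obstacle here; the only subtlety is remembering that one needs a \emph{third} block of $\pi$ to realize both $P$ and $Q$ as intersections of pairs of $4$-sets with $A$, and this is guaranteed by the infiniteness of $X$. The analogous statement for $\mu_4^\pi$ is proved identically, using Proposition~\ref{w}(2) in place of Proposition~\ref{w}(1) and replacing $\beta$ by $\beta^{-1}$.
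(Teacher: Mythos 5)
Your proof is correct and takes essentially the same route as the paper: both establish $\sigma_2^\pi(P),\sigma_2^\pi(Q)\subseteq\sigma_4^\pi(A)$ by applying Proposition~\ref{w} to auxiliary 4-sets meeting $A$ in exactly $P$ and exactly $Q$, and then conclude equality because these images are distinct (hence disjoint) blocks of $\beta(\pi)$ inside the 4-element set $\sigma_4^\pi(A)$. Your explicit choice of a third block $R$ with $B=P\cup R$, $C=Q\cup R$ simply makes concrete the existence claim the paper leaves implicit.
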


\begin{proof}
We know that $\sigma_4^\pi(A)$ is a 4-element set that is the union of two blocks of $\beta(\pi)$. Since $P$ is a 2-element block of $\pi$ that is contained in $A$ there is a 4-element set $B$ that is a union of two blocks of $\pi$ with $A\cap B=P$. By the definition of $\sigma_2^\pi$ we have that $\sigma_2^\pi(P)=\sigma_4^\pi(A)\cap\sigma_4^\pi(B)$, so $\sigma_2^\pi(P)\subseteq\sigma_4^\pi(A)$. Similarly $\sigma_2^\pi(Q)\subseteq\sigma_4^\pi(A)$. Since $\sigma_2^\pi$ is a bijection, we have $\sigma_2^\pi(P)\neq\sigma_2^\pi(Q)$. Since $\sigma^\pi_2(P)$ and $\sigma^\pi_2(Q)$ are distinct blocks of $\beta(\pi)$ that are contained in $\sigma^\pi_4(A)$, we therefore have that $\sigma_4^\pi(A)=\sigma_2^\pi(P)\cup\sigma_2^\pi(Q)$. Similar reasoning applies to $\mu^\pi_2$. 
\end{proof}

Since we can now reconstruct $\sigma_4^\pi$ from $\sigma_2^\pi$, it is not worthwhile keeping both maps. In fact, we define a new map that works on any subset of $X$ that is the union of blocks of $\pi$. Both of the maps $\sigma^\pi_4$ and $\sigma^\pi_2$ are restrictions of this more general map. 

\begin{defn}
For a 2-relation $\pi$ define 
\vspace{-1ex}

\begin{align*}
\sigma^\pi&:\mc{P}(\pi)\to\mc{P}(\beta(\pi))\\
\mu^\pi&:\mc{P}(\pi)\to\mc{P}(\beta^{-1}(\pi))
\end{align*}
\vspace{-2ex}

\noindent by setting $\sigma^\pi(A) = \bigcup\{\sigma^\pi_2(P):P\mbox{ is a block of $\pi$ with $P\subseteq A$}\}$, and similarly for $\mu^\pi(A)$. 
\label{y}
\end{defn}

\section{Patching together bijections between sets of subsets of $X$}

In this section we combine the maps $\sigma^\pi$ for different 2-relations $\pi$ and patch them together to produce a an automorphism of the 2-element subsets of $X$. To do this requires a closer study of the relationship between 2-relations. 

\pagebreak[3]

\begin{defn}
Let $\pi$ and $\lambda$ be 2-relations. We say 
\vspace{1ex}

\begin{enumerate}
\item $\pi$ and $\lambda$ overlap in 4-sets if they have at least one 4-relation upper bound
\item $\pi$ and $\lambda$ have full overlap in 4-sets if they have exactly one 4-relation upper bound
\item $\pi$ and $\lambda$ have non-full overlap in 4-sets if they have at least two 4-relation upper bounds
\item $\pi$ and $\lambda$ have small overlap in 4-sets if they have infinitely many 4-relation upper bounds
\end{enumerate}
\vspace{1ex}

\noindent If $\pi$ and $\lambda$ are overlap in 4-sets, a 4-element set $A$ is an overlap of $\pi$ and $\lambda$ if $A$ is the union of two blocks of $\pi$, $A$ is the union of two blocks of $\lambda$, and if $\pi$ and $\lambda$ do not agree on $A$. 
\label{aa}
\end{defn}

If $\beta$ is a size preserving automorphism, then a 4-relation $\gamma$ is an upper bound of two \mbox{2-relations} $\pi$ and $\lambda$ iff the 4-relation $\beta(\gamma)$ is a an upper bound of the 2-relations $\beta(\pi)$ and $\beta(\lambda)$. The following is then clear. 

\begin{prop}
If $\beta$ is a size preserving automorphism of $\Eq*(X)$ and $\pi$, $\lambda$ are 2-relations, then $\pi$ and $\lambda$ overlap in 4-sets iff $\beta(\pi)$ and $\beta(\lambda)$ overlap in 4-sets, and similarly for full overlap, non-full overlap, and small overlap. 
\label{zinkle}
\end{prop}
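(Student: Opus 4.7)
The plan is to observe that the four overlap conditions are each defined purely in terms of the cardinality of the set
\[
U(\pi,\lambda) \,=\, \{\gamma\in\Eq*(X): \gamma \text{ is a 4-relation},\, \pi\subseteq\gamma,\, \lambda\subseteq\gamma\},
\]
and to show that $\beta$ induces a bijection $U(\pi,\lambda)\to U(\beta(\pi),\beta(\lambda))$. Once this bijection is established, all four equivalences follow immediately, because $|U(\pi,\lambda)|\geq 1$, $=1$, $\geq 2$, and infinite are each preserved under any bijection.

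For the bijection, I would first note that since $\beta$ is size preserving, its restriction to $k$-relations is a bijection onto $k$-relations for each $k$; in particular $\beta$ sends 4-relations to 4-relations and $\beta^{-1}$ does the same. Next, since $\beta$ is an order automorphism of $\Eq*(X)$, for any equivalence relations $\theta,\gamma\in\Eq*(X)$ we have $\theta\subseteq\gamma$ iff $\beta(\theta)\subseteq\beta(\gamma)$. Combining these two observations, $\gamma$ is a 4-relation containing both $\pi$ and $\lambda$ iff $\beta(\gamma)$ is a 4-relation containing both $\beta(\pi)$ and $\beta(\lambda)$. Thus $\gamma\mapsto\beta(\gamma)$ maps $U(\pi,\lambda)$ into $U(\beta(\pi),\beta(\lambda))$, and applying the same reasoning to $\beta^{-1}$ shows this map is a bijection.

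With the bijection in hand, the four statements are dispatched simultaneously: $\pi$ and $\lambda$ overlap in 4-sets iff $U(\pi,\lambda)\neq\emptyset$ iff $U(\beta(\pi),\beta(\lambda))\neq\emptyset$ iff $\beta(\pi)$ and $\beta(\lambda)$ overlap in 4-sets, and analogously for the ``exactly one'', ``at least two'', and ``infinitely many'' versions. There is no real obstacle here; the work has essentially been done in establishing that $\beta$ is a size-preserving order automorphism. The only thing to be slightly careful about is that the definitions speak of 4-relation upper bounds (rather than arbitrary upper bounds), which is exactly why the size-preserving hypothesis is needed to ensure $\beta$ restricts to a bijection on the appropriate slice of $\Eq*(X)$.
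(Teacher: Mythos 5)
Your proof is correct and is essentially the paper's argument: the paper also observes that a size preserving automorphism gives a bijection $\gamma\mapsto\beta(\gamma)$ between the 4-relation upper bounds of $\pi,\lambda$ and those of $\beta(\pi),\beta(\lambda)$, from which all four overlap conditions follow at once. Your write-up simply makes explicit the cardinality bookkeeping that the paper declares ``then clear.''
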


We next provide alternate characterizations of these notions that are easier to manipulate. Note that if $\pi$ and $\lambda$ are 2-relations, if $E=\{x,y\}$ is a block of $\pi$ that is not a block of $\lambda$, then there is at most one block $E'$ of $\pi$ with $E\cup E'$ being equal to the union of two blocks of $\lambda$. The only candidate for $E'$ is $E'=\{x',y'\}$ where $x'$ and $y'$ are the unique elements of $X$ with $\{x,x'\}$ and $\{y,y'\}$ blocks of $\lambda$. Of course, there is no guarantee that this $E'$ will be a block of $\pi$. 

\begin{prop}
The 2-relations $\pi$ and $\lambda$ overlap in 4-sets iff the following conditions are satisfied. 
\vspace{1ex}

\begin{enumerate}
\item For each block $E$ of $\pi$ that is not a block of $\lambda$, there is a block $E'$ of $\pi$ with $E\cup E'$ the union of two blocks of $\lambda$. This $E'$ is necessarily unique, and is also not a block of $\lambda$. \vspace{1ex}
\item There are an infinite number, or even finite number, of blocks of $\pi$ that are blocks of $\lambda$. 
\end{enumerate}
\vspace{1ex}

\noindent When these conditions are satisfied, the 4-relations $\gamma$ that are upper bounds of $\pi$ and $\lambda$ correspond to pairings of the blocks of $\pi$ that are blocks of $\lambda$. For such a pairing construct blocks of $\gamma$ from these pairs and take the $E\cup E'$ where $E$ is a block of $\pi$ that is not a block of $\lambda$ as the remaining blocks of $\gamma$. 
\label{bbbb}
\end{prop}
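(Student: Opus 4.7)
The plan is to analyze, for a 4-relation $\gamma$ that might serve as a common upper bound, how each 4-element block $F$ of $\gamma$ decomposes simultaneously with respect to $\pi$ and $\lambda$. Since $\pi,\lambda$ are 2-relations contained in $\gamma$, every block of $\gamma$ is a disjoint union of exactly two blocks of $\pi$ and also a disjoint union of exactly two blocks of $\lambda$. Essentially all the work is repeatedly applying this observation.

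For the forward direction, suppose $\gamma$ is a common 4-relation upper bound of $\pi$ and $\lambda$. Let $E$ be a block of $\pi$ that is not a block of $\lambda$, and let $F$ be the unique block of $\gamma$ containing $E$. Write $F=E\cup E'$ with $E'$ the other block of $\pi$ in $F$, and $F=L\cup L'$ with $L,L'$ blocks of $\lambda$. Since $|E|=2$ and $E\neq L,L'$, the block $E$ must contain exactly one element of $L$ and one element of $L'$; hence $E'=F\setminus E$ also meets both $L$ and $L'$ in a single element, so $E'$ is not a block of $\lambda$ either. Uniqueness of $E'$ follows because any $E''$ with $E\cup E''$ a union of two blocks of $\lambda$ would (by construction) force $E\cup E''$ to be a block of any common 4-relation upper bound, but $F$ is the unique block of $\gamma$ containing $E$. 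For condition (2), a similar ``one-element-in-each'' argument shows no block of $\gamma$ can mix a block of $\pi$ that lies in $\lambda$ with one that does not; hence the blocks of $\pi$ that are blocks of $\lambda$ are paired off in $\gamma$, so their number is either infinite or an even finite number.

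For the converse, partition the blocks of $\pi$ into $\mc{A}$ (blocks of $\pi$ that are also blocks of $\lambda$) and $\mc{B}$ (blocks of $\pi$ that are not). By condition (1), the map $E\mapsto E'$ is an involution without fixed points on $\mc{B}$, so we may group $\mc{B}$ into pairs $\{E,E'\}$, each giving a 4-set $E\cup E'$ that is simultaneously a union of two blocks of $\pi$ and of two blocks of $\lambda$. By condition (2), $\mc{A}$ can also be grouped into pairs (arbitrarily if infinite; using evenness if finite), and each such pair $\{E_1,E_2\}$ in $\mc{A}$ gives a 4-set $E_1\cup E_2$ which is trivially a union of two blocks of $\pi$ and of two blocks of $\lambda$ (since its two halves are common blocks). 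The resulting 4-sets partition $X$, so define a 4-relation $\gamma\supseteq\pi,\lambda$. The description of all common upper bounds follows because condition (1) forces the pairing on $\mc{B}$, leaving only the pairing on $\mc{A}$ as the remaining choice.

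The main obstacle is the forward direction: one must rule out the mixed configuration in which a 4-block of $\gamma$ contains one block of $\pi$ that is a block of $\lambda$ and one that is not, since otherwise condition (2) could fail even when a common upper bound exists. This is handled by the same ``one element of $E$ per block of $\lambda$'' counting used to prove uniqueness of $E'$, applied in the contrapositive.
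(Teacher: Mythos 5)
Your proposal is correct and follows essentially the same route as the paper: the forward direction extracts from a common upper bound $\gamma$ the pairing of $\pi$-blocks within $\gamma$-blocks (showing a block of $\pi$ is a block of $\lambda$ iff its pair is), and the converse constructs $\gamma$ exactly as described in the statement, with the pairing on the common blocks as the only freedom. You supply more detail than the paper's terse proof --- in particular the ``one element in each $\lambda$-block'' counting, which is also the substance behind the paper's preliminary remark pinning down $E'$ as the set of $\lambda$-partners of the elements of $E$, a cleaner way to phrase your uniqueness step than routing it through $\gamma$.
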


\begin{proof}
If $\pi$ and $\lambda$ overlap in 4-sets, then there is a 4-relation $\gamma$ that is an upper bound of both. This naturally provides a pairing of all blocks of $\pi$ with two blocks paired if they belong to the same block of $\gamma$. Under this pairing, a block $E$ of $\pi$ is a block of $\lambda$ iff its pair is a block of $\lambda$. This shows that conditions (1) and (2) hold. For the converse, suppose that conditions (1) and (2) hold. Condition (2) shows that the blocks of $\pi$ that are also blocks of $\lambda$ may be put into pairs. For this pairing, construct $\gamma$ as described in the statement of the result. This is obviously an upper bound of $\pi$ and $\lambda$.
\end{proof}

The following is obvious from Definition~\ref{aa} and Proposition~\ref{bbbb}.

\begin{prop}
Let $\pi$ and $\lambda$ be 2-relations that overlap in 4-sets. Then 
\vspace{1ex}

\begin{enumerate}
\item $\pi$ and $\lambda$ have full overlap in 4-sets iff they have at most two blocks in common
\item $\pi$ and $\lambda$ have non-full overlap in 4-sets iff they have at least four blocks in common
\item $\pi$ and $\lambda$ have small overlap in 4-sets iff they have infinitely many blocks in common
\end{enumerate}
\label{aaaa}
\end{prop}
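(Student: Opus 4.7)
The plan is to apply Proposition~\ref{bbbb} directly: under the standing hypothesis that $\pi$ and $\lambda$ overlap in 4-sets, it gives a bijective correspondence between the 4-relation upper bounds of $\{\pi,\lambda\}$ and the perfect matchings (unordered pairings) of the collection $C$ of blocks common to $\pi$ and $\lambda$. Thus the entire statement reduces to counting perfect matchings of a set of cardinality $|C|$. First I would observe that since $\pi$ and $\lambda$ do overlap, at least one matching exists, so $|C|$ is either an even non-negative integer or infinite.

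For the finite situation, a set of size $n\in\{0,2\}$ has exactly one perfect matching, whereas a set of size $n=2k$ with $k\geq 2$ has $(2k)!/(2^k k!)\geq 3$. Combined with the bijection from Proposition~\ref{bbbb}, this immediately yields that $|C|\leq 2$ is equivalent to $\pi$ and $\lambda$ having exactly one 4-relation upper bound (full overlap), and $|C|\geq 4$ (with $|C|$ even) is equivalent to having at least two 4-relation upper bounds (non-full overlap). This handles (1) and the finite half of (2).

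For the infinite case, I would show that $|C|$ infinite forces infinitely many matchings. Fix any matching $M$ of $C$, and choose a countable sequence of pairwise disjoint quadruples inside $M$, say $\{E_{1,n},E_{2,n}\}$ and $\{E_{3,n},E_{4,n}\}$ for $n\in\mathbb{N}$. For any subset $S\subseteq\mathbb{N}$, form the matching $M_S$ which agrees with $M$ outside $\bigcup_{n\in S}\{E_{1,n},\ldots,E_{4,n}\}$ and replaces, for each $n\in S$, the pairs $\{E_{1,n},E_{2,n}\},\{E_{3,n},E_{4,n}\}$ by $\{E_{1,n},E_{3,n}\},\{E_{2,n},E_{4,n}\}$. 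Different subsets $S$ yield different matchings, so there are infinitely many perfect matchings of $C$, hence infinitely many 4-relation upper bounds of $\{\pi,\lambda\}$. Conversely, if $|C|$ is finite, then by the discussion above $|C|=2k$ for some $k\geq 0$ and there are only $(2k)!/(2^k k!)$ matchings, hence finitely many upper bounds. This establishes (3) and completes (2).

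The only nontrivial step is the infinite-matching construction, but that is a routine swap argument once the bijection from Proposition~\ref{bbbb} is in hand; all other work is either bookkeeping or elementary counting.
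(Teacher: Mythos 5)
Your proposal is correct and takes essentially the same route as the paper, which presents the proposition as an immediate consequence of Definition~\ref{aa} and the pairing correspondence of Proposition~\ref{bbbb} (the paper states it is ``obvious'' from these and gives no further detail). Your explicit count of perfect matchings --- exactly one when the set $C$ of common blocks has at most two elements, at least three when $|C|\geq 4$ is even, and infinitely many via the swap construction when $C$ is infinite --- is precisely the bookkeeping the paper leaves implicit, and it is carried out correctly.
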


We next give an alternate description of the overlaps $A$ of $\pi$ and $\lambda$ from Definition~\ref{aa}. 

\begin{prop}
Let $\pi$ and $\lambda$ be 2-relations that overlap in 4-sets and let $A$ be a 4-element set. Then (1) implies (2).  
\vspace{1ex}
\begin{enumerate}
\item $A$ is an overlap of $\pi$ and $\lambda$
\item every 4-relation that contains $\pi$ and $\lambda$ has $A$ as a block. 
\end{enumerate}
\vspace{1ex}

\noindent Further, if $\pi$ and $\gamma$ have non-full overlap, then (2) implies (1). 
\label{bb}
\end{prop}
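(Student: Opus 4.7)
The plan is to prove (1) $\Rightarrow$ (2) directly using Proposition~\ref{bbbb}, and to prove (2) $\Rightarrow$ (1) by contradiction, exploiting the freedom in pairing common blocks that non-full overlap provides.

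For (1) $\Rightarrow$ (2), suppose $A$ is an overlap of $\pi$ and $\lambda$. Write $A = E_1 \cup E_2$ as a union of two blocks of $\pi$ and $A = E_1' \cup E_2'$ as a union of two blocks of $\lambda$, where $\{E_1,E_2\} \neq \{E_1',E_2'\}$ because $\pi$ and $\lambda$ disagree on $A$. The first observation is that neither $E_1$ nor $E_2$ is a block of $\lambda$: if $E_1 = E_j'$ for some $j$, then $E_2 = A \setminus E_1 = A \setminus E_j'$ equals the other $E_{j''}'$, forcing $\{E_1,E_2\} = \{E_1',E_2'\}$, contradicting the overlap hypothesis. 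Now let $\gamma$ be any 4-relation containing both $\pi$ and $\lambda$. By Proposition~\ref{bbbb}, since $E_1$ is a block of $\pi$ that is not a block of $\lambda$, the block of $\gamma$ containing $E_1$ must be $E_1 \cup F$, where $F$ is the unique block of $\pi$ such that $E_1 \cup F$ is a union of two blocks of $\lambda$. But $A = E_1 \cup E_2$ already is a union of two blocks of $\lambda$, so uniqueness forces $F = E_2$ and hence $A$ is a block of $\gamma$.

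For the second half, assume $\pi$ and $\lambda$ have non-full overlap and that (2) holds. Fix any 4-relation $\gamma$ containing $\pi$ and $\lambda$ (which exists by overlap), so by (2) the set $A$ is a block of $\gamma$, and therefore $A$ is the union of two blocks $E_1, E_2$ of $\pi$ and of two blocks $E_1', E_2'$ of $\lambda$. Suppose toward contradiction that $A$ is not an overlap of $\pi$ and $\lambda$; then $\{E_1,E_2\} = \{E_1',E_2'\}$, so $E_1$ and $E_2$ are both blocks common to $\pi$ and $\lambda$. By Proposition~\ref{aaaa}, non-full overlap guarantees at least four blocks in common, so we may pick two further common blocks $E_3, E_4$ distinct from $E_1, E_2$.

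The goal now is to construct a 4-relation $\gamma'$ containing $\pi$ and $\lambda$ in which $A$ is not a block, contradicting (2). Using the recipe in Proposition~\ref{bbbb}, I will choose the pairing of common blocks in which $E_1$ is paired with $E_3$ and $E_2$ is paired with $E_4$, while the remaining common blocks (of which there are either infinitely many or an even finite number, by the second clause of Proposition~\ref{bbbb}) are paired arbitrarily among themselves. The blocks of the resulting $\gamma'$ are then $E_1 \cup E_3$, $E_2 \cup E_4$, the other common-block pairs, and the forced blocks $E \cup E'$ arising from blocks of $\pi$ that are not blocks of $\lambda$. In this $\gamma'$ the block containing $E_1$ is $E_1 \cup E_3 \neq A$, so $A$ is not a block of $\gamma'$, contradicting (2). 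The main point of delicacy — and essentially the only one — is keeping track of parity/cardinality of the common blocks so that a re-pairing of $E_1, E_2, E_3, E_4$ still extends to a valid pairing of all common blocks; the second clause of Proposition~\ref{bbbb} supplies exactly what is needed for this.
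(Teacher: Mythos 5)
Your proof is correct and follows essentially the same route as the paper: in the forward direction the block of any common upper bound containing a $\pi$-block of $A$ is forced to be $A$ itself, and in the reverse direction you reach a contradiction by re-pairing the (at least four, by Proposition~\ref{aaaa}) common blocks via Proposition~\ref{bbbb} to produce an upper bound in which $A$ is not a block. The only cosmetic difference is that the paper's forward direction argues directly with the four elements of $A$, while you invoke the uniqueness clause of Proposition~\ref{bbbb}; your explicit parity check for extending the re-pairing is a detail the paper leaves implicit.
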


\begin{proof}
(1) $\Rightarrow$ (2) Let $\gamma$ be a 4-relation that contains $\pi$ and $\lambda$. By the definition of an overlap, $A$ is the union of two blocks of $\pi$, $A$ is the union of two blocks of $\lambda$, and $\pi$ and $\lambda$ do not~agree~on~$A$. So $A=\{a,b,c,d\}$ with $\{a,b\},\{c,d\}$ are blocks of $\pi$ and $\{a,c\},\{b,d\}$ are blocks of $\lambda$. Since $\pi$ and $\lambda$ are contained in $\gamma$, it must be that $A$ is a block of $\gamma$. 

We now show that (2) $\Rightarrow$ (1) under the assumption that $\pi$ and $\lambda$ have non-full overlap. Since $\pi$ and $\lambda$ overlap in 4-sets they have a 4-relation $\gamma$ as an upper bound. Then by (2)  $A$ is a block of $\gamma$. As is the case with every block of $\gamma$, $A$ is the union of two blocks $P,Q$ of $\pi$ and the union of two blocks of $\lambda$. It remains only to show that $\pi$ and $\lambda$ do not agree on $A$, or equivalently that $P$ and $Q$ are not blocks of $\lambda$. Since $\pi$ and $\lambda$ have non-full overlap, by Proposition~\ref{aaaa} they have at least four blocks in common. If $P$ and $Q$ were two of them, then there would be a pairing of the common blocks of $\pi$ and $\lambda$ that does not pair $P$ and $Q$. By Proposition~\ref{bbbb} this pairing would produce an upper bound $\gamma'$ of $\pi$ and $\lambda$ that did not have $P$ and $Q$ contained in the same block, hence does not have $A$ as a block. 
\end{proof}

Now we come to the key moment when we can start comparing maps $\sigma^\pi$ and $\sigma^\lambda$ built from a size preserving automorphism $\beta$ of $\Eq*(X)$ using 2-relations $\pi$ and $\lambda$. Similar results hold for the maps $\mu^\pi$ and $\mu^\lambda$ built from $\beta^{-1}$, but we do not state them. 

\begin{prop}
Suppose that $\pi$ and $\lambda$ have non-full overlap in 4-sets. Then for a 4-element set $A$ these are equivalent. 
\vspace{1ex}
\begin{enumerate}
\item $A$ is an overlap of $\pi$ and $\lambda$
\item $\sigma^\pi(A)$ is an overlap of $\beta(\pi)$ and $\beta(\lambda)$
\item $\sigma^\lambda(A)$ is an overlap of $\beta(\pi)$ and $\beta(\lambda)$
\end{enumerate}
\label{ff}
\end{prop}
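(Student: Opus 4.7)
The plan is to first isolate the following key lemma and then deduce the three equivalences from it together with Proposition~\ref{bb} and Proposition~\ref{zinkle}.

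\textbf{Key Lemma.} \emph{If $\gamma$ is a $4$-relation with $\pi\subseteq\gamma$ and $A\in\mc{P}(4,\pi)$ is a block of $\gamma$, then $\sigma^\pi_4(A)$ is a block of $\beta(\gamma)$.}
To prove this, let $\theta_1,\theta_2$ be the two other $2$-relations that agree with $\pi$ except on $A$ given by Proposition~\ref{tater}. Since $A$ is a single block of $\gamma$ and both $\theta_i$ agree with $\pi$ on $X\setminus A$ while partitioning $A$ into two $2$-element subsets of the $\gamma$-block $A$, we have $\theta_1,\theta_2\subseteq\gamma$. Hence $\beta(\gamma)$ contains each of $\beta(\pi),\beta(\theta_1),\beta(\theta_2)$. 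By Proposition~\ref{s} these three $2$-relations agree off $\sigma^\pi_4(A)$ and give, on $\sigma^\pi_4(A)$, the three distinct partitions of a $4$-set into two $2$-sets. Any $4$-relation containing all three must therefore have $\sigma^\pi_4(A)$ as a single block. The analogous statement with $\beta$ replaced by $\beta^{-1}$ and $\sigma^\pi_4$ replaced by $\mu^{\beta(\pi)}_4$ (its inverse, by Proposition~\ref{s}) holds by the same proof.

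\textbf{Proof of (1) $\Leftrightarrow$ (2).} By Proposition~\ref{zinkle}, $\beta(\pi)$ and $\beta(\lambda)$ also have non-full overlap in $4$-sets. For (1)$\Rightarrow$(2), assume $A$ is an overlap of $\pi,\lambda$; we will verify the characterization in Proposition~\ref{bb}(2) for $\sigma^\pi_4(A)$ with respect to $\beta(\pi),\beta(\lambda)$. Let $\gamma^*$ be any $4$-relation upper bound of $\beta(\pi),\beta(\lambda)$. Then $\beta^{-1}(\gamma^*)$ is a $4$-relation upper bound of $\pi,\lambda$, so by Proposition~\ref{bb}(1)$\Rightarrow$(2) applied to $A$, the set $A$ is a block of $\beta^{-1}(\gamma^*)$. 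The Key Lemma then yields that $\sigma^\pi_4(A)$ is a block of $\beta(\beta^{-1}(\gamma^*))=\gamma^*$. Since this holds for every such $\gamma^*$, and since $\beta(\pi),\beta(\lambda)$ have non-full overlap, Proposition~\ref{bb}(2)$\Rightarrow$(1) gives that $\sigma^\pi_4(A)$ is an overlap of $\beta(\pi),\beta(\lambda)$. For the converse (2)$\Rightarrow$(1), apply exactly the same argument with the size-preserving automorphism $\beta^{-1}$ and the map $\mu^{\beta(\pi)}_4$ in place of $\sigma^\pi_4$. Letting $B=\sigma^\pi_4(A)$, every $4$-relation upper bound $\gamma$ of $\pi,\lambda$ pushes forward to $\beta(\gamma)$, which is an upper bound of $\beta(\pi),\beta(\lambda)$ and so (by Proposition~\ref{bb}(1)$\Rightarrow$(2) applied to $B$) has $B$ as a block; the lemma for $\beta^{-1}$ then shows $\mu^{\beta(\pi)}_4(B)=A$ is a block of $\gamma$. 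Since $\pi,\lambda$ have non-full overlap, Proposition~\ref{bb}(2)$\Rightarrow$(1) gives that $A$ is an overlap.

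\textbf{Proof of (1) $\Leftrightarrow$ (3).} Identical to the argument just given, but with the roles of $\pi$ and $\lambda$ swapped, using $\sigma^\lambda_4$ (together with the fact that, as the overlap relation is symmetric in $\pi$ and $\lambda$, $A\in\mc{P}(4,\pi)$ also lies in $\mc{P}(4,\lambda)$ whenever $A$ is an overlap).

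The only real obstacle here is the Key Lemma; once that is in hand, both directions of the equivalences reduce to funnelling ``being a block of every $4$-relation upper bound'' through $\beta$ and $\beta^{-1}$ and invoking the non-full overlap hypothesis to convert this property into ``is an overlap'' via Proposition~\ref{bb}. The non-full overlap assumption is essential precisely because the reverse implication in Proposition~\ref{bb} requires it; without it the $4$-relation upper bound is unique and no information about $A$ can be extracted.
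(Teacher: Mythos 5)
Your proposal is correct and follows essentially the same route as the paper: your Key Lemma is exactly the paper's in-line argument (via Proposition~\ref{tater} and Proposition~\ref{s}, one twisted relation $\theta$ already suffices to force $\sigma^\pi_4(A)$ to be a block of the image $4$-relation), and both directions then funnel through Proposition~\ref{bb} with Proposition~\ref{zinkle} supplying non-full overlap of $\beta(\pi),\beta(\lambda)$, with the converse handled by the same symmetry under $\beta^{-1}$ and $\mu^{\beta(\pi)}_4$ that the paper invokes. The only differences are presentational: you isolate the lemma explicitly and spell out the (2)$\Rightarrow$(1) direction that the paper dispatches in one line.
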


\begin{proof}
(1) $\Rightarrow$ (2) Proposition~\ref{zinkle} gives that $\beta(\pi)$ and $\beta(\lambda)$ have non-full overlap in 4-sets. Let $\gamma'$ be an upper bound of $\beta(\pi)$ and $\beta(\lambda)$. Since $\beta$ is a size preserving automorphism, $\gamma'=\beta(\gamma)$ for some 4-relation $\gamma$ that is an upper bound of $\pi$ and $\lambda$. Since $A$ is an overlap of $\pi$ and $\lambda$, $A$ is a block of $\gamma$. So $\gamma$ contains the two relations that agree with $\pi$ except on $A$. Let $\theta$ be one of these. Then $\gamma'$ contains $\beta(\pi)$ and $\beta(\theta)$. By Proposition~\ref{s}, $\beta(\pi)$ and $\beta(\theta)$ agree except on $\sigma^\pi(A)$. So $\sigma^\pi(A)$ is a block of $\gamma'$. Since every 4-relation that contains $\beta(\pi)$ and $\beta(\lambda)$ has $\sigma^\pi(A)$ as a block, by Proposition~\ref{bb} $\sigma^\pi(A)$ is an overlap of $\beta(\pi)$ and $\beta(\lambda)$. 

(2) $\Rightarrow$ (1) This follows using the corresponding version of (1) $\Rightarrow$ (2) for $\beta^{-1}$ and $\mu^\pi$. Thus (1) and (2) are equivalent. The equivalence of (1) and (3) is similar. 
\end{proof}

Note that this result has not shown that $\sigma^\pi(A)$ and $\sigma^\lambda(A)$ are equal. We next begin a series successively stronger results that do give this, and in a more general setting than in the result above. The first step is the following. 

\begin{prop}
Suppose that $\pi$ and $\lambda$ are 2-relations that have small overlap in 4-sets and that $A$ is the union of two blocks of $\pi$ that are also blocks of $\lambda$. Then $\sigma^\pi(A)=\sigma^\lambda(A)$. 
\label{hhh}
\end{prop}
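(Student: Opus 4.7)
The plan is to prove $\sigma^\pi(A) = \sigma^\lambda(A)$ by exhibiting a common 4-relation upper bound under which both sets become blocks, and then pinning down their coincidence via a $3\times 3$ combinatorial family. Since $A = P \cup Q$ with $P,Q$ common blocks, and since by Propositions~\ref{aaaa} and~\ref{bbbb} the small-overlap hypothesis gives infinitely many common blocks, I can pair up the common blocks so that $P \cup Q = A$ is one of the pairs, and then include each overlap of $\pi,\lambda$ (which is automatically a 4-block of any common upper bound by Proposition~\ref{bbbb}) to produce a 4-relation $\gamma$ containing both $\pi$ and $\lambda$ with $A$ as a block. Because $A$ is a block of $\gamma$ and $\gamma \supseteq \pi$, the 2-relations $\theta_1^\pi,\theta_2^\pi$ that agree with $\pi$ except on $A$ are also contained in $\gamma$; similarly for $\lambda$. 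Hence $\beta(\gamma)$ contains $\beta(\pi),\beta(\theta_1^\pi),\beta(\theta_2^\pi),\beta(\lambda),\beta(\theta_1^\lambda),\beta(\theta_2^\lambda)$, and since these $\pi$-triples (resp.\ $\lambda$-triples) pairwise agree except on $\sigma^\pi(A)$ (resp.\ $\sigma^\lambda(A)$), both sets are blocks of the 4-relation $\beta(\gamma)$. In particular, they are either equal or disjoint; the whole task is to rule out disjointness.

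The heart of the proof is the $1$-adjacent case, where $\pi$ and $\lambda$ differ only on a single overlap $B$ (necessarily disjoint from $A$, since $A$ is a c.b.u.). Here I consider the $3\times 3$ family of $2$-relations obtained by choosing, independently, one of the three partitions of $A$ and one of the three partitions of $B$, while agreeing with $\pi$ on every other block of $\gamma$. All nine relations are contained in $\gamma$, each row is a $B$-normal ideal (so its $\beta$-image is a normal ideal agreeing except on some block $B^\star_i$ of $\beta(\gamma)$) and each column is an $A$-normal ideal (with column block $A^\star_j$). Since $\sigma^\pi_4$ is a bijection of blocks of $\gamma$ onto blocks of $\beta(\gamma)$, we have $A^\star_1 \neq B^\star_1$. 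Tracking each image's partition on the fixed block $A^\star_1$ and using the row-constraints (which force those partitions to agree across columns on any block other than the row's varying block), a direct case analysis shows that assuming $A^\star_1 \neq A^\star_2$ compels one of the identifications $A^\star_1 = B^\star_2$ or $A^\star_1 = B^\star_3$, and each then produces three distinct partitions of $A^\star_1$ within a single normal ideal (which should agree off its own varying block), a contradiction. Therefore $A^\star_1 = A^\star_2 = A^\star_3$, which gives $\sigma^\pi_4(A) = \sigma^\lambda_4(A)$ in the $1$-adjacent case.

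From there I extend to the general small-overlap case by peeling overlaps. Fix any overlap $B$ of $\pi$ and $\lambda$, and let $\lambda' = \theta_1^{\lambda,B}$ be the relation obtained from $\lambda$ by swapping its partition of $B$ to agree with $\pi$'s. Then $(\lambda,\lambda')$ is $1$-adjacent with $A$ still a c.b.u.\ (since $A \cap B = \emptyset$), so the base case gives $\sigma^{\lambda'}(A) = \sigma^\lambda(A)$. Meanwhile $(\pi,\lambda')$ still has small overlap but one fewer overlap than $(\pi,\lambda)$. Induction on the number of overlaps (transfinite if necessary, with limit steps handled by repeating the $3\times 3$ argument on a suitably enlarged family of simultaneous overlap-swaps inside $\gamma$) yields $\sigma^\pi(A) = \sigma^{\lambda'}(A)$, and hence $\sigma^\pi(A) = \sigma^\lambda(A)$.

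The hard part will be the combinatorial bookkeeping inside $\beta(\gamma)$ in the $1$-adjacent step: one must carefully exploit that the only $2$-element blocks of $\beta(\gamma)$ that can be partitioned three distinct ways by a normal-ideal triple are exactly the row and column varying blocks, and rule out the coincidences $A^\star_j = B^\star_i$ for $i\neq j$ by tracing partition values around the $3\times 3$ diagram. Extending the induction past finitely many overlaps is the other delicate point, where one must ensure that the peeling construction can be iterated compatibly rather than relying on naive finite chains.
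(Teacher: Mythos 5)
Your $1$-adjacent base case is plausible, and its $3\times 3$ grid analysis can likely be completed: each row (resp.\ column) of $\beta$-images is a triple realizing all three partitions of its varying block, so a coincidence $A^\star_i=B^\star_j$ would force a whole row of the grid to coincide with a whole column as sets of relations, contradicting injectivity of $\beta$; the $2\times 2$ loop then forces $A^\star_1=A^\star_2$. The genuine gap is in the passage from this base case to the proposition. Small overlap in 4-sets only says that $\pi$ and $\lambda$ have infinitely many \emph{common} blocks (Proposition~\ref{aaaa}); it places no bound whatsoever on the number of overlaps, and the case where $\pi$ and $\lambda$ disagree on infinitely many 4-sets is not a pathology to be waved off --- it is exactly the case the paper needs later, since the proof of Proposition~\ref{jj} applies this result (through Proposition~\ref{ii}) to pairs whose overlap set is infinite. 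Each peeling step changes the partition on exactly one 4-set, so no finite chain of $1$-adjacent moves can connect $\pi$ to $\lambda$ when they differ on infinitely many overlaps, and your transfinite fallback does not close the hole: at a limit stage, $\lambda_\alpha$ differs from every earlier $\lambda_{\beta}$ on infinitely many 4-sets, there is no continuity principle yielding $\sigma^{\lambda_\alpha}(A)=\sigma^{\lambda_\beta}(A)$, and two relations differing on infinitely many blocks do not ``agree except on a 4-set,'' so the $(2,4)$-normal-ideal machinery (Propositions~\ref{iti} and~\ref{s}) that powers your $3\times 3$ argument simply has no purchase on ``simultaneous overlap-swaps.''

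By contrast, the paper's proof needs no induction and is uniform in the cardinality of the overlap set $\{K_i:i\in I\}$: with $\phi_1$ the relation agreeing with $\lambda$ except on $A$, Proposition~\ref{ff} applied to the pairs $(\pi,\lambda)$, $(\pi,\phi_1)$ and $(\phi_1,\lambda)$ identifies the overlap set of $(\beta(\pi),\beta(\lambda))$ simultaneously as $\{\sigma^\pi(K_i):i\in I\}$ and as $\{\sigma^\lambda(K_i):i\in I\}$, identifies that of $(\beta(\pi),\beta(\phi_1))$ as $\{\sigma^\pi(K_i):i\in I\}\cup\{\sigma^\pi(A)\}$, and gives $\sigma^{\phi_1}(A)=\sigma^\lambda(A)$ from the singleton overlap of $(\phi_1,\lambda)$; since $\sigma^\lambda$ is injective and $A\neq K_i$ for all $i$, comparing these sets forces $\sigma^\pi(A)=\sigma^\lambda(A)$ in one stroke. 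If you want to salvage your route, replace the transfinite peeling with a global comparison of overlap sets of this kind; note that the interpolation trick of splitting the overlaps into two infinite halves (used in the paper only in Proposition~\ref{jj}, where small overlap fails) reduces non-small overlap to small overlap but cannot reduce infinitely many overlaps to finitely many, so it cannot rescue the induction either.
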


\begin{proof}
Suppose that the overlaps of $\pi$ and $\lambda$ are $K_i$ $(i\in I)$. Since $\pi$ and $\lambda$ agree on $A$, we have that $A$ is not an overlap. Let $\theta_1,\theta_2$ be the 2-relations that agree with $\pi$ except on $A$, and let $\phi_1,\phi_2$ be the 2-relations that agree with $\lambda$ except on $A$. The restrictions of $\theta_1,\theta_2$ to $A$ partition $A$ into two blocks of two in the two possible ways that differ from $\pi$, and the same is true of the restrictions of $\phi_1,\phi_2$. We assume that the numbering of $\theta_1,\theta_2$ and $\phi_1,\phi_2$ is such that $\theta_1,\phi_1$ agree on $A$. 
Since $\pi$ and $\lambda$ have small overlap, by Proposition~\ref{aaaa} they have infinitely many blocks in common. The same is true of any pair from $\theta_1,\theta_2,\pi,\phi_1,\phi_2,\lambda$, so by Proposition~\ref{aaaa} any pair of these relations also has small overlap in 4-sets. 

The overlaps of $\pi$ and $\lambda$ are $K_i$ $(i\in I)$. By Proposition~\ref{ff} the overlaps of $\beta(\pi)$ and $\beta(\lambda)$ are $\{\sigma^\pi(K_i):i\in I\}$, and the overlaps of $\beta(\pi)$ and $\beta(\lambda)$ are $\{\sigma^\lambda(K_i):i\in I\}$. Thus 
\vspace{-.75ex}

\[\{\sigma^\pi(K_i):i\in I\} = \{\sigma^\lambda(K_i):i\in I\}\]
\vspace{-1ex}

\noindent Since $\phi_1$ agrees with $\lambda$ on $X\setminus A$ and differs from $\lambda$ on $X\setminus A$, the overlaps of $\pi$ and $\phi_1$ are $\{K_i:i\in I\}\cup\{A\}$. Applying Proposition~\ref{ff} with similar reasoning to that above, we have 
\vspace{-.75ex}

\[\{\sigma^\pi(K_i):i\in I\}\cup\{\sigma^\pi(A)\} = \{\sigma^{\phi_1}(K_i):i\in I\}\cup\{\sigma^{\phi_1}(A)\}\]
\vspace{-1ex}

\noindent Combining these we have 
\vspace{-.75ex}

\[\{\sigma^\lambda(K_i):i\in I\}\cup\{\sigma^\pi(A)\}=\{\sigma^{\phi_1}(K_i):i\in I\}\cup\{\sigma^{\phi_1}(A)\}\]
\vspace{-1ex}

\noindent Apply the same argument involving Proposition~\ref{ff} again, this time to $\phi_1$ and $\lambda$. These overlap in exactly $A$. So we have that $\sigma^{\phi_1}(A)=\sigma^\lambda(A)$. Therefore $\sigma^\lambda(A)$ occurs on the right side of the expression above, so it must also occur on the left side of this expression. We know that $A\neq K_i$ for each $i\in I$, and since $\sigma^\lambda$ is a bijection from $\mc{P}(4,\lambda)$ to $\mc{P}(4,\beta(\lambda))$ then $\sigma^\lambda(A)\neq\sigma^\lambda(K_i)$ for each $i\in I$. Thus $\sigma^\lambda(A)=\sigma^\pi(A)$ as required. 
\end{proof}

We next extend the previous result. 

\begin{prop}
Suppose that $\pi$ and $\lambda$ are 2-relations that have small overlap in 4-sets and that $A$ is the union of two blocks of $\pi$ and the union of two blocks of $\lambda$. Then $\sigma^\pi(A)=\sigma^\lambda(A)$. 
\label{ii}
\end{prop}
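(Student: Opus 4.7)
The plan is to split into two cases depending on whether $\pi$ and $\lambda$ agree on $A$, and then reduce the disagreement case to the agreement case by building a single bridging 2-relation.

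Case 1 is when $A$ is the union of two blocks of $\pi$ that are also the two blocks of $\lambda$ sitting inside $A$; here Proposition~\ref{hhh} applies directly and gives $\sigma^\pi(A)=\sigma^\lambda(A)$.

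Case 2 is when $\pi$ and $\lambda$ disagree on $A$, so $A=\{a,b,c,d\}$ with $\pi$-blocks $\{a,b\},\{c,d\}$ and $\lambda$-blocks $\{a,c\},\{b,d\}$ (or the analogous third pairing). Define $\theta$ to be the unique 2-relation that agrees with $\pi$ on $X\setminus A$ and with $\lambda$ on $A$. Then $\theta$ agrees with $\pi$ except on $A$, so by Proposition~\ref{s}, $\beta(\pi)$ and $\beta(\theta)$ agree except on $\sigma^\pi_4(A)$. The same proposition applied with the roles of $\pi$ and $\theta$ interchanged shows that $\beta(\theta)$ and $\beta(\pi)$ agree except on $\sigma^\theta_4(A)$. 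By the uniqueness clause of Proposition~\ref{tater}/\ref{iti}, the 4-set on which two distinct 2-relations agree except is uniquely determined, so $\sigma^\pi_4(A)=\sigma^\theta_4(A)$, i.e. $\sigma^\pi(A)=\sigma^\theta(A)$.

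It remains to apply Proposition~\ref{hhh} to $\theta$ and $\lambda$ with respect to $A$, and the step I expect to be the main (though still routine) obstacle is verifying that $\theta$ and $\lambda$ have small overlap in $4$-sets. By construction, $A$ is the union of the two $\lambda$-blocks $\{a,c\},\{b,d\}$, which are also blocks of $\theta$; so $\theta$ and $\lambda$ share these blocks as well as all blocks of $\pi$ and $\lambda$ that live in $X\setminus A$. Since $\pi$ and $\lambda$ already have small overlap, they share infinitely many common blocks, and I will note that every such common block must lie in $X\setminus A$ (neither $\{a,b\}$ nor $\{c,d\}$ is a $\lambda$-block). These common blocks are also $\theta$-blocks, so $\theta$ and $\lambda$ share infinitely many blocks. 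To invoke Proposition~\ref{aaaa}(3) I still need $\theta$ and $\lambda$ to overlap in $4$-sets; condition (2) of Proposition~\ref{bbbb} is satisfied by the infinitude just noted, and condition (1) for $\theta$-blocks inside $A$ is trivial since those blocks are already $\lambda$-blocks. For a $\theta$-block $E\subseteq X\setminus A$ that is not a $\lambda$-block, $E$ is also a $\pi$-block not in $\lambda$, and the partner $E'$ provided by Proposition~\ref{bbbb} applied to $(\pi,\lambda)$ must also lie in $X\setminus A$ (otherwise $E'\in\{\{a,b\},\{c,d\}\}$ and $E\cup E'$ would meet $A$ in a pair that is split across two $\lambda$-blocks, contradicting that $E\cup E'$ is a union of two $\lambda$-blocks). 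Thus $E'$ is a $\theta$-block too, verifying condition (1) for $\theta$ and $\lambda$.

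With small overlap established and $A$ a union of two blocks of $\theta$ that are also blocks of $\lambda$, Proposition~\ref{hhh} yields $\sigma^\theta(A)=\sigma^\lambda(A)$, and combining with $\sigma^\pi(A)=\sigma^\theta(A)$ finishes the argument.
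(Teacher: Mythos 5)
Your proposal is correct and follows essentially the same route as the paper: both split on whether $\pi$ and $\lambda$ agree on $A$, introduce the same bridging relation $\theta$ (agreeing with $\pi$ on $X\setminus A$ and with $\lambda$ on $A$), prove $\sigma^\pi(A)=\sigma^\theta(A)$ by a uniqueness argument, and finish by applying Proposition~\ref{hhh} to $\theta$ and $\lambda$. The only differences are minor: the paper obtains $\sigma^\pi(A)=\sigma^\theta(A)$ from uniqueness of overlaps via Proposition~\ref{ff} where you use Proposition~\ref{s} with the uniqueness clause of Proposition~\ref{tater}, and your careful check that $\theta$ and $\lambda$ overlap in 4-sets (left implicit in the paper) is valid but could be shortened by observing that any 4-relation upper bound of $\pi$ and $\lambda$ already contains $\theta$, since each block of $\theta$ is a block of $\pi$ or of $\lambda$.
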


\begin{proof}
If $\pi$ and $\lambda$ agree on $A$, the result is given by Proposition~\ref{hhh}. Assume that $\pi$ and $\lambda$ differ on $A$. Let $\theta$ agree with $\pi$ on $X\setminus A$ and agree with $\lambda$ on $A$. Then $\pi$ and $\theta$ have small overlap in 4-sets and their unique overlap is $A$. By Proposition~\ref{ff} $\sigma^\pi(A)$ is the unique overlap of $\beta(\pi)$ and $\beta(\theta)$, and by Proposition~\ref{ff} $\sigma^\theta(A)$ is the unique overlap of $\beta(\pi)$ and $\beta(\theta)$. So $\sigma^\pi(A)=\sigma^\theta(A)$. 

Since $\pi$ and $\lambda$ have small overlap, by Proposition~\ref{aaaa} they have infinitely many blocks in common. Since $\theta$ agrees with $\pi$ except on $A$, then $\theta$ and $\lambda$ have infinitely many blocks in common, and since they overlap in 4-sets, Proposition~\ref{aaaa} shows that $\theta$ and $\lambda$ have small overlap in 4-sets. Since $A$ is the union of two blocks of $\theta$ that are also blocks of $\lambda$, Proposition~\ref{hhh} gives that $\sigma^\theta(A)=\sigma^\lambda(A)$. Thus $\sigma^\pi(A)=\sigma^\lambda(A)$ as required. 
\end{proof}

We next extend this result further by removing the assumption that the overlap of $\pi$ and $\lambda$ in 4-sets is small. 

\begin{prop}
Suppose that $\pi$ and $\lambda$ are 2-relations that overlap in 4-sets and $A$ is the union of two blocks of $\pi$ and $A$ is the union of two blocks of $\lambda$. Then $\sigma^\pi(A)=\sigma^\lambda(A)$. 
\label{jj}
\end{prop}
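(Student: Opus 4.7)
The plan is to reduce to Proposition~\ref{ii} by constructing a single intermediate 2-relation $\mu$. First observe that we may assume $\pi\ne\lambda$ (the result is trivial otherwise); if $\pi$ and $\lambda$ have small overlap in 4-sets, then Proposition~\ref{ii} applies directly, so we further assume that $\pi$ and $\lambda$ share only finitely many common blocks while still overlapping in 4-sets.

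The key preliminary is a structural analysis of the overlaps of $\pi$ and $\lambda$ (Definition~\ref{aa}) inside $X\setminus A$. By Proposition~\ref{bbbb}(1), for each $\pi$-block $\{x,y\}$ that is not a $\lambda$-block, its partner is the $\pi$-block $\{\lambda(x),\lambda(y)\}$, and the four-element set $\{x,y,\lambda(x),\lambda(y)\}$ is an overlap of $\pi$ and $\lambda$. Tracing this partner map around any orbit of the 2-regular bipartite multigraph on $X\setminus A$ with $\pi$- and $\lambda$-edges, I would show that every orbit containing a non-common $\pi$-block consists of exactly four elements: the partner condition fails in any orbit that is a $2k$-cycle with $k\ge 3$ or a doubly infinite path, since there the natural partner $\{\lambda(x),\lambda(y)\}$ turns out not to be a $\pi$-block. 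Consequently, distinct overlaps in $X\setminus A$ are pairwise disjoint. Because $X$ is infinite and only finitely many common blocks exist, this produces infinitely many pairwise disjoint overlaps $K_i$ $(i\in I)$ in $X\setminus A$.

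Partition the family $\{K_i\}_{i\in I}$ into two infinite disjoint subfamilies $\mc{C}_\pi$ and $\mc{C}_\lambda$, and define $\mu$ to be the 2-relation whose blocks are: the two $\pi$-blocks of $K$ for each $K\in\mc{C}_\pi$; the two $\lambda$-blocks of $K$ for each $K\in\mc{C}_\lambda$; the (finitely many) common $\pi$/$\lambda$-blocks of $X\setminus A$ not contained in any $K_i$; and the two $\pi$-blocks $P,Q$ partitioning $A$. Then $\mu$ is a 2-relation with $A$ a union of two $\mu$-blocks. The pair $(\pi,\mu)$ shares infinitely many blocks (contributed by $\mc{C}_\pi$, the common blocks in $X\setminus A$, and the blocks $P,Q$), and satisfies the Proposition~\ref{bbbb} partner condition because every non-$\pi$ $\mu$-block lies on some $K\in\mc{C}_\lambda$ and has the other $\lambda$-block of $K$ as its bbbb-partner, with $K$ a union of two $\pi$-blocks; hence $(\pi,\mu)$ has small overlap in 4-sets. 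The symmetric verification using $\mc{C}_\pi$ (and, in the case $\pi,\lambda$ disagree on $A$, the fact that $P$ and $Q$ are each other's bbbb-$\lambda$-partners with $P\cup Q=A$ a union of two $\lambda$-blocks) shows that $(\mu,\lambda)$ also has small overlap in 4-sets.

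Two applications of Proposition~\ref{ii} now give $\sigma^\pi(A)=\sigma^\mu(A)=\sigma^\lambda(A)$, as required. The main obstacle is the orbit analysis in the second paragraph: confirming that the partner map $\{x,y\}\mapsto\{\lambda(x),\lambda(y)\}$ combined with the hypothesis that $\pi$ and $\lambda$ overlap in 4-sets forces the non-common portion of the $\pi,\lambda$-structure on $X\setminus A$ to decompose into pairwise disjoint 4-element overlaps. The remainder of the proof is routine bookkeeping.
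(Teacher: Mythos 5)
Your proof is correct and follows essentially the same route as the paper: both interpolate a single hybrid 2-relation (your $\mu$, the paper's $\theta$) that copies $\pi$ on infinitely many overlaps and $\lambda$ on infinitely many others so that each adjacent pair has small overlap in 4-sets, and then apply Proposition~\ref{ii} twice. The only real difference is bookkeeping: the paper gets the pairwise disjointness and infinitude of the overlaps at once from Proposition~\ref{bb} (every overlap must be a block of one fixed 4-relation upper bound $\gamma$, and only finitely many blocks of $\gamma$ can consist of common blocks), and checks that the hybrid still overlaps $\pi$ and $\lambda$ in 4-sets simply by noting that each of its blocks is a block of $\pi$ or of $\lambda$, so $\gamma$ itself remains an upper bound --- which renders your orbit/partner analysis via Proposition~\ref{bbbb} unnecessary, though it is sound.
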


\begin{proof}
If $\pi$ and $\lambda$ have infinitely many blocks in common, then by Proposition~\ref{aaaa}, $\pi$ and $\lambda$ have small overlap in 4-sets, and the result is given by Proposition~\ref{ii}. Assume that $\pi$ and $\lambda$ have only finitely many blocks in common. Since $\pi$ and $\lambda$ overlap in 4-sets there is a 4-relation $\gamma$ that is an upper bound of both. Each block $F$ of $\gamma$ is the union of two blocks of $\pi$ and is the union of two blocks of $\lambda$. The block $F$ is an overlap of $\pi$ and $\lambda$ iff the two blocks of $\pi$ contained in it are not blocks of $\lambda$. It follows that $\pi$ and $\lambda$ must have infinitely many overlaps. Enumerate the set of overlaps of $\pi$ and $\lambda$ as $K_i$ $(i\in I)$ where $I$ is an infinite set. 

Partition $I$ into two infinite pieces $I=P\cup Q$. Define a new 2-relation $\theta$ so that $\theta$ agrees with $\pi$ on $X\setminus \bigcup_PK_i$ and $\theta$ agrees with $\lambda$ on $\bigcup_PK_i$. Since each block of $\theta$ is either a block of $\pi$ or a block of $\lambda$, the 4-relation $\gamma$ that is an upper bound of $\pi$ and $\lambda$ also contains $\theta$. So $\pi$ and $\theta$ overlap in 4-sets, and $\theta$ and $\lambda$ overlap in 4-sets. 

Since $A$ is the union of two blocks of $\pi$ and the union of two blocks of $\lambda$, either $A$ is an overlap of $\pi$ and $\lambda$ or $A$ is disjoint from each overlap of $\pi$ and $\lambda$. It follows that $A$ is the union of two blocks of $\theta$. Since $\pi$ and $\theta$ agree on $X\setminus\bigcup_PK_i$, which contains the infinite set $\bigcup_QK_i$, it follows that $\pi$ and $\theta$ have infinitely many blocks in common. Since $\theta$ and $\lambda$ agree on the infinite set $\bigcup_PK_i$, they have infinitely many blocks in common. So by Proposition~\ref{aaaa}, $\pi$ and $\theta$ have small overlap in 4-sets, and $\theta$ and $\lambda$ have small overlap in 4-sets. Proposition~\ref{ii} then gives that $\sigma^\pi(A)=\sigma^\theta(A)$ and $\sigma^\theta(A)=\sigma^\lambda(A)$. Thus $\sigma^\pi(A)=\sigma^\lambda(A)$ as required. 
\end{proof}

Our final step in this section is to remove all aspects of the overlap condition. For this, we require some preliminary results. 

\begin{defn}
Let $\phi$ and $\rho$ be 2-relations on a set $Y$. For a natural number $n\geq 1$ we say an indexed family $a_i,b_i$ $0\leq i<n$ of distinct elements is an $n$-cycle if with addition modulo $n$, for each $0\leq i<n$ \vspace{-2ex}
\begin{align*}
\{a_i,b_i\}\,\,\,&\mbox{ is a block of }\phi\\
\{b_i,a_{i+1}\}&\mbox{ is a block of }\rho
\end{align*}
\vspace{-2ex}

\noindent We say that an indexed family $a_i,b_i$ $(i\in\mathbb{Z})$ is a $\mathbb{Z}$-cycle if for each $i\in\mathbb{Z}$ we have that $\{a_i,b_i\}$ is a block of $\phi$ and $\{b_i,a_{i+1}\}$ is a block of $\rho$. 
\end{defn}

\begin{prop}
Let $\phi$ and $\rho$ be 2-relations on $Y$ and let $\theta$ be the transitive closure of $\phi\cup\rho$. Then each block of $\theta$ can be indexed to form an $n$-cycle for some $n\geq 1$ or to form a $\mathbb{Z}$-cycle. 
\label{toe}
\end{prop}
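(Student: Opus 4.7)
The plan is to encode the situation as a graph-theoretic problem. First I would form a multigraph $G$ on vertex set $Y$ by placing a $\phi$-edge between $x$ and $y$ whenever $\{x,y\}$ is a block of $\phi$, and a $\rho$-edge between $x$ and $y$ whenever $\{x,y\}$ is a block of $\rho$. These are genuine multi-edges in the case where a single $2$-element set happens to be a block of both relations. Since $\phi$ and $\rho$ are $2$-relations, every element of $Y$ lies in exactly one block of $\phi$ and exactly one block of $\rho$, so every vertex of $G$ has exactly one $\phi$-edge and exactly one $\rho$-edge incident to it, giving total degree $2$. Moreover two vertices are connected in $G$ iff they are related by the transitive closure of $\phi\cup\rho$, so the connected components of $G$ are precisely the blocks of $\theta$.

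Next I would invoke the standard classification of connected graphs in which every vertex has degree $2$: each such component is either a finite cycle or a two-way infinite path. Since each vertex of $G$ is incident to exactly one $\phi$-edge and exactly one $\rho$-edge, the edge labels must strictly alternate along any such cycle or path. In particular, a finite cycle must have even length $2n$ for some $n\geq 1$, where the degenerate case $n=1$ corresponds to a multi-edge arising from a $2$-element set that is a common block of $\phi$ and $\rho$.

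With this structure in hand, fix a block $B$ of $\theta$, viewed as a component of $G$. If $B$ is a finite cycle of length $2n$, I would traverse its vertices in order and label them $a_0,b_0,a_1,b_1,\ldots,a_{n-1},b_{n-1}$ so that each $\{a_i,b_i\}$ is a $\phi$-block and each $\{b_i,a_{i+1}\}$ (indices mod $n$) is a $\rho$-block; the forced alternation and the uniqueness of $\phi$- and $\rho$-partners make this labelling canonical up to the choice of starting vertex and direction. This exhibits $B$ as an $n$-cycle, with the required distinctness of the $2n$ labelled elements being automatic because a graph cycle visits each vertex exactly once. If instead $B$ is a two-way infinite path, the same traversal starting from an arbitrary $a_0\in B$ and extending in both directions, using the forced alternation between $\phi$- and $\rho$-neighbours, yields an indexing $(a_i,b_i)_{i\in\mathbb{Z}}$ exhibiting $B$ as a $\mathbb{Z}$-cycle.

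The main obstacle is essentially nil: this is routine graph theory once one allows multi-edges. The only minor points requiring care are the degenerate case $n=1$, which must be acknowledged explicitly so that blocks common to $\phi$ and $\rho$ are not overlooked, and checking that in the cyclic case the starting point and direction can be chosen so that the roles of $\phi$- and $\rho$-edges match the indexing conventions in the definition of an $n$-cycle (i.e.\ that the first edge out of $a_0$ is the $\phi$-edge rather than the $\rho$-edge, which is always achievable by the alternation).
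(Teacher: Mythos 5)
Your proposal is correct, and it reaches the conclusion by a slightly different route than the paper. The paper never invokes a graph-theoretic classification; it carries out the construction by hand, starting from an arbitrary $a_0$ in a block, repeatedly passing to the unique $\phi$-partner and $\rho$-partner, and arguing element by element that each newly produced point is distinct from all previous ones (because every earlier point already has its $\phi$- or $\rho$-partner in the list), then extending backwards through the negative integers when the forward process never closes up. Your argument packages exactly this mechanism as a known fact: the multigraph in which every vertex carries one $\phi$-edge and one $\rho$-edge is $2$-regular, so each component is a finite cycle or a double ray, and the forced alternation of edge labels gives even cycles $2n$ (including the double-edge case $n=1$ for a common block) or a $\mathbb{Z}$-indexing. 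What your version buys is brevity and modularity, and it makes structural points automatic that the paper must argue explicitly --- e.g.\ that the bidirectional sequence exhausts the whole block (in your setup, components of the multigraph \emph{are} the blocks of $\theta$ by construction), and that one-way infinite paths cannot occur (an endpoint would have degree $1$). What the paper's version buys is self-containedness: the classification of connected $2$-regular infinite multigraphs, while standard, is folklore whose proof is precisely the partner-chasing induction the paper writes out, including the distinctness bookkeeping and the ruling out of premature returns; a referee relying on your citation of it would in effect be trusting the same induction done elsewhere. You were also right to flag the two small points of care --- the degenerate $n=1$ multi-edge and choosing the starting edge to be the $\phi$-edge so the indexing matches the definition --- both of which the paper handles implicitly by always beginning the traversal with the $\phi$-partner of $a_0$.
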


\begin{proof}
Let $E$ be a block of $\theta$ and $a_0$ be an element of $E$. Since $\phi$ is a 2-relation, there is a unique element distinct from $a_0$ that is $\phi\,$-related to $a_0$, we call this $b_0$. There is a unique element distinct from $b_0$ that is $\rho\,$-related to $b_0$. If this is $a_0$, then $E=\{a_0,b_0\}$ is a 2-element block of $\phi$ and~$\rho$, hence a 1-cycle. Otherwise let $a_1$ be the unique element distinct from $b_0$ that is $\rho\,$-related to~$b_0$. Then $a_0,b_0,a_1$ are distinct. Now there is a unique element distinct from $a_1$ that is $\phi\,$-related to $a_1$. This cannot be $a_0$ or $b_0$ since they are distinct and $\phi\,$-related to each other. So it must be some element $b_1$ making all of $a_0,b_0,a_1,b_1$ distinct. Then there is a unique element distinct from $b_1$ that is $\rho\,$-related to~$b_1$. This cannot be either of $b_0,a_1$ since these elements already have an element different from themselves and $b_1$ that is $\rho\,$-related to them. Either $a_0$ is $\rho\,$-related to $b_1$, so $\{a_0,b_0,a_1,b_1\}$ form a 2-cycle, or there is some element $a_2$ distinct from $a_0,b_0,a_1,b_1$ that is $\rho\,$-related to $b_1$. Then there is a unique element distinct from $a_2$ that is $\phi\,$-related to $a_2$. This cannot be any of $a_0,b_0,a_1,b_1$ since they already have an element other than themselves and distinct from $a_2$ that is $\phi\,$-related to them. So there is some $b_2$ making $a_0,b_0,a_1,b_1,a_2,b_2$ distinct and with $b_2$ $\phi\,$-related to $a_2$. 

\vspace{1ex}
\begin{center} 
\begin{tikzpicture}[scale = .75]
\draw [fill] (0,1) circle [radius=.04]; \draw [fill] (1,0) circle [radius=.04]; \draw [fill] (2,1) circle [radius=.04]; \draw [fill] (3,0) circle [radius=.04]; 
\draw [fill] (4,1) circle [radius=.04]; \draw [fill] (5,0) circle [radius=.04]; \draw [fill] (6,1) circle [radius=.04]; 
\draw [fill] (10,0) circle [radius=.04]; \draw [fill] (11,1) circle [radius=.04]; \draw [fill] (12,0) circle [radius=.04];
\node at (0,1.4) {$a_0$}; \node at (2,1.4) {$a_1$}; \node at (4,1.4) {$a_2$}; \node at (6,1.4) {$a_3$}; \node at (11,1.4) {$a_n$};
\node at (1,-.4) {$b_0$}; \node at (3,-.4) {$b_1$}; \node at (5,-.4) {$b_2$}; \node at (10,-.4) {$b_{n-1}$}; \node at (12,-.4) {$b_n$};
\draw (0,1) -- (1,0); \draw (2,1) -- (3,0); \draw (4,1) -- (5,0); \draw (11,1) -- (12,0); \draw (6,1) -- (6.3,.7); \draw (9.7,.3) -- (10,0); 
\draw[dashed] (1,0) -- (2,1); \draw[dashed] (3,0) -- (4,1); \draw[dashed] (5,0) -- (6,1); \draw[dashed] (1,0) -- (2,1); \draw[dashed] (10,0) -- (11,1); 
\draw [fill] (7.8,.5) circle [radius=.02]; \draw [fill] (8,.5) circle [radius=.02]; \draw [fill] (8.2,.5) circle [radius=.02];
\end{tikzpicture}
\end{center}

Continuing, we get to a point where we have $a_0,b_0,\ldots,a_n,b_n$ distinct with $a_j$ $\phi\,$-related to $b_j$ for each $0\leq j\leq n$ and $b_j$ $\rho\,$-related to $a_{j+1}$ for each $0\leq j<n$. Then there is a unique element distinct from $b_n$ that is $\rho\,$-related to $b_n$. This cannot be any of $b_0,a_1,b_1,\ldots,a_n$ since these all have an element distinct from themselves and $b_n$ that is $\rho\,$-related to them. If $a_0$ is $\rho\,$-related to $b_n$, then $E=\{a_0,b_0,\ldots,a_n,b_n\}$ is an $n$-cycle. Otherwise, there is some element $a_{n+1}$ distinct from all of $a_0,b_0,\ldots,a_n,b_n$ with $a_{n+1}$ $\rho\,$-related to $b_n$. Then there is a unique element distinct from $a_{n+1}$ that is $\phi\,$-related to it, and this element cannot be any of $a_0,b_0,\ldots,a_n,b_n$ since these are all $\phi\,$-related to an element distinct from themselves and $a_{n+1}$. So there is $b_{n+1}$ that is $\phi$ related to $a_{n+1}$ making all of $a_0,b_0,\ldots,a_{n+1},b_{n+1}$ distinct. Then by an obvious induction we may continue. 

So either $E$ is an $n$-cycle for some finite $n\geq 1$, or this process continues indefinitely to produce $a_n,b_n$ for each $n\geq 0$ with all distinct, $a_n$ $\phi\,$-related to $b_n$, and $b_n$ $\rho\,$-related to $a_{n+1}$. This however is not a block as there can be no element in this collection that is distinct from $a_0$ and is $\rho\,$-related to $a_0$. Let $b_{-1}$ be the unique such element. Then there is a unique element distinct from $b_{-1}$ that is $\phi\,$-related to it, and it cannot be any of the elements we have so far obtained since they are all $\phi\,$-related to an element distinct from themselves and $b_{-1}$, hence there must be a new element $a_{-1}$. This process must now continue indefinitely through the negative integers. To see this, suppose we have $a_{-n},b_{-n},a_{-n+1},b_{-n+1},\ldots$. Now each element except $a_{-n}$ is $\rho\,$-related to an element in this list distinct from itself. So $a_{-n}$ is $\rho\,$-related to a new element $b_{-n-1}$. Then $b_{-n-1}$ is $\phi\,$-related to some element not in this extended list, and we call this $a_{-n-1}$, and this process continues. This yields $E=\{a_n,b_n:n\in\mathbb{Z}\}$, and this is a $\mathbb{Z}$-cycle. 
\end{proof}

\begin{defn}
Let $Y=A\cup B$ a partition of a set $Y$. A 2-relation $\phi$ on $Y$ is a matching between $A$ and $B$ if each block of $\phi$ contains one element of $A$ and one element of $B$. 
\end{defn}

\begin{prop}
Let $\phi$ and $\rho$ be 2-relations on $Y$. Then $Y$ can be partitioned $Y=A\cup B$ so that both $\phi$ and $\rho$ are matchings between $A$ and $B$. 
\label{gffr}
\end{prop}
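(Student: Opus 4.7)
The plan is to use Proposition~\ref{toe} directly. Let $\theta$ be the transitive closure of $\phi\cup\rho$. Since every element of $Y$ lies in some block of $\phi$ (as $\phi$ is a $2$-relation) and in some block of $\rho$, every element of $Y$ lies in a unique block of $\theta$. So the blocks of $\theta$ partition $Y$, and by Proposition~\ref{toe} each block of $\theta$ admits an indexing either as an $n$-cycle $a_0,b_0,\ldots,a_{n-1},b_{n-1}$ (for some $n\geq 1$) or as a $\mathbb{Z}$-cycle $a_i,b_i$ $(i\in\mathbb{Z})$, with $\{a_i,b_i\}$ a block of $\phi$ and $\{b_i,a_{i+1}\}$ a block of $\rho$ (indices mod $n$ in the finite case).

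For each block $E$ of $\theta$, fix once and for all such an indexing. Let $A$ be the union over all blocks $E$ of the $a$-indexed elements in the chosen indexing of $E$, and let $B$ be the union of the $b$-indexed elements. Since the blocks of $\theta$ partition $Y$ and within each block the $a$'s and $b$'s are disjoint and cover $E$, we have $Y=A\cup B$ as a partition.

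It remains to check that both $\phi$ and $\rho$ are matchings between $A$ and $B$. Every block of $\phi$ is contained in a single block $E$ of $\theta$, since $\phi\subseteq\theta$, and is therefore of the form $\{a_i,b_i\}$ for some $i$ in the indexing of $E$; this contains exactly one element of $A$ and one of $B$. Similarly every block of $\rho$ is of the form $\{b_i,a_{i+1}\}$ (indices mod $n$ if $E$ is an $n$-cycle), which again contains exactly one element of $A$ and one of $B$. Hence both $\phi$ and $\rho$ are matchings between $A$ and $B$.

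There is no real obstacle here; the work was done in establishing the cycle structure in Proposition~\ref{toe}. The only point that requires a moment of care is noting that every element of $Y$ genuinely lies in some $\theta$-block (so that the construction is exhaustive), which is immediate because $\phi$ and $\rho$ are each $2$-relations and thus partition $Y$.
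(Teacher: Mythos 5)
Your proof is correct and takes essentially the same approach as the paper: both pass to the transitive closure $\theta$ of $\phi\cup\rho$, invoke Proposition~\ref{toe} to index each block of $\theta$ as an $n$-cycle or $\mathbb{Z}$-cycle, and then split each block by placing the $a$-indexed elements in $A$ and the $b$-indexed elements in $B$. Your explicit check that every block of $\phi$ has the form $\{a_i,b_i\}$ and every block of $\rho$ the form $\{b_i,a_{i+1}\}$ simply spells out the verification the paper leaves implicit.
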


\begin{proof}
Let $\theta$ be the transitive closure of $\phi\cup\rho$ and let $E_i$ $(i\in I)$ be an enumeration of the blocks of $\theta$. It may be that $\theta$ has just one block, or that the cardinality of the set of blocks of $\theta$ equals that of $Y$. It is not of importance. Let $i\in I$. By Proposition~\ref{toe} the elements of $E_i$ can be indexed to form either an $n$-cycle for some $n\geq 1$ or a $\mathbb{Z}$-cycle. In either case, there is a partition $E_i=A_i\cup B_i$ so that each block of $\phi$ that is contained in $E_i$ contains one element of $A_i$ and one element of $B_i$, and each element of $\rho$ contains one element of $A_i$ and one element of $B_i$. Set $A=\bigcup_IA_i$ and $B=\bigcup_IB_i$. 
\end{proof}

\begin{prop}
If $\phi$ and $\rho$ are 2-relations on $Y$, then there is a finite sequence $\nu_0,\ldots,\nu_n$ of 2-relations with $\nu_0=\phi$, $\nu_n=\rho$ where $\nu_i$ and $\nu_{i+1}$ overlap in 4-sets for each $i=0,\ldots,n-1$. 
\label{lizzie}
\end{prop}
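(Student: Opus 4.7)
The plan is to reduce the problem to a subgroup-membership question about $\Perm(A)$ and then invoke the Baer-Schreier-Ulam theorem already used in Section~9. By Proposition~\ref{gffr}, we may partition $Y = A \cup B$ so that both $\phi$ and $\rho$ are matchings between $A$ and $B$, given by bijections $f, g : A \to B$; set $\sigma = g^{-1} \circ f \in \Perm(A)$. For any further matching $\mu$ given by a bijection $h : A \to B$, the cycles of $\phi \cup \mu$ described in Proposition~\ref{toe} correspond bijectively to the orbits of $h^{-1} \circ f$ acting on $A$. Combined with Proposition~\ref{bbbb}, this yields the key observation: $\phi$ and $\mu$ overlap in 4-sets iff every orbit of $h^{-1} \circ f$ has size at most two and the number of fixed points is infinite or an even finite number.

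Call such involutions \emph{good}, and let $G$ denote the subgroup of $\Perm(A)$ they generate. If $\sigma^{-1} = \tau_1 \circ \cdots \circ \tau_k$ is a product of good involutions, then setting $h_i = f \circ \tau_1 \circ \cdots \circ \tau_i$ and letting $\nu_i$ denote the matching of $h_i$ produces $\nu_0 = \phi$, $\nu_k = \rho$ (since $h_k = f \circ \sigma^{-1} = g$), while a direct computation gives $h_i^{-1} \circ h_{i-1} = \tau_i$, so consecutive matchings differ by the good involution $\tau_i$ and hence overlap in 4-sets by the key observation. It thus suffices to prove $\sigma^{-1} \in G$.

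The main obstacle is this group-membership claim, which we resolve by showing $G = \Perm(A)$. First, $G$ is normal in $\Perm(A)$ because conjugation by any permutation preserves the cycle structure, and hence fixed-point cardinality, of an involution. Second, since $Y$, and hence $A$, is infinite, $A$ admits a fixed-point-free involution $\iota$ (obtained by pairing its elements arbitrarily); because $\iota$ has zero fixed points, it is good and so lies in $G$. The Baer-Schreier-Ulam theorem cited in the proof of Theorem~\ref{tiger} then forces $G = \Perm(A)$, as $G$ is a normal subgroup of $\Perm(A)$ containing an element with no fixed points. In particular $\sigma^{-1} \in G$, which yields a finite decomposition into good involutions and hence the required chain of 2-relations.
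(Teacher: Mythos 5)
Your proof is correct, and while it shares the paper's skeleton --- reduce to matchings via Proposition~\ref{gffr}, factor a permutation into involutions, and chain the resulting matchings --- the engine is genuinely different. The paper proves only the one implication it needs, and by hand: for an involution $\delta$ of $B$ with infinitely many fixed points it explicitly constructs a 4-relation upper bound of $\lambda$ and $\delta(\lambda)$, and it then imports from the proof of Theorem~\ref{tiger} (which rests on Proposition~\ref{involutions} together with Baer-Schreier-Ulam) the fact that every permutation is a finite composite of such involutions. You instead extract from Propositions~\ref{toe} and~\ref{bbbb} an exact criterion: two matchings given by bijections $f,h$ overlap in 4-sets iff $h^{-1}\circ f$ is an involution whose fixed-point set is infinite or of even finite cardinality. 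The payoff is that $0$ is even, so a fixed-point-free involution is already in your generating set; since that set is closed under conjugation, the subgroup $G$ it generates is normal, and Baer-Schreier-Ulam yields $G=\Perm(A)$ in one step, bypassing the two-involution factorization of Proposition~\ref{involutions}. Your route is thus shorter on the group-theoretic side and records a sharper fact (the iff characterization of overlap for matchings, of which only the ``if'' direction is actually used in the chain), while the paper's route is more hands-on and recycles machinery already set up for Theorem~\ref{tiger}. Two small points, neither a gap: your computation $h_i^{-1}\circ h_{i-1}=\tau_i$ uses that $\tau_i$ is self-inverse, which holds since it is an involution; and, exactly as in the paper's own proof, you tacitly use that $Y$ (hence $A$) is infinite --- for the existence of a fixed-point-free involution and for Baer-Schreier-Ulam --- which is the setting in which Proposition~\ref{lizzie} is applied.
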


\begin{proof}
Using Proposition~\ref{gffr} there is a partition $Y=A\cup B$ so that $\phi$ and $\rho$ are matchings between $A$ and $B$. For each matching $\lambda$ between $A$ and $B$ and each permutation $\sigma$ of $B$, define the 2-relation $\sigma(\lambda)$ so that 

\[\sigma(\lambda) \mbox{ has as blocks } \{a,\sigma(b)\} \mbox{ where }\{a,b\}\mbox{ is a block of $\lambda$ with $a\in A$ and $b\in B$}\]
\vspace{-1ex}

We now show that if $\lambda$ is a matching between $A$ and $B$ and $\delta$ is an involution of $B$ that has infinitely many fixed points, then $\lambda$ and $\delta(\lambda)$ overlap in 4-sets. Let $P$ be the set of points of $B$ that are fixed points of $\delta$, and $Q$ be the set of points of $B$ that are not fixed points of $\delta$. Note that $b\in Q$ iff $\delta(b)\in P$. Define a 4-relation $\gamma$ so that 

\[ \{a,b,c,\delta(b)\} \mbox{ is a block of $\gamma$ if $b\in Q$ and $\{a,b\},\{c,\delta(b)\}$ are blocks of $\lambda$}\]
\vspace{-1ex}

\noindent This gives a collection of 4-element subsets of $Y$, each consisting of two blocks of $\lambda$, with these 4-element sets pairwise disjoint. The union of these 4-element sets is not all $Y$ since it does not contain any elements of $P$. To extend this, choose an arbitrary pairing of the infinitely many blocks of $\lambda$ that contain a fixed point $b\in P$ and take as the remaining blocks of $\gamma$ the unions of these pairs of blocks. This gives a 4-relation $\gamma$ that by definition contains $\lambda$. If $b\in P$ is a fixed point of $\delta$, the block of $\delta(\lambda)$ that contains $b$ is equal to the block of $\lambda$ that contains $b$, so is contained in $\gamma$. Suppose $b\in Q$. Let $c\in A$ be such that $\{c,\delta(b)\}$ is a block of $\lambda$. Since $\delta(\delta(b))=b$, the block of $\delta(\lambda)$ that contains $b$ is $\{c,b\}$, and this is contained in a block of $\gamma$. Therefore $\gamma$ is an upper bound of $\lambda$ and $\delta(\lambda)$, so $\lambda$ and $\delta(\lambda)$ overlap in 4-sets. 

As we have seen in the proof of Theorem~\ref{tiger}, every permutation $\sigma\in\Perm(B)$ is obtained as the composition of finitely many involutions with each having infinitely many fixed points. Since $\phi$ and $\rho$ are both matchings between $A$ and $B$, there is a permutation $\sigma$ with $\sigma(\phi)=\rho$. Take a finite sequence $\delta_1,\ldots,\delta_n$ of involutions of $B$, with each having infinitely many fixed points, where $\sigma=\delta_n\circ\cdots\circ\delta_1$. Define recursively $\nu_0=\phi$ and $\nu_{i+1}=\delta_{i+1}(\nu_i)$. Then for each $i=0,\ldots,n-1$ we have that $\nu_i$ and $\nu_{i+1}$ overlap in 4-sets, and $\nu_n=(\delta_n\circ\cdots\circ\delta_1)(\phi) = \sigma(\phi)=\rho$. 
\end{proof}

We are now prepared to remove the overlap conditions entirely from our string of results. 

\begin{prop}
Suppose that $\pi$ and $\lambda$ are 2-relations and that $A$ is a 4-element set that is the union of two blocks of $\pi$ and is the union of two blocks of $\lambda$. Then $\sigma^\pi(A)=\sigma^\lambda(A)$. 
\label{jjj}
\end{prop}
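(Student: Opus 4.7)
The plan is to reduce to Proposition~\ref{jj} by constructing a finite chain of 2-relations connecting $\pi$ to $\lambda$ along which $A$ remains the union of two blocks and along which consecutive members overlap in 4-sets; then telescoping gives the result. The key technical input will be Proposition~\ref{lizzie}, applied on the complement of $A$.

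First I would dispose of a minor subtlety: $\pi$ and $\lambda$ may partition $A$ differently. Introduce a 2-relation $\tau$ whose blocks are the blocks of $\pi$ lying in $X\setminus A$ together with the two blocks of $\lambda$ that partition $A$. Since $\pi$ and $\tau$ agree on $X\setminus A$ and both have $A$ as the union of two blocks, any 4-relation on $X$ obtained by taking $A$ as one block and arbitrarily pairing the remaining blocks of $\pi$ is an upper bound of both; hence $\pi$ and $\tau$ overlap in 4-sets. Proposition~\ref{jj} then gives $\sigma^\pi(A)=\sigma^\tau(A)$. It therefore suffices, replacing $\pi$ by $\tau$ if necessary, to treat the case where $\pi$ and $\lambda$ partition $A$ into the same pair of blocks $P,Q$.

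Assume this reduction has been made. Let $Y=X\setminus A$, which is infinite, and let $\pi|_Y$ and $\lambda|_Y$ denote the 2-relations on $Y$ obtained by deleting the blocks $P$ and $Q$. Apply Proposition~\ref{lizzie} to $\pi|_Y$ and $\lambda|_Y$ on the set $Y$ to produce 2-relations
\[\nu_0=\pi|_Y,\ \nu_1,\ \ldots,\ \nu_n=\lambda|_Y\]
on $Y$ such that each consecutive pair $\nu_i,\nu_{i+1}$ overlaps in 4-sets on $Y$. For each $i$, let $\hat{\nu}_i$ be the 2-relation on $X$ obtained from $\nu_i$ by adjoining $P$ and $Q$ as additional blocks. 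Then $\hat{\nu}_0=\pi$, $\hat{\nu}_n=\lambda$, and $A$ is the union of two blocks of each $\hat{\nu}_i$. Moreover, if $\gamma$ is a 4-relation on $Y$ that contains both $\nu_i$ and $\nu_{i+1}$, then adjoining the single block $A$ to $\gamma$ yields a 4-relation on $X$ that contains both $\hat{\nu}_i$ and $\hat{\nu}_{i+1}$, so these two relations overlap in 4-sets on $X$.

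Proposition~\ref{jj} now applies to each consecutive pair $\hat{\nu}_i,\hat{\nu}_{i+1}$, giving $\sigma^{\hat{\nu}_i}(A)=\sigma^{\hat{\nu}_{i+1}}(A)$, and telescoping yields $\sigma^\pi(A)=\sigma^\lambda(A)$. The only genuinely delicate point is the possibility that $\pi$ and $\lambda$ split $A$ differently, and this is handled cleanly at the outset by the intermediate relation $\tau$; the remainder is essentially bookkeeping about extending 2-relations and their 4-relation upper bounds from $Y$ back to $X$.
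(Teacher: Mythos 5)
Your proof is correct and follows essentially the same route as the paper: restrict $\pi$ and $\lambda$ to $Y=X\setminus A$, apply Proposition~\ref{lizzie} there, lift the chain back to $X$ by adjoining blocks inside $A$, and telescope via Proposition~\ref{jj}. The only difference is cosmetic: you absorb the possible mismatch of the two partitions of $A$ into an initial link $\tau$, whereas the paper folds it into the final link by letting the last lifted relation agree with $\lambda$ on $A$.
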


\begin{proof}
Let $Y=X\setminus A$, let $\phi$ be the restriction of $\pi$ to $Y$ and $\rho$ be the restriction of $\lambda$ to $Y$. By Proposition~\ref{lizzie} there is a finite sequence $\nu_0,\ldots,\nu_n$ of 2-relations on $Y$ so that $\nu_0=\phi$, $\nu_n=\rho$ where $\nu_i$ and $\nu_{i+1}$ have overlap in 4-sets for each $i=0,\ldots,n-1$. For $i=0,\ldots,n-1$ define a relation $\mu_i$ on $X$ to agree with $\nu_i$ on $Y$ and to agree with $\pi$ on $A$. Define $\mu_n$ to agree with $\nu_n$ on $Y$ and with $\lambda$ on $A$. Note that $\mu_0=\pi$ and $\mu_n=\lambda$.
Then for each $i=0,\ldots,n-1$ we have (1) $\mu_i$ and $\mu_{i+1}$ have overlap in 4-sets and (2) $A$ is the union of two blocks of $\mu_i$ and $A$ is the union of two blocks of $\mu_{i+1}$. Then by Proposition~\ref{jj} we have that $\sigma^{\mu_i}(A) = \sigma^{\mu_{i+1}}(A)$ for each $i=0,\ldots,n-1$. It follows that $\sigma^\pi(A)=\sigma^{\mu_0}(A)=\sigma^{\mu_n}(A)=\sigma^\lambda(A)$ as required. 
\end{proof}

There remains a final task in our series of ever stronger results. We wish to extend matters from 4-element sets to 2-element sets. This is provided by the following. 

\begin{prop}
Suppose that $\pi$ and $\lambda$ are 2-relations and that $P$ is a 2-element set that is a block of $\pi$ and is a block of $\lambda$. Then $\sigma^\pi(P)=\sigma^\lambda(P)$. 
\label{ll}
\end{prop}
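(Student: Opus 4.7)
The strategy is to reduce from the 2-element case to the 4-element case handled by Proposition~\ref{jjj}. Recall that $\sigma^\pi_2(P)$ is defined as $\sigma^\pi_4(A)\cap\sigma^\pi_4(B)$ for any 4-element sets $A,B\in\mc{P}(4,\pi)$ with $A\cap B=P$. So if we can find such $A,B$ that are simultaneously unions of two blocks of $\lambda$, then Proposition~\ref{jjj} applied twice gives $\sigma^\pi_4(A)=\sigma^\lambda_4(A)$ and $\sigma^\pi_4(B)=\sigma^\lambda_4(B)$, and then intersecting yields $\sigma^\pi(P)=\sigma^\pi_4(A)\cap\sigma^\pi_4(B)=\sigma^\lambda_4(A)\cap\sigma^\lambda_4(B)=\sigma^\lambda(P)$. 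To obtain such $A,B$, it suffices for $\pi$ and $\lambda$ to have at least two blocks in common besides $P$: then take $A=P\cup Q_1$ and $B=P\cup Q_2$ for two such blocks $Q_1,Q_2$.

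In general $\pi$ and $\lambda$ need not share any blocks beyond $P$, so the first step will fail directly. To handle the general case, I would construct an intermediate chain of 2-relations, each having $P$ as a block, along which consecutive pairs do share enough common blocks. Let $Y=X\setminus P$, and let $\phi$ and $\rho$ denote the restrictions of $\pi$ and $\lambda$ to $Y$; these are 2-relations on the infinite set $Y$. Apply Proposition~\ref{lizzie} to $\phi$ and $\rho$ to obtain a finite sequence of 2-relations $\nu_0=\phi,\nu_1,\ldots,\nu_n=\rho$ on $Y$ with consecutive $\nu_i,\nu_{i+1}$ overlapping in 4-sets. Extend each $\nu_i$ to a 2-relation $\mu_i$ on $X$ by adjoining $P$ as a block, so that $\mu_0=\pi$, $\mu_n=\lambda$, and $P$ is a block of every $\mu_i$.

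Inspecting the construction in Proposition~\ref{lizzie}, consecutive $\nu_i,\nu_{i+1}$ are related by an involution with infinitely many fixed points, hence by Proposition~\ref{aaaa} they have small overlap in 4-sets and in particular share infinitely many common blocks. Translating back, $\mu_i$ and $\mu_{i+1}$ share infinitely many blocks besides $P$. For each $i$ I would pick two distinct common blocks $Q_1,Q_2$ of $\mu_i$ and $\mu_{i+1}$ different from $P$, set $A=P\cup Q_1$ and $B=P\cup Q_2$, and note that $A,B$ are unions of two blocks of both $\mu_i$ and $\mu_{i+1}$ with $A\cap B=P$. Proposition~\ref{jjj} then gives $\sigma^{\mu_i}(A)=\sigma^{\mu_{i+1}}(A)$ and $\sigma^{\mu_i}(B)=\sigma^{\mu_{i+1}}(B)$, and Proposition~\ref{w} yields
\[\sigma^{\mu_i}(P)=\sigma^{\mu_i}_4(A)\cap\sigma^{\mu_i}_4(B)=\sigma^{\mu_{i+1}}_4(A)\cap\sigma^{\mu_{i+1}}_4(B)=\sigma^{\mu_{i+1}}(P).\]
Chaining these equalities from $i=0$ to $i=n-1$ gives $\sigma^\pi(P)=\sigma^{\mu_0}(P)=\sigma^{\mu_n}(P)=\sigma^\lambda(P)$.

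The main point to verify carefully will be that consecutive members of the chain $\mu_i,\mu_{i+1}$ indeed share at least two common blocks distinct from $P$, which rests on unpacking the explicit structure of the involutions used in Proposition~\ref{lizzie} (involutions on the underlying matching partition with infinitely many fixed points). Once this is in hand, the reduction to Proposition~\ref{jjj} via the definition of $\sigma_2^\pi$ is essentially automatic, and no further obstacle remains.
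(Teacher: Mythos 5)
Your proposal is correct, but it takes a genuinely different route from the paper. The paper's proof uses a single interpolating relation rather than a chain: pick blocks $Q,R$ of $\pi$ and blocks $S,T$ of $\lambda$ disjoint from $Q,R$, and construct one auxiliary 2-relation $\theta$ having all of $P,Q,R,S,T$ as blocks (its behavior elsewhere is irrelevant). Then $A=P\cup Q$ and $B=P\cup R$ are unions of two blocks of both $\pi$ and $\theta$, while $C=P\cup S$ and $D=P\cup T$ are unions of two blocks of both $\theta$ and $\lambda$, so four applications of Proposition~\ref{jjj} together with the defining identity of Proposition~\ref{w} give $\sigma^\pi(P)=\sigma^\theta(P)=\sigma^\lambda(P)$ in one step. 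Your chain argument works, but note two things. First, the statement of Proposition~\ref{lizzie} alone is not enough for your key claim: overlapping in 4-sets permits the consecutive relations to have \emph{zero} common blocks (Proposition~\ref{bbbb} only forces the number of common blocks to be infinite or even, possibly zero), so you are right that you must unpack the proof of \ref{lizzie} --- and that unpacking does succeed, since $\nu_{i+1}=\delta_{i+1}(\nu_i)$ for an involution $\delta_{i+1}$ of $B$ with infinitely many fixed points, and each fixed point $b$ yields a common block $\{a,b\}$ of $\nu_i$ and $\nu_{i+1}$. Second, your appeal to small overlap via Proposition~\ref{aaaa} is superfluous: Proposition~\ref{jjj} carries no overlap hypothesis at all, so all you need is two common blocks distinct from $P$, which the fixed points supply. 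What your approach buys is a demonstration that the chaining technique behind \ref{jjj} descends directly to the 2-set level; what the paper's approach buys is economy --- it exploits the freedom to prescribe a 2-relation on finitely many blocks, replacing the whole matching/involution machinery with one hand-built interpolant, and it depends only on the statements of prior results rather than on the internals of their proofs.
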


\begin{proof}
Let $Q$ and $R$ be two blocks of $\pi$ distinct from each other and from $P$. Let $A=P\cup Q$ and $B=P\cup R$. Then by Proposition~\ref{w}

\[\sigma^\pi(P)=\sigma^\pi(A)\cap\sigma^\pi(B)\]
\vspace{-1ex}

It is given that $P$ is a block of $\lambda$. Take two blocks $S$ and $T$ of $\lambda$ that are distinct from~$P$, distinct from each other, and disjoint from $Q$ and $R$. Set $C=P\cup S$ and $D=P\cup T$. Then by Proposition~\ref{w}
\[\sigma^\lambda(P)=\sigma^\lambda(C)\cap\sigma^\lambda(D)\]
\vspace{-1ex}

Construct a 2-relation $\theta$ that has as blocks $P,Q,R,S,T$ and whose behavior on the rest of $X$ is unimportant. Then $A,B,C,D$ are 4-element subsets of $\theta$ that are unions of blocks of $\theta$. So by Proposition~\ref{w}
\[\sigma^\theta(A)\cap\sigma^\theta(B)=\sigma^\theta(P)=\sigma^\theta(C)\cap\sigma^\theta(D)\]
\vspace{-1ex}

Proposition~\ref{jjj} gives $\sigma^\pi(A)=\sigma^\theta(A)$ and $\sigma^\pi(B)=\sigma^\theta(B)$, and also that $\sigma^\theta(C)=\sigma^\lambda(C)$ and $\sigma^\theta(D)=\sigma^\lambda(D)$. Therefore 
\vspace{-2ex}

\begin{align*}
\sigma^\pi(P)&\,\,=\,\,\sigma^\pi(A)\,\,\cap\,\,\sigma^\pi(B)\\
&\,\,=\,\,\sigma^\theta(A)\,\,\,\cap\,\,\,\sigma^\theta(B)\\
&\,\,=\,\,\sigma^\theta(C)\,\,\,\cap\,\,\sigma^\theta(D)\\
&\,\,=\,\,\sigma^\lambda(C)\,\,\cap\,\,\sigma^\lambda(D)\\
&\,\,=\,\,\sigma^\lambda(P)
\end{align*}
\noindent This concludes the proof of the result. 
\end{proof}

To conclude this section we note that all of the results obtained for the maps $\sigma^\pi$ for a 2-relation $\pi$ using the size preserving automorphism $\beta$ of $\Eq*(X)$ also hold for the maps $\mu^\pi$ obtained from $\pi$ using the size preserving automorphism $\beta^{-1}$. We state only the strongest of these results, the final one, in this setting. 

\begin{prop}
Suppose that $\pi$ and $\lambda$ are 2-relations and that $P$ is a 2-element set that is a block of $\pi$ and is a block of $\lambda$. Then $\mu^\pi(P)=\mu^\lambda(P)$. 
\label{ll*}
\end{prop}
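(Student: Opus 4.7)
The plan is to observe that the entire development of Section~10 culminating in Proposition~\ref{ll} rests only on the assumption that $\beta$ is a size preserving automorphism of $\Eq*(X)$, and to note that the inverse $\beta^{-1}$ is itself a size preserving automorphism of $\Eq*(X)$. Indeed, the maps $\mu^\pi$ were defined in Proposition~\ref{w} and Definition~\ref{y} by running the construction of $\sigma^\pi$ with $\beta^{-1}$ in place of $\beta$, and Proposition~\ref{s} produced $\mu^\pi_4$ in parallel with $\sigma^\pi_4$.

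First, I would verify that each of the intermediate propositions of Section~10 (Corollary~\ref{u}, Propositions~\ref{t}, \ref{v1}, \ref{v}, \ref{M}, \ref{x}) has an identical formulation with $\mu$ in place of $\sigma$ and $\beta^{-1}$ in place of $\beta$, simply by invoking the original statement applied to the size preserving automorphism $\beta^{-1}$ and the corresponding map $\sigma^\pi$ associated to $\beta^{-1}$, which coincides with $\mu^\pi$. Note that Propositions~\ref{zinkle}--\ref{lizzie} concerning overlaps in 4-sets, matchings, cycles, and the connecting sequence of 2-relations are statements about $\Eq*(X)$ itself and do not involve the choice of $\beta$ at all, so they can be used verbatim.

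Next I would follow the argument of Proposition~\ref{ll}. Given $\pi$, $\lambda$ and the common 2-element block $P$, choose distinct blocks $Q,R$ of $\pi$ other than $P$, set $A=P\cup Q$, $B=P\cup R$, so that by the $\mu$-analog of Proposition~\ref{w},
\[\mu^\pi(P)=\mu^\pi(A)\cap \mu^\pi(B).\]
Symmetrically choose distinct blocks $S,T$ of $\lambda$ other than $P$ and disjoint from $Q\cup R$, set $C=P\cup S$, $D=P\cup T$, and obtain $\mu^\lambda(P)=\mu^\lambda(C)\cap\mu^\lambda(D)$. Build a 2-relation $\theta$ having $P,Q,R,S,T$ among its blocks (its behavior elsewhere being arbitrary), so that $A,B,C,D\in\mc P(4,\theta)$. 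Applying the $\mu$-analog of Proposition~\ref{w} to $\theta$ gives
\[\mu^\theta(A)\cap\mu^\theta(B)=\mu^\theta(P)=\mu^\theta(C)\cap\mu^\theta(D).\]
Finally invoking the $\mu$-analog of Proposition~\ref{jjj}, we have $\mu^\pi(A)=\mu^\theta(A)$, $\mu^\pi(B)=\mu^\theta(B)$, $\mu^\theta(C)=\mu^\lambda(C)$ and $\mu^\theta(D)=\mu^\lambda(D)$, which chain together to yield $\mu^\pi(P)=\mu^\lambda(P)$.

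There is really no obstacle here: the only ``work'' is the bookkeeping of checking that each step of Section~10 was formulated in a way that is symmetric in $\beta$ and $\beta^{-1}$, and then applying the Proposition~\ref{ll} argument one more time with $\mu$ in place of $\sigma$. The key enabling fact is that $\beta$ size preserving implies $\beta^{-1}$ size preserving, which has already been noted in the remark preceding Proposition~\ref{s}.
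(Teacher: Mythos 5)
Your proposal is correct and takes exactly the paper's approach: the paper justifies Proposition~\ref{ll*} solely by the remark that every result proved for the maps $\sigma^\pi$ via the size preserving automorphism $\beta$ holds verbatim for the maps $\mu^\pi$ via the size preserving automorphism $\beta^{-1}$. Your rerun of the Proposition~\ref{ll} argument with $\mu$ in place of $\sigma$ is precisely the bookkeeping the paper leaves implicit.
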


\section{Constructing a permutation of $X$ and the proof of the main theorem}

In this section we produce a permutation of $X$ from the mappings $\sigma^\pi$ of the previous section. This is used to prove Theorem~\ref{todo}, which states that every size preserving automorphism of $\Eq*(X)$ is obtained from a permutation of $X$. As discussed in section~4, this establishes the proof of the main theorem, Theorem~\ref{main}. 

\begin{prop}
Let $P$ and $Q$ be two 2-element subsets of $X$. Then for any 2-relations $\pi$ and $\lambda$ with $\sigma^\pi(P)$ and $\sigma^\lambda(Q)$ defined we have the following. 
\vspace{1ex}
\begin{enumerate}
\item $P = Q$ iff $\sigma^\pi(P)=\sigma^\lambda(Q)$
\item $P$ and $Q$ are disjoint iff $\sigma^\pi(P)$ and $\sigma^\lambda(Q)$ are disjoint
\end{enumerate}
\label{nn}
\end{prop}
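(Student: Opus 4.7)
My plan is to deduce both biconditionals directly from the coherence results of the previous section, Propositions~\ref{ll} and~\ref{ll*}, together with the fact that $\sigma^\pi_2$ and $\mu^{\beta(\pi)}_2$ are mutual inverses (Proposition~\ref{w}). First observe that since $P$ and $Q$ are 2-element sets and the blocks of any 2-relation are 2-element, the hypothesis that $\sigma^\pi(P)$ and $\sigma^\lambda(Q)$ be defined (in the sense of Definition~\ref{y}) forces $P$ to be a block of $\pi$ and $Q$ to be a block of $\lambda$. Throughout, whenever I need a 2-relation with prescribed 2-element blocks, I use that $X$ is infinite to partition the remainder into pairs.

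For (1), the forward direction is an immediate instance of Proposition~\ref{ll}: if $P=Q$, then $P$ is a block of both $\pi$ and $\lambda$, whence $\sigma^\pi(P)=\sigma^\lambda(P)=\sigma^\lambda(Q)$. For the converse, I would set $R=\sigma^\pi(P)=\sigma^\lambda(Q)$, which is then a common block of $\beta(\pi)$ and $\beta(\lambda)$, and apply Proposition~\ref{ll*} to these two relations to conclude $\mu^{\beta(\pi)}(R)=\mu^{\beta(\lambda)}(R)$. Since $\mu^{\beta(\pi)}_2$ is the inverse of $\sigma^\pi_2$ (Proposition~\ref{w}), the left side is $P$ and the right side is $Q$, giving $P=Q$.

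For the forward direction of (2), when $P$ and $Q$ are disjoint, pick a single 2-relation $\theta$ having both $P$ and $Q$ as blocks. Proposition~\ref{ll}, applied to the pairs $(\pi,\theta)$ and $(\lambda,\theta)$, identifies $\sigma^\pi(P)$ with $\sigma^\theta(P)$ and $\sigma^\lambda(Q)$ with $\sigma^\theta(Q)$. Because $P\neq Q$ and $\sigma^\theta_2$ is a bijection from $\mc{P}(2,\theta)$ to $\mc{P}(2,\beta(\theta))$, these images are two distinct blocks of the equivalence relation $\beta(\theta)$, hence disjoint.

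The converse of (2) is the mirror image, pushed through $\beta^{-1}$. Assuming $R_P:=\sigma^\pi(P)$ and $R_Q:=\sigma^\lambda(Q)$ are disjoint, I would pick a 2-relation $\mu$ on $X$ having both $R_P$ and $R_Q$ as blocks. Applying Proposition~\ref{ll*} to $(\beta(\pi),\mu)$ and to $(\beta(\lambda),\mu)$ gives $\mu^\mu(R_P)=\mu^{\beta(\pi)}(R_P)=P$ and $\mu^\mu(R_Q)=\mu^{\beta(\lambda)}(R_Q)=Q$. Since $\mu^\mu_2$ is a bijection and $R_P\neq R_Q$, $P$ and $Q$ are distinct blocks of $\beta^{-1}(\mu)$, hence disjoint. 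I do not expect any genuine obstacle: all of the combinatorial work has been done in the previous section, and the only subtlety is remembering the inverse relation between $\sigma$ and $\mu$ so that Proposition~\ref{ll*} translates the $\sigma$-image data back to information about $P$ and~$Q$.
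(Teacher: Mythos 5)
Your proof is correct and follows essentially the same route as the paper: the forward direction of (1) is an immediate application of Proposition~\ref{ll}, the forward direction of (2) uses a common 2-relation $\theta$ having both $P$ and $Q$ as blocks together with the bijectivity of $\sigma^\theta_2$, and the converses are the forward arguments pushed through the inverse maps $\mu^{\beta(\pi)}$, $\mu^{\beta(\lambda)}$ via Propositions~\ref{ll*} and~\ref{w}. The only difference is cosmetic: the paper dispatches the backward directions in one sentence by symmetry, while you write them out explicitly (which is fine, though naming your auxiliary 2-relation $\mu$ collides with the map $\mu$ and should be renamed).
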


\begin{proof}
It is sufficient to prove the forward direction of each statement since the backwards direction corresponds to the forward direction applied to the inverse maps $\mu^{\beta(\pi)}$ and $\mu^{\beta(\lambda)}$. The forward direction of (1) is given by Proposition~\ref{ll}. For the forward direction of (2) suppose that $P$ and $Q$ are disjoint. Then there is a 2-relation $\theta$ with $P$ and $Q$ distinct blocks of~$\theta$. Since $\sigma^\theta$ is a bijection, $\sigma^\theta(P)$ and $\sigma^\theta(Q)$ are distinct blocks of $\beta(\theta)$, so they are disjoint. But by Proposition~\ref{ll}, $\sigma^\theta(P)=\sigma^\pi(P)$ and $\sigma^\theta(Q)=\sigma^\lambda(Q)$. 
\end{proof}

\begin{prop}
Let $P$ and $Q$ be 2-element subsets of $X$. Then for any 2-relations $\pi$ and $\lambda$ with $\sigma^\pi(P)$ and $\sigma^\lambda(Q)$ defined
\[ |\, P\cap Q\, |\,\, = \,\, |\, \sigma^\pi(P)\cap\sigma^\lambda(Q)\, |\]
\label{oo}
\vspace{-2ex}
\end{prop}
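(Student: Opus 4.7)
The plan is to observe that, since $P$, $Q$, $\sigma^\pi(P)$, and $\sigma^\lambda(Q)$ are all 2-element subsets of $X$, the cardinality $|P \cap Q|$ lies in $\{0,1,2\}$ and similarly for $|\sigma^\pi(P) \cap \sigma^\lambda(Q)|$. The result therefore reduces to a three-way case analysis, and the cases $|P \cap Q| = 0$ and $|P \cap Q| = 2$ have already been handled by the previous proposition.

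More precisely, I would argue as follows. If $|P \cap Q| = 2$, then $P = Q$; by Proposition~\ref{nn}(1), $\sigma^\pi(P) = \sigma^\lambda(Q)$, so $|\sigma^\pi(P) \cap \sigma^\lambda(Q)| = 2$. If $|P \cap Q| = 0$, then $P$ and $Q$ are disjoint; by Proposition~\ref{nn}(2), $\sigma^\pi(P)$ and $\sigma^\lambda(Q)$ are disjoint, so $|\sigma^\pi(P) \cap \sigma^\lambda(Q)| = 0$. The interesting case is $|P \cap Q| = 1$: here $P \neq Q$ and $P \cap Q \neq \emptyset$, so by the contrapositives of the two clauses of Proposition~\ref{nn} we get $\sigma^\pi(P) \neq \sigma^\lambda(Q)$ and $\sigma^\pi(P) \cap \sigma^\lambda(Q) \neq \emptyset$. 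Since the intersection of two distinct 2-element sets has cardinality strictly less than 2 and is nonempty, it must have cardinality exactly 1.

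This already gives the forward implication $|P \cap Q| = k \Rightarrow |\sigma^\pi(P) \cap \sigma^\lambda(Q)| = k$ for all $k \in \{0,1,2\}$, and hence the equality of cardinalities. No reverse implication needs a separate argument, because the three disjoint cases on the left exhaust all possibilities, so each value on the left forces the corresponding value on the right. There is no real obstacle here: the entire work has been done in Proposition~\ref{nn}, and the present proposition is a brief bookkeeping consequence of the fact that $2$-element sets can intersect in only three ways.
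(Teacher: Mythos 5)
Your proof is correct and is essentially the paper's own argument: both reduce the statement to the trichotomy that two 2-element sets intersect in 0, 1, or 2 elements, dispatch the 0 and 2 cases by the two clauses of Proposition~\ref{nn}, and obtain the middle case by elimination using the biconditional form of that proposition. No gap to report.
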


\begin{proof}
Since $P$ and $Q$ are both 2-element sets, they are either disjoint, have exactly one element in common, or are equal and have 2 elements in common. The same holds for $\sigma^\pi(P)$ and $\sigma^\lambda(Q)$. By Proposition~\ref{nn}, $P$ and $Q$ are equal iff $\sigma^\pi(P)$ and $\sigma^\lambda(Q)$ are equal, and $P$ and $Q$ are disjoint iff $\sigma^\pi(P)$ and $\sigma^\lambda(Q)$ are disjoint. So the remaining possibility, that $P$ and $Q$ have exactly one element in common, occurs iff $\sigma^\pi(P)$ and $\sigma^\lambda(Q)$ have exactly one element in common. 
\end{proof}

\begin{prop}
Suppose that $P,Q,P'$ and $Q'$ are 2-element sets and that $\pi,\lambda,\pi'$ and $\lambda'$ are 2-relations with $\sigma^\pi(P),\sigma^\lambda(Q),\sigma^{\pi'}(P')$ and $\sigma^{\lambda'}(Q')$ defined. Then 

\[ P\,\cap\, Q \,=\, P'\,\cap\, Q'\quad \mbox{ iff }\quad \sigma^\pi(P)\,\cap\,\sigma^\lambda(Q)\, = \, \sigma^{\pi'}(P')\,\cap\,\sigma^{\lambda'}(Q') \]
\vspace{-2ex}
\label{qq}
\end{prop}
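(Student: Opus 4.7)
The first step is to eliminate the dependence on the 2-relations $\pi,\lambda,\pi',\lambda'$. By Proposition~\ref{ll}, $\sigma^\pi(P)$ depends only on $P$, so I write simply $\sigma(P)$; similarly $\mu(P)$ by Proposition~\ref{ll*}. Proposition~\ref{oo} then gives $|P\cap Q|=|\sigma(P)\cap\sigma(Q)|$, which immediately handles the cases $|P\cap Q|\in\{0,2\}$ using Proposition~\ref{nn}: empty intersections correspond, and a 2-element intersection forces $P=Q$, $P'=Q'$, with the further conclusion $P=P'$ then coming from $\sigma(P)=\sigma(P')$ and Proposition~\ref{nn}. So everything reduces to the singleton case.

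My plan for the singleton case is to prove the following key lemma, from which both directions of the proposition follow cleanly. \emph{Lemma:} if $P,Q$ are distinct 2-subsets with $P\cap Q=\{x\}$, then the unique element of $\sigma(P)\cap\sigma(Q)$ depends only on $x$. Granting this, the forward direction is immediate: if $P\cap Q=P'\cap Q'=\{x\}$, then both intersections equal the same singleton determined by $x$. For the reverse direction, I would apply the symmetric version of the lemma to $\mu$: if $\sigma(P)\cap\sigma(Q)=\sigma(P')\cap\sigma(Q')=\{y\}$, then $P\cap Q=\mu(\sigma(P))\cap\mu(\sigma(Q))$ and $P'\cap Q'=\mu(\sigma(P'))\cap\mu(\sigma(Q'))$ both depend only on $y$, hence coincide.

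To prove the lemma, it suffices by a standard chaining argument to establish the following stronger claim: for any fixed $P$ and any two 2-subsets $Q,Q'$ distinct from $P$ with $P\cap Q=P\cap Q'=\{x\}$, one has $\sigma(P)\cap\sigma(Q)=\sigma(P)\cap\sigma(Q')$. (Indeed, for two arbitrary pairs $(P,Q)$ and $(P',Q')$ with common singleton intersection $\{x\}$, chaining through the intermediate pair $(P,P')$ when $P\ne P'$ gives the lemma.) Suppose for contradiction that the two intersections are distinct singletons $\{u\}$ and $\{v\}$; since $\sigma(P)$ has only two elements, $\sigma(P)=\{u,v\}$. Applying Proposition~\ref{oo} to the pair $(Q,Q')$ whose intersection is $\{x\}$, and using Proposition~\ref{nn} to rule out $u\in\sigma(Q')$ and $v\in\sigma(Q)$ (either would force $u=v$), one is forced into the ``triangle'' configuration $\sigma(Q)=\{u,w\}$ and $\sigma(Q')=\{v,w\}$ for a common third element $w\notin\{u,v\}$.

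The main obstacle is to rule out this triangle configuration, and I would do so by a probe argument exploiting the infinitude of $X$. Writing $P=\{x,a\}$, $Q=\{x,b\}$, $Q'=\{x,c\}$ with $a,b,c$ distinct, choose an element $r\in X\setminus\{x,a,b,c\}$ and set $R=\{x,r\}$. Then $R$ is distinct from $P$, $Q$, and $Q'$, and meets each of them in exactly $\{x\}$, so Proposition~\ref{oo} forces $|\sigma(R)\cap\sigma(P)|=|\sigma(R)\cap\sigma(Q)|=|\sigma(R)\cap\sigma(Q')|=1$. A short case analysis on $|\sigma(R)\cap\{u,v,w\}|$ then produces a contradiction: value $0$ makes all three intersections empty; value $1$ leaves exactly one of the three intersections empty, since any single vertex of the triangle $\{u,v,w\}$ lies in only two of the edges $\{u,v\},\{u,w\},\{v,w\}$; and value $2$ forces $\sigma(R)$ to equal one of $\sigma(P),\sigma(Q),\sigma(Q')$, which by Proposition~\ref{nn} forces $R$ to equal $P$, $Q$, or $Q'$, contradicting distinctness. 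This rules out the triangle and completes the lemma.
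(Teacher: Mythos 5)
Your proposal is correct, and its overall skeleton matches the paper's proof: dropping the superscripts via Proposition~\ref{ll} (and \ref{ll*}), disposing of the size-$0$ and size-$2$ cases with Propositions~\ref{oo} and~\ref{nn}, reducing the singleton case to the special case of a common first set by a two-step chain, deriving the reverse direction from the forward one applied to the $\mu$ maps, and arguing the special case by contradiction from the same ``triangle'' configuration $\sigma(P)=\{u,v\}$, $\sigma(Q)=\{u,w\}$, $\sigma(Q')=\{v,w\}$. Where you genuinely diverge is the endgame that kills the triangle. The paper pins down two further images explicitly: it introduces $S=\{y,z\}$ and $S'=\{y,z'\}$ built from the non-shared points, shows $\sigma(S)=\{a,d\}$ and $\sigma(S')=\{b,d\}$ for a new point $d$, and only then probes with $U=\{x,w\}$, getting a contradiction because $\sigma(U)$ must avoid $a,b,d$ yet meet $\sigma(P)=\{a,b\}$. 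You instead probe immediately with a single fresh set $R=\{x,r\}$ and run a three-way case analysis on $|\sigma(R)\cap\{u,v,w\}|$: values $0$ and $1$ violate $|\sigma(R)\cap\sigma(\cdot)|=1$ for at least one of $P,Q,Q'$ (a vertex of a triangle lies on only two of its three edges), and value $2$ forces $\sigma(R)$ to equal an edge, whence $R\in\{P,Q,Q'\}$ by Proposition~\ref{nn}(1), contradicting the choice of $r$. Your incidence count is shorter and avoids the two auxiliary constructions entirely, using only the combinatorics of 2-subsets of a 3-set plus injectivity; the paper's version is longer but more constructive, tracking where the images of neighboring blocks must land, which makes the propagation mechanism visible. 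Both correctly use the infinitude of $X$ only to supply the one extra probe point, so the arguments are of equal strength for the purpose of Proposition~\ref{qq}.
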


\begin{proof}
We prove the forward direction of the statement. The backwards direction follows from the forward direction applied to the inverse maps $\mu^{\beta(\pi)},\mu^{\beta(\lambda)},\mu^{\beta(\pi')}$ and $\mu^{\beta(\lambda')}$. The forward direction follows from Proposition~\ref{nn} in the case that $P$ and $Q$ are disjoint, since $P'$ and $Q'$ must also be disjoint. The forward direction follows from Proposition~\ref{nn} if $P\cap Q$ has two elements, since then $P,Q,P',Q'$ are all equal. It remains to show the forward direction when $P$ and $Q$ intersect in a single element. 

Assume that 
\[P=\{x,y\}\quad \mbox{ and }\quad Q=\{x,z\}\]
\vspace{-2ex}

\noindent If $P=P'$ and $Q=Q'$, then by Proposition~\ref{ll} we have $\sigma^\pi(P)=\sigma^{\pi'}(P')$ and $\sigma^\lambda(Q)=\sigma^{\lambda'}(Q')$ and the result follows. The majority of our effort will go to establishing the next simplest case, that where $P=P'$ and $Q\neq Q'$. In this case we may assume that 
\[\,\, P'=\{x,y\}\quad \mbox{ and }\quad Q'=\{x,z'\}\]
\vspace{-2ex}

Note that our assumptions that $P$ and $Q$ are 2-element sets with $P\cap Q$ having one element gives that $x,y,z$ are distinct, and the same assumptions for $P'$ and $Q'$ give that $x,y,z'$ are distinct. Our assumption that $Q\neq Q'$ gives that $z\neq z'$. So $x,y,z,z'$ are distinct. Consider the following diagram that additionally shows an element $w$ distinct from $x,y,z,z'$. 

\vspace{1ex}
\begin{center}
\begin{tikzpicture}[scale = 1]
\draw [thick] (0,0) circle [radius=1.5];
\draw [fill] (0,0) circle [radius=.05];
\draw [fill] (180:1.5) circle [radius=0.05];
\draw [fill] (60:1.5) circle [radius=0.05];
\draw [fill] (-60:1.5) circle [radius=0.05];
\draw [thick] (0,0) -- (180:1.5);
\draw [thick] (0,0) -- (60:1.5);
\draw [thick] (0,0) -- (-60:1.5);
\node at (240:0.3) {$x$};
\node at (180:1.8) {$y$};
\node at (60:1.8) {$z$};
\node at (-60:1.8) {$z'$};
\node at (160:.75) {$P$};
\node at (40:.75) {$Q$};
\node at (-37:.75) {$Q'$};
\node at (120:1.75) {$S$};
\node at (240:1.8) {$S'$};
\draw [fill] (0:2.5) circle [radius=0.05];
\node at (0:2.8) {$w$};
\end{tikzpicture}
\end{center}
\vspace{1ex}

We write $\sigma(P)$ in place of $\sigma^\pi(P)$ and so forth since Proposition~\ref{ll} shows that the result does not depend on the choice of $\pi$. Since $P$ and $Q$ are 2-element sets that intersect in one element, the same is true of $\sigma(P)$ and $\sigma(Q)$ by Proposition~\ref{oo}. So we may assume that there are distinct $a,b,c$ in $X$ with 
\begin{align*}
\sigma(P)&=\{a,b\}\\ 
\sigma(Q)&=\{a,c\}
\end{align*}
\vspace{-1ex}

Note that $\sigma(P)\cap\sigma(Q)=\{a\}$. We now begin to argue by contradiction and assume that $\sigma(P)\cap\sigma(Q')\neq\{a\}$. Since $|P\cap Q'|=1$, by Proposition~\ref{oo} we have $|\sigma(P)\cap\sigma(Q')|=1$. This one element cannot be $a$, and must belong to $\sigma(P)=\{a,b\}$. Therefore $\sigma(P)\cap\sigma(Q')=\{b\}$. Since $|Q\cap Q'|=1$, by Proposition~\ref{oo} we have that $|\sigma(Q)\cap\sigma(Q')|=1$. It cannot be that $a\in\sigma(Q')$ since that would mean that $a$ belonged to $\sigma(P)\cap\sigma(Q')$. So the one element belonging to $\sigma(Q)\cap\sigma(Q')$ must be the other element of $\sigma(Q)$, the element $c$. Since we have shown that the distinct elements $b$ and $c$ belong to the 2-element set $\sigma(Q')$ we have 
\begin{align*}
\sigma(Q')&=\{b,c\}
\end{align*}
\vspace{-1ex}

Let $S=\{y,z\}$. Then there is a 2-relation $\theta$ with $S$ a block of $\theta$, so $\sigma^\theta(S)$ is defined, and as we have been doing, we write this as $\sigma(S)$. Since $S$ and $Q'$ are disjoint, by Proposition~\ref{oo} so are $\sigma(S)$ and $\sigma(Q')$. So neither $b,c$ is an element of $\sigma(S)$. Since $|S\cap P|=1$, by Proposition~\ref{oo} we have that $|\sigma(S)\cap\sigma(P)|=1$. Since $b,c$ do not belong to $\sigma(S)$ and $\sigma(P)=\{a,b\}$, it follows that $\sigma(S)\cap\sigma(P)=\{a\}$. Then as $\sigma(S)$ has 2 elements and $b,c\not\in\sigma(S)$, there must be an element $d$ of $X$ distinct from $a,b,c$ with 
\begin{align*}
\sigma(S)&=\{a,d\}
\end{align*}
\vspace{-2ex}

Let $S'=\{y,z'\}$. There is a 2-relation $\theta'$ with $S'$ a block of $\theta'$, so $\sigma^{\theta'}(S')$ is defined, and we write this $\sigma(S')$. Since $S'$ and $Q$ are disjoint, by Proposition~\ref{oo} so are $\sigma(S')$ and $\sigma(Q)$. So neither $a,c$ belongs to $\sigma(S')$. Since $|S'\cap P|=1$, by Proposition~\ref{oo} we have $|\sigma(S')\cap\sigma(P)|=1$. Since neither $a,c$ belongs to $\sigma(S')$ and $\sigma(P)=\{a,b\}$, we must have that $\sigma(S')\cap\sigma(P)=\{b\}$. Also $|S\cap S'|=1$, so by Proposition~\ref{oo} we have $|\sigma(S)\cap\sigma(S')|=1$. Since $\sigma(S)=\{a,d\}$ and $a$ is not an element of $\sigma(S')$, we must have $\sigma(S)\cap\sigma(S')=\{d\}$. Then since the distinct elements $b,d$ belong to the 2-element set $\sigma(S')$ we have 
\vspace{-2ex}

\begin{align*}
\sigma(S')&=\{b,d\}
\end{align*}
\vspace{-2ex}

Since $X$ is infinite, there is an element $w$ of $X$ that is distinct from $x,y,z,z'$. Let $U=\{x,w\}$. Then there is a 2-relation $\theta''$ with $U$ a block of $\theta''$. So $\sigma^{\theta''}(U)$ is defined, and again, we write this $\sigma(U)$. Since $U$ is disjoint from $S$ and from $S'$, by Proposition~\ref{oo} we have that $\sigma(U)$ is disjoint from $\sigma(S)=\{a,d\}$ and from $\sigma(S')=\{b,d\}$. Therefore $a,b$ are not elements of $\sigma(U)$. But $|U\cap P|=1$, so by Proposition~\ref{oo} we have $|\sigma(U)\cap\sigma(P)|=1$. But $\sigma(P)=\{a,b\}$ and neither $a,b$ belongs to $\sigma(U)$. This contradiction shows that the assumption $\sigma(P)\cap\sigma(Q)\neq\sigma(P)\cap\sigma(Q')$ is faulty. So we have verified the claim in the case that $P\cap Q$ has a single element and $P=P'$. 

We next consider the general case when $P\cap Q$ has a single element. Again we assume  
\[P=\{x,y\}\quad \mbox{ and }\quad Q=\{x,z\}\]
\vspace{-2ex}

\noindent The case when $P=P'$ and $Q=Q'$ has been considered, and the case $P=Q'$ and $Q=P'$ is an instance of this. The case when $P=P'$ has been considered, and the cases where one of $P,Q$ is equal to one of $P',Q'$ are instances of this. So assume that $P,Q,P'$ and $Q'$ are all distinct and by assumption that $P\cap Q = \{x\}=P'\cap Q'$. Then there are elements $y'$ and $z'$ with $x,y,z,y',z'$ distinct and 
\[\,P'=\{x,y'\}\quad \mbox{ and }\quad Q'=\{x,z'\}\]
\vspace{-1ex}

Since $P\cap Q = P\cap Q'$ and $P\cap Q' = P'\cap Q'$, the case we have proved yields
\begin{align*}
\sigma(P)\,\cap\,\sigma(Q)\, &\,\,=\,\,\, \sigma(P)\,\cap\,\sigma(Q')\\
\sigma(P)\,\cap\,\sigma(Q') &\,\,=\,\, \sigma(P')\,\cap\,\sigma(Q')
\end{align*}
\vspace{-2ex}

\noindent Combining these gives the desired result $\sigma(P)\cap\,\sigma(Q) = \sigma(P)\cap\,\sigma(Q')$.
\end{proof}

\begin{prop}
There are mutually inverse permutations $\sigma$ and $\pi$ of $X$ so that for any distinct $x,y,z$ and any 2-relations $\pi$ and $\lambda$ with $P=\{x,y\}$ a block of $\pi$ and $Q=\{x,z\}$ a block of $\lambda$ 
\vspace{-4ex}

\begin{align*}
\sigma(x) &\mbox{ is the unique element of }\sigma^\pi(P)\,\cap\,\sigma^\lambda(Q)\\
\mu(x) &\mbox{ is the unique element of }\mu^\pi(P)\,\cap\,\mu^\lambda(Q)
\end{align*}
\vspace{-3ex}
\label{rr}
\end{prop}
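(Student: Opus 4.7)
The plan is to define $\sigma:X\to X$ by the prescription in the statement, check that it yields a well-defined function, define $\mu$ analogously from the maps $\mu^\pi$, and prove $\sigma$ and $\mu$ are mutually inverse. For the construction, fix $x\in X$ and choose any distinct $y,z\in X\setminus\{x\}$. Since $X$ is infinite, every 2-element subset of $X$ is a block of some 2-relation, so there exist 2-relations $\pi,\lambda$ with $P=\{x,y\}$ a block of $\pi$ and $Q=\{x,z\}$ a block of $\lambda$. Since $|P\cap Q|=1$, Proposition~\ref{oo} gives $|\sigma^\pi(P)\cap\sigma^\lambda(Q)|=1$, so there is a unique candidate for $\sigma(x)$.

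Well-definedness is the crucial step, and is exactly what Proposition~\ref{qq} was designed to deliver. For any alternative distinct $y',z'\in X\setminus\{x\}$ and 2-relations $\pi',\lambda'$ with $P'=\{x,y'\}$ a block of $\pi'$ and $Q'=\{x,z'\}$ a block of $\lambda'$, we have $P\cap Q=\{x\}=P'\cap Q'$, so Proposition~\ref{qq} yields
\[\sigma^\pi(P)\cap\sigma^\lambda(Q)\;=\;\sigma^{\pi'}(P')\cap\sigma^{\lambda'}(Q').\]
Hence $\sigma(x)$ is independent of all choices. By symmetric reasoning, the analogous definition of $\mu(x)$ using the maps $\mu^\pi$ (obtained from the size-preserving automorphism $\beta^{-1}$) is well-defined, yielding a function $\mu:X\to X$.

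To verify $\mu\circ\sigma=\mathrm{id}_X$, fix $x\in X$ with $y,z,\pi,\lambda,P,Q$ as above and put $w=\sigma(x)$. Set $R=\sigma^\pi(P)$ and $S=\sigma^\lambda(Q)$; these are 2-element blocks of $\beta(\pi)$ and $\beta(\lambda)$ respectively, each containing $w$. Since $P\neq Q$, Proposition~\ref{nn} gives $R\neq S$, so writing $R=\{w,u\}$ and $S=\{w,v\}$ we have $u\neq v$, both distinct from $w$. Therefore $\mu(w)$ is the unique element of $\mu^{\beta(\pi)}(R)\cap\mu^{\beta(\lambda)}(S)$. Proposition~\ref{w} records that $\mu^{\beta(\pi)}$ is the inverse of $\sigma^\pi$ on 2-element blocks, so $\mu^{\beta(\pi)}(R)=P$ and $\mu^{\beta(\lambda)}(S)=Q$. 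Thus $\mu(w)$ is the unique element of $P\cap Q=\{x\}$, giving $\mu(\sigma(x))=x$. The identity $\sigma\circ\mu=\mathrm{id}_X$ follows by the same argument with $\sigma$ and $\mu$ interchanged, so $\sigma$ and $\mu$ are mutually inverse bijections of $X$.

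I do not anticipate any significant obstacle here. The substantial work was in establishing Propositions~\ref{oo}, \ref{nn}, and especially \ref{qq} in the previous section, whose patching arguments across different 2-relations supply exactly the invariance needed for well-definedness. The present proposition is essentially the packaging of these ingredients together with the inverse property recorded in Proposition~\ref{w}.
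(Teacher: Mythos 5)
Your proposal is correct and follows essentially the same route as the paper's proof: define $\sigma(x)$ and $\mu(x)$ by the stated prescription, invoke Proposition~\ref{qq} (with \ref{oo} for uniqueness) for well-definedness, and use Proposition~\ref{w} applied to the image blocks $\sigma^\pi(P)$, $\sigma^\lambda(Q)$ of $\beta(\pi)$, $\beta(\lambda)$ to get $\mu\circ\sigma=\mathrm{id}$ and, symmetrically, $\sigma\circ\mu=\mathrm{id}$. Your explicit check via Proposition~\ref{nn} that the image blocks are distinct (so the definition of $\mu$ applies at $\sigma(x)$) is a minor point the paper leaves implicit, but otherwise the arguments coincide.
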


\begin{proof}
Define maps $\sigma,\mu:X\to X$ as follows. Given $x\in X$ find $y,z$ distinct from $x$ and \mbox{2-relations} $\pi$ and $\lambda$ so that $P=\{x,y\}$ is a block of $\pi$ and $Q=\{x,z\}$ is a block of $\lambda$. Then set 
\begin{align*}
\sigma(x) &\mbox{ is the unique element of }\sigma^\pi(P)\,\cap\,\sigma^\lambda(Q)\\
\mu(x) &\mbox{ is the unique element of }\mu^\pi(P)\,\cap\,\mu^\lambda(Q)
\end{align*}
\vspace{-2ex}

\noindent Proposition~\ref{qq} shows that the results obtained are independent of the choices of $y,z$ and $\pi,\lambda$ so these are well-defined maps. 

Let $x,y,z$ and $\pi,\lambda$ be as described above. Say $\sigma^\pi(P)=\{u,v\}$ and $\sigma^\lambda(Q)=\{u,w\}$. So $\sigma(x)=u$. Note that $R=\{u,v\}$ is a block of $\beta(\pi)$ and $S=\{u,w\}$ is a block of $\beta(\lambda)$. So 

\[ \mu(u) \mbox{ is the unique element of }\mu^{\beta(\pi)}(R)\,\cap\,\mu^{\beta(\lambda)}(S)\]
\vspace{-1ex}

\noindent Proposition~\ref{w} gives that $\mu^{\beta(\pi)}(R)=P$ and $\mu^{\beta(\lambda)}(S)=Q$. Therefore $\mu(u)=x$. Thus $\mu\circ\sigma$ is the identity, and a similar argument shows that $\sigma\circ\mu$ is the identity. So $\sigma$ and $\mu$ are mutually inverse permutations of $X$. 
\end{proof}

\begin{prop}
For any 2-relation $\pi$ and any block $P$ of $\pi$
\vspace{-3ex}

\begin{align*}
\sigma(P) &\,=\, \{\sigma(x):x\in P\}\\
\mu(P) &\,=\, \{\mu(x):x\in P\}
\end{align*}
\vspace{-3ex}
\label{ss}
\end{prop}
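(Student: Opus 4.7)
The plan is to unpack the definition of the permutation $\sigma$ given in Proposition~\ref{rr}. Fix a 2-relation $\pi$ and a block $P=\{x,y\}$ of $\pi$. By Definition~\ref{y}, $\sigma^\pi(P)=\sigma^\pi_2(P)$, and by Proposition~\ref{w} this is a 2-element block of $\beta(\pi)$. So my strategy reduces to showing $\sigma(x),\sigma(y)\in\sigma^\pi(P)$, from which a cardinality count will finish it.

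For the first containment, I will pick any 2-relation $\lambda$ having a block $Q=\{x,z\}$ with $z\notin\{x,y\}$; such $\lambda$ exists because $X$ is infinite. By the defining formula in Proposition~\ref{rr}, $\sigma(x)$ is the unique element of $\sigma^\pi(P)\cap\sigma^\lambda(Q)$, so in particular $\sigma(x)\in\sigma^\pi(P)$. For the second containment, I apply the same observation with the roles of $x$ and $y$ swapped: note that $P=\{y,x\}$ is equally a block of $\pi$ containing $y$, and choose a 2-relation $\lambda'$ with a block $Q'=\{y,w\}$ where $w\notin\{x,y\}$. Proposition~\ref{rr} then gives that $\sigma(y)$ is the unique element of $\sigma^\pi(P)\cap\sigma^{\lambda'}(Q')$, hence $\sigma(y)\in\sigma^\pi(P)$.

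Because $\sigma$ is a permutation and $x\neq y$, we have $\sigma(x)\neq\sigma(y)$. Since $\sigma^\pi(P)$ has exactly two elements and contains the two distinct elements $\sigma(x),\sigma(y)$, we conclude $\sigma^\pi(P)=\{\sigma(x),\sigma(y)\}=\{\sigma(t):t\in P\}$, which is the desired identity. The argument for $\mu$ is entirely symmetric, using the second formula of Proposition~\ref{rr} together with the fact that $\mu^\pi_2$ is a bijection between $\mc{P}(2,\pi)$ and $\mc{P}(2,\beta^{-1}(\pi))$. I do not anticipate any real obstacle here; once the permutations $\sigma$ and $\mu$ have been constructed in Proposition~\ref{rr}, the present proposition is essentially a direct read-off from their defining property and the fact that $\sigma^\pi$ and $\mu^\pi$ send 2-element blocks to 2-element blocks.
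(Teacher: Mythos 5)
Your proposal is correct and matches the paper's own proof essentially step for step: both unpack the defining property of $\sigma$ from Proposition~\ref{rr} to show $\sigma(x),\sigma(y)\in\sigma^\pi(P)$, then use injectivity of $\sigma$ together with the fact that $\sigma^\pi(P)$ has exactly two elements, with the $\mu$ case handled symmetrically. No gaps; this is the intended argument.
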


\begin{proof}
Suppose $P=\{x,y\}$ is a block of $\pi$.  Choose $z\in X$ that is distinct from $x,y$ and a 2-relation $\lambda$ with $Q=\{x,z\}$ a block of $\lambda$. By definition, $\sigma(x)$ is the unique element in $\sigma^\pi(P)\,\cap\,\sigma^\lambda(Q)$. In particular, $\sigma(x)$ is an element of $\sigma^\pi(P)$. The same argument shows that $\sigma(y)$ is also an element of $\sigma^\pi(P)$. Since $\sigma$ is a permutation, $\sigma(x)$ and $\sigma(y)$ are distinct elements of the 2-element set $\sigma^\pi(P)$. Therefore $\sigma^\pi(P)=\{\sigma(x),\sigma(y)\}$. The result for $\mu$ is similar.
\end{proof}

We now come to what we want, a proof of Theorem~\ref{todo}.

\begin{thm}
Suppose that $\beta$ is a size preserving automorphism of $\Eq*(X)$. Then there is a permutation $\sigma$ of $X$ so that for every 2-relation $\pi$ we have 

\[\beta(\pi)=\{(\sigma(x),\sigma(y)):(x,y)\in\pi\}\] 
\vspace{-2ex}
\end{thm}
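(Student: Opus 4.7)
The plan is to assemble the theorem directly from the machinery already built in Sections~9 and~10, with almost no new work required. The permutation $\sigma$ itself is the one produced in Proposition~\ref{rr} from the bijections $\sigma^\pi_2:\mc{P}(2,\pi)\to\mc{P}(2,\beta(\pi))$. All that remains is to verify that for any 2-relation $\pi$, the blocks of $\beta(\pi)$ are precisely the sets $\{\sigma(x),\sigma(y)\}$ as $\{x,y\}$ ranges over the blocks of $\pi$.

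First I would fix a 2-relation $\pi$ and note that Proposition~\ref{ss} (applied pointwise to each block $P=\{x,y\}$ of $\pi$) gives
\[
\sigma^\pi(P)\,=\,\{\sigma(x),\sigma(y)\}.
\]
Next I would invoke Proposition~\ref{w}, which tells us that $\sigma^\pi_2$ is a bijection from $\mc{P}(2,\pi)$ onto $\mc{P}(2,\beta(\pi))$. Since $\beta(\pi)$ is itself a 2-relation (because $\beta$ is size preserving), its blocks are exactly the members of $\mc{P}(2,\beta(\pi))$. Combining these two facts, the set of blocks of $\beta(\pi)$ is exactly
\[
\{\sigma^\pi(P):P\mbox{ a block of }\pi\}\,=\,\{\{\sigma(x),\sigma(y)\}:\{x,y\}\mbox{ a block of }\pi\}.
\]

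From this it is immediate that $(u,v)\in\beta(\pi)$ iff either $u=v$ and $u\in X$, or there is a block $\{x,y\}$ of $\pi$ with $\{u,v\}=\{\sigma(x),\sigma(y)\}$. Since $\sigma$ is a bijection (by Proposition~\ref{rr}), this is equivalent to the existence of $(x,y)\in\pi$ with $(u,v)=(\sigma(x),\sigma(y))$. Hence
\[
\beta(\pi)\,=\,\{(\sigma(x),\sigma(y)):(x,y)\in\pi\},
\]
which is the desired equality.

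There is no real obstacle here since all the substantive combinatorial work was done in the previous sections: patching the local bijections $\sigma^\pi$ into a single permutation was handled in Proposition~\ref{rr}, and the identification of $\sigma^\pi(P)$ with $\{\sigma(x),\sigma(y)\}$ was handled in Proposition~\ref{ss}. The only small point to watch is that the reflexive pairs $(x,x)$ in $\pi$ correspond to $(\sigma(x),\sigma(x))$ in $\beta(\pi)$, which is automatic because both $\pi$ and $\beta(\pi)$ are equivalence relations on $X$ and $\sigma$ is a bijection of $X$.
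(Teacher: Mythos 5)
Your proposal is correct and follows essentially the same route as the paper's own proof: both combine the bijectivity of $\sigma^\pi_2:\mathcal{P}(2,\pi)\to\mathcal{P}(2,\beta(\pi))$ from Proposition~\ref{w} with the identification $\sigma^\pi_2(P)=\{\sigma(x),\sigma(y)\}$ from Proposition~\ref{ss}, where $\sigma$ is the permutation of Proposition~\ref{rr}, and then dispose of the diagonal pairs $(x,x)$ as trivial. No gaps; the argument matches the paper step for step.
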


\begin{proof}
By Proposition~\ref{w}, $\sigma^\pi_2$ is a bijection between the blocks of $\pi$ and the blocks of $\beta(\pi)$, and Proposition~\ref{ss} shows that for a block $P=\{x,y\}$ of $\pi$ that $\sigma^\pi_2(P)=\{\sigma(x),\sigma(y)\}$. It follows that for $x\neq y$ in $X$ that $(x,y)\in\pi$ iff $P=\{x,y\}$ is a block of $\pi$ iff $\{\sigma(x),\sigma(y)\}$ is a block of $\beta(\pi)$ iff $(\sigma(x),\sigma(y))\in \beta(\pi)$. The corresponding result when $x=y$ is trivial. 
\end{proof}

We have now established all of the results used in section~4 to prove the main theorem, Theorem~\ref{main}, that for an infinite set $X$ there is a group isomorphism 

\[\Gamma:\Perm(X)\to\Aut(\Fact(X))\]
\vspace{-1ex}

\section{Concluding remarks}

In addition to proving a version of Wigner's theorem for the decompositions $\Fact(X)$ of an infinite set $X$, several results of independent interest have been obtained. Proposition~\ref{b'} shows that $\Fact(X)$ is atomistic in a strong way. Propositions~\ref{crud} and~\ref{move} describe properties related to the transitivity of the automorphism group. It is not transitive on atoms, but is transitive on $p$-atoms for a given prime $p$. The investigation here has also touched on a number of other questions. We have intentionally bounded the scope of the study to matters necessary to obtain the main theorem, but we feel the questions below seem worthy of separate study. 

\begin{question}
Give necessary and sufficient conditions in terms of the cardinalities of sets $X$ and $Y$ for there to exist an orthomodular poset isomorphism between $\Fact(X)$ and $\Fact(Y)$. 
\end{question}

It is clear that if $X$ and $Y$ are finite sets with a prime number of elements, then $\Fact(X)$ and $\Fact(Y)$ are isomorphic since each is a 2-element Boolean algebra. For a finite set $X$ of cardinality $n$, the number of factors counted with multiplicity in the prime factorization of $n$ is the number of atoms in each maximal Boolean subalgebra of $\Fact(X)$. We suspect that with the exception of sets with a prime number of elements, that $\Fact(X)$ is isomorphic to $\Fact(Y)$ iff $X$ and $Y$ have the same cardinality. Proposition~\ref{crud} contained results that were an expedient bypass to this question sufficient for the purposes here. 

\begin{question}
For which sets $X$ does a version of Wigner's theorem hold? More exactly, for which sets $X$ is there a group isomorphism $\Gamma:\Perm(X)\to\Aut(\Fact(X))$. 
\end{question} 

We have shown there is a positive answer for any infinite set $X$. What is known about the situation for finite sets is described in \cite{Tim}. Wigner's theorem does not hold for sets $X$ of finite cardinality $n$ where $n$ has only one or two prime factors. It also fails when $n=2^3$. The results of \cite{Tim} shows that it holds when $n=3^3$. All other cases are unknown. It is suspected that result is true whenever there are at least 3 odd prime factors. 

\begin{question} 
For atoms $a$ and $b$ of $\Fact(X)$ define $a\sim b$ if they have an upper bound of height three. Let the distance from $a$ and $b$ be the least $n$ where there is a sequence $a\sim c_1\sim \cdots\sim c_n\sim b$ and be $\infty$ if there is no such sequence. Let the diameter of $\Fact(X)$ be the maximum of the distances between atoms. What is the diameter of $\Fact(X)$?
\end{question}

Results in section~7 are related. The relations $\sim_2$ and $\sim_3$ are more restrictive than $\sim$. It is shown in section~7 that any two 2-atoms with the same first coordinate are in the transitive closure of $\sim_2$, and therefore have finite distance, and similarly for 3-atoms with the same first coordinates. However, more could be shown. A finite bound can be placed on the distance between two 2-atoms with the same first coordinate. This requires the observation that there is an $n$ so that any element of the permutation group $\Perm(X)$ can be obtained as the composite of at most $n$ involutions. 

\begin{question}
Is every automorphism of the underlying poset of $\Fact(X)$ an orthomodular poset isomorphism?
\end{question} 

In regards to this question, there is a discussion of similar issues in \cite{Tim}.

\newpage

\end{document}